\documentclass[11pt,letterpaper]{article}
\usepackage[lmargin=1.0in,rmargin=1.0in,bottom=1.0in,top=1.0in,twoside=False]{geometry}

\usepackage{fullpage,amssymb,amsmath}
\usepackage{graphicx}
\usepackage{enumerate}
\usepackage{tikz}
\usetikzlibrary{shapes}
\usetikzlibrary{calc} 
\usetikzlibrary{graphs}
\usetikzlibrary{positioning}
\usepackage{ifthen} 

\usepackage[T1]{fontenc}
\usepackage{marginnote}

 \usepackage{xcolor}
 \usepackage{mathtools}
\usepackage{microtype}
\usepackage{amsfonts}
\usepackage{comment}
\usepackage[english]{babel}
\usepackage{mathrsfs}
\usepackage[linesnumbered,ruled,vlined]{algorithm2e}

\usepackage{trimspaces}
\usepackage{nccfoots}
\usepackage{setspace}
\usepackage{inconsolata}
\usepackage{libertine}
\usepackage[absolute]{textpos}

\usepackage{enumitem}
\usepackage{todonotes}

\usepackage{longtable}

\usepackage{theoremref}

\definecolor{blue}{rgb}{0.1,0.2,0.5}
\definecolor{brown}{rgb}{0.6,0.6,0.2}
\usepackage[ocgcolorlinks, linkcolor={blue}, citecolor={brown}]{hyperref}
%

\usepackage{comment}

\usepackage[amsmath,thmmarks,hyperref]{ntheorem}
%
%

\usepackage{enumerate}

\usepackage{latexsym}


\theoremnumbering{arabic}
\theoremstyle{plain}
\theoremsymbol{}
\theorembodyfont{\itshape}
\theoremheaderfont{\normalfont\bfseries}
\theoremseparator{.}

\newtheorem{theorem}{Theorem}[section]

\newcommand{\newtheoremwithcrefformat}[2]{%
  \newtheorem{#1}[theorem]{#2}%
}
\newcommand{\newseptheoremwithcrefformat}[2]{%
  \newtheorem{#1}{#2}[section]%
}

\newseptheoremwithcrefformat{lemma}{Lemma}
\newtheoremwithcrefformat{proposition}{Proposition}
\newtheoremwithcrefformat{observation}{Observation}
\newtheoremwithcrefformat{conjecture}{Conjecture}
\newtheoremwithcrefformat{corollary}{Corollary}
\newseptheoremwithcrefformat{claim}{Claim}
\theorembodyfont{\upshape}
\newtheoremwithcrefformat{example}{Example}
\newtheoremwithcrefformat{remark}{Remark}
\newseptheoremwithcrefformat{definition}{Definition}
\newseptheoremwithcrefformat{question}{Question}

\theoremstyle{nonumberplain}
\theoremheaderfont{\scshape}
\theorembodyfont{\normalfont}
\theoremsymbol{\ensuremath{\square}}
\newtheorem{proof}{Proof}

\theoremsymbol{\ensuremath{\lrcorner}}
\newtheorem{clproof}{Proof}

\def\cqedsymbol{\ifmmode$\lrcorner$\else{\unskip\nobreak\hfil
\penalty50\hskip1em\null\nobreak\hfil$\lrcorner$
\parfillskip=0pt\finalhyphendemerits=0\endgraf}\fi}

\tikzset{
    position/.style args={#1:#2 from #3}{
        at=(#3.#1), anchor=#1+180, shift=(#1:#2)
    }
}


\renewenvironment{quote}{%
   \list{}{%
     \leftmargin0.0cm   
     \rightmargin\leftmargin
   }
   \item\relax
}
{\endlist}

\newcommand{\N}{\mathbb{N}}

\newcommand{\Cc}{\mathcal{C}}

\newcommand{\FO}{\mathsf{FO}}
\newcommand{\MSO}{\mathsf{MSO}}

\newcommand{\Oh}{\mathcal{O}}

\newcommand{\tdict}{\delta_{\mathrm{t}}}
\newcommand{\mdict}{\delta_{\mathrm{s}}}

\newcommand{\Univ}{U}

\newcommand{\glob}{\mathsf{glo}}
\newcommand{\lowe}{\mathsf{lower}}

\let\originalleft\left
\let\originalright\right
\renewcommand{\left}{\mathopen{}\mathclose\bgroup\originalleft}
\renewcommand{\right}{\aftergroup\egroup\originalright}

\renewcommand{\leq}{\leqslant}
\renewcommand{\geq}{\geqslant}

\renewcommand{\setminus}{-}

\newcommand{\Ff}{\mathcal{F}}

\newcommand{\Tt}{\mathcal{T}}

\newcommand{\Dt}{\mathbb{D}}
\newcommand{\Topt}{\mathbb{T}}
\newcommand{\Dict}{\mathbb{L}}
\newcommand{\Tdep}{\mathbb{R}}

\newcommand{\nil}{\mathbf{nil}}
\newcommand{\true}{\mathbf{true}}
\newcommand{\false}{\mathbf{false}}

\newcommand{\verts}{\mathbb{V}}
\newcommand{\edges}{\mathbb{E}}

\newcommand{\ins}{\mathtt{insert}}
\newcommand{\rem}{\mathtt{remove}}

\newcommand{\update}{\mathtt{update}}
\newcommand{\elimination}{\mathtt{elimination}}
\newcommand{\bicom}{\mathtt{bicomponents}}

\newcommand{\pathf}{\mathtt{path}}
\newcommand{\pathlf}{\mathtt{pathlen}}
\newcommand{\pathlb}{\mathtt{pathlb}}
\newcommand{\pathub}{\mathtt{pathub}}
\newcommand{\edge}{\mathtt{edge}}
\newcommand{\same}{\mathtt{same}}
\newcommand{\cpy}{\mathtt{copy}}
\newcommand{\retrieve}{\mathtt{retrieve}}

\newcommand{\link}{\mathtt{link}}
\newcommand{\cut}{\mathtt{cut}}
\newcommand{\expose}{\mathtt{expose}}
\newcommand{\merge}{\mathtt{merge}}
\newcommand{\splitt}{\mathtt{split}}
\newcommand{\find}{\mathtt{find}}

\newcommand{\art}{\mathtt{articul}}
\newcommand{\new}{\mathtt{new}}
\newcommand{\bridge}{\mathtt{bridge}}
\newcommand{\destroy}{\mathtt{destroy}}

\newcommand{\core}{\mathtt{core}}
\newcommand{\trim}{\mathtt{trim}}
\newcommand{\extend}{\mathtt{extend}}
\newcommand{\member}{\mathtt{member}}
\newcommand{\lookup}{\mathtt{lookup}}
\newcommand{\flush}{\mathtt{flush}}
\newcommand{\Lt}{\mathbb{L}}

\newcommand{\gl}{\mathrm{glo}}
\newcommand{\loc}{\mathrm{loc}}
\newcommand{\Prt}{\mathcal{N}}

\newcommand{\LPath}{{\sc{Long Path}}\xspace}
\newcommand{\LCycle}{{\sc{Long Cycle}}\xspace}

\newcommand{\pnt}{\mathsf{parent}}
\newcommand{\chld}{\mathsf{children}}
\newcommand{\anc}{\mathsf{anc}}
\newcommand{\desc}{\mathsf{desc}}
\newcommand{\depth}{\mathsf{depth}}
\newcommand{\height}{\mathsf{height}}
\newcommand{\roots}{\mathsf{roots}}
\newcommand{\SReach}{\mathsf{SReach}}
\newcommand{\App}{\mathsf{App}}
\newcommand{\td}{\mathsf{td}}
\newcommand{\wh}[1]{\widehat{#1}}

\newcommand{\Up}{\mathsf{NeiUp}}
\newcommand{\bucket}{\mathsf{B}}
\newcommand{\parent}{\mathsf{p}}
\newcommand{\tp}{\mathsf{toParent}}

\newcommand{\dw}[2]{#1_{\downarrow \tau}}
\newcommand{\type}{\mathrm{tp}}
\newcommand{\Types}{\mathrm{St}}
\newcommand{\Conf}{\mathfrak{C}}
\newcommand{\conf}{\mathsf{conf}}
\newcommand{\forget}{\mathsf{forget}}
\newcommand{\Mm}{\mathcal{M}}
\newcommand{\join}{\mathsf{join}}
\newcommand{\degree}{\mathsf{deg}}

\reversemarginpar
\newcommand{\CycleNote}{%
  \marginpar{\vspace{-0.85cm}\hspace{1.5cm}$\bigcirc$}
}
\newcommand{\CycleNoteSub}{%
  \marginpar{\vspace{-0.65cm}\hspace{1.5cm}$\bigcirc$}
}

\definecolor{orange}{RGB}{255,127,0}
\newcommand{\boldall}[1]{\ifmmode\mathbf{#1}\else\textbf{\boldmath{#1}}\fi}

\begin{document}

\title{Efficient fully dynamic elimination forests with applications to detecting long paths and cycles\thanks{This work is a part of projects that have received funding from the European Research Council (ERC) under the European Union's Horizon 2020 research and innovation programme:
Grant Agreements no.~714704 (W.~Nadara, Ma.~Pilipczuk, M.~Sorge) and no.~677651 (Mi.~Pilipczuk). This work is also the result of research conducted within research project number 2017/26/D/ST6/00264 financed by National Science Centre (Anna Zych-Pawlewicz). Andreas Emil Feldmann was
supported by the Czech Science Foundation GA\v{C}R (grant \#17-10090Y), and by 
the Center for Foundations of Modern Computer Science (Charles Univ. project 
UNCE/SCI/004).}}

\author{
    Jiehua Chen\thanks{TU Wien, Austria, \texttt{jiehua.chen@tuwien.ac.at}.}
    \and
    Wojciech Czerwi\'nski\thanks{University of Warsaw, Poland, \texttt{\{wczerwin,w.nadara,\{marcin,michal\}.pilipczuk,manuel.sorge,} \texttt{anka\}@mimuw.edu.pl}, \texttt{bw371883@students.mimuw.edu.pl}}
    \and
    Yann Disser\thanks{TU Darmstadt, Germany, \texttt{disser@mathematik.tu-darmstadt.de}.}
    \and
    Andreas Emil Feldmann\thanks{Charles University in Prague, Czechia, \texttt{feldmann.a.e@gmail.com}.}
    \and
    Danny Hermelin\thanks{Ben-Gurion University of the Negev, Israel, \texttt{hermelin@bgu.ac.il}.}
    \and
    Wojciech Nadara$^\ddagger$
    \and
    Marcin Pilipczuk$^\ddagger$
    \and
    Micha\l{} Pilipczuk$^\ddagger$
    \and
    Manuel Sorge$^\ddagger$
    \and
    Bart\l{}omiej Wr\'oblewski$^\ddagger$
    \and
    Anna Zych-Pawlewicz$^\ddagger$
}

\begin{titlepage}
\def\thepage{}
\thispagestyle{empty}
\maketitle

\begin{textblock}{20}(0, 11.9)
\includegraphics[width=40px]{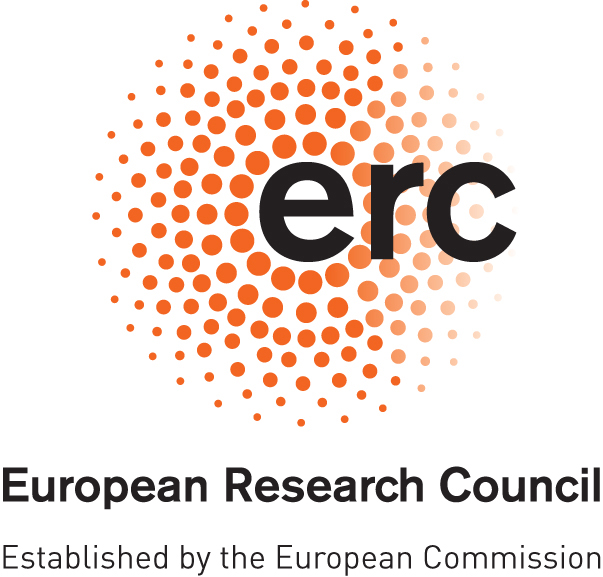}%
\end{textblock}
\begin{textblock}{20}(-0.25, 12.3)
\includegraphics[width=60px]{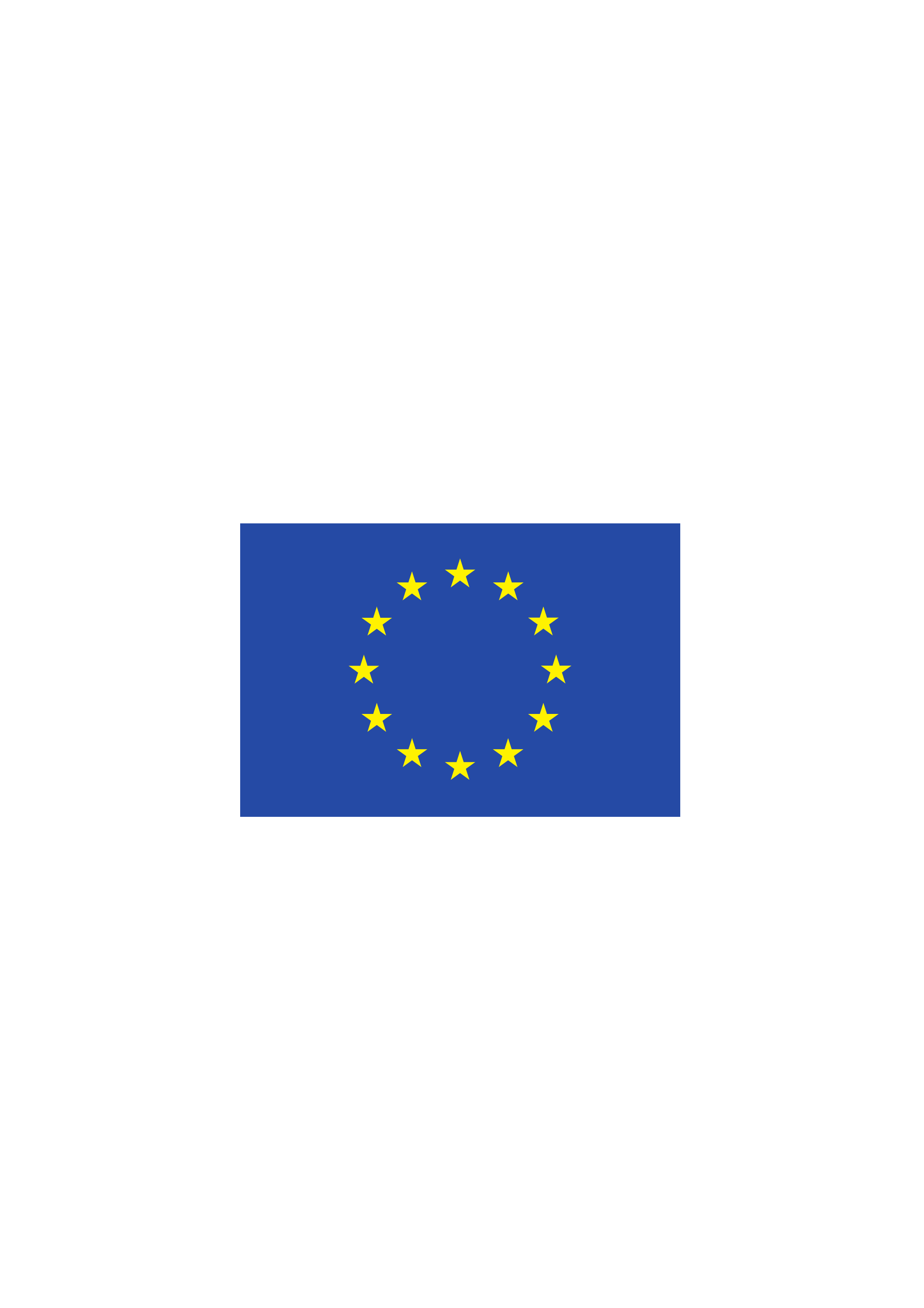}%
\end{textblock}

\begin{abstract}
 We present a data structure that in a dynamic graph of treedepth at most $d$, which is modified over time by edge insertions and deletions, maintains an optimum-height elimination forest. The data structure achieves worst-case update time $2^{\Oh(d^2)}$, which matches the best known parameter dependency in the running time of a static fpt algorithm for computing the treedepth of a graph.
 This improves a result of Dvo\v{r}\'ak et al.~[ESA 2014], who for the same problem achieved update time $f(d)$ for some non-elementary (i.e. tower-exponential) function~$f$. As a by-product, we improve known upper bounds on the sizes of minimal obstructions for having treedepth $d$ from doubly-exponential in $d$ to~$d^{\Oh(d)}$.
 
 As applications, we design new fully dynamic parameterized data structures for detecting long paths and cycles in general graphs. More precisely, for a fixed parameter $k$ and a dynamic graph~$G$, modified over time by edge insertions and deletions, our data structures maintain answers to the following~queries:
 \begin{itemize}[nosep]
  \item Does $G$ contain a simple path on $k$ vertices?
  \item Does $G$ contain a simple cycle on at least $k$ vertices?
 \end{itemize}
 In the first case, the data structure achieves amortized update time $2^{\Oh(k^2)}$. In the second case, the amortized update time is $2^{\Oh(k^4)} + \Oh(k \log n)$. In both cases we assume access to a dictionary on the edges of $G$.
\end{abstract}

\end{titlepage}

\section{Introduction}\label{sec:introduction}

In this paper we work with dynamic data structures for graph problems. The usual setting is as follows. We are given a graph $G$ that has an invariant vertex set, but is modified over time by edge insertions and edge deletions. The goal is to design a data structure that efficiently maintains $G$ under such modifications, while supporting queries about some properties of interest in $G$. We would like to optimize the worst-case or amortized guarantees on both the update time and the query time offered by the data structure.

Classically, the research on dynamic data structures concentrates on problems that, in the static setting, are polynomial-time solvable, such as testing connectivity and maintaining minimum weight spanning trees~\cite{RauchKing,HLT01,10.5555/2627817.2627943,Huang,Fre1,10.5555/2627817.2627898, Rasmussen16,NanongkaiS17,CWN17,8104124,EppsteinGIS96}, testing higher connectivity~\cite{RauchKing,HLT01,10.5555/3174304.3175269,Fre2,EppsteinGIS96,jin2020fully}, maintaining maximum matchings~\cite{DBLP:conf/soda/Sankowski07,GP13,10.1145/2897518.2897568,10.5555/3039686.3039716}, testing planarity~\cite{EppsteinGIS96,plan_test} or maintaining the distance matrix of the graph~\cite{10.1145/1039488.1039492,10.1145/1060590.1060607,ChechikKrinninger,doi:10.1137/1.9781611975994.156}. In this work we study problems that in the classic sense are ${\mathsf{NP}}$-hard, but are considered tractable from the point of view of {\em{parameterized complexity}}. In this paradigm, the usual goal is to design a {\em{fixed-parameter tractable}} (fpt) algorithm with running time of the form $f(k)\cdot n^{\Oh(1)}$, where $n$ is the total input size and $k$ is a {\em{parameter}} --- an auxiliary quantitative measure of the hardness of an instance. As function $f$ is allowed to be super-polynomial, this enables us to confine the combinatorial explosion, (seemingly) inherent in all ${\mathsf{NP}}$-hard problems, to the specific parameter under study. 

The idea of parameterized measurement of complexity can be also applied to dynamic data structures: just use auxiliary parameters in upper bounds on the update and query times, allowing exponential dependence in case the considered problem is ${\mathsf{NP}}$-hard in the classic sense. Despite the naturalness of this concept, so far there has been little systematic work on parameterized dynamic data structures. Here, the main point of reference is the work of Alman et al.~\cite{AlmanMW17}, who considered a large set of problems fundamental for parameterized complexity, such as {\sc{Vertex Cover}}, {\sc{Hitting Set}}, {\sc{Feedback Vertex Set}}, or \LPath, and delivered a number of parameterized dynamic data structures for them. Following the earlier work of Iwata and Oka~\cite{IwataO14}, many results of Alman et al. are based on dynamization of standard techniques of parameterized algorithms, such as branching, kernelization, or color-coding.


Apart from~\cite{AlmanMW17,IwataO14}, we are aware of several scattered works on parameterized dynamic data structures; some of them will be mentioned later on, while others, due to not being directly related to our motivation, are reviewed in Appendix~\ref{app:intro}. However, roughly speaking, parameterized dynamic data structures present in the literature achieve update and query times of the forms (here, $k$ is always the solution size):
\begin{itemize}[nosep]
 \item $f(k)$, i.e., independent of the input size $n$ (examples: {\sc{Vertex Cover}}, {\sc{$d$-Hitting Set}} for fixed $d$~\cite{AlmanMW17,BannachHRT19});
 \item $f(k)\cdot \log^c n$, where $c$ is a universal constant (examples: {\sc{Feedback Vertex Set}}, \LPath~\cite{AlmanMW17});
 \item $\log^{f(k)} n$ (example: counting $k$-vertex patterns in sparse graphs~\cite{DvorakT13}).
\end{itemize}
Moreover, there are problems for which already for some constant value of $k$, every dynamic data structure requires $\Omega(n^\delta)$ update time or $\Omega(n^\delta)$ query time, for some $\delta>0$. As shown by Alman et al.~\cite{AlmanMW17}, under certain assumptions from fine-grained complexity, the directed variant of \LPath falls into this category.

This charts an intriguing and still largely unexplored complexity landscape. In order to make this area well-founded, we are particularly interested in developing new techniques for designing parameterized data~structures, specific to the dynamic setting. We contribute to this direction by developing decomposition-based techniques and applying them to the \LPath and \LCycle problems.

\paragraph*{Dynamic treedepth.} In this work we explore a new approach to the design of parameterized data structures, which is based on {\em{elimination forests}} and the parameter {\em{treedepth}}. Here, an {\em{elimination forest}} of a graph $G$ is a rooted forest $F$ on the same vertex set as $G$, where for every edge $uv$ of $G$, either $u$ is an ancestor of $v$ or vice versa. The {\em{treedepth}} of $G$ is the minimum possible height of an elimination forest of $G$. 

Treedepth can be regarded as a variant of treewidth where instead of the width of a decomposition, we measure its height; in fact, the treedepth of a graph is never larger than its treewidth. The importance of treedepth in the hierarchy of parameters has been gradually realized throughout the recent years. It is a central parameter in the theory of sparse graphs (see~\cite{NesetrilM12} for an overview), it has interesting combinatorial properties related to obstructions~\cite{CzerwinskiNP19,DvorakGT12,KawarabayashiR18}, and it was rendered an important dividing line from the points of view of logic~\cite{ElberfeldGT16} and of space/time complexity trade-offs~\cite{PilipczukW18}. In this work we are interested in the dynamic aspects of treedepth, and our first contribution is the following result.

\begin{theorem}\label{thm:td-data}
 Suppose $G$ is a dynamic graph on $n$ vertices that is updated by edge insertions and edge removals, subject to a promise that the treedepth of $G$ never exceeds $d$. Then there is a data structure that, under such updates, maintains a minimum-height elimination forest of $G$ using $2^{\Oh(d^2)}$ time per update in the worst case. Upon receiving an edge insertion that would break the promise, the data structure does not carry out the insertion and reports this fact. The data structure uses $\Oh(d \cdot n)$ memory.
\end{theorem}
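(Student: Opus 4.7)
The plan is to maintain an optimum-height elimination forest $F$ of $G$ explicitly, along with auxiliary pointer structures at each vertex: the list of its ancestors in $F$ (of length at most $d$, giving the $\Oh(d\cdot n)$ memory bound), and, crucially, for every vertex $v$ the partition of its neighbours in $G$ according to which root-to-leaf path in $F$ they lie on. Upon any update, we will identify a small rooted subtree $T$ of $F$ whose recomputation suffices to restore optimality, detach the vertices of $T$ together with the graph $G[V(T)]$, recompute an optimum-height elimination forest for this subgraph via a static fpt algorithm, and splice the result back into $F$.

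To handle an edge insertion $uv$: if $u$ and $v$ already lie on a common root-to-leaf path in $F$, then $F$ is still an elimination forest of $G\cup\{uv\}$, and one can show via an exchange/obstruction argument that it remains of minimum height. Otherwise, walk up from the lower of $u,v$ to find the lowest ancestor $w$ in whose subtree a rebuild is required, extract $T$ as that subtree, and recompute. Edge deletions are handled analogously: $F$ remains valid but possibly suboptimal, so we again locate a small affected subtree and attempt to lower its height. In both cases, if the static rebuild reports that no elimination forest of height at most $d$ exists for the extracted region, we refuse the update, since such a witness locally certifies that the treedepth of the global graph must exceed~$d$.

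The quantitative heart of the argument is the following claim, which we will prove combinatorially: whenever $F$ is an optimum-height elimination forest but the local subtree $T$ we target is not, one can exhibit inside $T$ a minimal treedepth-$(d{+}1)$ obstruction, or an improving swap certified within a bounded neighbourhood. Consequently, once $T$ has more than $d^{\Oh(d)}$ vertices, we can always certify either optimality of $F$ (after trivial adjustments) or a breach of the promise. This uses the improved $d^{\Oh(d)}$ bound on minimal treedepth-$d$ obstructions advertised in the abstract, applied to the appropriate local instance. Since $|V(T)| \leq d^{\Oh(d)}$ in the benign case, running the standard $2^{\Oh(d^2)}\cdot |V(T)|$ static treedepth algorithm on $G[V(T)]$ yields total update time $2^{\Oh(d^2)} \cdot d^{\Oh(d)} = 2^{\Oh(d^2)}$.

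The main obstacle is the structural claim underlying Step~3: formalising a canonical form for $F$ under which every update admits a rebuild region of size $d^{\Oh(d)}$, and proving that the region can be located in time $2^{\Oh(d^2)}$ using only the ancestor lists and neighbour partitions stored at the two endpoints of the updated edge. This requires a careful definition of a normalisation invariant on $F$ (roughly, a lexicographically minimal or otherwise canonical choice among optimum elimination forests) that is stable under local recomputation, so that repeated updates do not trigger cascading changes outside the chosen region. The obstruction bound is what makes such an invariant achievable: it limits the combinatorial size of any local configuration that could force a non-local change, ensuring that all reasoning can be confined to subtrees of bounded size.
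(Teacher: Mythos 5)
Your high-level template (rebuild a small affected region statically, keep the rest of the forest) matches the paper's, but there is a genuine gap at the two places where the real work happens, and the shape of your rebuild region makes the gap unfixable as stated. You propose to extract a \emph{rooted subtree} $T$ of $F$ found by walking up from the updated edge. If $u$ and $v$ lie in different trees of $F$ there is no common ancestor at all (yet the insertion may be legal and forces the two trees to merge into one), and even within one tree the lowest common ancestor's subtree can be the entire tree, so "$|V(T)|\leq d^{\Oh(d)}$ in the benign case" is precisely the unproved claim, not a consequence of anything you argue. The paper's region has the opposite shape: it is an ancestor-closed \emph{prefix} $K$ of $F$ (a "$(d+1)$- or $(d+2)$-core" containing $u$, $v$ and their ancestors), of size $d^{\Oh(d)}$ by an explicit marking procedure, and what is preserved untouched is the unbounded collection of subtrees hanging \emph{below} $K$, reattached wholesale (bucket by bucket, in constant time per bucket). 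The correctness of that reattachment — that the strong reachability set of each detached subtree is straight in \emph{any} height-$\leq d$ elimination forest of the core, and that the sibling substitution property guarantees the extension is again recursively optimal of the same height — is the combinatorial heart of the proof, and your proposal has no counterpart to it. You also need the converse of your rejection argument: it is trivial that local infeasibility certifies $\td>d$, but you must show that local feasibility of the region extends to a global height-$\leq d$ forest; in the paper this is exactly the statement that the core inherits the treedepth of the whole graph together with the extension lemma.

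Second, your appeal to the $d^{\Oh(d)}$ bound on minimal treedepth obstructions reverses the logical order of the paper and does not do the job you assign to it. That obstruction bound is a \emph{by-product} of the core machinery (small cores preserve treedepth), and in any case it only bounds the size of a witness that treedepth exceeds $d$; it says nothing about how much of an optimum elimination forest must change when an edge is inserted or deleted while the promise holds, which is the case your Step~3 must handle. Relatedly, maintaining mere global optimality of $F$ is too weak an invariant for repeated local rebuilds: the paper maintains \emph{recursive} optimality (every subtree is an optimum elimination tree of the connected subgraph induced by its vertices), and both the definition of the core (which compares subtree heights) and the reattachment lemmas rely on it. Your proposed "canonical/normalisation invariant" to prevent cascading changes — which you yourself flag as the unresolved obstacle — is not needed in the paper and would be hard to make stable; recursive optimality plus the core extraction is what actually confines each worst-case update to $d^{\Oh(d)}$ restructuring work plus one $2^{\Oh(d^2)}$ static call on the core.
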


In fact, we are not the first to consider dynamic data structures for graphs of bounded treedepth. This problem was considered by Dvo\v{r}\'ak et al.~\cite{DvorakKT14}, who gave a data structure with the same functionality as that provided by Theorem~\ref{thm:td-data}, but achieving update time $f(d)$ for some non-elementary function $f$. Recall that this means that $f(d)$ is tower-exponential: it is not bounded by the $t$-fold exponential function, for any constant $t$. 
The starting idea for our design of the data structure of Theorem~\ref{thm:td-data}, which will be further called the {\em{dynamic treedepth data structure}}, lies in the general strategy proposed by Dvo\v{r}\'ak et al.~\cite{DvorakKT14}. However, we rely on a new, deeper understanding of the combinatorics of treedepth and implement updates in a completely different way, which results in the improved update time of $2^{\Oh(d^2)}$. We include a comprehensive comparison of the approaches in Appendix~\ref{app:intro}.

The $2^{\Oh(d^2)}$ update time offered by Theorem~\ref{thm:td-data} reaches a certain limit. Namely, the fastest known static fpt algorithm for computing the treedepth of a graph, due to Reidl et al.~\cite{ReidlRVS14}, runs in time $2^{\Oh(d^2)}\cdot n$, where $d$ is the value of the treedepth. Thus, achieving $2^{o(d^2)}$ update time in Theorem~\ref{thm:td-data} would automatically improve the result of Reidl et al. to a $2^{o(d^2)}\cdot n$-time static algorithm, by introducing edges one by one. Interestingly, in the proof of Theorem~\ref{thm:td-data}, the $2^{\Oh(d^2)}$ update time in fact originates from applying the algorithm of Reidl et al.~\cite{ReidlRVS14} as a black-box to a graph of size $d^{\Oh(d)}$. This is the only bottleneck preventing the improvement of the $2^{\Oh(d^2)}$ update time. So we can actually conclude that improving this factor in the dynamic setting is {\em{equivalent}} to improving it in the static setting (up to the next bottleneck of $d^{\Oh(d)}$).

As a by-product of the combinatorial analysis leading to Theorem~\ref{thm:td-data}, we also give improved bounds on the sizes of minimal obstructions for having treedepth $d$. More precisely, we say that a graph $G$ is a {\em{minimal obstruction for treedepth $d$}} if its treedepth is larger than $d$, but every proper induced subgraph of $G$ has treedepth at most $d$.
Note that every graph of treedepth larger than $d$ contains some minimal obstruction for treedepth $d$ as an induced subgraph, hence such obstructions are minimal ``witnesses'' for having large treedepth.
Dvo\v{r}\'ak et al.~\cite{DvorakGT12} proved that every minimal obstruction for treedepth $d$ has at most $2^{2^{d-1}}$ vertices, and they gave a construction of an obstruction with $2^d$ vertices.
They also hypothesized that, in fact, every minimal obstruction for treedepth $d$ has at most $2^d$ vertices.
We get closer to this conjecture by showing an improved upper bound of $d^{\Oh(d)}$.


\paragraph*{Detecting paths and cycles.} We showcase the potential of the dynamic treedepth  data structure by using it to design fully dynamic data structures for the \LPath and \LCycle problems. In these problems, for a given undirected graph $G$ and parameter $k$, the task is to decide whether $G$ contains a path on $k$ vertices or a cycle on at least $k$ vertices, respectively. The following theorem summarizes our results.

\begin{theorem}\label{thm:main-intro}
 Let $k$ be a fixed parameter.
 Suppose $G$ is a dynamic graph on $n$ vertices, updated by edge insertions and edge deletions, and we are given access to a dictionary on the edges of $G$ using $\mdict$ memory with operations taking amortized time bounded by $\tdict$. Then there are  data structures that, upon such updates, maintain the answers to the queries:
 \begin{itemize}[nosep]
  \item Does $G$ contain a simple path on $k$ vertices?
  \item Does $G$ contain a simple cycle on at least $k$ vertices?
 \end{itemize}
 In the first case, the data structure achieves amortized update time $2^{\Oh(k^2)}+\Oh(\tdict)$ and uses $(n \cdot 2^{\Oh(k \log k)} + \mdict)$ memory. In the second case, the amortized update time is $2^{\Oh(k^4)}+k^{\Oh(k^2)}\cdot \tdict+\Oh(k\log n)$, and the memory usage is $(n \cdot 2^{\Oh(k^2 \log k)} + \mdict)$.
\end{theorem}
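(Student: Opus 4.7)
The plan is anchored at the classical inequality $\td(G) \leq \ell(G)$, where $\ell(G)$ is the number of vertices of a longest simple path of $G$; indeed, any DFS forest is an elimination forest of depth at most $\ell(G)$. Contrapositively, whenever $\td(G) \geq k$ the graph $G$ already contains a $P_k$. Setting $d = k - 1$ in Theorem~\ref{thm:td-data} gives a clean dichotomy at every update: either the insertion would push the treedepth above $k-1$ and is rejected, in which case a $P_k$ is known to exist; or an elimination forest $F$ of height at most $k-1$ remains available. Rejected insertions are bookkept in a small pending set; the query answer is YES as long as this set is nonempty, and pending edges are replayed against future deletions so that the pending set tracks only genuine witnesses.

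\textbf{DP on the elimination forest.} In the admissible regime, I would overlay on $F$ a DP table at each node $v$ indexed by \emph{interaction patterns}. A pattern at $v$ records (i) which vertices of $A_v \cup \{v\}$ --- where $A_v$ is the set of proper ancestors of $v$ in $F$ --- are used by a hypothetical $P_k$, (ii) how these boundary vertices are paired up as endpoints of vertex-disjoint path-fragments inside the subtree rooted at $v$, and (iii) the length of each such fragment, truncated at $k$. Since $|A_v \cup \{v\}| \leq k$, the number of patterns at $v$ is bounded by $2^{\Oh(k \log k)}$, matching the memory bound in the statement. A single update from the treedepth data structure modifies the forest in $2^{\Oh(k^2)}$ places; each affected node and each of its $\Oh(k)$ ancestors must have its DP refreshed, at cost $2^{\Oh(k \log k)}$ per refresh if the children of each node are stored in a balanced auxiliary structure that supports single-child replacement in time linear in the per-child table size. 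Dictionary accesses for neighborhood inspection contribute the $\Oh(\tdict)$ additive term; the totals collapse to an amortized update time of $2^{\Oh(k^2)} + \Oh(\tdict)$.

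\textbf{Reducing \LCycle to biconnected components.} The cycle case is more delicate, as bounded treedepth does not forbid a long cycle (a single $C_n$ has treedepth $\Theta(\log n)$). The plan reduces \LCycle to its analogue within each biconnected component: $G$ contains $C_{\geq k}$ iff some block of $G$ does, and every such cycle lies inside a unique block. Inside a 2-connected graph the longest path is tied to the longest cycle up to a multiplicative constant, so a block with no $C_{\geq k}$ has no path on more than $\Oh(k)$ vertices, and therefore treedepth $\Oh(k^2)$ after absorbing the slack needed to run the forthcoming DP. Applying the dynamic treedepth data structure per block with depth bound $d = \Oh(k^2)$, I would overlay a DP whose state at a node $v$ describes a collection of vertex-disjoint path-fragments with endpoints in $A_v \cup \{v\}$, together with the aggregated length and a flag indicating feasibility of closing the collection into a cycle of length at least $k$ by combining it with fragments found higher in the forest. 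Counting these patterns gives $k^{\Oh(k^2)}$ states per node and an update cost of $2^{\Oh(k^4)}$. The block-cut structure itself is maintained via a link-cut forest, producing the $\Oh(k \log n)$ additive term in the claim; dictionary lookups needed to refresh the DP cost $k^{\Oh(k^2)} \cdot \tdict$ per update.

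\textbf{Main obstacle.} The hardest piece is controlling the amortized cost of block mergers in the \LCycle case: a single edge insertion may fuse several blocks into one, and the per-block elimination forest cannot be rebuilt from scratch on each such event. I anticipate an amortized analysis through a potential function summing, over all current blocks, the size of their DP tables, and charging each merger against the edge insertions that originally created the merged blocks. A secondary technicality, common to both data structures, is the engineering of the per-node DP merge so that a single modified child is processed in time proportional to one DP table rather than to the sum over all siblings; the cleanest implementation keeps the children of each forest node in a self-balancing tree whose internal nodes cache partial merges, with rotations amortized against the potential. With these pieces in place, the claimed update times follow by the per-update arithmetic outlined above.
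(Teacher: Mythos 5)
Your high-level plan matches the paper's: the treedepth dichotomy at $d=k-1$ for paths, the biconnectivity reduction with $d=\Oh(k^2)$ for cycles, per-block dynamic treedepth structures overlaid with DP, a postponed-insertions buffer, and a dynamic forest structure giving the $\Oh(k\log n)$ term. But two of the ingredients you sketch do not actually close, and the paper spends a large fraction of its technical content fixing precisely these.

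The serious gap is your treatment of block mergers (and, symmetrically, splits, which you do not discuss at all). Your potential function is the total size of the DP tables across all blocks; but merging two blocks of sizes $n_1,n_2$ into one leaves this quantity essentially unchanged, so the potential releases no credit to pay for an $\Omega(n_1+n_2)$ rebuild. Worse, alternately inserting and deleting a single edge can repeatedly merge and split the same two large blocks, and after each insert/delete pair every reasonable potential is back where it started --- so no potential argument can amortize a rebuild that depends on the block sizes. The paper avoids the problem rather than amortizing it: it observes that under the invariant ``no $C_{\geq k}$,'' the spanning-forest path between $u$ and $v$ has fewer than $k-1$ edges, so an insertion merges at most $k-2$ blocks; and, crucially, it develops merge and split primitives for the treedepth data structure (Lemmas \ref{lem:merging} and \ref{lem:splitting}, implemented in Section~\ref{sec:scheme-ms}) that run in time $2^{\Oh(d^2)}$ \emph{independent of the block sizes}, by extracting $d^{\Oh(d)}$-size cores from the two blocks, recomputing an elimination forest of the union of the cores statically, and re-attaching the untouched subtrees. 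Without something of this flavour your time bounds cannot be met even in the amortized sense.

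There is also a smaller but real problem in the DP maintenance. You propose to keep each node's children in a self-balancing search tree whose internal nodes cache partial merges so that a single changed child is absorbed in $\Oh(\log)$-many re-merges. But the treedepth data structure's update does not move children one at a time: after recomputing the elimination forest on a core $K$, whole groups of siblings --- potentially $\Theta(n)$ of them --- are detached from one parent and re-attached under another, and the claimed $2^{\Oh(d^2)}$ bound requires this to cost $O(1)$ per group, not per child. The paper achieves this by bucketing children by $(\SReach,\height)$ and, for the DP, by further partitioning each bucket into ``mugs'' by realized configurations, exploiting an idempotence property (beyond a threshold $\tau$, extra copies of a configuration do not change the aggregate) so that the parent's DP is a function of only $\Oh(\tau\cdot\zeta)$ representatives rather than of all children. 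A balanced tree over the individual children does not support moving a bucket in $O(1)$, so your implementation of the DP update would be $\Omega(n)$ in the worst case.

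A minor remark: the bound $\td(H)<k^2$ for a biconnected block with no $C_{\geq k}$ follows directly from the cited Proposition 6.2 of Ne\v{s}et\v{r}il--Ossona de Mendez; your route via ``longest path bounded by a constant times longest cycle'' is not quite right (the correct relation in $2$-connected graphs is quadratic, not linear), though it does not affect the conclusion.
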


Note that Theorem~\ref{thm:main-intro} concerns general graphs, not just graphs of bounded treedepth.
In Theorem~\ref{thm:main-intro} we do not specify the query time, because we consider decision problems: the answer to the query is recomputed upon every update and can be later be provided in constant time. Also, we assume access to a dictionary on the edges of the graph. There are several ways of implementing such a dictionary that differ in trade-offs between time/space complexity and allowing amortization or randomization. For instance, the simplest solution --- an adjacency matrix --- achieves worst-case constant operation time at the cost of quadratic space complexity, while dynamic perfect hashing~\cite{dynamicHashing} gives linear space complexity, but guarantees only {\em{expected amortized}} constant  time per operation. We review different options in Section~\ref{sec:prelims}.

\LPath occupies a central position in parameterized complexity theory due to serving as the main protagonist in the development of several fundamental techniques: representative sets~\cite{Monien85}, treewidth-based win-win approaches~\cite{Bodlaender93}, color-coding~\cite{AlonYZ95}, algebraic coding or monomial testing~\cite{BjorklundHKK17,Koutis08,Williams09}, and kernelization lower bounds~\cite{BodlaenderDFH09}. \LCycle is less prominent in comparison, but is known to be fpt even in the directed variant~\cite{Zehavi16}.
As mentioned above, \LPath in the dynamic setting was already considered by Alman et al.~\cite{AlmanMW17}. By dynamizing the standard color-coding approach~\cite{AlonYZ95}, they designed a data structure that uses $k!\cdot 2^{\Oh(k)}\cdot \mathsf{DC}(n)$ time per update, where $\mathsf{DC}(n)$ denotes the query/update time for a data structure maintaining dynamic connectivity. There are several implementations of dynamic connectivity, yet they all achieve an update time that is polylogarithmic in~$n$, and actually there is an $\Omega(\log n)$ lower bound in the cell-probe model~\cite{10.1145/1007352.1007435}; see the discussion in~\cite{AlmanMW17}. Thus, while Theorem~\ref{thm:main-intro} offers worse parametric factor of the update time compared to the data structure of Alman et al.~\cite{AlmanMW17}, it completely removes the dependence on the size of the graph, which seems difficult in the approach used in~\cite{AlmanMW17}.
We are not aware of any previous work on dynamic data structures for the \LCycle problem.


\paragraph*{Techniques behind Theorem~\ref{thm:main-intro}.}
At first glance, it may seem surprising how the dynamic treedepth data structure can be helpful in designing data structures for \LPath and \LCycle, because these data structures should work on an arbitrary dynamic graph, without any promise about the treedepth. Here, we use the following well-known connection (see e.g.~\cite[Proposition~6.1]{NesetrilM12}): a graph of treedepth at least $k$ always contains a path on $k$ vertices. Hence, the answer to \LPath is non-trivial {\em{only}} if the treedepth is smaller than $k$; otherwise it is trivially positive. To capitalize on this observation, we use a technique of postponing invariant-breaking insertions, introduced by Eppstein et al.~\cite{EppsteinGIS96} in the context of planarity testing. 
Effectively, this enables us to focus on the case when the treedepth of the maintained dynamic graph is at all times bounded by $k$, at the cost of allowing amortization in the update time guarantees.



Next, we show that the dynamic treedepth data structure can be conveniently enriched with all sorts of dynamic programming procedures on elimination forests, so that in the dynamic setting we may maintain their tables upon edge insertions and deletions. By doing this for the standard dynamic programming procedure for \LPath, we complete the proof of the first point of Theorem~\ref{thm:main-intro}.

When working out this part of the argument, we make effort to introduce a convenient language for formulating dynamic programming on elimination forests that combines well with the dynamic treedepth data structure. This is because we expect these parts of our work to be of a wider applicability. In fact, we consider this to be one of the most important conceptual messages of this paper: dynamic programming on graphs of bounded treedepth can be efficiently maintained in the dynamic setting, and this is a {\em{technique}} for the design of parameterized data structures.

We remark that the data structure of Dvo\v{r}\'ak et al.~\cite{DvorakKT14} can be similarly combined with dynamic programming. In fact, they show that for every fixed problem definable in {\em{Monadic Second Order logic}} $\MSO_2$ (see~\cite[Section~7.4.1]{platypus} for introduction), the answer to this problem can be maintained together with the dynamic treedepth data structure within the same complexity; that is, with update time $f(d)$ for a non-elementary function~$f$. This is also the case for our data structure. Since the \LPath problem is expressible in $\MSO_2$, it is possible to derive a data structure for dynamic \LPath with amortized update time $f(k)$, for a non-elementary function~$f$, by combining the result of Dvo\v{r}\'ak et al.~\cite{DvorakKT14} with the technique of Eppstein et al.~\cite{EppsteinGIS96}. Here, $k$ is the requested vertex count of the path.

We move on to the second point of Theorem~\ref{thm:main-intro} ---
the data structure for \LCycle. This requires further ideas. The main issue is that the connection with treedepth a priori fails: as witnessed by paths, there are graphs of arbitrary large treedepth and no cycles at all. However, to some extent the approach can be salvaged: it can be shown (see~\cite[Proposition~6.2]{NesetrilM12}) that every {\em{biconnected}} graph of treedepth at least $k^2$ contains a simple cycle on at least $k$ vertices. We use this combinatorial observation as follows.

Due to the technique of postponing insertions, we may assume that the maintained graph $G$ at all times does not contain a simple cycle on at least $k$ vertices.
Then the abovementioned combinatorial fact implies that every biconnected component of $G$ has treedepth at most $k^2$. Therefore, our data structure maintains a partition of $G$ into biconnected components, and for each biconnected component $H$ of $G$ we maintain an elimination forest $F_H$ of $H$ of height at most $k^2$. Roughly speaking, for maintaining the partition into biconnected components, we use the top trees data structure of Alstrup et al.~\cite{AlstrupIP,AlstrupJ}, which introduces the $\Oh(k \log n)$ factor to the update time. The forests $F_H$ for biconnected components $H$ are maintained using the dynamic treedepth data structures for $d=k^2$. Observe that, upon edge insertions and removals, the biconnected components of the graph may merge or split. For this, we need to design appropriate merge and split procedures for the dynamic treedepth data structures. Fortunately, our  understanding of the combinatorics of treedepth allows this, at the cost of significant technical effort.

\paragraph*{Lower bounds.}
Observe that the update time offered by Theorem~\ref{thm:main-intro} for the \LCycle problem contains an $\Oh(\log n)$ factor. This is in fact unavoidable: a data structure for detecting simple cycles of length at least $3$ (aka just cycles) can be used for maintaining dynamic connectivity in forests, for which there is an $\Omega(\log n)$ lower bound in the cell-probe model~\cite{10.1145/1007352.1007435}. See Corollary~\ref{cor:3-cyc-lb} in Section~\ref{sec:lower-bounds} for a formal derivation of this result. Thus, there is a qualitative difference between \LPath and \LCycle in the dynamic setting: the first problem admits a data structure with amortized update time independent of $n$, while in the second factors linear in $\log n$ are necessary in the update time guarantees.

Here, let us point out another curious application of the data structure offered by Theorem~\ref{thm:td-data}. Using it, it is very easy to implement connectivity queries (whether given vertices $u$ and $v$ are in the same connected component) in time $\Oh(d)$: it will be always the case that the maintained forest $F$ has one tree per each connected component of $G$, so it suffices to check whether $u$ and $v$ are in the same tree of $F$, which can be done by following parent pointers to respective roots. This gives a data structure for dynamic connectivity in graphs of treedepth at most $d$ with update time $2^{\Oh(d^2)}$ and query time $\Oh(d)$. On the other hand, the $\Omega(\log n)$ lower bound for dynamic connectivity of Demaine and P\v{a}tra\c{s}cu~\cite{10.1145/1007352.1007435} applies even to forests of paths, which can be thought of the simplest classes that do {\em{not}} have bounded treedepth. This means that in some sense, the possibility of maintaining dynamic connectivity with update and query time independent of $n$ is tightly linked with assuming a bound on the treedepth of the considered dynamic graph.

A different lower bound methodology was proposed by Alman et al.~\cite{AlmanMW17}. Among other results, they proved that any data structure for the directed variant of \LPath for $k=5$ has to assume $\Omega(n^{\delta})$ query time, or $\Omega(n^{\delta})$ update time, or $\Omega(n^{1+\delta})$ initialization time on an edgeless graph, for some $\delta>0$. This lower bound is conditional, subject to a hypothesis called {\em{$\ell$-layered reachability oracle ($\ell$LRO) Conjecture}}, which in turn is implied by the Triangle Conjecture and by the 3SUM Conjecture --- assumptions commonly adopted in fine-grained complexity. Using this technique, we give analogous lower bounds for the following variations of the considered problems:
\begin{itemize}[nosep]
 \item dynamic undirected {\sc{$5$-Path}}, where instead of asking for any $5$-path, we look for a $5$-path with a specified pair of endpoints; and
 \item dynamic undirected {\sc{Exact-$5$-Cycle}}, where we ask for the existence of a cycle on {\em{exactly}} $5$ vertices, instead of {\em{at least}} $5$ vertices.
\end{itemize}
See Theorem~\ref{thm:cond-lb} in Section~\ref{sec:lower-bounds} for a formal statement. Note that in the static setting, all the variations mentioned above can be solved in fpt time using color coding or algebraic coding.
Thus, the tractability domain for \LPath and related problems is much narrower in the dynamic setting. It seems that the combinatorial links with treedepth, heavily exploited in our approach, are a necessary ingredient without which not only the technique breaks, but the problems actually become provably hard.

\paragraph*{Organization.} In Section~\ref{sec:overview} we present a concise overview of the reasoning leading to our main results, focusing on explaining the key ideas rather than technical details. In the subsequent sections we gradually build up all the necessary tools for the proofs of Theorems~\ref{thm:td-data} and~\ref{thm:main-intro}, which are concluded in Sections~\ref{sec:data-structure} and~\ref{sec:kpath}, respectively. Along the way, in Section~\ref{sec:obstructions} we prove the results on obstructions for bounded treedepth. As the arguments needed for the data structure for the \LCycle problem extend those needed for the \LPath problem, we mark sections concerning \LCycle with a small cycle placed next to the section title; these sections can be omitted by a reader interested only in \LPath.
In Section~\ref{sec:lower-bounds} we present the lower bounds, while in Section~\ref{sec:conclusions} we gather concluding remarks and outline further research directions.



\section{Overview}\label{sec:overview}

\subsection{Treedepth, elimination forests, and cores}\label{secov:cores}
In this subsection we give an overview of the material presented in Sections~\ref{sec:cores} and~\ref{sec:obstructions}.
Our first goal is to obtain a fine combinatorial understanding of elimination forests of optimum height, so that we will be able to efficiently recompute them upon edge insertions and deletions. 

Recall that an {\em{elimination forest}} (or equivalently treedepth decomposition) of a graph $G$ is a rooted forest on the same vertex set as $G$ satisfying the following property: for every edge $uv$ of $G$, either $u$ is an ancestor of $v$ in $F$, or vice versa. Note that edges of $F$ do not need to be present in $G$. Elimination forests are graph decompositions underlying the parameter {\em{treedepth}}, defined as the minimum possible height of an elimination forest. 
Note that an elimination forest of a connected graph is necessarily a tree.

 We will work with elimination forests that are in some sense also ``locally optimal'', as explained in the following definition. Here, for a forest $F$ and vertex $u\in V(F)$, by $F_u$ we denote the subtree of $F$ induced by $u$ and all its descendants. Also, $\desc_F(u)$ denotes the set of descendants of $u$ in $F$, including $u$ itself.

\begin{definition}
 An elimination forest $F$ of a graph $G$ is {\em{recursively optimal}} if for every vertex $u$, the graph $G[\desc_F(u)]$ is connected and has treedepth equal to the height of $F_u$. 
\end{definition}

In other words, in a recursively optimal elimination forest $F$, each subtree $F_u$ is an optimum-height elimination tree of $G[\desc_F(u)]$. Thus, the height of a recursively optimal elimination forest $F$ always matches the treedepth of the graph, as $F$ needs to optimally decompose each connected component in a separate tree. It is easy to see that every graph has a recursively optimal elimination forest, and, up to technical details, such a forest can be computed in time $2^{\Oh(d^2)}\cdot n^{\Oh(1)}$ using the algorithm of Reidl et al.~\cite{ReidlRVS14}.

\medskip

From now on, let us fix a graph $G$ and assume, for simplicity, that $G$ is connected. Let $T$ be a recursively optimal elimination tree of $G$, say of height $d$. For a vertex $u$, we define the {\em{strong reachability set}} $\SReach(u)$ as the set $N_G(\desc_T(u))$, that is, $\SReach(u)$ consists of all (strict) ancestors of $u$ that have neighbors among $\desc_T(u)$. The intuition is that $\SReach(u)$ is the set to which the subtree $T_u$ is ``attached'' in $T$. In other words, if we temporarily removed $T_u$ from $T$ and wanted to attach it back, then the optimal way would be to find the deepest vertex $m$ of $\SReach(u)$ and attach $T_u$ by making $u$ a child of $m$. In this way, the conditions in the definition of an elimination forest are satisfied, while $T_u$ is attached as high as~possible.

A {\em{prefix}} of $T$ is an ancestor-closed subset of vertices of $T$. For a non-empty prefix $K$ of $T$, an {\em{appendix}} of $K$ is a vertex that does not belong to $K$, but whose parent already belongs to $K$. The set of appendices of $K$ will be denoted by $\App(K)$. The following definition is the cornerstone of our analysis.

\begin{definition}
 Let $q\in \N$. A non-empty prefix $K$ of $T$ is a {\em{$q$-core}} (of $(G,T)$) if the following property holds: for every appendix $a\in \App(K)$ and subset $X\subseteq \SReach(a)$ of size at most $2$, $a$ has at least $q$ distinct siblings $w$ in $T$ such that $w\in K$, $X\subseteq \SReach(w)$, and $\height(T_w)\geq \height(T_a)$.
\end{definition}

Before we continue, let us verify that we can always find very small cores.

\begin{lemma}\label{lemov:small-core}
 For each $q\geq 2$, there is a $q$-core of size at most $(qd)^{\Oh(d)}$.
\end{lemma}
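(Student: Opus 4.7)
The plan is to build $K$ top-down, level by level, adding at each step only the children of currently-in-$K$ vertices that the $q$-core condition might demand. First I would initialise $K \leftarrow \{r\}$, where $r$ is the root of $T$, and then process vertices of $K$ in BFS order until every vertex of $K$ has been processed. When processing a vertex $v \in K$, for every subset $X \subseteq \{v\} \cup \anc_T(v)$ of size at most $2$ (including $X = \emptyset$), I would consider all children $c$ of $v$ in $T$ with $X \subseteq \SReach(c)$, and add to $K$ the $q$ of them with largest $\height(T_c)$, breaking ties arbitrarily (and adding all of them if fewer than $q$ exist).

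The set $K$ is a prefix by construction, since a child is added only when its parent is already in $K$. For the $q$-core property, fix $a \in \App(K)$, let $v$ be its parent, and let $X \subseteq \SReach(a)$ be of size at most $2$. Since $\SReach(a)$ consists of strict ancestors of $a$, we have $X \subseteq \{v\} \cup \anc_T(v)$, so this very $X$ was handled when $v$ was processed. The vertex $a$ was itself a candidate for $X$ (trivially $X \subseteq \SReach(a)$), yet $a \notin K$; this is possible only if strictly more than $q$ candidates existed and the top $q$ chosen by height all satisfy $\height(T_c) \geq \height(T_a)$. These $q$ children lie in $K$, are siblings of $a$, and satisfy $X \subseteq \SReach(c)$ and the required height inequality.

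For the size bound, at the moment $v$ is processed the set $\{v\} \cup \anc_T(v)$ has cardinality at most $d$, so the number of candidate subsets of size at most $2$ is $1 + d + \binom{d}{2} = \Oh(d^2)$, and each contributes at most $q$ new vertices to $K$. Hence each vertex of $K$ spawns at most $\Oh(q d^2)$ children in $K$. Since $T$ has height at most $d$, so does $K$, so $|K| \leq \sum_{i=0}^{d-1} (\Oh(q d^2))^i = (q d)^{\Oh(d)}$, as claimed.

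I do not foresee a serious obstacle here; the main design point is the choice of selection rule. A first attempt that also iterates over a separate height threshold $h$ and keeps the top $q$ per pair $(X,h)$ would work but pays an extra factor of $d$ per level; the cleaner rule of keeping the top $q$ children \emph{per} $X$ already suffices, because the excluded appendix itself serves as a witness forcing the $q$ chosen siblings to be at least as tall as it.
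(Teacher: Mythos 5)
Your proposal is correct and follows essentially the same route as the paper: a top-down marking procedure that, for each vertex already in $K$ and each subset $X$ of at most two of its ancestors (including itself), keeps the $q$ tallest children whose strong reachability set contains $X$, with the excluded appendix itself witnessing the height inequality. The only difference (BFS processing order versus the paper's recursive marking) is immaterial, and your size analysis matches the paper's $\Oh(qd^2)$-branching, depth-$d$ bound.
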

\begin{proof}[Sketch]
 Consider the following recursive marking procedure that can be applied to a vertex $u\in V(G)$. For each set $X$ consisting of at most $2$ ancestors of $u$ (including $u$), consider all the children $w$ of $u$ satisfying $X\subseteq \SReach(u)$, and mark $q$ of them with the largest values of $\height(T_w)$, or all of them if their number is smaller than $q$. Then apply the procedure recursively on each marked child of $u$. It is straightforward to see that if $K$ comprises of all vertices that got marked after applying the procedure to the root of $T$, then $K$ is a $q$-core. Since for a marked vertex $u$ we also mark at most $q\cdot (1+d+\binom{d}{2})=\Oh(qd^2)$ children of $u$, and $T$ has height at most $d$, it follows that $|K|\leq (qd)^{\Oh(d)}$.
\end{proof}

Let us now explain the idea behind the definition of a core. Suppose $K$ is a $q$-core, $a\in \App(K)$, and $w$ is a sibling of $a$ satisfying the property from the definition for some $X=\{x,y\}\subseteq \SReach(a)$. Since $T$ is recursively optimal, $G[\desc_T(w)]$ is connected. As $x,y\in \SReach(w)=N_G(\desc_T(w))$, we conclude that in $G$ there is a path $P^{x,y}_w$ such that $P^{x,y}_w$ has endpoints $x$ and $y$, and all the internal vertices of $P^{x,y}_w$ belong to $\desc_T(w)$. As we have $q$ such siblings $w$, we can find $q$ such paths $P^{x,y}_w$, and they will be pairwise internally vertex-disjoint. It can now easily be seen that, provided $q\geq d$, such a set of $q$ paths forces that in {\em{every}} elimination tree of $G$ of depth at most $d$, $x$ and $y$ have to be in the ancestor-descendant relation. Since none of the paths $P^{x,y}_w$ intersects $\desc_T(a)$, this conclusion can be drawn even if we removed all the vertices of $\desc_T(a)$ from $G$. This reasoning can be applied for every pair $\{x,y\}\subseteq \SReach(a)$. After fixing technical details, this amounts to the following statement.

\begin{lemma}\label{lemov:straight}
 Suppose $K$ is a $d$-core and $T^K$ is any elimination tree of the graph $G[K]$ of height at most $d$. Then for each $a\in \App(K)$, the set $\SReach(a)$ is {\em{straight}} in $T^K$, that is, all the vertices of $\SReach(a)$ lie on one root-to-leaf path in $T^K$. 
\end{lemma}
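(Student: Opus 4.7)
The plan is to argue by contradiction. Suppose there exist $a \in \App(K)$ and $x, y \in \SReach(a)$ such that neither is an ancestor of the other in $T^K$. A standard observation about elimination forests implies that the set $C$ of common $T^K$-ancestors of $x$ and $y$ (including their $T^K$-LCA $v$) is an $xy$-separator in $G[K]$, whose size equals the $T^K$-depth of $v$. Since $v$ is a strict $T^K$-ancestor of both $x$ and $y$, and both lie at depth at most $d$ in $T^K$, one gets $|C| \leq d-1$. By Menger's theorem, the maximum number of internally vertex-disjoint $xy$-paths in $G[K]$ is at most $d-1$, so it suffices to exhibit $d$ such paths to reach a contradiction.

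To produce these paths, I would apply the $d$-core property to $a$ with $X=\{x,y\}$, obtaining $d$ distinct siblings $w_1,\ldots,w_d$ of $a$ in $T$, each with $w_i\in K$, $\{x,y\}\subseteq \SReach(w_i)$, and $\height(T_{w_i})\geq \height(T_a)$. Because the subtrees $T_{w_i}$ are pairwise vertex-disjoint, it is enough to build, for each $i$ separately, an $xy$-path $P_i$ in $G[K]$ whose internal vertices lie in $\desc_T(w_i)\cap K$; the resulting paths are then automatically internally vertex-disjoint. The fact that some $xy$-path with interior in $\desc_T(w_i)$ exists at all in the ambient graph $G$ is immediate from recursive optimality of $T$: $G[\desc_T(w_i)]$ is connected and both $x$ and $y$ have $G$-neighbors inside $\desc_T(w_i)$.

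The main obstacle is keeping each $P_i$ entirely inside $G[K]$: a priori the extracted ambient path may dip into subtrees $\desc_T(a')$ rooted at deeper appendices $a'\in \App(K)\cap \desc_T(w_i)$ that sit outside $K$. I plan to deal with this by induction on the structure below $w_i$ (for instance on $\height(T_{w_i})$), rerouting every maximal detour of the path through some $\desc_T(a')$. The entry and exit vertices of such a detour belong to $\SReach(a')\cap \desc_T(w_i)\subseteq K$, they are $T$-ancestors of $a'$ and therefore lie on a single $T$-chain, and the inductive hypothesis (combined with the $d$-core property at $a'$ and recursive optimality applied inside $T_{a'}$'s siblings) supplies a $G[\desc_T(w_i)\cap K]$-shortcut between them. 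Composing these substitutions across all detours yields the required $P_i\subseteq G[K]$ and closes the contradiction. I expect the bulk of the technical work to lie precisely in packaging this inductive rerouting cleanly and in verifying that the substitutions do not damage internal vertex-disjointness among the $d$ main paths.
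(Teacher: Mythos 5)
Your proposal is correct and follows essentially the same route as the paper: the paper likewise applies the core condition to $\{x,y\}\subseteq \SReach(a)$ to obtain $d$ siblings of $a$ inside $K$ whose pairwise disjoint subtrees carry internally vertex-disjoint $x$--$y$ paths within $G[K]$, and then concludes straightness because the common $T^K$-ancestors of an incomparable pair form a separator of size at most $d-1$ (Lemma~\ref{lem:connectivity}). The rerouting induction you flag as the technical core is exactly what the paper isolates as Lemma~\ref{lem:core-connected} (by induction on subtree height, using only the $1$-core property, that $G[K\cap\desc_T(w)]$ is connected and $N_{G[K]}(K\cap\desc_T(w))=\SReach(w)$ for all $w\in K$), which yields the paths inside $G[K]$ directly and confirms that your sketched surgery goes through.
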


\begin{figure}
  \centering
  \includegraphics[width=0.6\textwidth]{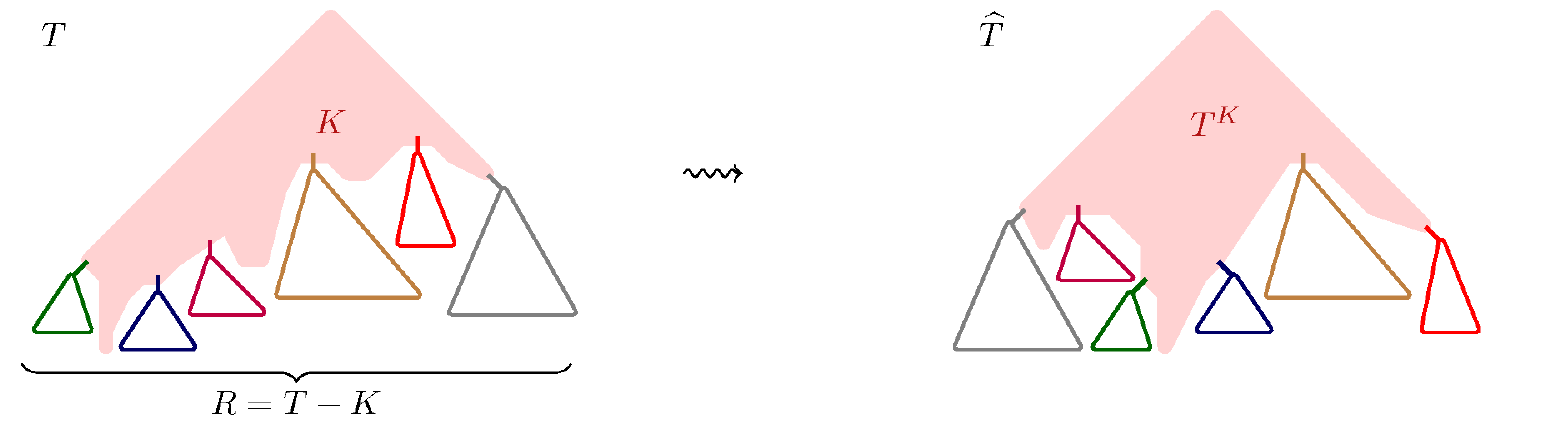}%
  \caption{Treedepth cores and their replaceable elimination trees.}\label{fig:choinka}
\end{figure}

Supposing that $K$ is a $d$-core, let $R=T-K$ be the rooted forest obtained by removing all the vertices of $K$ from $T$; see Figure~\ref{fig:choinka}. Note that $R$ is an elimination forest of the graph $G-K$. Then the conclusion of Lemma~\ref{lemov:straight} means for {\em{every}} elimination tree $T^K$ of $G[K]$ of height at most $d$ (possibly very different from $T[K]$),  $R$ is {\em{attachable}} to $T^K$ in the following sense: for each tree $S$ of $R$, the set $N_G(V(S))$ is straight in $T^K$. Recalling our previous intuition, this suggests that we can compute a new elimination forest $\wh{T}$ of $G$ by attaching $R$ ``below'' $T^K$ as follows (note that $\App(K)$ coincides with the set of roots of $R$). For each $a\in \App(K)$, we let $m$ be the deepest vertex of $\SReach(a)$ in $T^K$, and we attach the tree $R_a=T_a$ by making $a$ a child of $m$. It is straightforward to see that if $R$ is attachable to $T^K$, then the tree $\wh{T}$ obtained in this manner is indeed an elimination tree of $T$; we shall call it the {\em{extension}} of $T^K$ via $R$.

We see that $\wh{T}$ constructed as above is indeed an elimination forest of $G$, but so far we cannot say much about its height. Fortunately, from the definition of a core we can also derive useful properties in this respect. In particular, this is why in this definition we insisted that for each of the distinguished siblings $w$ of $a$, the height of $F_w$ is not smaller than the height of $F_a$. 

The intuition is as follows. Suppose $K$ is a $(d+1)$-core and $T^K$ is a recursively optimal elimination forest of $G[K]$ of height at most $d$. Consider any $a\in \App(K)$. By Lemma~\ref{lemov:straight}, the set $\SReach(a)$ is straight in $T^K$. Let then $m$ be the vertex of $\SReach(a)$ that is the deepest in $T^K$. Then, by the definition of the core we can find a set $W\subseteq K$ consisting of $d+1$ siblings $w$ of $a$ such that 
$$m\in \SReach(w)\qquad\textrm{and}\qquad\td(G[\desc_T(w)])=\height(T_w)\geq \height(T_a)=\td(G[\desc_T(a)]).$$
Here, the first and the last equality follows from the recursive optimality of $T$. Through a fairly complicated inductive scheme, we can show for one of the trees $T_w$ for $w\in W$, the graph $G[K\cap \desc_T(w)]$ has same treedepth as $G[\desc_T(w)]$ and its vertex set is fully contained in a subtree $T^K_x$ for some $x$ that is a child of $m$ in $T^K$. This witnesses that $\height(T^K_x)\geq \height(T_w)\geq \height(T_a)$. We infer that attaching $T_a$ below $m$ in the construction of $\wh{T}$ --- the extension of $T^K$ via $R=T-K$ --- {\em{cannot}} increase the height of $\wh{T}$ above $\height(T^K)$. This is because after the attachment, there anyway is a subtree rooted at a sibling of $a$ whose height is not smaller than the height of $T_a$. We may conclude the following.

\begin{lemma}\label{lemov:extend}
 Let $K$ be a $(d+1)$-core and let $T^K$ be any elimination tree of $G[K]$ of height at most $d$. Let $R=T-K$ and let $\wh{T}$ be the extension of $T^K$ via $R$ (which is well-defined by Lemma~\ref{lemov:straight}). Then $\wh{T}$ is a recursively optimal elimination tree of $G$ and the height of $\wh{T}$ is equal to the height of $T^K$.
\end{lemma}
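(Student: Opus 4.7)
The plan is to verify three properties of $\wh{T}$ in sequence: that it is an elimination forest of $G$, that its height equals $\height(T^K)$, and that it is recursively optimal. Well-definedness and the elimination-forest check reduce to a direct edge analysis based on Lemma~\ref{lemov:straight} (applicable since every $(d+1)$-core is a $d$-core): for each $a \in \App(K)$, straightness of $\SReach(a)$ in $T^K$ makes the attachment point $m_a$ unique and places every other member of $\SReach(a)$ on the root-to-$m_a$ path, hence on the root-to-$a$ path in $\wh{T}$. An edge of $G$ lying entirely in $K$ is handled by $T^K$; one lying entirely in $V(R)$ by $R$; and a mixed edge $uv$ with $v \in \desc_T(a)$, $u \in K$ has $u \in \SReach(a)$, and is therefore an ancestor of $v$ in $\wh{T}$ by construction.

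For the height equality, the nontrivial inequality is $\height(\wh{T}) \leq \height(T^K)$, which I would prove one appendix at a time. Fix $a \in \App(K)$ with attachment point $m = m_a$ and apply the $(d+1)$-core condition with the singleton $X = \{m\} \subseteq \SReach(a)$ to obtain a set $W$ of $d+1$ siblings $w$ of $a$ in $T$ such that $m \in \SReach(w)$ and $\height(T_w) \geq \height(T_a)$. The key substep, which I expect to be the main obstacle and which is precisely what consumes the ``+1'' in the definition of the core, is the following subclaim: at least one $w \in W$ satisfies $K \cap \desc_T(w) \subseteq V(T^K_x)$ for a single child $x$ of $m$ in $T^K$, together with $\td(G[K \cap \desc_T(w)]) = \height(T_w)$. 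I expect this to be proved by induction on $\height(T_a)$: if none of the $d+1$ siblings stays inside a single child-subtree of $m$ in $T^K$, then each is ``scattered'' across two or more children of $m$, and the connectivity of $G[\desc_T(w)]$ coming from recursive optimality of $T$ forces inside $T_w$ a strictly deeper appendix of an induced sub-core to which the induction hypothesis applies on smaller height; the large supply of siblings in $W$ is what absorbs the branching of this induction. Granting the subclaim, $T^K_x$ already realises a path of length at least $\height(T_a)$ below $m$, so inserting $T_a$ as a new child of $m$ cannot push the $\wh{T}$-height above $\height(T^K)$. The reverse inequality $\height(\wh{T}) \geq \height(T^K)$ is immediate because $T^K$ embeds into $\wh{T}$ as an ancestor-closed prefix.

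Recursive optimality is then a clean case analysis on the location of $u$ in $\wh{T}$. For $u \in V(R)$ we have $\wh{T}_u = T_u$ and $\desc_{\wh{T}}(u) = \desc_T(u)$, so both connectivity and the identity $\td(G[\desc_{\wh{T}}(u)]) = \height(\wh{T}_u)$ are inherited from recursive optimality of $T$. For $u \in K$, the vertex set $\desc_{\wh{T}}(u)$ is the union of $\desc_{T^K}(u)$ with all $\desc_T(a)$ for appendices $a$ whose attachment point $m_a$ lies in $\desc_{T^K}(u)$. Connectivity follows from a walk argument: each attached $\desc_T(a)$ is connected in $G$ by recursive optimality of $T$ and shares a $G$-neighbour with $\desc_{T^K}(u)$ via $m_a \in \SReach(a) \cap \desc_{T^K}(u)$. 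The identity $\td(G[\desc_{\wh{T}}(u)]) = \height(\wh{T}_u)$ is obtained by pairing the generic lower bound $\td \leq \height$, valid for any elimination forest, with the matching upper bound coming from applying the height argument of the previous paragraph locally inside $\wh{T}_u$.
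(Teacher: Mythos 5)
Your overall architecture matches the paper's (attachability via Lemma~\ref{lemov:straight}, the height bound via a sibling subtree of $T^K$ at least as tall as the attached tree, then recursive optimality by the case split $u\in V(R)$ versus $u\in K$), and you have correctly isolated the crux as the subclaim that some $w\in W$ has $K\cap\desc_T(w)$ contained in a single subtree strictly below $m$ in $T^K$, with $\td(G[K\cap\desc_T(w)])=\height(T_w)$. However, the mechanism you propose for that subclaim is misdirected. The containment half needs no induction at all: since $K$ is in particular a $1$-core and $T$ is recursively connected, $G[K\cap\desc_T(w)]$ is connected (Lemma~\ref{lem:core-connected}), so in any elimination forest of $G[K]$ it lies entirely below a single vertex $x$ (Lemma~\ref{lem:top}); the ``scattering across two or more children of $m$'' you build the induction on simply cannot occur. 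The role of having $d+1$ siblings is pure counting: since $|\anc_{T^K}(m)|\le\height(T^K)\le d$, some $w\in W$ satisfies $\desc_T(w)\cap\anc_{T^K}(m)=\emptyset$; then $x\notin\anc_{T^K}(m)$, while $m$ has a neighbour in $K\cap\desc_T(w)\subseteq\desc_{T^K}(x)$, which forces $x$ to be a strict descendant of $m$. That is where the ``$+1$'' is spent --- not on absorbing the branching of an induction.

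The half of the subclaim that genuinely requires work, namely $\td(G[K\cap\desc_T(w)])=\td(G[\desc_T(w)])=\height(T_w)$, is left without a real proof: your induction on $\height(T_a)$ says nothing about why restricting to the core cannot decrease the treedepth of the subgraph induced by $\desc_T(w)$. This is exactly (the generalization of) Lemma~\ref{lemov:td-opt}, i.e.\ Lemma~\ref{lem:core-opt-td}, and the paper proves it \emph{before} the present lemma, by its own induction on the height of $T$: assuming the equality for the strict descendants in $K$, one supposes for contradiction that a shorter elimination forest of the core part exists and reattaches $T-K$ below it (using the same connectivity-plus-counting step as above together with the induction hypothesis for the chosen sibling) to contradict optimality; this is why the paper derives Lemma~\ref{lemov:td-opt} first and the present statement from it (via Lemmas~\ref{lem:ssp} and~\ref{lem:ssp-optimal}). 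Without this ingredient your chain $\height(T^K_x)\ge\td(G[K\cap\desc_T(w)])=\height(T_w)\ge\height(T_a)$ breaks at the middle equality, and the bound $\height(\wh T)\le\height(T^K)$ remains unproven. A secondary point: your final step for $u\in K$ silently needs $\height(T^K_u)=\td(G[K\cap\desc_{\wh T}(u)])$, i.e.\ recursive optimality of $T^K$, which the formal counterpart of this statement (Lemma~\ref{lem:core-uberlemma}) indeed assumes even though the overview phrasing omits it; ``applying the height argument locally'' only yields $\height(\wh T_u)=\height(T^K_u)$, not that this equals $\td(G[\desc_{\wh T}(u)])$.
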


Recall that in the first place, we were interested in recomputing a recursively optimal elimination forest of a graph under edge insertions and edge deletions. Let then $H$ be a graph obtained from $G$ by either inserting an edge $uv$, or deleting an edge $uv$. By following the same reasoning that led us to Lemmas~\ref{lemov:straight} and~\ref{lemov:extend}, but additionally keeping track of the modified edge $uv$, we can prove the following.

\begin{lemma}\label{lemov:extend-update}
 Let $K$ be a $(d+2)$-core of $(G,T)$ that includes both $u$ and $v$, and let $T^K$ be any elimination tree of $H[K]$ of height at most $d$. Let $R=T-K$. Then $R$ is attachable to $T^K$. Moreover, if $\wh{T}$ is the extension of $T^K$ via $R$, then $\wh{T}$ is a recursively optimal elimination forest of $H$ whose height is equal to the height of $T^K$.
\end{lemma}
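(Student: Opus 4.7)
The plan is to re-run the proofs of Lemmas~\ref{lemov:straight} and~\ref{lemov:extend} for the modified graph $H$, using the extra unit of slack in the $(d{+}2)$-core assumption to absorb the single edge $uv$ in which $H$ differs from $G$. The starting observation is a structural invariance: since $K$ is a prefix of $T$ containing both $u$ and $v$, every appendix $a \in \App(K)$ satisfies $\desc_T(a) \cap \{u,v\} = \emptyset$. Hence the modified edge $uv$ lies entirely inside $K$, and
\[
N_H(\desc_T(a)) = N_G(\desc_T(a)) = \SReach(a) \qquad \text{for every } a \in \App(K).
\]
In particular, the rooted forest $R = T - K$ and the candidate attachment targets $\SReach(a)$ are identical whether $T$ is interpreted relative to $G$ or relative to $H$.

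I would then establish attachability of $R$ to $T^K$, mimicking Lemma~\ref{lemov:straight}. Fix $a \in \App(K)$ and $\{x,y\} \subseteq \SReach(a)$; the $(d{+}2)$-core property produces at least $d{+}2$ siblings $w$ of $a$ in $T$, all lying in $K$, with $\{x,y\} \subseteq \SReach(w)$. Recursive optimality of $T$ in $G$ makes each $G[\desc_T(w)]$ connected, yielding pairwise internally vertex-disjoint $xy$-paths whose interiors lie in the respective subtrees $\desc_T(w)$. The toggled edge $uv$ has both endpoints in $K$, whereas the interiors of these paths lie in the subtrees $\desc_T(w)$ for distinct siblings $w$; since these subtrees are pairwise disjoint, $uv$ can appear on at most one of the $d{+}2$ paths (the unique one, if any, in whose corresponding subtree the relevant endpoint of $uv$ lies). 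Discarding this at-most-one potentially affected path still leaves $d{+}1 > d$ internally vertex-disjoint $xy$-paths in $H$, which by the treedepth-cut argument behind Lemma~\ref{lemov:straight} forces $x$ and $y$ comparable in $T^K$. Hence $\SReach(a)$ is straight in $T^K$, giving attachability.

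The remaining work is to verify that $\wh{T}$ is a recursively optimal elimination tree of $H$ of height equal to $\height(T^K)$. The elim-tree property is routine: edges of $H[K]$ are handled by $T^K$; edges of $H$ with an endpoint outside $K$ are edges of $G$ (since $uv$ sits inside $K$), so they are handled by $T$, and attaching each $T_a$ below the deepest vertex of $\SReach(a)$ in $T^K$ preserves their ancestor-descendant relation. For the height bound I would transcribe the argument of Lemma~\ref{lemov:extend}: with $m$ the deepest vertex of $\SReach(a)$ in $T^K$, the $(d{+}2)$-core condition, minus at most one witness potentially broken by the flipped edge, still supplies $d{+}1$ siblings $w$ of $a$ in $T$ with $m \in \SReach(w)$ and $\height(T_w) \geq \height(T_a)$; the same inductive scheme then exhibits a subtree of $T^K$ rooted under $m$ of height at least $\height(T_a)$, so attaching $T_a$ at $m$ does not raise the height. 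Recursive optimality transfers on the $R$-part via the identity $H[\desc_T(u)] = G[\desc_T(u)]$ for $u \notin K$, and on the $K$-part from the height bound together with the structure inherited from $T^K$. The principal technical hurdle is this last step: the delicate inductive height argument of Lemma~\ref{lemov:extend} must be carried over to $H$, verifying that the flip of a single edge cannot knock out all the witness siblings needed for the induction --- which is exactly why the core parameter is strengthened from $(d{+}1)$ in Lemma~\ref{lemov:extend} to $(d{+}2)$ here.
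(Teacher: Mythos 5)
Your proposal is correct and follows essentially the same route as the paper: the paper also obtains this lemma by re-running the straightness and sibling-substitution/height arguments while reserving one extra witness sibling to absorb the single modified edge $uv$ (formally, this is the $(K,\ell)$-restricted augmentation machinery with $\ell\leq 1$ in Lemmas~\ref{lem:straight}, \ref{lem:ssp}, and~\ref{lem:ssp-optimal}, combined in Lemma~\ref{lem:core-uberlemma}). Your counting (at most one witness subtree can be incident to $uv$, since $u,v\in K$ lie inside the core and are comparable in $T$) and the resulting $d+1$ surviving witnesses match the paper's accounting exactly.
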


Note that in Lemma~\ref{lemov:extend-update} we require that $K$ is a $(d+2)$-core, while Lemma~\ref{lemov:extend} only assumed that $K$ is a $(d+1)$-core. This is because some of the witnessing structures, for instance paths $P^{x,y}_w$ considered in the reasoning leading to Lemma~\ref{lemov:straight}, might be affected by the removal of the edge $uv$. However, as this is just a single edge, adding $1$ to the requirement on the core suffices for the argument to go through.

Lemma~\ref{lemov:extend-update} suggests the following strategy for recomputing a recursively optimal elimination tree upon inserting or deleting an edge $uv$. Here, $H$ is the updated graph.
\begin{itemize}[nosep]
 \item Using the procedure described in the proof of Lemma~\ref{lemov:small-core}, compute a $(d+2)$-core $K$ of $T$ of size $d^{\Oh(d)}$. By modifying this procedure slightly, we may make sure that $u,v\in K$.
 \item Using the static algorithm of Reidl et al.~\cite{ReidlRVS14}, compute a recursively optimal elimination forest $T^K$ of $H[K]$. Since the treedepth of $H[K]$ does not exceed the treedepth of $H$, which in turn does not exceed $d+1$, this takes time $2^{\Oh(d^2)}$. Observe that if the treedepth of $H[K]$ turns out to be larger than $d$, then the same can be concluded about $H$.
 \item Letting $R=T-K$, compute $\wh{T}$: the extension of $T^K$ via $R$.
\end{itemize}
Then Lemma~\ref{lemov:extend-update} asserts that $\wh{T}$ is a recursively optimal elimination forests of $H$. 

Note that this procedure readily can be implemented as a static algorithm, but the idea seems useful in the dynamic setting as well. This is because the forest $R$ --- which constitutes a vast majority of the graph, provided $d\ll n$ --- first gets detached from $T$ and then gets attached below $T^K$ while keeping its structure intact. In the next section, our goal will be to implement this detachment and attachment so that the internal data about $R$ does not need to be updated at all.

\medskip

Observe that from Lemma~\ref{lemov:extend} we may immediately derive the following conclusion: the subgraph induced by a $d$-core inherits the treedepth of the original graph.

\begin{lemma}\label{lemov:td-opt}
 Let $K$ be a $(d+1)$-core. Then the treedepth of $G[K]$ is equal to the treedepth of $G$.
\end{lemma}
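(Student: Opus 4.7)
The plan is to prove $\td(G[K]) = \td(G)$ by establishing the two inequalities separately. The direction $\td(G[K]) \leq \td(G)$ is essentially free: since $K$ is ancestor-closed, the restriction $T[K]$ is a genuine rooted subforest of $T$ whose ancestor-descendant relation is inherited. Any edge of $G[K]$ is also an edge of $G$, so the elimination-forest condition is preserved, and $T[K]$ witnesses $\td(G[K]) \leq \height(T) = \td(G)$ (recall $T$ is recursively optimal).

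For the reverse inequality I plan to argue by contradiction, writing $d := \td(G)$. Assume $\td(G[K]) \leq d-1$ and fix any elimination forest $T^K$ of $G[K]$ of height at most $d-1$. The idea is to feed $T^K$ into Lemma~\ref{lemov:extend} with that lemma's parameter reset from $d$ to $d-1$. This reformulation requires $K$ to be a $d$-core, while our hypothesis gives only that $K$ is a $(d+1)$-core. The gap is bridged by the (trivial) monotonicity of the $q$-core condition in $q$: if, for every $a \in \App(K)$ and every $X \subseteq \SReach(a)$ with $|X|\leq 2$, one can find $d+1$ suitable siblings, then one can certainly find $d$ of them by dropping one from the witnessing set. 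Hence $K$ is a $d$-core, and Lemma~\ref{lemov:extend} produces an extension $\wh{T}$ of $T^K$ via $R := T - K$ that is an elimination tree of $G$ of height equal to $\height(T^K) \leq d-1$. This contradicts $\td(G) = d$ and completes the argument.

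No serious obstacle is anticipated here: the statement is essentially an immediate corollary of Lemma~\ref{lemov:extend}. The only point requiring any care is aligning the parameters so that the hypothesized height of $T^K$ and the strength of the core differ by exactly one, which is precisely what the monotonicity observation provides.
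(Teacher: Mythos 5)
Your overall strategy is the same as the paper's: combine the easy inequality $\td(G[K])\leq\td(G)$ with Lemma~\ref{lemov:extend} to transfer a shallow elimination forest of $G[K]$ to one of $G$. The easy direction is fine. The problem is the step ``feed $T^K$ into Lemma~\ref{lemov:extend} with that lemma's parameter reset from $d$ to $d-1$''. The quantity $d$ in Lemma~\ref{lemov:extend} is not a free parameter that can be lowered: in the overview it is \emph{defined} as the height of the fixed recursively optimal elimination tree $T$ of $G$, i.e.\ $d=\td(G)$, and in the formal analogue (Lemma~\ref{lem:core-uberlemma}) the parameter is constrained by the hypothesis that the ambient recursively optimal forest $F=T$ has height at most $d$. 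Instantiating with $d-1$ fails exactly at that hypothesis, since $\height(T)=d$. The monotonicity of the $q$-core condition in $q$ repairs only the core-strength hypothesis, which was never the obstacle; the hypothesis that actually breaks is left unaddressed. So the ``reset'' lemma is a different statement that you would have to prove, not cite (and, as the remark after Lemma~\ref{lemov:td-opt} indicates, the proof machinery behind Lemma~\ref{lemov:extend} itself runs through a generalization of the statement you are trying to establish, so this is not a step to wave at).

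The repair is immediate and shows the contradiction detour is unnecessary: Lemma~\ref{lemov:extend}, as stated with its own $d$ and the $(d+1)$-core you already have, only requires $T^K$ to have height \emph{at most} $d$, so your height-$(d-1)$ forest qualifies verbatim --- no reset, no monotonicity argument. Better yet, drop the contradiction entirely: take $T^K$ to be an optimal elimination forest of $G[K]$; by your easy direction $\height(T^K)=\td(G[K])\leq d$, so Lemma~\ref{lemov:extend} applies and yields an elimination tree of $G$ of height $\td(G[K])$, giving $\td(G)\leq\td(G[K])$ directly. That is exactly the paper's proof.
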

\begin{proof}
 Let $T^K$ be an elimination forest of $G[K]$ of minimum height. By Lemma~\ref{lemov:extend}, there exists an elimination forest of $G$ of height equal to the height of $T^K$. This means that the treedepth of $G$ is not larger than the treedepth of $G[K]$. As the reverse inequality is obvious, the lemma follows. 
\end{proof}

We note that in reality, we prove (the formal analogs of) Lemmas~\ref{lemov:extend} and Lemmas~\ref{lemov:td-opt} in the reverse order, as (a generalization of) Lemma~\ref{lemov:td-opt} is needed in the inductive scheme used in the proof of Lemma~\ref{lemov:extend}.

From Lemmas~\ref{lemov:small-core} and~\ref{lemov:td-opt} we may now immediately derive the improved bounds on minimal obstructions for bounded treedepth. Recall here that a graph $G$ is a minimal obstruction for treedepth $d$ if the treedepth of $G$ is larger than $d$, but every proper induced subgraph of $G$ has treedepth at most $d$.

\begin{theorem}\label{thmov:obstructions}
 Every minimal obstruction for treedepth $d$ has at most $d^{\Oh(d)}$ vertices.
\end{theorem}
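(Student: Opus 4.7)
The plan is to combine the size bound on cores from Lemma~\ref{lemov:small-core} with the treedepth-preservation property from Lemma~\ref{lemov:td-opt}, using a simple observation that pins down the treedepth of a minimal obstruction exactly. Specifically, I would first argue that if $G$ is a minimal obstruction for treedepth $d$, then $\td(G) = d+1$: on the one hand $\td(G) > d$ by definition, and on the other hand, removing any vertex $v$ yields a graph of treedepth at most $d$, so taking an optimum-height elimination forest of $G - v$ and declaring $v$ to be a new root above it produces an elimination tree of $G$ of height at most $d+1$.

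With $\td(G) = d+1$ in hand, the main step is direct. I fix a recursively optimal elimination tree $T$ of $G$; its height is then exactly $d+1$. Applying Lemma~\ref{lemov:small-core} with $q = d+2$ produces a $(d+2)$-core $K$ of $(G, T)$ of size at most $\bigl((d+2)(d+1)\bigr)^{\Oh(d+1)} = d^{\Oh(d)}$. Since $q = d+2$ equals (the height of $T$)$\,+ 1$, Lemma~\ref{lemov:td-opt} applies and yields $\td(G[K]) = \td(G) = d+1 > d$.

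To finish, I invoke minimality: if $K$ were a proper subset of $V(G)$, then $G[K]$ would be a proper induced subgraph of $G$ with treedepth strictly greater than $d$, contradicting the assumption that every proper induced subgraph of $G$ has treedepth at most $d$. Hence $K = V(G)$, and therefore $|V(G)| = |K| \leq d^{\Oh(d)}$, as required.

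I do not anticipate a real obstacle here, as all the combinatorial work is already packaged into Lemmas~\ref{lemov:small-core} and~\ref{lemov:td-opt}. The only point requiring attention is a mild bookkeeping issue: the ``$d$'' appearing in those lemmas refers to the height of the elimination tree under consideration, while the ``$d$'' in Theorem~\ref{thmov:obstructions} is the obstruction parameter. Once one observes that $\td(G) = d+1$ for a minimal obstruction, the two viewpoints are reconciled and the small shift $q = d+2$ is absorbed into the bound via $(d+1)^{\Oh(d+1)} = d^{\Oh(d)}$.
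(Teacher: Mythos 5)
Your proposal is correct and follows essentially the same route as the paper's proof: observe $\td(G)=d+1$, take a recursively optimal elimination tree, extract a small core via Lemma~\ref{lemov:small-core}, apply Lemma~\ref{lemov:td-opt} to see that the core already has treedepth $d+1$, and conclude $K=V(G)$ by minimality. Your slightly more conservative choice of $q=d+2$ (accounting for the fact that the height is $d+1$ rather than $d$) is in fact the careful reading of the overview-level Lemma~\ref{lemov:td-opt}, and it is absorbed into the $d^{\Oh(d)}$ bound exactly as you say.
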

\begin{proof}
 Let $G$ be a minimal obstruction for treedepth $d$. Clearly, $G$ is connected. As removing one vertex decreases the treedepth by at most $1$, the treedepth of $G$ is equal to $d+1$. Let $T$ be a recursively optimal elimination tree of $G$; then $T$ has height $d+1$. By Lemma~\ref{lemov:small-core}, we may find a $(d+1)$-core $K$ of $(G,T)$ of size~$d^{\Oh(d)}$. By Lemma~\ref{lemov:td-opt}, the treedepth of $G[K]$ is $d+1$. Since $G$ is a minimal obstruction, we necessarily have $K=V(G)$. So $G$ has $d^{\Oh(d)}$ vertices.
\end{proof}

\subsection{Dynamic treedepth data structure}\label{secov:data}

In this subsection we give an overview of the material presented in Sections~\ref{sec:data-structure} and~\ref{sec:dp}, in particular we explain the proof of Theorem~\ref{thm:td-data}. The idea is to design the dynamic treedepth data structure so that the detachment/attachment strategy described in the previous subsection can be implemented efficiently. We note that the internal organization of information in our data structure roughly follows the general strategy of Dvo\v{r}\'ak et al.~\cite{DvorakKT14}, but the approach we use to implement the update methods, which is based on the analysis of cores that we developed in the previous section, is completely new and different from~\cite{DvorakKT14}. This is the part of the reasoning that leads to the improvement.

Suppose that $G$ is the considered graph, say connected for simplicity, and $T$ is an elimination tree of $G$ of depth at most $d$. Suppose further that $G$ is updated by inserting or deleting an edge $uv$, and $H$ is the updated graph. As outlined in the previous subsection, we should compute a $(d+2)$-core $K$ of $(G,T)$ of size $d^{\Oh(d)}$, recompute a recursively optimal elimination forest $T^K$ of $H[K]$ using a static algorithm, and reattach $R\coloneqq T-K$ below $T^K$. Consider any appendix $a\in \App(K)$ and recall that reattaching $R_a=T_a$ boils down to making $a$ a child of the vertex of $\SReach(a)$ that is the deepest in $T^K$. Now comes the main observation: for any other appendix $a'\in \App(K)$ satisfying $\SReach(a')=\SReach(a)$, the tree $R_{a'}=T_{a'}$ will be attached at exactly the same place as $R_a$. Therefore, we can treat all trees $R_a$ with the same $\SReach(a)$ as one ``batch'', which will be detached from $T$ and reattached to $\wh{T}$ concurrently. Here, it is not hard to see that all the trees of this batch have the same parent in $T$, which obviously belongs to $K$.

We now implement this idea algorithmically. The tree $T$ is stored as follows. For every vertex $u$, we remember the parent of $u$ in $T$, $\SReach(u)$, and $\Up(u)\coloneqq \SReach(u)\cap N_G(u)$; the last set is used to represent the edge set of $G$. Further, $u$ remembers all its children, but these children are partitioned into {\em{buckets}} as follows: for each $X\subseteq \SReach(u)\cup \{u\}$ and $i\leq d$, we store the bucket 
$$\bucket[u,X,i]\coloneqq \{\,v\in \mathsf{children}(u)\ \colon\ \SReach(v)=X\textrm{ and }\height(T_v)=i\,\}.$$
Thus, buckets $\bucket[u,\cdot,\cdot]$ form a partition of the children of $u$ and there are at most $2^d\cdot d$ buckets associated with $u$. Buckets are represented using doubly-linked lists.

Inserting or deleting the edge $uv$ can now be implemented as follows:
\begin{itemize}[nosep]
 \item Construct a $(d+2)$-core $K$ of size $d^{\Oh(d)}$ that includes $u$ and $v$. Having access to buckets, this can be done by simulating the marking procedure presented in Lemma~\ref{lemov:small-core} in time $d^{\Oh(d)}$.
 \item Apply the static algorithm of Reidl et al.~\cite{ReidlRVS14} to compute a recursively optimal elimination tree $T^K$ of~$H[K]$. This step takes time $2^{\Oh(d^2)}\cdot |K|^{\Oh(1)}=2^{\Oh(d^2)}$ and is the only bottleneck: all the other steps take time $d^{\Oh(d)}$. Also, if it turns out that $\height(T^K)>d$, then the treedepth of $H$ exceeds $d$ and the update should be rejected.
 \item Remove all vertices of $K$ from all the buckets. This can be done in time $\Oh(|K|)$ by remembering, for each vertex, a pointer to a list element representing it in the bucket to which it belongs.
 \item For each $u\in K$, $X\subseteq \SReach(u)\cup \{u\}$, and $i\leq d$, rename the bucket $\bucket[u,X,i]$ to $\bucket[m,X,i]$, where $m$ is the vertex of $\SReach(u)$ that is the deepest in $T^K$. Note that there are at most $|K|\cdot 2^d\cdot d=d^{\Oh(d)}$ buckets that need to be renamed in this way.
 \item Recompute the information for the vertices of $K$ and place them in appropriate buckets. This can be done in time $2^{\Oh(d)}\cdot |K|^{\Oh(1)}=d^{\Oh(d)}$  by a bottom-up traversal of $T^K$.
\end{itemize}
The key observation is that in such an implementation, the following assertion holds: for each renamed bucket $\bucket[u,X,i]$, all the values stored for vertices of trees $R_a$ for $a\in \bucket[u,X,i]$ do not need to be updated at all. The only exception are the parent pointers for vertices $a\in \bucket[u,X,i]$, which after renaming should all point to the new parent $m$. This can be easily remedied by storing one parent pointer per bucket, and changing it in a single operation. This concludes the proof of Theorem~\ref{thm:td-data}.

\paragraph*{Configuration schemes.} As mentioned in Section~\ref{sec:introduction}, the dynamic treedepth data structure can be conveniently augmented to maintain a run of a dynamic programming procedure on the stored elimination forest. This applies to a wide range of dynamic programming procedures, in particular those obtained for $\MSO_2$-expressible problems through the classic connection with tree automata. In Section~\ref{sec:dp} we present a general formalism of {\em{configuration schemes}} that can be used to formulate such dynamic programming procedures. To keep the overview simple, we now explain how this idea applies to \LPath.

Suppose $T$ is the maintained elimination tree and consider any $u\in V(G)$. Let $X=\SReach(u)$ and let $G_u$ be the subgraph of $G$ such that the vertex set of $G_u$ is $X\cup \desc_T(u)$, while the edge set of $G_u$ comprises of all the edges of $G$ that have an endpoint in $\desc_T(u)$. Note that, thus, $X$ forms an independent set in $G_u$. We now define a set $\Conf(X)$ of {\em{configurations}} on $X$: a configuration $c$ is a pair $(F,j)$, where $F$ is a {\em{linear forest}} (i.e. a forest of paths) on vertex set $X\cup \{s,t\}$, where $s,t$ are special vertices, and $j$ is an integer not larger than $k$ (the requested vertex count of the path). Note that $|\Conf(X)|\leq |X|^{\Oh(|X|)}\cdot (k+1)\leq d^{\Oh(d)}\cdot k$. Configuration $c=(F,j)$ is {\em{realizable}} in $G_u$ if in $G_u$ there is a family of paths $\{P_e\colon e\in E(F)\}$ of total length $j$ such that the paths $P_e$ are disjoint apart from endpoints in $X$, and the endpoints of $P_e$ match the endpoints of $e$. Here, the special vertices $s,t$ can be replaced with any vertices in $G_u$. Then with each $u\in V(G)$, we can associate the set $\conf(G_u)\subseteq \Conf(X)$ comprising configurations realizable in $G_u$. Whether $G$ contains a $k$-path can be determined by checking whether $\conf(G_r)$, where $r$ is the root of $T$, contains configuration $(F_{st},k-1)$, where $F_{st}$ has only one edge: $st$.

This configuration scheme has two important properties:
\begin{itemize}[nosep]
 \item {\em{Compositionality}}: $\conf(G_u)$ can be computed from the multiset $\{\{\conf(G_v)\colon v\in \chld(u)\}\}$.
 \item {\em{Idempotence}}: There is a threshold $\tau$ (equal to $d+2$) such that for the computation above, it is immaterial whether a configuration is realized in $\tau$ or more children of $u$ (see Section~\ref{sec:dp} for a formal~definition).
\end{itemize}
We show that these two basic properties alone are sufficient for augmenting the dynamic treedepth data structure so that with each $u\in V(G)$, we also implicitly store $\conf(G_u)$. This of course introduces factors depending on the configuration scheme to the update time, but in case \LPath and assuming $k\leq d$, these factors are dominated by the $2^{\Oh(d^2)}$ update time of the data structure.

The augmentation is done as follows. For every bucket $\bucket[u,X,i]$ stored in the data structure, we additionally store a {\em{mug}} $\bucket[u,X,i,c]$ for each configuration $c\in \Conf(X)$. This mug comprises all $v\in \bucket[u,X,i]$ for which $c\in \conf(G_v)$, and is organized as a doubly-linked list (a sublist of $\bucket[u,X,i]$). Note that each $v\in \bucket[u,X,i]$ can appear in multiple mugs, but the number of mugs is bounded by $|\Conf(X)|$, which depends only on $d$ and the configuration scheme in question. The set $\conf(G_v)$ corresponds to the set of mugs to which $v$ belongs. It is now not hard to maintain the mugs during updates using the assumptions of compositionality and idempotence, similarly as we do for buckets.

\subsection{Postponing insertions}

Using all the tools prepared so far for $d=k-1$,
we can implement a fully dynamic data structure that for a graph $G$, promised to be always of treedepth smaller than $k$, maintains an elimination forest of $G$ of height smaller than $k$ together with the answer to the query whether $G$ contains a $k$-path. Upon receiving an invariant-breaking edge insertion, the data structure rejects the update and reports this. We now use the technique of Eppstein et al.~\cite{EppsteinGIS96} to turn this into a data structure working without the promise.

We maintain the dynamic treedepth data structure $\Dt$ described above, which stores a subgraph $G'$ of $G$. Additionally, we have a queue $Q$ of edges whose insertions are postponed. We maintain the invariants: $G$ consists of $G'$ and all the edges stored in $Q$; $G'$ has treedepth smaller than $k$; and if $Q$ is non-empty, then $G'+e$ has treedepth at least $k$, where $e$ is the edge at the front of $Q$. Thus, if $Q$ is empty then $G'=G$. The query about a $k$-path can be implemented as follows: if $Q$ is empty then we can simply query $\Dt$, and otherwise the answer is $\true$, because $G$ has treedepth at least $k$. When inserting an edge, we either try to insert it into $\Dt$ in case $Q$ is empty, or we push it at the back of $Q$ otherwise. The former case may result in rejecting the insertion and pushing the edge into $Q$. Finally, when deleting an edge we either delete it from $\Dt$ or from $Q$, depending where it is currently stored. We may now need to perform a clean-up: iteratively pop an edge from the front of $Q$ and insert it into $\Dt$. This can take large worst-case time, but it is not hard to see that the amortized time remains constant. We use the dictionary to quickly locate edges within $Q$.

\subsection{Detecting cycles}

\CycleNoteSub

Finally, we show how the whole machinery can be put into motion to handle also the \LCycle problem. That is, we give an overview of the material presented in Sections~\ref{sec:splitting-merging},~\ref{sec:scheme-ms}, and~\ref{sec:cyc_det_ds}.

Recall that a biconnected graph of treedepth at least $k^2$ necessarily contains a simple cycle on at least $k$ vertices~\cite[Proposition~6.2]{NesetrilM12}. Hence, if $G$ does not contain a simple cycle on at least $k$ vertices --- and using the technique of Eppstein et al.~\cite{EppsteinGIS96} we can focus on this case --- then every biconnected component of $G$ has treedepth smaller than $k^2$. Therefore, in our data structure we maintain the partition $\Prt$ of $G$ into biconnected components and, for each $H\in \Prt$, the dynamic treedepth data structure $\Dt[H]$ for $d=k^2$, which stores $H$ together with some recursively optimal elimination tree $T_H$ of height at most $d$.

To efficiently handle the partition $\Prt$ upon updates, we maintain a spanning forest $\Upsilon$ of $G$ using the top tree data structure of Alstrup et al.~\cite{AlstrupIP,AlstrupJ}. This data structure supports adding and removing edges from $\Upsilon$ in time $\Oh(\log n)$, as well as the following two queries about a pair of vertices $u,v$:
\begin{itemize}[nosep]
 \item What is the distance between $u$ and $v$ in $\Upsilon$?
 \item If $u,v$ are in the same connected component of $\Upsilon$, return the path in $\Upsilon$ between $u$ and $v$. 
\end{itemize}
These queries take time $\Oh(\log n)$ and $\Oh(\ell \log n)$, respectively, where $\ell$ is the length of the reported path.

Consider now the operation of inserting an edge $uv$. First, we check what is the distance between $u$ and $v$ in $\Upsilon$. If $u$ and $v$ are in different connected components of $\Upsilon$, and therefore also of $G$, then we add $uv$ to $\Upsilon$ and we add a new biconnected component consisting only of $uv$. If $u$ and $v$ are in the same connected component of $\Upsilon$, but the distance between them in $\Upsilon$ is at least $k-1$, then the insertion should be rejected: $uv$ would close a simple cycle of length at least $k$. Otherwise, in time $\Oh(k \log n)$ we can retrieve a path $P\subseteq \Upsilon$ with endpoints $u$ and $v$, and this path has length smaller than $k-1$.

It may happen that the edges of $P$ belong to different biconnected components of $G$. It is easy to see that then, the following should happen to the partition $\Prt$ after the insertion of $uv$: all the biconnected components containing edges of $P$ get merged into one biconnected component. To carry this out, we iterate through the (at most $k-2$) edges on $P$ and if two consecutive edges $xy$ and $yz$ belong to different biconnected components $H_1$ and $H_2$, then we merge $H_1$ and $H_2$. This requires merging the data structures $\Dt[H_1]$ and $\Dt[H_2]$, and in particular computing a recursively optimal elimination forest of the union of $H_1$ and $H_2$. We show that this can be done using the toolbox of cores as follows. Let $T_1$ and $T_2$ be the elimination trees of $H_1$ and $H_2$ stored in $\Dt[H_1]$ and $\Dt[H_2]$, respectively.
\begin{itemize}[nosep]
 \item First, we compute $(d+1)$-cores $K_1$ and $K_2$ of $(H_1,T_1)$ and $(H_2,T_2)$, respectively, each of size $d^{\Oh(d)}=k^{\Oh(k^2)}$. We make sure that $\{x,y\}\subseteq K_1$ and $\{y,z\}\subseteq K_2$.
 \item Letting $K\coloneqq K_1\cup K_2$, we compute a recursively optimal elimination forest $T^K$ of $G[K]$ using the static algorithm of Reidl et al.~\cite{ReidlRVS14}. This takes time $2^{\Oh(d^2)}=2^{\Oh(k^4)}$.
 \item We construct an elimination forest $\wh{T}$ of $H_1\cup H_2$ by attaching both the forests $T_1-K_1$ and $T_2-K_2$ below $T^K$, as in the extension operation.
\end{itemize}
It can be argued that $\wh{T}$ constructed in this manner is a recursively optimal elimination forest of $H_1\cup H_2$. Moreover, all of this can be done in time $2^{\Oh(d^2)}=2^{\Oh(k^4)}$ using our representation of the dynamic treedepth data structure. Note that since the length of $P$ is smaller than $k-1$, we perform at most $k-3$ such merges.

The operation of edge deletion is essentially symmetric: we need to split biconnected components instead of merging them, which can be done analogously. However, there is one issue: the deleted edge $uv$ may belong to $\Upsilon$, in which case it is pointless to query $\Upsilon$ for a $u$-to-$v$ path $P$ along which the splits should be performed. Fortunately, we show that in this case, we can retrieve a suitable path $P$ from the data structure $\Dt[H]$, where $H$ is the biconnected component that contains $uv$. Note that $P$ has length smaller than $k$, for otherwise together with $uv$ it would constitute a simple cycle of length at least $k$. Another caveat is that in order to maintain the invariant that $\Upsilon$ is spanning, after deleting $uv$ from $\Upsilon$ we may need to find a replacement edge and insert it into $\Upsilon$. Again fortunately, the retrieved path $P$ provides at most $k-2$ candidates for such a replacement edge.


\section{Preliminaries}\label{sec:prelims}

\paragraph*{Graphs.} We use standard graph notation. For a graph $G$, by $V(G)$ and $E(G)$ we denote the vertex and the edge set of $G$, respectively.
The {\em{open}} and {\em{closed neighborhoods}} of a vertex $u$ are respectively defined as $N_G(u)\coloneqq \{v\colon uv\in E(G)\}$ and $N_G[u]\coloneqq N_G(u)\cup \{u\}$. 
We extend this notation to sets of vertices as follows: for $X\subseteq V(G)$, we write $N_G[X]\coloneqq \bigcup_{x\in X} N_G[x]$ and $N_G(X)\coloneqq N_G[X]\setminus X$.
For a subset of vertices $A$ of a graph $G$, the subgraph {\em{induced}} by $A$, denoted $G[A]$, consists of $A$ and all the edges of $G$ with both endpoints in $A$.
For a vertex $u\in V(G)$, by $G-u$ we denote the graph obtained from $G$ by removing vertex $u$ and all its incident edges.

\paragraph*{Forests.} A {\em{rooted forest}} $F$ is a directed acyclic graph in 
which each vertex $u$ has at most one out-neighbor, called the {\em{parent}} of 
$u$ and denoted by $\pnt_F(u)$. 
A vertex $u$ is a {\em{root}} of $F$ if it has no parent, which we denote by $\pnt_F(u)=\bot$. The set of roots of a forest $F$ is denoted by $\roots_F$.
The in-neighbors of a vertex $u$ are the {\em{children}} of $u$ and their set is 
denoted by $\chld_F(u)$. 
Two vertices of $F$ that either are both roots or have the same parent are called {\em{siblings}}.

If a vertex $v$ is reachable from $u$ by a directed path in $F$ then $v$ is an {\em{ancestor}} of $u$ and $u$ is a {\em{descendant}} of $v$.
Note that each vertex is its own ancestor and descendant.
For a vertex $u$, by $\anc_F(u)$ and $\desc_F(u)$ we denote the sets of ancestors and descendants of $u$, respectively.
By $F_u$ we denote the subtree of $F$ induced by the descendants of $u$. 
The {\em{depth}} of a vertex $u$ in $F$ is $\depth_F(u)\coloneqq |\anc_F(u)|$ and the {\em{height}} of $F$ is $\height(F)\coloneqq \max_{u\in V(F)} \depth_F(u)$.

A subset of vertices $X\subseteq V(F)$ is {\em{straight}} in $F$ if for all $u,v\in X$, either $u$ is an ancestor of $v$ in $F$ or $v$ is an ancestor of $u$ in $F$.
Equivalently, vertices of a straight set lie on one leaf-to-root path in $F$. Here, by a root-to-leaf path in $F$ we mean a path connecting a leaf with the root of some tree in $F$. 

A {\em{prefix}} of a forest $F$ is an ancestor-closed subset of vertices, that is, $A\subseteq V(F)$ is a prefix if $u\in A$ implies $\anc_F(u)\subseteq A$.
The set of {\em{appendices}} of a prefix $A$, denoted $\App_F(A)$, comprises all ancestor-minimal elements of $V(F)\setminus A$, that is, vertices $u\notin A$ such that either $u\in \roots_F$ or $\pnt_F(u)\in A$.
Note that for all $u\in \App_F(A)$, we have $\anc_F(u)\setminus \{u\}\subseteq A$.
If $A$ is a prefix of $F$, then by $F-A$ we denote the forest obtained from $F$ by removing all the vertices of $A$ and keeping the parent/child relation on the remaining vertices intact.


\paragraph*{Elimination forests and treedepth.} An {\em{elimination forest}} of a graph $G$ is a rooted forest $F$ with $V(F)=V(G)$ such that for every edge $uv\in E(G)$, the set $\{u,v\}$ is straight in $F$.
Note that if $G$ is connected then every elimination forest of $G$ has to be a tree.
Hence in such a case we may also speak about {\em{elimination trees}}.

If $F$ is an elimination forest of a graph $G$ and $u\in V(G)$, then we define the {\em{strong reachability set}} of $u$:
$$\SReach_{F,G}(u)\coloneqq N_G(\desc_F(u)).$$
We remark that the name {\em{strong reachability set}} comes from the theory of structural sparsity, where this concept is present and is an analogue of the definition above; 
see e.g.~\cite{KiersteadY03,GroheKRSS18,Zhu09}.
Note that for every vertex~$u$ we have $\SReach_{F,G}(u)\subseteq \anc_F(u)\setminus \{u\}$.

The {\em{treedepth}} of a graph $G$, denoted $\td(G)$, is the minimum height of an elimination forest of $G$.
An elimination forest of $G$ is {\em{optimal}} if its height is equal to the treedepth of $G$. We will need a more refined notion of ``local'' optimality, as expressed next.

\begin{definition}\label{def:recursively-optimal}
 An elimination forest $F$ of $G$ is {\em{recursively optimal}} if for every $u\in V(G)$, we have that:
 \begin{itemize}[nosep]
  \item   the graph $G[\desc_F(u)]$ is connected; and
  \item   the tree $F_u$ is an optimal elimination forest of $G[\desc_F(u)]$.
 \end{itemize}
\end{definition}
 
\noindent We remark that Dvo\v{r}\'ak et al.~\cite{DvorakKT14} also use this definition; they call such elimination forests just ``optimal''.
We will also widely use a weakened version of this definition given below.
\begin{definition}\label{def:recursively-connected}
 An elimination forest $F$ of $G$ is {\em{recursively connected}} if for every $u\in V(G)$, we have that the graph $G[\desc_F(u)]$ is connected.
\end{definition}

Let us note here that if $F$ is an elimination forest of a graph $G$ and $A$ is a prefix of $F$, then $F-A$ is an elimination forest of $G-A$. If $F$ is moreover recursively connected or recursively optimal, then the same can be also said about $F-A$.

We now point out a simple, yet important property of recursively connected elimination forests.

\begin{lemma}\label{lem:parent-SReach}
 Suppose $F$ is a recursively connected elimination forest of a graph $G$. Let $u$ be a vertex of $G$ and $v$ be a child of $u$ in $F$.
 Then $u\in \SReach_{F,G}(v)$.
\end{lemma}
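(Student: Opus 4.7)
The plan is to unpack the definition of $\SReach_{F,G}(v) = N_G(\desc_F(v))$ and show that $u$ must have at least one neighbor in $\desc_F(v)$. Recursive connectivity of $F$ is the tool that forces this: by Definition~\ref{def:recursively-connected}, the graph $G[\desc_F(u)]$ is connected, and we will exploit this to locate the required edge between $u$ and $\desc_F(v)$.

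First, I would observe that $\desc_F(v)$ is a strict, non-empty subset of $\desc_F(u)$ (it contains $v$ but not $u$, since $v$ is a child of $u$). Because $G[\desc_F(u)]$ is connected and both $\desc_F(v)$ and its complement $\desc_F(u)\setminus \desc_F(v)$ are non-empty (the latter contains $u$), there must exist an edge $xy\in E(G)$ with $x\in\desc_F(v)$ and $y\in \desc_F(u)\setminus\desc_F(v)$.

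The crucial step is to pin down what $y$ can be. Since $F$ is an elimination forest of $G$, the pair $\{x,y\}$ must be straight in $F$, so $y$ is either an ancestor or a descendant of $x$. It cannot be a descendant of $x$, because then $y\in\desc_F(x)\subseteq \desc_F(v)$, contradicting $y\notin\desc_F(v)$. Hence $y$ is an ancestor of $x$, and in particular an ancestor of $v$. But $y$ is also a descendant of $u$, and the only ancestors of $v$ that lie in $\desc_F(u)$ are $u$ and $v$ themselves. Since $y\notin\desc_F(v)$ we have $y\neq v$, so $y=u$. Thus $u$ is adjacent to $x\in\desc_F(v)$, which gives $u\in N_G(\desc_F(v))=\SReach_{F,G}(v)$, as required.

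I do not foresee any real obstacle: the argument is a direct combination of the definition of an elimination forest (which severely restricts how edges can cross subtree boundaries) with the recursive connectivity hypothesis. The only thing to be slightly careful about is the edge case where $u$ has no other child besides $v$, but even then $\desc_F(u)\setminus\desc_F(v)=\{u\}$ is non-empty, so the connectivity argument still supplies the necessary crossing edge.
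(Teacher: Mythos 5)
Your proof is correct and is essentially the paper's argument: the paper proves the same statement in contrapositive form (if $u\notin\SReach_{F,G}(v)$ then $\desc_F(v)$ has no neighbors in $\desc_F(u)\setminus\desc_F(v)$, contradicting connectivity of $G[\desc_F(u)]$), with the straightness reasoning you spell out left implicit. Your explicit case analysis pinning down $y=u$ is just an expanded version of that same one-line argument.
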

\begin{proof}
 Otherwise, no vertex of $\desc_F(v)$ would have a neighbor in $\desc_F(u)\setminus \desc_F(v)$, so the graph $G[\desc_F(u)]$ would not be connected.
\end{proof}

Clearly, every recursively optimal elimination forest is recursively connected, as we require that explicitly in the definition.
Note also that every recursively optimal elimination forest is in particular optimal, as it optimally decomposes each connected component of the graph.
As shown by Reidl et al.~\cite{ReidlRVS14}, an optimal elimination forest of an $n$-vertex graph of treedepth $d$ can be computed in time $2^{\Oh(d^2)}\cdot n$.
We can use this algorithm as a black-box to show the following.

\begin{lemma}\label{lem:static}
 Given an $n$-vertex graph $G$ of treedepth $d$, a recursively optimal elimination forest of $G$ can be computed in time $2^{\Oh(d^2)}\cdot n^{\Oh(1)}$.
\end{lemma}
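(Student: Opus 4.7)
The plan is to reduce to the algorithm of Reidl et al.\ by recursion on the vertex set, choosing the root of each subtree to be a vertex witnessing the local treedepth.

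First, if $G$ is disconnected, I would process each connected component of $G$ independently and return the disjoint union of the resulting trees; recursive optimality of the whole forest follows from recursive optimality on each component. So assume henceforth that $G$ is connected, and the goal is to build a recursively optimal elimination tree.

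The recursive procedure takes a connected graph $H$ on at most $n$ vertices, of treedepth at most $d$, and returns a recursively optimal elimination tree of $H$. I proceed in two steps. First, using the static algorithm of Reidl et al.~\cite{ReidlRVS14}, compute $t \coloneqq \td(H)$ in time $2^{\Oh(d^2)}\cdot |V(H)|$. Second, find a vertex $r\in V(H)$ satisfying
\[
    \max\{\, \td(H[C'])\ \colon\ C'\text{ is a connected component of }H-r\,\} \;=\; t-1.
\]
To locate $r$, iterate through every candidate $r\in V(H)$, compute the connected components of $H-r$, and for each component apply the algorithm of Reidl et al.\ to determine its treedepth; accept the first $r$ that achieves the equality above. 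Such $r$ exists because the root of any optimum-height elimination tree of $H$ has precisely this property (its subtrees induce, via their vertex sets, unions of connected components of $H-r$, each of treedepth at most $t-1$; and $t-1$ must be attained since $\td(H)=t$). Once $r$ is chosen, recursively compute a recursively optimal elimination tree for each connected component of $H-r$, and return the tree obtained by making $r$ the new root with these trees hanging below it.

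Correctness is immediate from the construction: for the root $r$, the graph $H[\desc(r)]=H$ is connected by assumption and the returned tree has height $1+\max_{C'}\td(H[C'])=t=\td(H)$, while for every other vertex $u$, the subtree rooted at $u$ is (by induction) a recursively optimal elimination tree of the connected component of $H-r$ containing $u$, which is exactly $H[\desc(u)]$. As for the running time, at each level of the recursion the subproblems operate on pairwise-disjoint vertex subsets of $V(G)$, so their total size is at most $n$, and the work at one level on a subgraph of size $k$ is dominated by $k$ invocations of the Reidl et al.\ algorithm on graphs of size at most $k$ and treedepth at most $d$, contributing $2^{\Oh(d^2)}\cdot k^2$. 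Summed over one level this is $2^{\Oh(d^2)}\cdot n^2$, and the recursion depth is at most $d$, yielding $2^{\Oh(d^2)}\cdot n^{\Oh(1)}$ total time. There is no real obstacle here; the only slightly delicate point is observing that the output of Reidl et al.\ cannot be used directly (its subtrees need not induce connected subgraphs), which is exactly why one must recurse via root-selection rather than just calling the static algorithm once.
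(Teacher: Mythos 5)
Your proposal is correct and follows essentially the same route as the paper: select a root vertex whose removal strictly decreases the treedepth (your condition $\max_{C'}\td(H[C'])=t-1$ is equivalent to the paper's "admissible" criterion $\td(H-r)<\td(H)$, since deleting one vertex lowers treedepth by at most one), locate it by brute-force calls to the static algorithm of Reidl et al., and recurse below it. The only differences are cosmetic — you split into connected components before recursing, while the paper recurses into $H-r$ as a whole and handles disconnectedness at the next level — so this matches the paper's proof.
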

\begin{proof}
 Clearly, if $G$ is disconnected, then it suffices to run the algorithm on each connected component of $G$ separately and output the union of the obtained trees.
 
 Suppose then that $G$ is connected.
 Call a vertex $u\in V(G)$ {\em{admissible}} if $\td(G-u)<\td(G)$.
 Observe that such an admissible vertex always exists: if $F$ is any optimal elimination tree of $G$, then the root of $F$ is admissible.
 Moreover, such an admissible vertex can be found in time $2^{\Oh(d^2)}\cdot n^2$ by testing for each $u\in V(G)$ whether $\td(G-u)<\td(G)$, where each such test can be done in time $2^{\Oh(d^2)}\cdot n$ using the algorithm of Reidl et al.~\cite{ReidlRVS14}.
 Finally, if $u$ is admissible, then it is easy to see that a recursively optimal elimination tree of $G$ can be obtained by taking a recursively optimal elimination forest $F'$ of $G-u$ and adding $u$ as the new root, that is, making all the roots of $F'$ into children of $u$.
 Thus, after finding an admissible vertex $u$, one can recurse into $G-u$ and adjust the obtained elimination forest as above.
 
 Since at every step we use $2^{\Oh(d^2)}\cdot n^2$ time to find an admissible vertex, it is straightforward to see that the algorithm runs in time $2^{\Oh(d^2)}\cdot n^3$.
\end{proof}

Observe that every DFS forest of a graph $G$ --- a forest of calls of depth-first search started at an arbitrary vertex of each connected component of $G$ --- is actually an elimination forest of $G$.
Note also that if $G$ does not contain a simple path on $k$ vertices, then every DFS forest of $G$ has depth smaller than $k$, because a root-to-leaf path in a DFS forest is a simple path in the graph.
Hence, we have the following.

\begin{lemma}[\cite{NesetrilM12}, Proposition 6.1]\label{lem:path-td}
  If $G$ is a graph such that $\td(G)\geq k$, then $G$ contains a simple path on $k$ vertices.
\end{lemma}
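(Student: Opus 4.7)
The plan is to argue the contrapositive: if $G$ has no simple path on $k$ vertices, then $\td(G) < k$. To do this, I would exhibit an elimination forest of $G$ of height smaller than $k$, taking $F$ to be any depth-first search forest of $G$ (obtained by running DFS from an arbitrary unvisited vertex in each connected component, with $F$ recording the tree edges of the traversal).

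First, I would verify that $F$ is indeed an elimination forest of $G$. This relies on the standard fact about DFS in undirected graphs: every non-tree edge of $G$ with respect to $F$ is a \emph{back edge}, that is, for every edge $uv \in E(G)$ one of $u, v$ is an ancestor of the other in $F$. Indeed, if neither were an ancestor of the other, then at the moment the later-visited of the two endpoints was pushed onto the DFS stack, the other would already have been completely processed, contradicting that the edge $uv$ would then have been explored as a tree edge. This property translates directly into $\{u,v\}$ being straight in $F$, which is the defining condition of an elimination forest.

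Next, I would observe that every root-to-leaf path in $F$ corresponds to a simple path in $G$. This is because the edges of $F$ are, by construction, actual edges of $G$ (each tree edge $pq$ of $F$ is an edge of $G$ that DFS used to discover $q$ from $p$), and the vertices along a root-to-leaf path in $F$ are pairwise distinct by the tree structure. Hence, if $F$ has height at least $k$, say witnessed by a vertex $u$ with $|\anc_F(u)| \geq k$, then the $u$-to-root path in $F$ gives a simple path on at least $k$ vertices in $G$.

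Combining these two observations yields the result: if $G$ has no simple path on $k$ vertices, then $\height(F) < k$, so $\td(G) \leq \height(F) < k$. The only thing that could count as an obstacle is the invocation of the DFS back-edge property, but that is a textbook fact about depth-first search on undirected graphs and requires no further development here.
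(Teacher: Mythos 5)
Your proof is correct and follows essentially the same route as the paper, which also argues via a DFS forest: the paper observes (just before the lemma statement) that every DFS forest is an elimination forest and that root-to-leaf paths in it are simple paths of $G$, so absence of a $k$-vertex path forces height, and hence treedepth, below $k$. Your write-up merely makes the back-edge justification explicit, which the paper leaves as a standard fact.
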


We remark that in a weak sense, also the converse implication holds: if $G$ contains a simple path on $2^k$ vertices, then $\td(G)>k$; see the discussion before Proposition~6.1 in~\cite{NesetrilM12}. We will not use this fact though.

\medskip

Finally, we will use some basic properties of elimination forests related to the connectivity in graphs. First, the following fact is well-known.

\begin{lemma}\label{lem:top}
 Suppose that $F$ is an elimination forest of a graph $G$ and $A\subseteq V(G)$ is such that $G[A]$ is connected. Then there exists $x\in A$ such that $A\subseteq \desc_F(x)$.
\end{lemma}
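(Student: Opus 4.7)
The natural candidate for $x$ is the vertex of $A$ that is shallowest in $F$, that is, a vertex $x \in A$ minimizing $\depth_F(x)$ (ties broken arbitrarily). I will show that every $y \in A$ is a descendant of $x$ in $F$, using connectivity of $G[A]$ to walk between $x$ and $y$ and the straightness condition of elimination forests to propagate the descendant relation along the walk.

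\textbf{Key steps.} Fix $y \in A$. Since $G[A]$ is connected, there is a path $x = v_0, v_1, \ldots, v_\ell = y$ with all $v_i \in A$ and all $v_i v_{i+1} \in E(G)$. I will prove by induction on $i$ that $v_i \in \desc_F(x)$, which for $i = \ell$ yields $y \in \desc_F(x)$, completing the proof. The base case $i = 0$ is immediate. For the inductive step, assume $v_i \in \desc_F(x)$. Because $F$ is an elimination forest of $G$ and $v_i v_{i+1} \in E(G)$, the pair $\{v_i, v_{i+1}\}$ is straight in $F$, so one of the two vertices is an ancestor of the other.

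\textbf{Case analysis.} If $v_i$ is an ancestor of $v_{i+1}$, then $v_{i+1} \in \desc_F(v_i) \subseteq \desc_F(x)$ and we are done. Otherwise $v_{i+1}$ is a strict ancestor of $v_i$, hence $v_{i+1}$ lies on the path in $F$ from $v_i$ to the root of its tree. Since $v_i$ is a descendant of $x$, this path passes through $x$. So either $v_{i+1}$ still lies below $x$ (in which case $v_{i+1} \in \desc_F(x)$) or $v_{i+1}$ is a strict ancestor of $x$. The latter would give $\depth_F(v_{i+1}) < \depth_F(x)$ with $v_{i+1} \in A$, contradicting the choice of $x$ as a vertex of $A$ with minimum depth. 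Thus $v_{i+1} \in \desc_F(x)$, completing the induction.

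\textbf{Expected obstacle.} There is no substantial obstacle; the argument is essentially routine once one fixes $x$ to be the minimum-depth vertex in $A$. The only subtle point is the second subcase, which requires recognizing that an ancestor of a descendant of $x$ need not itself be a descendant of $x$, but if it is not, then it must be shallower than $x$, which contradicts minimality.
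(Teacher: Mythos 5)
Your proof is correct and takes essentially the same approach as the paper: both choose $x$ to be a minimum-depth vertex of $A$, and both exploit the fact that straightness of edges together with minimality of $\depth_F(x)$ forces any $A$-neighbor of a descendant of $x$ to again be a descendant of $x$. The paper packages this as a cut argument (no edge crosses between $A\cap \desc_F(x)$ and its complement in $A$), while you package it as an induction along a path in $G[A]$; these are two equivalent ways of invoking connectivity, and both are fine.
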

\begin{proof}
 Let $x$ be any vertex of $A$ that minimizes $\depth_F(x)$. Let $B\coloneqq A\cap \desc_F(x)$ and suppose, for contradiction, that $A\setminus B$ is non-empty.
 Observe that every vertex $y\in A\setminus B$ can be neither an ancestor of $x$ --- by the minimality of $x$ --- nor a descendant of $x$ --- for it would be included in $B$.
 Hence, for each $y\in A\setminus B$ and $z\in B$, the set $\{y,z\}$ is not straight in $F$.
 As $F$ is an elimination forest of $G$, this implies that there is no edge between $B$ and $A\setminus B$.
 Since $B$ is non-empty due to containing $x$, this is a contradiction to the assumption that $G[A]$ is connected.
\end{proof}

Next, vertices $x$ and $y$ of a graph $G$ are {\em{$k$-vertex-connected}} in $G$ if there exists $k$ internally vertex-disjoint paths with endpoints $x$ and $y$. We will use the following simple claim.

\begin{lemma}\label{lem:connectivity}
 Suppose that $x$ and $y$ are $d$-vertex-connected in a graph $G$. Then for every elimination forest $F$ of $G$ of height at most $d$, the set $\{x,y\}$ is straight in $F$.
\end{lemma}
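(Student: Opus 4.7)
\emph{Proof plan.} The plan is to argue by contradiction: suppose $\{x,y\}$ is not straight in $F$, and produce a small $x$-$y$ separator in $G$ that contradicts the existence of $d$ internally vertex-disjoint $x$-$y$ paths guaranteed by $d$-vertex-connectivity (Menger's theorem).

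First I would rule out the degenerate case that $x$ and $y$ lie in different trees of $F$: since every edge of $G$ connects an ancestor-descendant pair in $F$, two vertices in distinct trees of $F$ are in distinct connected components of $G$, which already violates $d$-vertex-connectivity for $d\geq 1$. So $x$ and $y$ share a tree of $F$; let $z$ be their lowest common ancestor. Since $\{x,y\}$ is not straight, $z \notin \{x,y\}$, and hence there exist two distinct children $z_1 \neq z_2$ of $z$ with $x \in \desc_F(z_1)$ and $y \in \desc_F(z_2)$.

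The key observation is that $\anc_F(z)$ separates $x$ from $y$ in $G$. Indeed, consider any edge $uv \in E(G)$ with $u \in \desc_F(z_1)$ and $v \notin \desc_F(z_1)$: since $\{u,v\}$ is straight in $F$, $v$ must be a (strict) ancestor of $u$ lying outside $\desc_F(z_1)$, which forces $v \in \anc_F(z_1)\setminus\{z_1\} = \anc_F(z)$. Thus every $x$-$y$ path in $G$ must pass through $\anc_F(z)$. Note that $y \in \desc_F(z_2)$ is not in $\desc_F(z_1)$ and, being a strict descendant of $z$, is not in $\anc_F(z)$; similarly $x \notin \anc_F(z)$. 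Hence $\anc_F(z)$ is an $x$-$y$ separator in $G$ disjoint from $\{x,y\}$.

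Finally I would bound its size. By definition $|\anc_F(z)| = \depth_F(z)$, and since $z$ has a proper descendant in $F$ (for instance $x$, whose depth is at least $\depth_F(z)+1$), we have $\depth_F(z) \leq \height(F) - 1 \leq d - 1$. So $\anc_F(z)$ is an $x$-$y$ separator of size at most $d-1$ disjoint from $\{x,y\}$, contradicting $d$-vertex-connectivity via Menger's theorem. The argument is routine once one focuses on the LCA $z$; the only mildly delicate step I expect is the bookkeeping that upgrades the naive bound $\depth_F(z)\leq d$ to $\depth_F(z)\leq d-1$ by exploiting that $z$ has a proper descendant, which is precisely what is needed to reach the contradiction.
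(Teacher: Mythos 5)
Your proof is correct and follows essentially the same route as the paper: the separator you use, $\anc_F(z)$ for $z$ the lowest common ancestor, is exactly the set $\anc_F(x)\cap\anc_F(y)$ that the paper observes must contain an internal vertex of every $x$-$y$ path, and both arguments bound its size by $\height(F)-1\leq d-1$ to contradict the existence of $d$ internally vertex-disjoint paths. Your write-up just spells out the details (different trees, the LCA, edges leaving $\desc_F(z_1)$) that the paper leaves implicit.
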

\begin{proof}
 If $\{x,y\}$ was not straight in $F$, then every path connecting $x$ and $y$ would have to contain an internal vertex that belongs to $\anc_F(x)\cap \anc_F(y)$.
 Since $|\anc_F(x)\cap \anc_F(y)|<\height(F)\leq d$, this implies that there cannot be $d$ internally vertex-disjoint paths connecting $x$ and $y$ in $G$, a contradiction.
\end{proof}

\paragraph*{Dictionaries.} In this paper we will often assume that we are given access to a dictionary data structure on edges of a graph.
This is in principle not an unusual assumption when dealing with dynamic graphs. In our case, however, it becomes an issue, as we would like to carefully keep track of factors dependent on $|V|$ of $|E|$ in our running times. Let us formalize the matter and briefly describe solutions available in the literature.

A \emph{dictionary} data structure $\Lt$ is built on top of some universe of keys $\Univ$. It stores a dynamically changing set of keys together with records (of constant size) associated with them. 
To be more precise, $\Lt$ offers operations $\ins(e,r)$, $\rem(e)$ and $\lookup(e)$, which allow inserting a new key $e$ with its record $r$, removing key $e$ and its record, and looking up the record of $e$. 
The performance of $\Lt$ depends on two parameters: the size of universe $|\Univ|$ 
and the number $|\Lt|$ of keys currently stored. In our applications, we have $\Univ=\binom{V}{2}$ where $V$ is the invariant vertex set of a dynamic graph.
That is, $|U|=\Oh(|V|^2)$, while $|\Lt|=|E|$ is the number of edges in the current graph.

The literature offers, among other, the following implementations that can be applied to our setting:
\begin{itemize}[nosep]
 \item adjacency matrix: $\Oh(|V|^2)$ space, $\Oh(1)$ worst case time per operation;
 \item perfect hashing (FKS-hashing)~\cite{FKShashing}: $\Oh(M)$ space, $\Oh(1)$ expected worst case time per operation, where $M$ is an upper bound on the number of distinct edges that may appear, known a priori;
 \item dynamic perfect hashing~\cite{dynamicHashing}: $\Oh(|E|)$ space, $\Oh(1)$ expected amortized time per operation;
 \item $Y$-fast tries~\cite{WILLARD198381}: $\Oh(|E|)$ space, $\Oh(\log \log |V|)$ amortized time per operation.
\end{itemize}
In our algorithms we do not fix any of the solutions above. Instead, for simplicity we usually assume that we are given an abstract dictionary structure on the edges of the graph that uses $\mdict$ space and guarantees worst-case running time $\tdict$ per operation. Then the running time guarantees in our data structures depend on $\tdict$. We remark that if the assumed dictionary offers operations with {\em{amortized}} running time $\tdict$ per operation, instead of {\em{worst-case}}, then in all our data structures exactly the same complexity bounds can be derived, but for amortized time complexity instead of worst-case.


\section{Treedepth cores}\label{sec:cores}

We now introduce the most important definition in this work: the {\em{core}} of an elimination forest of a graph. 
Intuitively, this is a relatively small subset of vertices that retains all the relevant connectivity information that is essential for, say, treedepth computation.

\begin{definition}
 Suppose that $F$ is an elimination forest of a graph $G$. For $q\in \N$, a non-empty prefix $K$ of $F$ is called a {\em{$q$\nobreakdash-core}} of $(G,F)$ if the following condition holds
 for every vertex $u\in \App_F(K)$: 
 for each $X\subseteq \SReach_{F,G}(u)$ with $|X|\leq 2$, there exist at least $q$ siblings $w$ of $u$ such that $w\in K$, $X\subseteq \SReach_{F,G}(w)$, and $\height(F_{w})\geq \height(F_u)$.
\end{definition}
\noindent
See Figure~\ref{fig:core} for an illustration.

\begin{figure}[t]
  \centering
  \includegraphics{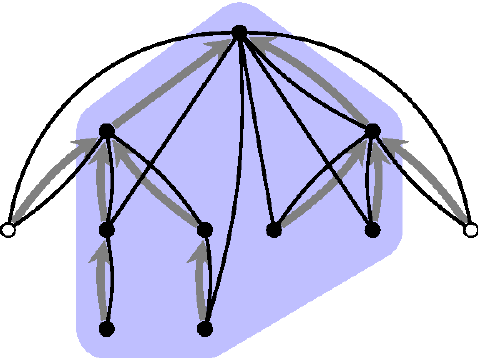}%
  \caption{A graph $G$ (black edges), an elimination tree~$F$ of $G$ (thick gray arcs, directed from children to parents), and a $2$-core of $F$ (vertices on blue background). White vertices are in $\App_F(K)$.}
  \label{fig:core}
\end{figure}

Before we proceed further, let us observe that in recursively optimal elimination forests we can always find $q$-cores of size bounded by a function of $q$ and the height.
In the following, for a set $A$, by $\binom{A}{\leq 2}$ we denote the set of all subsets of $X$ of size at most $2$.

\begin{lemma}\label{lem:find-core}
 Let $F$ be a recursively optimal elimination forest of a graph $G$ and let $d$ be the height of $F$. 
 Then for every $q\in \N$, there is a $q$-core $K$ of $(G,F)$ such that $$|K|\leq q\cdot \frac{\left(q(d^2+1)\right)^d-1}{q(d^2+1)-1}.$$
\end{lemma}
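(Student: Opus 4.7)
The plan is to construct $K$ via a simple top-down marking procedure. First I would process the roots of $F$: order them by non-increasing height and mark the first $\min(q, |\roots_F|)$ of them. Then, traversing marked vertices in top-down order, at each marked vertex $p$, for every set $X \subseteq \anc_F(p)$ with $|X| \leq 2$, mark up to $q$ children $w$ of $p$ satisfying $X \subseteq \SReach_{F,G}(w)$, choosing those that maximize $\height(F_w)$ (or all such children if their number is smaller than $q$). Let $K$ denote the resulting set of marked vertices.

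The next step is to verify that $K$ is a $q$-core of $(G,F)$. Consider any $u \in \App_F(K)$ and any $X \subseteq \SReach_{F,G}(u)$ with $|X| \leq 2$. If $u$ is a root, then necessarily $X = \emptyset$ and the $q$ marked roots serve as the required witnesses: each marked root $w$ satisfies $\height(F_w) \geq \height(F_u)$, for otherwise $u$ itself would have been among the tallest $q$ roots and hence marked. If $u$ has a parent $p \in K$, then $X \subseteq \SReach_{F,G}(u) \subseteq \anc_F(p)$, so $X$ was among the sets considered when $p$ was processed. Since $u$ is a child of $p$ with $X \subseteq \SReach_{F,G}(u)$, it was itself a candidate for selection under $X$ at $p$; the fact that $u$ was not selected forces $q$ other candidates to have been selected, and by the max-height rule each of them has $\height(F_w) \geq \height(F_u)$.

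For the cardinality bound, I would observe that at every marked vertex $p$ the number of its marked children is at most $q$ times the number of subsets $X \subseteq \anc_F(p)$ with $|X| \leq 2$. Since $|\anc_F(p)| \leq d$, this count is at most $q \cdot \left(1 + d + \binom{d}{2}\right) \leq q(d^2+1)$. Setting $Q \coloneqq q(d^2+1)$, a straightforward induction on $h$ shows that any marked vertex whose $F$-subtree has height at most $h$ contributes at most $1 + Q + Q^2 + \dots + Q^{h-1} = (Q^h-1)/(Q-1)$ marked vertices to $K$. Summing over the at most $q$ marked roots, each of height at most $d$, yields $|K| \leq q \cdot (Q^d - 1)/(Q-1)$, which is exactly the bound stated in the lemma.

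The only genuinely delicate point is the appendix case with parent $p \in K$: one must recognise that $u$ is itself a candidate for the $X$-selection at $p$, which is what forces all $q$ selected siblings to have height at least $\height(F_u)$. Beyond that, the argument reduces to elementary counting and a routine geometric-series estimate, so I do not expect any further technical difficulty.
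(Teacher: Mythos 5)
Your proposal is correct and follows essentially the same argument as the paper: the identical recursive marking procedure (top $q$ roots by height, then for each set $X \subseteq \anc_F(p)$ with $|X|\leq 2$ the $q$ tallest qualifying children), with the same $q(d^2+1)$ branching bound and geometric-series count. The only difference is cosmetic — you organize the counting by induction on subtree height rather than by summing over depth levels, and you spell out the core-verification that the paper leaves as "follows directly from the construction."
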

\begin{proof}
 Consider the following recursive marking procedure $\mathtt{recCore}(q,u)$, which can be applied to any vertex $u\in V(G)$.
 For each $X\in \binom{\anc_F(u)}{\leq 2}$, among vertices $w\in \chld_F(u)$ that satisfy $X \subseteq \SReach_F(w)$ mark $q$ with the highest value of $\height(F_w)$,
 or all of them if their number is smaller than $q$. Note that the total number of vertices marked in this way is bounded by $q\cdot |\binom{\anc_F(u)}{\leq 2}|\leq q(d^2+1)$.
 Finally, apply the procedure $\mathtt{recCore}(q,w)$ recursively for every marked child $w$ of $u$.

 Now, let $R\subseteq \roots_F$ comprise $q$ roots $r$ of $F$ with the highest value of $\height(F_r)$, or all of them if their number is smaller than $q$.
 We apply procedure $\mathtt{recCore}(q,r)$ to all $r\in R$, and let $K$ be the set comprising all the vertices marked this way.
 Clearly, for every $i\in \{1,\ldots,d\}$, $K$ contains at most $q\cdot \left(q(d^2+1)\right)^{i-1}$ vertices at depth $i$ in $F$, hence
 $$|K|\leq q\cdot \sum_{i=1}^d \left(q(d^2+1)\right)^{i-1} = q\cdot \frac{\left(q(d^2+1)\right)^d-1}{q(d^2+1)-1},$$
 as claimed. That $K$ is indeed a $q$-core of $(G,F)$ follows directly from the construction.
\end{proof}

In subsequent lemmas we will establish several important properties of cores.
The following notation will be convenient: if $K$ is a prefix in an elimination forest $F$ of $G$, then for each $u\in \App_F(K)$ and $X\in \binom{\SReach_{F,G}(u)}{\leq 2}$, we define $W_K(u,X)$ to be the set of all siblings~$w$ 
of $u$ in $F$ such that:
\begin{itemize}[nosep]
 \item $w\in K$;
 \item $X\subseteq \SReach_{F,G}(w)$; and
 \item $\height(F_{w})\geq \height(F_u)$.
\end{itemize}
Then, provided that $K$ is a $q$-core of $(G,F)$, we have $|W_K(u,X)|\geq q$ for all $u\in \App(K)$ and $X\in \binom{\SReach_{F,G}(u)}{\leq 2}$.
Note that the definition of $W_K(\cdot,\cdot)$ depends on $F$ and $G$; these will always be clear from the context, as $K$ will be a core with respect to some pair $(G,F)$.

As mentioned, the intuition is that a $q$-core for a sufficiently large $q$ stores all the essential information about the graph needed for the purpose of computing its treedepth.
We now formalize this intuition in a series of lemmas. First, we observe that cores retain the essential connectivity property from Definition~\ref{def:recursively-connected}.

\begin{lemma}\label{lem:core-connected}
 Suppose that $F$ is a recursively connected elimination forest of a graph $G$ and $K$ is a $1$-core of $(G,F)$. Then for every $u\in K$, we have that
 \begin{enumerate}[label=(\roman*),ref=(\roman*),nosep]
  \item\label{c:sreach} $\SReach_{F,G}(u)=N_{G[K]}(K\cap \desc_F(u))$ and
  \item\label{c:conn}   the graph $G[K\cap \desc_F(u)]$ is connected.
 \end{enumerate}
\end{lemma}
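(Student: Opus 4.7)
The plan is to prove both (i) and (ii) simultaneously by induction on $\height(F_u)$, establishing (i) in each step before (ii) and using only the fact that $K$ is a $1$-core (plus the prefix property $\anc_F(u) \subseteq K$). The base case $\height(F_u)=1$ is direct: $u$ is a leaf of $F$, so $K \cap \desc_F(u) = \{u\}$, making (ii) trivial; and since $F$ is an elimination forest, $N_G(u) \subseteq \anc_F(u) \setminus \{u\} \subseteq K$, so both sides of (i) equal $N_G(u)$.

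For the inductive step of (i), the inclusion $N_{G[K]}(K \cap \desc_F(u)) \subseteq \SReach_{F,G}(u)$ is routine and uses neither the core property nor induction: any $v \in K$ adjacent to some $w \in K \cap \desc_F(u)$ but with $v \notin K \cap \desc_F(u)$ cannot lie in $\desc_F(u)$, so it directly witnesses membership in $N_G(\desc_F(u)) = \SReach_{F,G}(u)$. The reverse inclusion is the main content. Given $v \in \SReach_{F,G}(u)$, pick $w \in \desc_F(u)$ with $vw \in E(G)$; note $v \in \anc_F(u) \setminus \{u\} \subseteq K$ since $F$ is an elimination forest and $K$ is a prefix. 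If $w \in K$ we are done. Otherwise, walk from $w$ up the ancestor path in $F$ until the last vertex that still lies outside $K$, call it $a$; then $a \in \App_F(K)$, and $v$ remains a strict ancestor of $a$, so $v \in \SReach_{F,G}(a)$. Now apply the $1$-core property with $X = \{v\}$: there is a sibling $w' \in K$ of $a$ with $v \in \SReach_{F,G}(w')$. Since $a$ is a proper descendant of $u$, so is $w'$, hence $\height(F_{w'}) < \height(F_u)$, and the inductive hypothesis gives some $w'' \in K \cap \desc_F(w') \subseteq K \cap \desc_F(u)$ adjacent to $v$, completing the argument.

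For (ii) in the inductive step, I first record that whenever $u \in K$ has any children in $F$, at least one of them lies in $K$: otherwise picking any appendix child $a$ and $X = \emptyset$ would give $W_K(a,\emptyset) = \emptyset$, violating the $1$-core condition. So $\chld_F(u) \cap K \neq \emptyset$. By the inductive hypothesis, $G[K \cap \desc_F(v)]$ is connected for each $v \in \chld_F(u) \cap K$. Lemma~\ref{lem:parent-SReach} gives $u \in \SReach_{F,G}(v)$, and (i) applied to $v$ (available from the inductive hypothesis) produces a neighbor of $u$ inside $K \cap \desc_F(v)$. Attaching $u$ to each of these already-connected subgraphs shows that $G[K \cap \desc_F(u)]$ is connected.

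The hard part will be the reverse inclusion of (i), namely propagating a $G$-neighbor of $v$ lying in $\desc_F(u)$ into the potentially much sparser set $K \cap \desc_F(u)$. The $1$-core property is designed precisely for this rerouting step, and the subtle point is choosing the induction parameter correctly: $\height(F_u)$ works because the sibling $w'$ supplied by the core property is forced to be a strict descendant of $u$, making the inductive hypothesis applicable with a strictly smaller height.
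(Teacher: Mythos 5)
Your proof is correct and follows essentially the same inductive scheme as the paper: induction on $\height(F_u)$, using the $1$-core property to reroute a strong reachability witness through a sibling inside $K$, and deriving (ii) from (i) via Lemma~\ref{lem:parent-SReach}. The only cosmetic difference is in (i): you descend to the shallowest appendix $a$ on the path from $w$ to $K$ and apply the core at that level, while the paper applies it one step higher, directly among $\chld_F(u)$ (the child of $u$ above $w$ is either already in $K$ or is itself an appendix); both yield a vertex $w'\in K$ strictly below $u$, so the induction closes identically.
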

\begin{proof}
 We first prove~\ref{c:sreach} by induction on $\height(F_u)$.
 The base case when $\height(F_u)=1$ is trivial: then $K\cap\desc_F(u)=\desc_F(u)=\{u\}$ and $\anc_F(u)\subseteq K$, hence $\SReach_{F,G}(u)=N_{G}(u)=N_{G[K]}(u)=N_{G[K]}(K\cap \desc_F(u))$.
 
 Suppose now that $\height(F_u)>1$.
 The inclusion $\SReach_{F,G}(u)\supseteq N_{G[K]}(K\cap \desc_F(u))$ is obvious, so it remains to show the following: if some $v\in \anc_F(u)$, $v\neq u$, 
 has a neighbor in $\desc_F(u)$, then $v$ has also a neighbor in $K\cap \desc_F(u)$. For this, let $a$ be a neighbor of $v$ in $\desc_F(u)$.
 If $a=u$ then we are done, because $u\in K$.
 Otherwise $a\in \desc_F(w)$ for some $w\in \chld_F(u)$, which implies that $v\in \SReach_{F,G}(w)$. 
 As $K$ is a $1$-core in $F$, there exists $w'\in \chld_F(u)\cap K$ such that $v\in \SReach_{F,G}(w')$. 
 By applying the induction hypothesis to $w'$ we infer that $\SReach_{F,G}(w')=N_{G[K]}(K\cap \desc_F(w'))$, hence $v$ has a neighbor $a'$ in $K\cap \desc_F(w')$.
 Then also $a'\in K\cap \desc_F(u)$; this completes the proof of~\ref{c:sreach}.
 
 We now move to the proof of~\ref{c:conn}, which we again perform by induction on $\height(F_u)$.
 As before, the base case when $\height(F_u)=1$ is trivial, as then $G[K\cap \desc_F(u)]$ consists of one vertex $u$. So suppose that $\height(F_u)>1$.
 Since $F$ is recursively connected, by Lemma~\ref{lem:parent-SReach} $u$ has a neighbor in each of the sets $\desc_F(w)$ for $w\in \chld_F(u)$.
 Consider any $w\in \chld_F(u)$ such that $K\cap \desc_F(w)\neq \emptyset$.
 Since $K$ is a prefix of $F$, we have $w\in K$.
 By applying~\ref{c:sreach} and the induction hypothesis to $w$, we infer that $u$ has a neighbor in $K\cap \desc_F(w)$ and $G[K\cap \desc_F(w)]$ is connected.
 Thus, $G[K\cap \desc_F(u)]$ consists of a disjoint union of connected graphs, plus there is vertex $u$ which has neighbors in each of these connected graphs. 
 Hence $G[K\cap \desc_F(u)]$ is connected as well.
 This proves~\ref{c:conn}.
\end{proof}

In the subsequent lemmas we will often consider another graph that is obtained from $G$ by a minor modification within a given core $K$: we may add or remove some edges with both endpoints in $K$, 
but the total number of removals is bounded by some integer $\ell\in \N$.
For this, we introduce the following definition:

\begin{definition}
For a graph~$G$, elimination forest $F$ of $G$, and a prefix $K$ of $F$, a graph $H$ is a {\em{$(K,\ell)$-restricted augmentation}} of~$G$ if the following three conditions hold:
\begin{itemize}[nosep]
 \item $V(H)=V(G)$;
 \item $E(H)\setminus \binom{K}{2} = E(G)\setminus \binom{K}{2}$; and
 \item $|E(G)\setminus E(H)|\leq \ell$.
 \end{itemize}
\end{definition}
Note that the second condition above means that edges from the symmetric difference of $E(H)$ and $E(G)$ have both endpoints in $K$, hence for every $u\notin K$ we have $N_H(u)=N_G(u)$.

We now present the following lemma, which intuitively provides good ``re-attachment points'' for trees obtained by removing a core from an elimination forest.

\begin{lemma}\label{lem:straight}
  Let $F$ be a recursively connected elimination forest of a graph $G$ such that $F$ has height at most~$d$.
  Let $K$ be a $(d+\ell)$-core of $(G,F)$, for some $\ell\geq 0$.
  Suppose that $H$ is a $(K,\ell)$-restricted augmentation of $G$ and let $F^K$ be any elimination forest of $H[K]$ of height at most $d$.
  Then for every $u\in \App_F(K)$, the set $\SReach_{F,G}(u)$ is straight in $F^K$.
\end{lemma}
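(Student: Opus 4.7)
The plan is to reduce the claim to Lemma~\ref{lem:connectivity} by exhibiting many internally vertex-disjoint $x$-to-$y$ paths in $H[K]$ for every pair $\{x,y\}\subseteq \SReach_{F,G}(u)$.

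First, I would fix $u\in \App_F(K)$ and note that it suffices to show, for any two distinct vertices $x,y\in \SReach_{F,G}(u)$, that the set $\{x,y\}$ is straight in $F^K$; singletons and the empty set are straight trivially. So fix such a pair and let $X=\{x,y\}$. Since $K$ is a $(d+\ell)$-core of $(G,F)$, the defining property applied to $u$ and $X$ yields a set $W\coloneqq W_K(u,X)$ of at least $d+\ell$ siblings $w$ of $u$ in $F$ such that $w\in K$ and $\{x,y\}\subseteq \SReach_{F,G}(w)$ (the height condition on $F_w$ is irrelevant here, and will be used only in later lemmas).

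Next, for each $w\in W$ I would produce a path $P_w$ from $x$ to $y$ in $G[K]$ whose internal vertices lie entirely in $K\cap \desc_F(w)$. Since $w\in K$ and $K$ is a prefix, Lemma~\ref{lem:core-connected}\ref{c:conn} applied to $w$ gives that $G[K\cap \desc_F(w)]$ is connected, while Lemma~\ref{lem:core-connected}\ref{c:sreach} applied to $w$ yields $\SReach_{F,G}(w)=N_{G[K]}(K\cap \desc_F(w))$, so both $x$ and $y$ have a neighbor in $K\cap \desc_F(w)$. Concatenating these two edges with a path inside the connected set $K\cap \desc_F(w)$ produces the desired $P_w$. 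Crucially, because $x,y\in \anc_F(w)\setminus\{w\}$, neither $x$ nor $y$ lies in $\desc_F(w)$, so the internal vertices of $P_w$ are genuinely in $K\cap \desc_F(w)$ and hence disjoint across different $w\in W$ (the subtrees $\desc_F(w)$ are pairwise disjoint for distinct siblings). This gives a family $\{P_w : w\in W\}$ of at least $d+\ell$ internally vertex-disjoint $x$-to-$y$ paths in $G[K]$, and this family is in fact edge-disjoint too: any edge shared by two of these paths would have to have both endpoints in $\{x,y\}$, which is impossible since every edge of $P_w$ has at least one endpoint among the internal vertices of $P_w$.

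To transport the conclusion to $H[K]$, observe that $H$ is a $(K,\ell)$-restricted augmentation of $G$, so $|E(G)\setminus E(H)|\leq \ell$; each deleted edge can destroy at most one of the paths $P_w$ by edge-disjointness, and we do not care whether additional edges were inserted. Hence at least $(d+\ell)-\ell=d$ of the paths $P_w$ survive as paths in $H[K]$. These surviving paths are still internally vertex-disjoint, so $x$ and $y$ are $d$-vertex-connected in $H[K]$. Since $F^K$ is an elimination forest of $H[K]$ of height at most $d$, Lemma~\ref{lem:connectivity} yields that $\{x,y\}$ is straight in $F^K$, completing the proof.

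The only subtle point, and the main thing to double-check, is the edge-disjointness of the family $\{P_w\}$: internal vertex-disjointness by itself would not be enough to argue that $\ell$ edge removals damage at most $\ell$ paths, so one has to invoke the observation that every edge of $P_w$ touches an internal vertex of $P_w$, which uses precisely that $x,y\notin \desc_F(w)$. Everything else is a direct combination of the core property, Lemma~\ref{lem:core-connected}, and Lemma~\ref{lem:connectivity}.
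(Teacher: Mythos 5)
Your proof is correct and follows essentially the same route as the paper's: apply the core property to get $W_K(u,\{x,y\})$ of size at least $d+\ell$, build internally vertex-disjoint $x$-$y$ paths through the sets $K\cap\desc_F(w)$ via Lemma~\ref{lem:core-connected}, argue that at most $\ell$ of them are destroyed in $H[K]$, and conclude with Lemma~\ref{lem:connectivity}. Your explicit edge-disjointness observation (every edge of $P_w$ touches an internal vertex in $\desc_F(w)$) is a fine way to make precise the step the paper states more tersely, namely that at most $\ell$ paths can contain an edge of $E(G)\setminus E(H)$.
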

\begin{proof}
 Consider any pair of distinct vertices $x,y\in \SReach_{F,G}(u)$.
 Denote $W\coloneqq W_K(u,\{x,y\})$; then $|W|\geq d+\ell$. 
 Consider any $w\in W$; recall that $w \in K$ and $x,y\in \SReach_{F,G}(w)$.
 By Lemma~\ref{lem:core-connected}, the graph $G[K\cap \desc_F(w)]$ is connected and both $x$ and $y$ have neighbors in $K\cap \desc_F(w)$ in $G[K]$.
 This implies that in $G[K]$ there is a path $P^{xy}_w$ with endpoints $x$ and $y$ such that every internal vertex of $P^{xy}_w$ belongs to $\desc_F(w)$.
 Since the sets in $\{\desc_F(w)\colon w\in W\}$ are pairwise disjoint, this shows that $x$ and $y$ are $(d+\ell)$-vertex-connected in $G[K]$.
 Since $|E(G)\setminus E(H)|\leq \ell$, at most $\ell$ of the paths $P^{xy}_w$ may contain an edge that does not appear in $H$, hence $x$ and $y$ are $d$-vertex-connected in $H[K]$.
 By Lemma~\ref{lem:connectivity} we infer that $\{x,y\}$ is straight in the elimination forest $F^K$.
 Since $x$ and $y$ were chosen arbitrarily from $\SReach_{F,G}(u)$, we conclude that $\SReach_{F,G}(u)$ is straight in $F^K$, as claimed.
\end{proof}

From Lemma~\ref{lem:straight} we can derive the following claim: restricting an elimination forest to a $d$-core preserves the treedepth of each subgraph induced by a subtree.

\begin{figure}[t]
  \centering
  \includegraphics{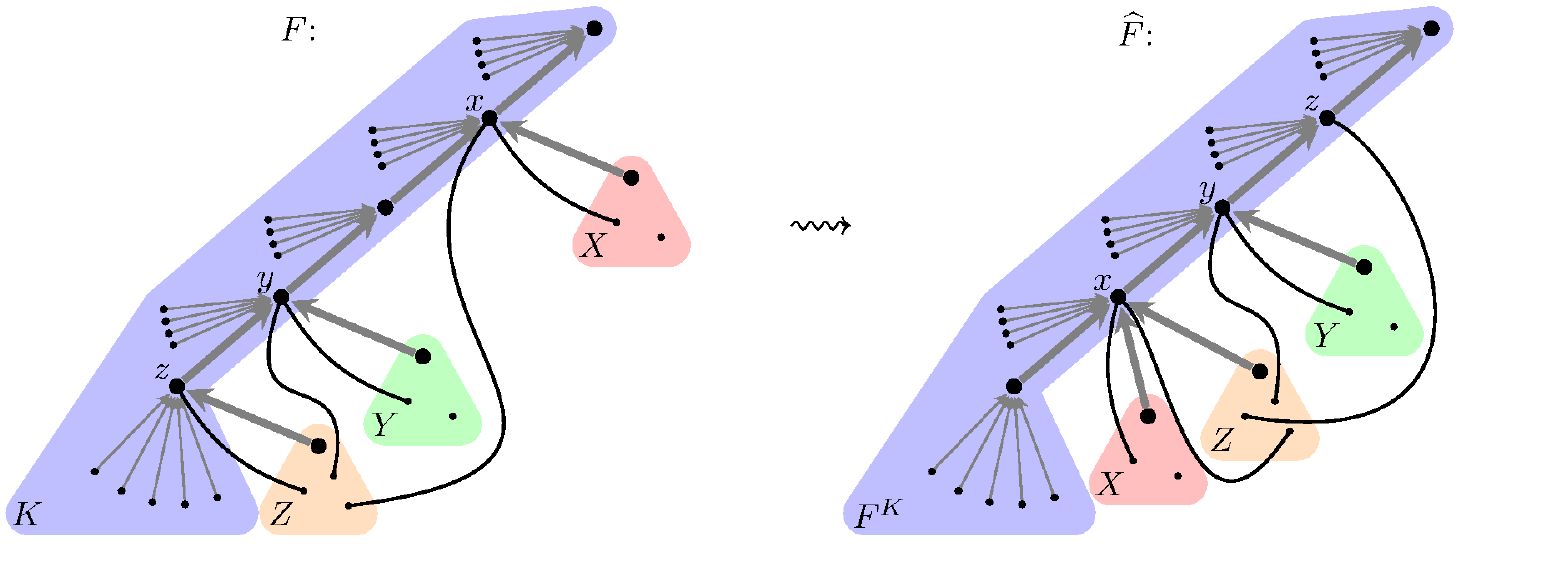}%
  \caption{The reattachment procedure carried out in Lemma~\ref{lem:core-opt-td}.
    Left: A recursively optimal elimination tree $F$ of the graph~$G$ and a core~$K$ of~$F$.
    Edges of $F$ are gray and edges of $G$ are black.
    $X$, $Y$ and $Z$ indicate the subtrees of $F$ that are disjoint from~$K$.
    The edges of $G$ drawn between $K$ and $X$, $Y$ and $Z$ indicate the sets $\SReach_{F, G}$ of the roots of $X$, $Y$, and $Z$.
    Right: An elimination tree $\wh{F}$ constructed from $G$, $F$ and $K$ according to Lemma~\ref{lem:core-opt-td}. We take a recursively optimal elimination tree $F^K$ for $G[K]$ and reattach each subtree $X$, $Y$, and $Z$ at the deepest vertex of the set $\SReach_{F, G}$ of the corresponding root of $X$, $Y$ or~$Z$.}
  \label{fig:core-adapt}
\end{figure}

\begin{lemma}\label{lem:core-opt-td}
 Let $F$ be a recursively optimal elimination forest of a graph $G$ such that $F$ is of height at most $d$, and let $K$ be a $d$-core of $(G,F)$.
 Then for every $v\in K$, we have $\td(G[K\cap \desc_F(v)])=\td(G[\desc_F(v)])$. 
\end{lemma}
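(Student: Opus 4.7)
The inequality $\td(G[K\cap \desc_F(v)])\leq \td(G[\desc_F(v)])$ is immediate from monotonicity of treedepth under induced subgraphs, so the plan is to prove the reverse inequality by induction on $\height(F_v)$; the base case $\height(F_v)=1$ is trivial. For the inductive step I write $K_0\coloneqq K\cap \desc_F(v)$, and first observe that $F_v$ is a recursively optimal elimination tree of the connected graph $G[\desc_F(v)]$, and that $K_0$ is a $d$-core of $(G[\desc_F(v)], F_v)$; this follows from the identity $\SReach_{F_v, G[\desc_F(v)]}(u) = \SReach_{F,G}(u)\cap \desc_F(v)$ for $u\in \desc_F(v)$, together with the fact that siblings and subtree heights coincide between $F$ and $F_v$. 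Hence I can work inside the instance $(G[\desc_F(v)], F_v, K_0)$ for the rest of the argument.

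Next I would fix an optimal elimination tree $T^*$ of $G[K_0]$, of height $h = \td(G[K_0])\leq d$, and build an elimination forest $\hat F$ of $G[\desc_F(v)]$ of height at most $h$; this immediately gives $\td(G[\desc_F(v)])\leq h$. The construction is the ``reattachment'' illustrated in Figure~\ref{fig:core-adapt}: for every $u\in \App_{F_v}(K_0)$, Lemma~\ref{lem:straight} applied with $\ell=0$ implies that $\SReach_{F_v, G[\desc_F(v)]}(u)$ is straight in $T^*$, so its deepest element $m_u$ is well defined (it exists by Lemma~\ref{lem:parent-SReach}, since $\pnt_{F_v}(u)$ already lies in the set), and I attach $F_u$ by declaring $\pnt_{\hat F}(u) = m_u$. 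A routine case analysis on an edge of $G[\desc_F(v)]$ --- both endpoints in $K_0$, exactly one in $K_0$, or both strictly below appendices of $K_0$ --- using the definition of $m_u$ to control the mixed case, confirms that $\hat F$ is indeed an elimination forest of $G[\desc_F(v)]$.

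The heart of the argument is the height bound, which amounts to showing $\depth_{T^*}(m_u) + \height(F_u)\leq h$ for every $u\in \App_{F_v}(K_0)$. Fix such $u$ and write $i = \height(F_u)$. The $d$-core property applied with $X=\{m_u\}$ yields a set $W\coloneqq W_{K_0}(u,\{m_u\})$ of at least $d$ siblings $w$ of $u$ inside $K_0$, each satisfying $m_u\in \SReach_{F_v, G[\desc_F(v)]}(w)$ and $\height(F_w)\geq i$. For every such $w$, Lemma~\ref{lem:core-connected} guarantees that $G[K_0\cap \desc_F(w)]$ is connected and that $m_u$ is among its $G[K_0]$-neighbors, so by Lemma~\ref{lem:top} there is a top-most vertex $x_w\in K_0\cap \desc_F(w)$ in $T^*$ with $K_0\cap \desc_F(w)\subseteq \desc_{T^*}(x_w)$. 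A brief inspection of the straight pair $\{m_u, y\}$ in $T^*$, where $y\in K_0\cap \desc_F(w)$ is a neighbor of $m_u$, shows that $m_u$ and $x_w$ are always comparable in $T^*$; moreover they are distinct, since $\desc_F(w)\cap \anc_F(u)=\emptyset$ for a sibling $w$ of $u$.

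The decisive --- and in my view hardest --- step will be a short pigeonhole argument. The vertices $\{x_w : w\in W\}$ are pairwise distinct because the sets $\desc_F(w)$ are pairwise disjoint, whereas the set of strict ancestors of $m_u$ in $T^*$ has cardinality $\depth_{T^*}(m_u) - 1 \leq h - 1 \leq d - 1 < d \leq |W|$. Hence at least one $w\in W$ must have $x_w$ sitting strictly below $m_u$ in $T^*$. For this $w$, the inductive hypothesis applies since $\height(F_w)<\height(F_v)$, and gives $\td(G[K_0\cap \desc_F(w)]) = \height(F_w)\geq i$, whence $\height(T^*_{x_w})\geq i$. Combined with $\depth_{T^*}(x_w)\geq \depth_{T^*}(m_u)+1$ this yields $h \geq \depth_{T^*}(x_w) + \height(T^*_{x_w}) - 1 \geq \depth_{T^*}(m_u) + i$, finishing the inductive step. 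It is precisely this counting that explains why a $d$-core, rather than a $(d+1)$-core as in the informal overview, is already enough here.
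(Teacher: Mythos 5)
Your proof is correct and follows essentially the same route as the paper's: the same reduction to the root case, induction on the height, and reattachment of the appendix subtrees to an optimal elimination tree of $G[K]$ at the deepest vertex of the strong reachability set, with the core property supplying a sibling $w$ whose core-restricted subtree ends up strictly below that vertex and has sufficient height by the inductive hypothesis. The only (harmless) deviations are that you argue directly rather than by contradiction, and that you locate the sibling via comparability of the vertices $x_w$ with $m_u$ plus a pigeonhole count on the strict ancestors of $m_u$, whereas the paper picks a sibling whose subtree avoids the set $\anc_{F^K}(m)$.
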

\begin{proof}
  Observe that for every $v\in K$, we have that $F_v$ is a recursively optimal elimination tree of $G[\desc_F(v)]$ and $K\cap V(F_v)$ is a $d$-core for $(G[\desc_F(v)],F_v)$.
  Hence, it suffices to prove the lemma for the case when $F$ is a tree and $v$ is the root of $F$.
  Indeed, if we succeed in this, then applying the statement for this case to each $v \in K$ yields the general statement of the lemma.
  
  We proceed by induction on $\height(F)$.
  The base case where $\height(F) = 1$ is trivial.
  For the inductive step, we may assume that
 \begin{equation}\label{eq:squirrel}
\td(G[K\cap \desc_F(z)])=\td(G[\desc_F(z)])\qquad\textrm{for each }z\in K\setminus \{v\}.  
 \end{equation}
 We need to prove that $\td(G[K])=\td(G)$. Clearly $\td(G[K])\leq \td(G)$, so
 for contradiction suppose that $\td(G[K])<\td(G)$.
 Let $F^K$ be an elimination forest of $G[K]$ of height strictly smaller than $\td(G)$; in particular, $\height(F^K)<d$.
 Observe that, thus, $V(G) \setminus K \neq \emptyset$.
 Our goal is to construct an elimination forest $\wh{F}$ of $G$ of height equal to the height of $F^K$, which will be a contradiction.
 
 Consider any $u\in \App_F(K)$.
 By Lemma~\ref{lem:straight} applied to $H=G$, the set $M\coloneqq \SReach_{F,G}(u)$ is straight in~$F^K$.
 Note here that since $u\neq v$ (due to $u\notin K$ and $v\in K$) and $v$ is the only root of $F$, $u$ has a parent in~$F$, which by Lemma~\ref{lem:parent-SReach} belongs to $M$.
 In particular $M\neq \emptyset$. Let then $m$ be the vertex of $M$ that is the deepest in $F^K$; this is well-defined because $M$ is straight in $F^K$.
 Further, let $\wh{M}\coloneqq \anc_{F^K}(m)$; as $\height(F^K)<d$ and $M$ is straight in $F^K$, we have $M\subseteq \wh{M}$ and $|\wh{M}|<d$.

 Let $W\coloneqq W_K(u,\{m\})$; then $|W|\geq d$ as $K$ is a $d$-core.
 Since $|W|\geq d$, there exists $w\in W$ such that $\desc_F(w)\cap \wh{M}=\emptyset$.
 Recall that $m\in \SReach_{F,G}(w)$ and $\height(F_w)\geq \height(F_u)$ by the definition of $W$.
 
 By Lemma~\ref{lem:core-connected}, graph $G[K\cap \desc_F(w)]$ is connected and $N_{G[K]}(K\cap \desc_F(w))=\SReach_{F,G}(w)$.
 As $G[K\cap \desc_F(w)]$ is connected and $F^K$ is an elimination forest of $G[K]$, by Lemma~\ref{lem:top} there exists $x\in K\cap \desc_F(w)$ such that $K\cap \desc_F(w)\subseteq \desc_{F^K}(x)$.
 Note that $x\notin \wh{M}$ because $\desc_F(w)\cap \wh{M}=\emptyset$.
 On the other hand, since $m\in \SReach_{F,G}(w)=N_{G[K]}(K\cap \desc_F(w))$, $m$ has a neighbor in the set $K\cap \desc_F(w)\subseteq \desc_{F^K}(x)$. 
 As $F^K$ is an elimination forest of $G[K]$, this implies that $x\in \desc_{F^K}(m)\setminus \{m\}$.
 Since $K\cap \desc_F(w)\subseteq \desc_{F^K}(x)$ and $M\subseteq \anc_{F^K}(m)$, we conclude that 
 in $F^K$, all the vertices of $K\cap \desc_F(w)$ are descendants of all the vertices of $M$, hence also of all the vertices of $\wh{M}$.
 In particular,
 $$|\wh{M}|+\td(G[K\cap \desc_F(w)])\leq \height(F^K).$$
 
 Since $w\neq v$, applying~\eqref{eq:squirrel} with $z = w$ yields $\td(G[K\cap \desc_F(w)])=\td(G[\desc_F(w)])$. 
  Moreover, observe that we have $\td(G[\desc_F(w)])=\height(F_w)\geq \height(F_u)= \td(G[\desc_F(u)])$, where the equalities follow from the recursive optimality of $F$. By combining these, we find that
 \begin{equation}\label{eq:beaver}
   |\wh{M}|+\height(F_u)\leq \height(F^K).
 \end{equation}
 
 We now construct a new elimination forest $\wh{F}$ of $G$ as follows.
 See Figure~\ref{fig:core-adapt} for an illustration.
 Begin with $\wh{F}=F^K$ and, for every $u\in \App_F(K)$, insert the tree $F_u$ into $\wh{F}$ by making $u$ a child of $m$, defined as the vertex of $\SReach_{F,G}(u)$ that is deepest in $F^K$ (this vertex is well-defined since $\SReach_{F,G}(u)$ is straight in $F^K$). 
 It is straightforward to see that $\wh{F}$ constructed this way is an elimination forest of~$G$.
 Moreover, by~\eqref{eq:beaver} we conclude that $\height(\wh{F})=\height(F^K)$ (recall that $M \subseteq \wh{M}$). As $\height(F^K)<\td(G)$, this is a contradiction, and the inductive step is proved.
\end{proof}

Let us formalize and slightly generalize the re-attachment procedure that we have used in the proof of Lemma~\ref{lem:core-opt-td}.

\begin{definition}
  Suppose that $H$ is a graph, $K$ is a subset of vertices of $H$, and $F^K$ is an elimination forest of $H[K]$. Let $R$ be an elimination forest of the graph $H-K$. 
  We call $R$ {\em{attachable}} to $(H,F^K)$ if for each tree $S$ of the forest $R$, the set $N_H(V(S))$ is straight in~$F^K$. Then we can construct a forest $\wh{F}$ as follows: start with $\wh{F}=F^K$ and for each tree $S$ in the forest $R$, add $S$ to $\wh{F}$ by making the root of $S$ a child of the vertex of $N_H(V(S))$ that is the deepest in $F^K$, or making the root of $S$ a new root of $\wh{F}$ in case $N_H(V(S))=\emptyset$. Then $\wh{F}$ shall be called the {\em{extension}} of $F^K$ via $R$.
\end{definition}

It is straightforward to see the following.

\begin{lemma}\label{lem:attach-elim}
 Suppose $K$ is a subset of vertices of a graph $H$, $F^K$ is an elimination forest of $H[K]$, and $R$ is an elimination forest of $H-K$ that is attachable to $(H,F^K)$. Then the extension of $F^K$ via $R$ is an elimination forest of $H$.
\end{lemma}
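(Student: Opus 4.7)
The plan is straightforward verification: we need to check that $\wh{F}$ is a rooted forest on $V(H)$ and that every edge of $H$ has its two endpoints in an ancestor--descendant relation in $\wh{F}$.

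First I would observe that $V(\wh{F}) = V(F^K) \cup V(R) = K \cup (V(H)\setminus K) = V(H)$, and that $\wh{F}$ is indeed a rooted forest: we only alter $F^K$ by attaching roots of trees of $R$ either as children of existing vertices of $F^K$ or as new roots, so no cycles or multiple parents are introduced.

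Next, I would split the edges of $H$ into three classes depending on how many endpoints lie in $K$. For an edge $uv \in E(H)$ with both endpoints in $K$, we have $uv \in E(H[K])$, so $\{u,v\}$ is straight in $F^K$, and since $F^K$ is a subforest of $\wh{F}$ preserving the ancestor--descendant relation on $K$, the set $\{u,v\}$ is straight in $\wh{F}$ as well. For an edge $uv$ with both endpoints in $V(H)\setminus K$, we have $uv \in E(H-K)$, and both $u$ and $v$ lie in the same tree $S$ of $R$ (otherwise they could not be straight in $R$, as $R$ is an elimination forest of $H-K$); since the construction of $\wh{F}$ preserves the internal structure of each $S$, $\{u,v\}$ remains straight in $\wh{F}$.

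The main case is when exactly one endpoint lies in $K$, say $u \in K$ and $v \notin K$. Let $S$ be the tree of $R$ containing $v$ and let $r$ be its root, so that $v \in \desc_R(r)$. Since $u \in K$ and $u \in N_H(v) \subseteq N_H(V(S))$, we have $N_H(V(S)) \neq \emptyset$. By the attachability assumption, $N_H(V(S))$ is straight in $F^K$, so there is a well-defined deepest vertex $m$ of $N_H(V(S))$ in $F^K$; by construction $r$ is made a child of $m$ in $\wh{F}$, and every vertex of $N_H(V(S))$ is an ancestor of $m$ in $F^K$, hence an ancestor of $r$ in $\wh{F}$. In particular $u$ is an ancestor of $r$ in $\wh{F}$, and therefore an ancestor of $v$. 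Thus $\{u,v\}$ is straight in $\wh{F}$, completing the verification.

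No serious obstacle is anticipated; the only point requiring a small care is verifying that the attachment place $m$ is well-defined, which follows immediately from the straightness hypothesis and the non-emptiness of $N_H(V(S))$ whenever $S$ contains an endpoint of a cross-edge.
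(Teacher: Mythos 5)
Your proof is correct and follows essentially the same case analysis as the paper's: edges inside $K$ handled by $F^K$, edges outside $K$ handled by the unchanged trees of $R$, and mixed edges handled via the attachability assumption, which makes every vertex of $N_H(V(S))$ an ancestor of the attachment point $m$ and hence of all of $V(S)$ in $\wh{F}$. The extra routine checks (vertex set and forest structure) are fine but not needed beyond what the paper records.
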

\begin{proof}
  Let $\wh{F}$ be the extension in question.
  We need to prove that for every edge $ab\in E(H)$, $\{a,b\}$ is straight in $\wh{F}$.
  If both $a,b\in K$, then this follows from the fact that $F^K$ is an elimination forest of $H[K]$.
  If both $a,b\notin K$, then $ab\in E(H)$ implies that $a,b$ are both contained in the same tree $S$ in $R$, and $\{a,b\}$ is straight in this tree. As this tree is included in $\wh{F}$ without modifications, it follows that $\{a,b\}$ is straight in~$\wh{F}$ as well.
  
  Finally, if say $a\in K$ and $b\notin K$, then there is a tree $S$ of $R$ such that $b\in V(S)$, hence also $a\in N_H(V(S))$. By construction, all the vertices of $N_H(V(S))$ are ancestors of all the vertices of $V(S)$ in~$\wh{F}$. In particular, $a$ is an ancestor of $b$.
\end{proof}

Next, we formulate and discuss a key technical property.
\begin{definition}
Suppose that $H$ is a graph, $K$ is a subset of its vertices, $F^K$ is an elimination forest of $H[K]$, and $R$ is an elimination forest of $H-K$ that is attachable to $(H,F^K)$. Then, letting $\wh{F}$ be the extension of $F^K$ via $R$, we shall say that $R$ has the {\em{sibling substitution property}} in $\wh{F}$ if the following condition holds: 
\begin{quote}
For each $u\in \roots_R$, there exists $z\in K$ such that $u$ and $z$ are siblings in $\wh{F}$ and $\height(F^K_z)\geq \height(R_u)$.
\end{quote}
\end{definition}

We now show that the sibling substitution property holds assuming that $K$ is a $q$-core for a sufficiently large~$q$, as explained formally below. Note that in the following lemma we {\em{do not}} require $F^K$ to be recursively optimal, or even recursively connected.

\begin{lemma}\label{lem:ssp}
 Let $F$ be a recursively optimal elimination forest of a graph $G$ such that the height of $F$ is at most $d$, let $K$ be a $(d+\ell+1)$-core of $(G,F)$ for some $\ell\geq 0$, and let $R\coloneqq F-K$.
 Let $H$ be a $(K,\ell)$-restricted augmentation of $G$ and let $F^K$ be any elimination forest of $H[K]$ of height at most $d$.
 Then $R$ is attachable to $(H,F^K)$ and has the sibling substitution property in the extension of $F^K$ via $R$.
\end{lemma}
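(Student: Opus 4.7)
I would split the proof into two parts corresponding to the two conclusions of the lemma.

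\emph{Attachability.} For a tree $S$ of $R$ rooted at some $u \in \App_F(K)$, the vertex set $V(S) = \desc_F(u)$ is disjoint from $K$ (since $K$ is a prefix and $u \notin K$), so every edge incident to $V(S)$ is preserved by the restricted augmentation. Consequently $N_H(V(S)) = N_G(\desc_F(u)) = \SReach_{F,G}(u)$, and Lemma~\ref{lem:straight} (applicable because a $(d+\ell+1)$-core is in particular a $(d+\ell)$-core) asserts that this set is straight in $F^K$.

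\emph{Sibling substitution.} Fix $u \in \roots_R$, set $h \coloneqq \height(R_u) = \height(F_u)$ and $M \coloneqq \SReach_{F,G}(u)$. If $M \neq \emptyset$, let $m$ be the deepest vertex of $M$ in $F^K$ (well-defined by attachability) and $\wh{M} \coloneqq \anc_{F^K}(m)$; otherwise put $\wh{M} \coloneqq \emptyset$. In either case $|\wh{M}| \leq \height(F^K) \leq d$. Inspecting the construction of $\wh{F}$, the siblings of $u$ in $\wh{F}$ that lie in $K$ are exactly the children of $m$ in $F^K$ (when $M \neq \emptyset$) or the roots of $F^K$ (when $M = \emptyset$); it thus suffices to exhibit such a $z$ with $\height(F^K_z) \geq h$. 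I would apply the core condition with $X \coloneqq \{m\}$ or $X \coloneqq \emptyset$ accordingly to obtain $W \coloneqq W_K(u,X)$ of size at least $d+\ell+1$.

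The plan is to choose some $w \in W$ avoiding two spoilages: \textbf{(i)} $\desc_F(w) \cap \wh{M} \neq \emptyset$, and \textbf{(ii)} either a removed edge of $E(G) \setminus E(H)$ lies inside $K \cap \desc_F(w)$, or (when $M \neq \emptyset$) the removals destroy every $H$-edge between $m$ and $K \cap \desc_F(w)$. Since the sets $\desc_F(w)$ for $w \in W$ are pairwise disjoint, at most $|\wh{M}| \leq d$ choices of $w$ fail (i). Because $m$ is a strict ancestor of every $w \in W$ in $F$, hence $m \notin \desc_F(w)$, the edges involved in the two variants of (ii) form disjoint families, and each removed edge can spoil at most one $w$; thus at most $\ell$ choices of $w$ fail (ii). Since $|W| \geq d+\ell+1$, a good $w$ survives.

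For such $w$, Lemma~\ref{lem:core-connected} gives that $G[K \cap \desc_F(w)]$ is connected, and by (ii) all its edges survive in $H$, so $H[K \cap \desc_F(w)]$ is also connected and its treedepth is at least that of $G[K \cap \desc_F(w)]$, which by Lemma~\ref{lem:core-opt-td} (applicable because a $(d+\ell+1)$-core is a $d$-core) equals $\height(F_w) \geq h$. Lemma~\ref{lem:top} then yields $x_w \in K \cap \desc_F(w)$ with $K \cap \desc_F(w) \subseteq \desc_{F^K}(x_w)$, and (i) forces $x_w \notin \wh{M}$. If $M = \emptyset$, take $z$ to be the root of $F^K$ above $x_w$; then $F^K_z$ is an elimination tree of $H[V(F^K_z)]$, a subgraph containing $H[K \cap \desc_F(w)]$, so $\height(F^K_z) \geq h$. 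If $M \neq \emptyset$, (ii) provides an $H$-edge $mb$ with $b \in K \cap \desc_F(w) \subseteq \desc_{F^K}(x_w)$; $\{m,b\}$ being straight in $F^K$ with $b \notin \wh{M}$ forces $m$ to be an ancestor of $b$, and combined with $x_w$ being an ancestor of $b$ while $x_w \notin \wh{M}$, this makes $m$ a strict ancestor of $x_w$. Taking $z$ to be the child of $m$ in $F^K$ on the path to $x_w$ then yields the desired sibling with $\height(F^K_z) \geq h$. The main obstacle, which consumes the extra $+1$ slack in the core size, is that the path and adjacency witnesses underlying Lemmas~\ref{lem:core-connected} and~\ref{lem:core-opt-td} live in $G$ rather than $H$, and must be protected against the up-to-$\ell$ edge removals; the two-way counting above is the bookkeeping that achieves this.
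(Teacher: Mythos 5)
Your proof is correct and follows essentially the same approach as the paper's: use Lemma~\ref{lem:straight} for attachability, then for each root $u$ of $R$ use the core condition to extract a large set $W$ of siblings of $u$ in $F$, discard those spoiled by removed edges and those whose subtrees meet $\wh{M}$, and for a surviving $w$ invoke Lemmas~\ref{lem:core-connected}, \ref{lem:core-opt-td}, and \ref{lem:top} to locate the sibling $z$ and establish the height bound. The only cosmetic differences are that you unify the two cases the paper handles separately ($u \in \roots_F$, where $M = \emptyset$, and $u \notin \roots_F$) by letting $\wh{M}$ be empty in the first case, and that your spoilage bookkeeping is a slight refinement of the paper's: the paper simply discards the at most $\ell$ siblings $w$ with some removed edge incident to $\desc_F(w)$, whereas you only discard $w$ when a removed edge lies inside $K \cap \desc_F(w)$ or joins $m$ to $K \cap \desc_F(w)$. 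Both give the same $\leq \ell$ bound, since each removed edge (having both endpoints in $K$ and being incident to $\desc_F(w)$ for at most one sibling $w$, as the $\desc_F(w)$ are pairwise disjoint subtrees of an elimination forest) can witness spoilage for at most one $w$, and the total slack of $(d+\ell+1) - d - \ell = 1$ suffices.
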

\begin{proof}
 Note that $R$ is an elimination forest of the graph $G-K$ and $G - K$ is equal to $H-K$. Moreover, from Lemma~\ref{lem:straight} it follows that $R$ is attachable to $(H,F^K)$, because for each tree $S$ in $R$ we have $N_H(V(S))=\SReach_{F,G}(u)$, where $u$ is the root of $S$. Hence, by Lemma~\ref{lem:attach-elim} we conclude that $\wh{F}$, defined as the extension of $F^K$ via $R$, is an elimination forest of $H$. It remains to argue that $R$ has the sibling substitution property. The reasoning will be similar to that used in the proof of Lemma~\ref{lem:core-opt-td}.

  Consider any $u\in \roots_R=\App_F(K)$.
  Let us first consider the corner case when $u\in \roots_F$. Note that then $\SReach_{F,G}(u)=\emptyset$. 
  Let $W\coloneqq W_K(u,\emptyset)$; then $|W|\geq d+\ell+1$.
  By Lemmas~\ref{lem:core-connected} and~\ref{lem:core-opt-td}, for every $w\in W$ we have that $G[K\cap \desc_F(w)]$ is connected and $\td(G[K\cap \desc_F(w)])=\height(F_w)$.
  Moreover, since $H$ is a $(K,\ell)$-restricted augmentation of $G$ and $|W|\geq d+\ell+1$, there exists a vertex $w\in W$ such that no edge of $E(G)\setminus E(H)$ is incident to any vertex of $\desc_F(w)$ (recall that the vertices in $W$ are mutual siblings in $F$ and thus no edge in $E(G)\setminus E(H)$ can connect descendants of two of them).
  This implies that $H[K\cap \desc_F(w)]$ is a supergraph of $G[K\cap \desc_F(w)]$. Hence, $H[K\cap \desc_F(w)]$ is connected as well and $\td(H[K\cap \desc_F(w)])\geq \td(G[K\cap \desc_F(w)])$.
  As $H[K\cap \desc_F(w)]$ is connected and $F^K$ is an elimination forest of $H[K]$, by Lemma~\ref{lem:top} there exists $x\in K\cap \desc_F(w)$ such that $K\cap \desc_F(w)$ is entirely contained in $F^K_x$. Let $z$ be the root of $F$ that is an ancestor of $x$. We conclude that
  $$\height(F^K_z)\geq \height(F^K_x)\geq \td(H[K\cap \desc_F(w)])\geq \td(G[K\cap \desc_F(w)])=\height(F_w)\geq \height(F_u).$$
  We conclude by observing that both $u$ and $z$ are roots of $\wh{F}$, hence they are siblings in $\wh{F}$. Thus, $z$ satisfies all the requested properties.
  
  \medskip
  
  We proceed to the proof of the main case when $u\notin \roots_F$.
  Let $M\coloneqq \SReach_{F,G}(u)$. By Lemma~\ref{lem:straight}, $M$~is straight in $F^K$.
  Moreover, as $u\notin \roots_F$, from Lemma~\ref{lem:parent-SReach} we infer that $\pnt_F(u)\in M$, hence in particular $M\neq \emptyset$.
  Let then $m$ be the vertex of $M$ that is the deepest in $F^K$.
  Further, let $\wh{M}=\anc_{F^K}(m)$. Then, as $M$ is straight in $F^K$, we have $M\subseteq \wh{M}$ and $|\wh{M}|\leq \height(F^K)\leq d$.
  
  Let $W\coloneqq W_K(u,\{m\})$; then $|W|\geq d+\ell+1$.
  Observe that among the vertices $w\in W$, for at most $\ell$ of them there may exist an edge in $E(G)\setminus E(H)$ that is incident to a vertex of $\desc_F(w)$.
  This leaves us with a set $W'\subseteq W$ of size at least $d+1$ such that for each $w\in W'$, we have
  \begin{itemize}[nosep]
   \item $E(H[\desc_F(w)])\supseteq E(G[\desc_F(w)])$ and
   \item $N_H(\desc_F(w))\supseteq N_G(\desc_F(w))=\SReach_{F,G}(w)\ni m$.
  \end{itemize}
  As $|W'|>d$ and $|\wh{M}|\leq d$, there exists a vertex $w\in W'$ such that $\desc_F(w)\cap \wh{M}=\emptyset$.
 
  By Lemma~\ref{lem:core-connected} applied to $w$, we infer that the graph $G[K\cap \desc_F(w)]$ is connected and $N_{G[K]}(K\cap \desc_F(w))=\SReach_{F,G}(w)$.
  As vertices of $\desc_F(w)$ are not incident to the edges of $E(G)\setminus E(H)$, the graph $H[K\cap \desc_F(w)]$ is a supergraph of $G[K\cap \desc_F(w)]$, hence $H[K\cap \desc_F(w)]$ is connected as well.
  Moreover, we have $N_{H[K]}(K\cap \desc_F(w))\supseteq N_{G[K]}(K\cap \desc_F(w))=\SReach_{F,G}(w)$. As $m\in \SReach_{F,G}(w)$, this means that in the graph $H[K]$, $m$ has a neighbor in $K\cap \desc_F(w)$.
 
  Since $H[K\cap \desc_F(w)]$ is connected and $F^K$ is an elimination forest of $H[K]$, by Lemma~\ref{lem:top}
  there exists a vertex $x\in K\cap \desc_F(w)$ such that all vertices of $K\cap \desc_F(w)$ are descendants of $x$ in $F^K$.  
  Since $\desc_F(w)\cap \wh{M}=\emptyset$ by the choice of~$w$, we in particular have $x\notin \wh{M}=\anc_{F^K}(m)$.
  As $F^K$ is an elimination forest of $H[K]$ and, in this graph, $m$ has a neighbor in $K\cap \desc_F(w)\subseteq \desc_{F^K}(x)$, we conclude that $x$ is a strict descendant of $m$ in $F^K$.
  By the construction of $\wh{F}$, we can find a vertex $z\in K$ that in $\wh{F}$ is a sibling of $u$ and an ancestor of $x$.
  Note that $K\cap \desc_F(w)\subseteq \desc_{F^K}(x)\subseteq \desc_{F^K}(z)$.
  It now remains to observe that
  \begin{align}
    \height(F^K_z) & \geq     \td(H[\desc_{F^K}(z)]) \label{eq:1} \\
                   & \geq  \td(H[K\cap \desc_F(w)]) \nonumber \\
                   & \geq  \td(G[K\cap \desc_F(w)]) \nonumber \\
                   & =     \td(G[\desc_F(w)]) \label{eq:2} \\
                   & =     \height(F_w) \label{eq:3} \\
                   & \geq \height(F_u). \nonumber
  \end{align}
  Here, \eqref{eq:1} follows from the fact that $F^K_z$ is an elimination forest for $H[\desc_{F^K}(z)]$, \eqref{eq:2} follows from Lemma~\ref{lem:core-opt-td}, 
  and \eqref{eq:3} follows from the recursive optimality of $F$.
\end{proof}

We now strengthen Lemma~\ref{lem:attach-elim} by showing that, assuming recursive optimality of $F^K$ and presence of the sibling substitution property, the obtained extension is recursively optimal.

\begin{lemma}\label{lem:ssp-optimal}
 Let $K$ be a subset of vertices of a graph $H$ and $F^K$ be a recursively optimal elimination forest of $H[K]$. Suppose that $R$ is a recursively optimal elimination forest of $H-K$ that is attachable to $(H,F^K)$. Let $\wh{F}$ be the extension of $F^K$ via $R$ and suppose further that $R$ has the sibling substitution property in $\wh{F}$. Then $\wh{F}$ is a recursively optimal elimination forest of $H$ and $\height(\wh{F})=\height(F^K)$. 
\end{lemma}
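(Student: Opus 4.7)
The plan is to verify the two conditions of recursive optimality for $\wh{F}$ at every vertex $v \in V(H)$, together with the height equality, using Lemma~\ref{lem:attach-elim} as a starting point. Since Lemma~\ref{lem:attach-elim} already guarantees that $\wh{F}$ is an elimination forest of $H$, what remains is (a) connectivity of $H[\desc_{\wh{F}}(v)]$ for each $v$, (b) optimality of $\wh{F}_v$ as an elimination forest of $H[\desc_{\wh{F}}(v)]$, and (c) the height bound $\height(\wh{F}) \leq \height(F^K)$ (the reverse inequality is trivial as $F^K$ embeds into $\wh{F}$).

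For any $v \notin K$ we have $\wh{F}_v = R_v$ and $\desc_{\wh{F}}(v) = \desc_R(v) \subseteq V(H) \setminus K$, so connectivity and optimality at $v$ are inherited directly from recursive optimality of $R$. For $v \in K$, write $U_v \coloneqq \{u \in \roots_R \colon v \in \anc_{F^K}(m_u)\}$, where $m_u$ denotes the deepest vertex of $N_H(V(R_u))$ in $F^K$ (i.e., the attachment point of $u$ in $\wh{F}$). Then $\desc_{\wh{F}}(v) = \desc_{F^K}(v) \cup \bigcup_{u \in U_v} \desc_R(u)$. Recursive optimality of $F^K$ yields connectivity of $H[\desc_{F^K}(v)]$, recursive optimality of $R$ yields connectivity of each $H[\desc_R(u)]$, and for every $u \in U_v$ the vertex $m_u$ lies in $N_H(V(R_u)) \cap \desc_{F^K}(v)$ by construction, so some edge of $H$ joins $V(R_u)$ with $\desc_{F^K}(v)$. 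Gluing these pieces together establishes the connectivity of $H[\desc_{\wh{F}}(v)]$.

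The key step, where the sibling substitution property does its work, is the claim that $\height(\wh{F}_v) = \height(F^K_v)$ for every $v \in K$. For this, fix any $u \in U_v$: the attachment places $u$ as a child of $m_u$, so every descendant of $u$ in $\wh{F}$ has depth at most $\depth_{F^K}(m_u) + \height(R_u)$. The sibling substitution property supplies some $z \in K$ that is a sibling of $u$ in $\wh{F}$ (hence $z \in \chld_{F^K}(m_u)$) with $\height(F^K_z) \geq \height(R_u)$, which means $F^K_z$ already reaches depth $\depth_{F^K}(m_u) + \height(F^K_z) \geq \depth_{F^K}(m_u) + \height(R_u)$ inside $\desc_{F^K}(v)$. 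Hence the attached subtree $R_u$ contributes no leaf deeper than what is already present in $F^K_v$, and the maximum depth in $\wh{F}_v$ coincides with the maximum depth in $F^K_v$. An entirely analogous argument treats the corner case in which $u$ is a root of $\wh{F}$ (then the guaranteed $z$ is itself a root of $F^K$), and summing over all roots delivers $\height(\wh{F}) = \height(F^K)$.

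Optimality at $v \in K$ now follows almost formally: $\wh{F}_v$ is an elimination forest of $H[\desc_{\wh{F}}(v)]$, so $\td(H[\desc_{\wh{F}}(v)]) \leq \height(\wh{F}_v) = \height(F^K_v)$; by recursive optimality of $F^K$ we have $\height(F^K_v) = \td(H[\desc_{F^K}(v)])$; and since $\desc_{F^K}(v) \subseteq \desc_{\wh{F}}(v)$, the treedepth is monotone, giving $\td(H[\desc_{F^K}(v)]) \leq \td(H[\desc_{\wh{F}}(v)])$. The three inequalities collapse into equalities, which means $\wh{F}_v$ is an optimal elimination forest of $H[\desc_{\wh{F}}(v)]$. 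Together with the case $v \notin K$, this completes the verification of recursive optimality, and the height identity has already been established along the way. The main obstacle is the height argument in the third paragraph, as it is the only place where the sibling substitution property is genuinely used and where one must be careful about the distinction between depths in $F^K$ and in $\wh{F}$ for vertices that are strict descendants of attachment points.
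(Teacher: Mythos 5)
Your proof is correct and follows the paper's approach exactly: the same decomposition into $v \in K$ versus $v \notin K$, the same gluing argument for recursive connectivity, and the same use of the sibling substitution property to show that the attached trees $R_u$ cannot increase the height of any subtree $F^K_v$, from which optimality follows by the chain of inequalities through $\td(H[K \cap \desc_{\wh{F}}(v)])$. You merely spell out the depth bookkeeping more explicitly (introducing $U_v$ and $m_u$, and tracking $\depth_{F^K}(m_u) + \height(R_u)$ against $\depth_{F^K}(m_u) + \height(F^K_z)$) where the paper argues informally, but the substance is identical.
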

\begin{proof}
 That $\wh{F}$ is an elimination forest of $H$ follows directly from Lemma~\ref{lem:attach-elim}. We first check that $\wh{F}$ is recursively connected. For this, consider any vertex $u$; we need to prove that $H[\desc_{\wh{F}}(u)]$ is connected. If $u\notin K$ then $\desc_{\wh{F}}(u)=\desc_{R}(u)$, so the claim follows from the recursive optimality of $R$. 
 
  Suppose then that $u\in K$. By the recursive optimality of $F^K$, we have that the graph $H[\desc_{F^K}(u)]=H[K\cap \desc_{\wh{F}}(u)]$ is connected.
  Observe that the connected components of $H[\desc_{\wh{F}}(u)]-K$ are exactly the graphs $H[\desc_F(w)]$ for those $w\in \roots_R$ that satisfy $w\in \desc_{\wh{F}}(u)$ (notice that for such $w$ the graph $H[\desc_F(w)]$ is connected by recursive optimality of~$R$).
  By the construction of $\wh{F}$, and in particular the choice of the attachment point for $w$, for each such $w$ we have that $\SReach_{F,H}(w)$ intersects $K\cap \desc_{\wh{F}}(u)$.
  In particular, some vertex of $\desc_F(w)$ has a neighbor in $H$ that is contained in $K\cap \desc_{\wh{F}}(u)$.
  We conclude that every connected component of $H[\desc_{\wh{F}}(u)]-K$ contains 
  a vertex adjacent to $\desc_{\wh{F}}(u)\cap K$ and $H[K\cap \desc_{\wh{F}}(u)]$ itself is connected, so $H[\desc_{\wh{F}}(u)]$ is connected as well.
  
  Having argued that $\wh{F}$ is recursively connected, it remains to show that for each vertex $u$, we have $\height(\wh{F}_u)=\td(H[\desc_{\wh{F}}(u)])$. Again, if 
  $u\notin K$ then $\wh{F}_u=R_u$, so the claim follows immediately from the recursive optimality of~$R$.
 
  For $u\in K$, from the sibling substitution property we infer that $\height(\wh{F}_u)=\height(F^K_u)$:
  Indeed, adding trees $R_a$ for $a\in \roots_R$ when constructing $\wh{F}$ from $F^K$ cannot increase the height of any subtree of $F^K$, for each $a\in \roots_R$ always has a sibling $z$ in $\wh{F}$ such that already the subtree $F^K_z$ has height not smaller than the height of the added subtree $R_a$. Therefore, by recursive optimality of $F^K$ we conclude that 
  $$\height(\wh{F}_u)=\height(F^K_u)=\td(H[\desc_{F^K}(u)])=\td(H[K\cap \desc_{\wh{F}}(u)])\leq \td(H[\desc_{\wh{F}}(u)]).$$
  However $\height(\wh{F}_u)\geq \td(H[\desc_{\wh{F}}(u)])$ by the definition of treedepth, so indeed we have $\height(\wh{F}_u)=\td(H[\desc_{\wh{F}}(u)])$.
  Applying the above argument for each $u \in K \cap \roots_{\wh{F}}$ yields that $\height(\wh{F})=\height(F^K)$.
\end{proof} 

We may now combine Lemmas~\ref{lem:ssp} and~\ref{lem:ssp-optimal} to obtain the following statement, which summarizes the analysis provided in this section.

\begin{lemma}\label{lem:core-uberlemma}
  Let $F$ be a recursively optimal elimination forest of a graph $G$ such that $F$ is of height at most~$d$.
  Let $K$ be a $(d+\ell+1)$-core of $(G,F)$ for some $\ell\geq 0$, and let $R=F-K$.
 Let $H$ be a $(K,\ell)$-restricted augmentation of $G$ and let $F^K$ be any recursively optimal elimination forest of $H[K]$ of height at most $d$.
 Then:
 \begin{enumerate}[label=(U\arabic*),ref=(U\arabic*),nosep]
 \item\label{u:attachable} $R$ is attachable to $(H,F^K)$; and
 \item\label{u:optimal} the extension of $F^K$ via $R$ is a recursively optimal elimination forest of $H$ of height at most $d$.
 \end{enumerate}
\end{lemma}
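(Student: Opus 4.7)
The plan is to derive this statement as a clean composition of Lemma~\ref{lem:ssp} (which supplies attachability together with the sibling substitution property) and Lemma~\ref{lem:ssp-optimal} (which upgrades an attachable, sibling-substituting extension to a recursively optimal one). No new combinatorics is needed; the only task is to check that the hypotheses of the two prior lemmas line up with what we are given.

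First, I would observe that $R = F-K$ is a recursively optimal elimination forest of $H-K$. Indeed, since $F$ is a recursively optimal elimination forest of $G$ and $K$ is a prefix of $F$, the forest $R$ is recursively optimal for $G-K$ (this is noted right after Definition~\ref{def:recursively-connected}). Moreover, because $H$ is a $(K,\ell)$-restricted augmentation of $G$, all edges in the symmetric difference $E(G)\triangle E(H)$ have both endpoints in $K$, so no such edge is incident to any vertex of $V(G)\setminus K$. Hence $G-K$ and $H-K$ are the same graph, and $R$ is also a recursively optimal elimination forest of $H-K$.

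Next, I would apply Lemma~\ref{lem:ssp} directly with the same $F$, $G$, $K$, $\ell$, $H$, and $F^K$. Its hypotheses (recursive optimality of $F$, height bound on $F$, the $(d+\ell+1)$-core assumption, the $(K,\ell)$-restricted augmentation condition, and the height bound on $F^K$) are precisely what we have here; note that Lemma~\ref{lem:ssp} does not require any optimality of $F^K$, so its hypotheses are strictly weaker than ours. This immediately gives statement (U1) that $R$ is attachable to $(H,F^K)$, and moreover asserts that $R$ has the sibling substitution property in the extension $\wh{F}$ of $F^K$ via $R$.

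Finally, I would invoke Lemma~\ref{lem:ssp-optimal} on $H$, $K$, $F^K$, $R$, $\wh{F}$: $F^K$ is recursively optimal for $H[K]$ by hypothesis, $R$ is recursively optimal for $H-K$ by the first step, $R$ is attachable by the second step, and the sibling substitution property holds in $\wh{F}$ also by the second step. Lemma~\ref{lem:ssp-optimal} then concludes that $\wh{F}$ is a recursively optimal elimination forest of $H$ and $\height(\wh{F}) = \height(F^K) \leq d$, which is exactly (U2). I do not anticipate any obstacle in this argument; the only subtlety worth spelling out is the identity $G-K = H-K$, which makes the recursive optimality of $R$ transfer from one graph to the other.
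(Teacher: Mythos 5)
Your proposal is correct and follows essentially the same route as the paper: invoke Lemma~\ref{lem:ssp} to obtain attachability and the sibling substitution property, note that $R$ is recursively optimal for $G-K=H-K$ (since the modified edges lie inside $K$), and conclude via Lemma~\ref{lem:ssp-optimal}. Your write-up just makes the identity $G-K=H-K$ and the height bound $\height(\wh{F})=\height(F^K)\leq d$ slightly more explicit than the paper does.
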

\begin{proof}
 By Lemma~\ref{lem:ssp}, $R$ is attachable to $(H,F^K)$ and has the sibling substitution property. Next, since $F$ is a recursively optimal elimination forest of $G$, it follows that $R$ is a recursively optimal elimination forest of the graph $G-K=H-K$. We may now apply Lemma~\ref{lem:ssp-optimal} to infer assertion~\ref{u:optimal}.
\end{proof}


\section{Obstructions for treedepth}\label{sec:obstructions}

In this section we consider bounds on the sizes of induced subgraphs that are obstructions for having low treedepth, as explained formally through the following notion.

\begin{definition}
 A graph $G$ is a {\em{minimal obstruction for treedepth $d$}} if $\td(G)>d$, but $\td(G-v)\leq d$ for each~$v\in V(G)$.
\end{definition}

As mentioned in Section~\ref{sec:introduction}, Dvo\v{r}\'ak et al.~\cite{DvorakGT12} proved the following result.

\begin{theorem}[\cite{DvorakGT12}]\label{thm:dvorak-obstructions}
 Let $d\in \N$. Then every minimal obstruction for treedepth $d$ has at most $2^{2^{d-1}}$ vertices.
 Furthermore, there exists a minimal obstruction for treedepth $d$ that has $2^d$ vertices.
\end{theorem}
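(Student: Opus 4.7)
For the upper bound I would leverage the core machinery of Section~\ref{sec:cores}, which in fact yields a bound considerably better than $2^{2^{d-1}}$. Let $G$ be a minimal obstruction for treedepth $d$, so that $\td(G) = d+1$. A short preliminary argument shows that $G$ must be connected: if $G$ decomposed into components $G_1, \dots, G_k$ with $k \geq 2$, then some $G_j$ would satisfy $\td(G_j) \geq d+1$, and removing any vertex from a different component would leave $G_j$ intact and hence violate minimality. Since $\td(G) = d+1$, Lemma~\ref{lem:static} furnishes a recursively optimal elimination tree $T$ of $G$ of height exactly $d+1$.

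I would then apply Lemma~\ref{lem:find-core} with $q = d+1$ to $(G, T)$ to obtain a $(d+1)$-core $K$ of size $d^{\Oh(d)}$; here the lemma, as stated, uses $d$ for the height of the forest, and I substitute $d+1$. Lemma~\ref{lem:core-opt-td}, applied with the same substitution and taking $v$ to be the root $r$ of $T$, then gives $\td(G[K]) = \td(G[\desc_T(r)]) = \td(G) = d+1$. Since $G$ is a minimal obstruction, every proper induced subgraph has treedepth at most $d$, so we must have $K = V(G)$; therefore $|V(G)| \leq d^{\Oh(d)}$. This is comfortably less than $2^{2^{d-1}}$ for all but the smallest values of $d$, which may be verified by direct inspection.

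For the lower bound, I would take the path $P_{2^d}$ on $2^d$ vertices. The standard formula $\td(P_n) = \lceil \log_2 (n+1) \rceil$ (proved by a straightforward induction, rooting the middle vertex) gives $\td(P_{2^d}) = d+1$. Deleting any single vertex splits $P_{2^d}$ into two (possibly empty) sub-paths $P_a, P_b$ with $a + b = 2^d - 1$, so each has at most $2^d - 1$ vertices and hence treedepth at most $\lceil \log_2 2^d \rceil = d$. Thus $\td(P_{2^d} - v) \leq d$ for every $v$, confirming that $P_{2^d}$ is a minimal obstruction for treedepth $d$ with exactly $2^d$ vertices.

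The main obstacle is really just assembling these ingredients correctly rather than any deep new idea: one has to verify connectedness of minimal obstructions, match up the parameter in Lemma~\ref{lem:core-opt-td} with a tree of height $d+1$ rather than $d$, and recall the treedepth formula for paths on the lower-bound side. Without the core machinery, the natural alternative --- Dvo\v{r}\'ak et al.'s original approach via induction on $d$, controlling the components of $G - v - r$ inductively for a root $r$ of an optimal decomposition of $G - v$ --- introduces an exponential blow-up at each level of the recursion, which is precisely what yields the weaker $2^{2^{d-1}}$ bound stated in the theorem.
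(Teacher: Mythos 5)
First, some context: the paper does not prove this statement at all --- it is quoted from Dvo\v{r}\'ak et al.~\cite{DvorakGT12} as background, and the paper's own contribution on obstructions is Theorem~\ref{thm:our-obstructions}, whose proof your upper-bound argument essentially reproduces (same $(d+1)$-core via Lemma~\ref{lem:find-core}, same application of Lemma~\ref{lem:core-opt-td} at the root, same minimality conclusion $K=V(G)$). Your lower-bound half is correct and complete: $\td(P_{2^d})=\lceil\log_2(2^d+1)\rceil=d+1$, and deleting any vertex leaves two paths on at most $2^d-1$ vertices each, hence of treedepth at most $d$, so $P_{2^d}$ is a minimal obstruction on exactly $2^d$ vertices; this is precisely the acyclic obstruction mentioned after the theorem.

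The genuine gap is in the upper bound. The core machinery yields $|V(G)|\leq M(d)$ with $M(d)=(d+1)\cdot\frac{((d+1)((d+1)^2+1))^{d+1}-1}{(d+1)((d+1)^2+1)-1}\approx (d+1)^{3(d+1)}$, and this implies the claimed bound $2^{2^{d-1}}$ only when roughly $3(d+1)\log_2(d+1)\leq 2^{d-1}$, i.e.\ for $d\geq 8$. For $d\leq 7$ the core bound is strictly larger than $2^{2^{d-1}}$: for instance $M(2)\approx 2.8\cdot 10^{3}$ versus $2^{2}=4$, and $M(7)\approx 8\cdot 10^{19}$ versus $2^{64}\approx 1.8\cdot 10^{19}$. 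Your escape hatch, ``verified by direct inspection,'' is not available for these values: the claim concerns \emph{all} minimal obstructions for treedepth $d$, whose complete lists are not known (let alone inspectable by hand) already for $d=4,\ldots,7$, and you supply no alternative argument for them. So the theorem as stated, for every $d\in\N$, is not established by your route; a self-contained proof of the $2^{2^{d-1}}$ bound requires the original inductive argument of Dvo\v{r}\'ak et al.\ --- the one you allude to in your final paragraph but do not carry out --- or at least a separate argument covering the small values of $d$ where the $d^{\Oh(d)}$-type bound does not dominate.
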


In fact, the lower bound of $2^d$ provided by Theorem~\ref{thm:dvorak-obstructions} is obtained by showing that every acyclic minimal obstruction for treedepth $d$ has exactly $2^d$ vertices, 
and such obstructions can be precisely characterized by means of an inductive construction. 
This led Dvo\v{r}\'ak et al.~\cite{DvorakGT12} to conjecture that in fact every minimal obstruction for treedepth $d$ has at most $2^d$ vertices.
We now show that from the combinatorial analysis presented in the previous section we can derive an upper bound with asymptotic growth $d^{\Oh(d)}$. 
While this still leaves a gap to the conjectured value of $2^d$,
the new estimate is dramatically lower than the doubly-exponential upper bound provided in~\cite{DvorakGT12}.

\begin{theorem}\label{thm:our-obstructions}
 If $G$ is a minimal obstruction for treedepth $d$, then the vertex count of $G$ is at most 
 $$(d+1)\cdot \frac{\left((d+1)((d+1)^2+1)\right)^{d+1}-1}{(d+1)((d+1)^2+1)-1}\in d^{\Oh(d)}.$$
\end{theorem}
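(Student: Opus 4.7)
The plan is to combine the core-existence result (Lemma~\ref{lem:find-core}) with the treedepth-preservation property of cores (Lemma~\ref{lem:core-opt-td}) in a very direct way: extract a small core from an optimal elimination tree of $G$, observe that the core induces a subgraph with the same treedepth as $G$, and then invoke minimality.

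First I would argue that $G$ must be connected, since otherwise the treedepth of $G$ equals the maximum treedepth over its connected components, and removing a vertex from a component of non-maximum treedepth would not drop $\td(G)$, contradicting minimality. Together with the fact that deleting a single vertex can decrease the treedepth by at most $1$, this forces $\td(G) = d+1$. I would then pick a recursively optimal elimination tree $F$ of $G$, which exists by Lemma~\ref{lem:static}; note that $F$ is a single tree (since $G$ is connected) of height exactly $d+1$.

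Next I would apply Lemma~\ref{lem:find-core} with the parameter role of ``$d$'' played by $d+1$ (the height of $F$) and $q = d+1$. This produces a $(d+1)$-core $K$ of $(G,F)$ whose size is at most
\[
(d+1)\cdot \frac{\bigl((d+1)((d+1)^2+1)\bigr)^{d+1}-1}{(d+1)((d+1)^2+1)-1},
\]
which is exactly the bound in the statement. I would then apply Lemma~\ref{lem:core-opt-td} to $F$ and $K$ (again with the role of ``$d$'' played by $d+1$, which is valid since $K$ is in particular a $(d+1)$-core and hence a $d$-core in that lemma's sense), taking $v$ to be the root of $F$. This yields $\td(G[K]) = \td(G[V(G)]) = d+1$.

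Finally, since $G$ is a minimal obstruction for treedepth $d$, every proper induced subgraph of $G$ has treedepth at most $d$; but $G[K]$ has treedepth $d+1 > d$, so $K$ cannot be a proper subset of $V(G)$. Hence $V(G) = K$, and the bound on $|K|$ transfers to $|V(G)|$, giving the claimed inequality. The only even mildly subtle step is the initial observation that $G$ is connected and $\td(G) = d+1$; the rest is a mechanical assembly of the two core lemmas and the definition of minimal obstruction.
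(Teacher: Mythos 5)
Your proof is correct and follows essentially the same route as the paper's: establish $\td(G)=d+1$ with $G$ connected, extract a $(d+1)$-core $K$ from a recursively optimal elimination tree via Lemma~\ref{lem:find-core}, use Lemma~\ref{lem:core-opt-td} at the root to conclude $\td(G[K])=\td(G)$, and invoke minimality to force $K=V(G)$. The only cosmetic remark is that your phrase ``a component of non-maximum treedepth'' should really be ``a component other than some fixed component of maximum treedepth'' (so the argument still applies when several components tie for the maximum), but the underlying idea is sound.
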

\begin{proof}
 Since $G$ is a minimal obstruction for treedepth $d$, $G$ is connected and $\td(G)=d+1$.
 Let $F$ be a recursively optimal elimination tree of $G$; then $\height(F)=d+1$.
 Let $r$ be the root of $F$.
 By Lemma~\ref{lem:find-core}, we can find a $(d+1)$-core $K$ of $(G,F)$ of size at most $M(d)$, where $M(d)$ is the bound provided in the theorem statement.
 Clearly, $r\in K$.
 Applying Lemma~\ref{lem:core-opt-td} to $v=r$, we find that $\td(G[K])=\td(G)$.
 Since $G$ is a minimal obstruction for treedepth $d$, this means that $K=V(G)$, implying $|V(G)|\leq M(d)$.
\end{proof}

We remark that a more careful analysis of the bounds used in Lemma~\ref{lem:find-core} yields a slightly better upper bound than the one claimed in Theorem~\ref{thm:our-obstructions},
however with the same asymptotic growth of $d^{\Oh(d)}$. It remains open whether this can be improved to an upper bound of the form $2^{\Oh(d)}$.

\section{Splitting and merging}\label{sec:splitting-merging}

\CycleNote

In this section we continue the reasoning from Section~\ref{sec:cores}. While Section~\ref{sec:cores} gave combinatorial foundations for handling recursively optimal elimination forests under edge insertions and deletions, here we will determine how to recompute recursively optimal elimination forests upon {\em{splitting}} or {\em{merging}} graphs along single cut-vertices. This understanding will be needed when designing the data structure for the dynamic \LCycle problem.


Throughout this section we will be mainly working with connected graphs, hence we will usually talk about elimination trees instead of elimination forests.

\paragraph*{Biconnectivity.} Recall that in a connected graph $G$, two vertices $u$ and $v$ are {\em{biconnected}} if they are either adjacent, or there exist two internally vertex-disjoint paths connecting them. By Menger's theorem, this is equivalent to requiring that for every vertex $z\notin \{u,v\}$, $u$ and $v$ are in the same connected component of $G-z$. A {\em{biconnected component}} of a graph $G$ is a subgraph of $G$ induced by an inclusion-wise maximal subset of pairwise biconnected vertices. Note that some vertices may be shared by two or more biconnected components, and that isolated vertices do not belong to any biconnected components. These are called {\em{articulation points}} or {\em{cut-vertices}}, and they are precisely the vertices of $G$ whose (singular) removal disconnects the graph.

A connected graph is called {\em{biconnected}} if it has at most one biconnected component. Equivalently, a connected graph is biconnected if it stays connected after removing any single vertex.

The motivating idea behind our data structure for the \LCycle problem is to use the following analogue of Lemma~\ref{lem:path-td}.

\begin{lemma}[\cite{NesetrilM12}, Proposition 6.2]\label{lem:cycle-td}
  If a biconnected graph $G$ does not contain a simple cycle on at least $k$ vertices, then $\td(G)<k^2$.
\end{lemma}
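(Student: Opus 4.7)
My plan is to prove the bound via a DFS-tree elimination argument. Fix any depth-first search tree $T$ of $G$ rooted at an arbitrary vertex $r$. Since in a DFS traversal every non-tree edge connects an ancestor with one of its descendants in $T$, the tree $T$ is automatically an elimination tree of $G$, so $\td(G)\le\height(T)$ and it suffices to show that $\height(T)<k^2$.

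The workhorse is the standard lowpoint consequence of biconnectivity. Since $G$ has no cut vertex, for every non-root $v$ whose parent $p=\pnt_T(v)$ is not itself $r$, the subtree $T_v$ must contain some back edge whose upper endpoint is a strict ancestor of $p$; otherwise $p$ would separate $T_v$ from the rest of $G$. Applying this to a longest root-to-leaf path $v_0=r,v_1,\ldots,v_h$ of $T$ yields, for every $i\ge 2$, a back edge $u_i v_{j(i)}$ with $u_i\in V(T_{v_i})$ and $j(i)\le i-2$. Assume for contradiction that $h\ge k^2$. If some such back edge satisfies $i-j(i)\ge k-1$, then concatenating the tree path $v_{j(i)}\to v_i\to\cdots\to u_i$ with the back edge closes a simple cycle on at least $i-j(i)+1\ge k$ vertices, contradicting the hypothesis. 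So we may assume every back edge climbs at most $k-2$ levels of the path, i.e.\ $j(i)\ge i-k+2$ for all $i\ge 2$.

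The remaining task, which I expect to be the main obstacle, is to combine these short back edges into a single simple cycle on at least $k$ vertices. The natural approach is an iterative construction: start with the back edge at $v_h$, producing an initial cycle $C_0$ whose highest path-vertex is $v_{j(h)}$; then repeatedly invoke the lowpoint property at the current highest path-vertex on the cycle to secure another back edge reaching a still higher $v_{j'}$ on $P$, and splice in the corresponding tree detour so as to lift the top of the cycle by at least one path-level. Because each back edge climbs at most $k-2$ levels and $P$ itself has at least $k^2$ levels, at least $k-1$ such advances are needed before the top of the cycle can reach $v_0$; well before that the cycle would already span $\ge k$ path-vertices, producing a simple cycle of the forbidden length and closing the contradiction.

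The delicate point in executing this plan is preserving simplicity of the cycle across iterations: the newly spliced detour at each step must use tree vertices disjoint from those already on the current cycle. I would handle this either by choosing each successor back edge greedily from a subtree that has not yet been incorporated, or by maintaining a structural invariant (for instance, that the cycle always decomposes as a contiguous prefix of $P$ plus a single off-path detour at its lower end, so that each new advance simply replaces the detour). The bookkeeping needed to make this invariant go through cleanly, and to confirm that enough fresh back edges remain available to continue the construction all the way up the path, is where I anticipate the bulk of the technical effort to lie.
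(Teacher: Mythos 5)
Note first that the paper does not supply its own proof of this lemma; it is cited directly from Ne\v{s}et\v{r}il and Ossona de Mendez~\cite{NesetrilM12}, Proposition~6.2. So I am judging your proposal on its own merits.

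Your opening moves are exactly right: a DFS tree is an elimination tree, so it suffices to bound its height $h$; biconnectivity forces every subtree $T_{v_i}$ ($i\geq 2$) to send a back edge above $v_{i-1}$; and the absence of $\geq k$-cycles forces every such back edge to climb at most $k-2$ levels. You also correctly locate the crux: assembling the short back edges into one long simple cycle. The problem is that neither of the two rescues you sketch is quite a proof, and the second one is actually incorrect. With the ``single off-path detour'' invariant, an advance that lifts the top from $v_{a}$ to $v_{j}$ must replace the detour, and the source of the new back edge need not lie below the old bottom $v_b$: if it lies in $T_{v_i}$ for some small $i\in(a,b)$, the new cycle collapses to $P[j,i]$ plus a fresh detour, which has at most $i-j+1\leq k-1$ path-vertices. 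So the span of the cycle does not grow monotonically across iterations, and the crucial sentence ``well before that the cycle would already span $\geq k$ path-vertices'' is unsupported. Your first rescue (``greedily from a subtree that has not yet been incorporated'') is the right idea, but as written it leaves open precisely the point you flag as the bulk of the work: \emph{why} are fresh subtrees always available?

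What actually makes the construction go through is the classical ``vine'' structure, and the freshness follows from lowpoint minimality rather than from any ad hoc bookkeeping. Concretely: set $\alpha_0$ to the lowest index reached by a back edge from $v_h$, and inductively let $\alpha_{m+1}$ be the lowest index reached by a back edge from $T_{v_{\alpha_m+1}}$ (biconnectivity guarantees $\alpha_{m+1}\leq\alpha_m-1$). The new back edge reaches strictly below $\alpha_m$, which is by definition the lowpoint of $T_{v_{\alpha_{m-1}+1}}$; hence it cannot originate in $T_{v_{\alpha_{m-1}+1}}$, and its lower endpoint lies in $T_{v_{\alpha_m+1}}\setminus T_{v_{\alpha_{m-1}+1}}$ --- a region disjoint from all previously used ones. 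This automatically yields pairwise internally disjoint ears $Q_1,\ldots,Q_s$ on $P$ whose intervals $[\alpha_m,\beta_m]$ each have length $\leq k-2$ and whose consecutive intervals overlap, i.e.\ a vine covering $[0,h]$. Weaving the odd-indexed ears forward and the even-indexed ears backward produces a simple cycle passing through all $s\geq h/(k-2)$ distinguished path-vertices $v_{\alpha_m}$, which exceeds $k$ once $h\geq k^2$. Without this disjointness argument your plan has a genuine gap; with it, the rest of your outline is sound and does recover the quadratic bound.
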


Before we continue, recall Lemma~\ref{lem:core-connected} which stated intuitively that, for any vertex~$u$ in a core~$K$ of a recursively optimal elimination forest, the descendants of $u$ induce a connected graph and $u$'s strong reachability set is the neighborhood in~$K$ of the descendants of~$u$.
We now prove an analogue of Lemma~\ref{lem:core-connected} for biconnectedness.

\begin{lemma}\label{lem:biconn}
 Suppose that $T$ is a recursively connected elimination tree of a connected graph $G$ and $K$ is a $2$-core of $(G,T)$.
 Let $u,v,z\in K$ be three distinct vertices of $K$. Then $u$ and $v$ are in the same connected component of $G-z$ if and only if they are in the same connected component of $G[K]-z$. In particular, for any $u,v\in K$, if $u$ and $v$ are biconnected in $G$ then they are also biconnected in $G[K]$.
\end{lemma}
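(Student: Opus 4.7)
The backward direction is immediate since $G[K]-z$ is a subgraph of $G-z$, so I will focus on the forward direction. Given a path $P$ from $u$ to $v$ in $G-z$, the goal is to construct a walk from $u$ to $v$ in $G[K]-z$, from which the desired path can be extracted by standard shortcutting.

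The first step is to describe the structure of $P$ relative to $K$. The connected components of $G-K$ are exactly the graphs $G[\desc_T(a)]$ for $a\in\App_T(K)$: each such graph is connected because $T$ is recursively connected, and no edge of $G$ can join $\desc_T(a)$ to $\desc_T(a')$ for distinct appendices $a,a'$, since the endpoints of any edge of $G$ must be in ancestor-descendant relation in $T$ while those subtrees are vertex-disjoint. Hence every maximal subpath $Q$ of $P$ whose internal vertices all lie outside $K$ has both endpoints $x,y$ in $K$, and the internal vertices of $Q$ are contained in $\desc_T(a)$ for a single $a\in\App_T(K)$. In particular $x,y\in \SReach_{T,G}(a)$, $x\neq y$ (since $P$ is a path), and $x,y\neq z$ (since $P$ avoids $z$).

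The second step is to reroute each such excursion through $K$. Applying the $2$-core condition at $a$ with $X=\{x,y\}$ yields at least two siblings of $a$ in $K$ whose strong reachability sets contain both $x$ and $y$. Since $z$ can be a descendant of at most one sibling of $a$, I can select one such sibling $w$ with $z\notin\desc_T(w)$. By Lemma~\ref{lem:core-connected}, the graph $G[K\cap\desc_T(w)]$ is connected and $\SReach_{T,G}(w)=N_{G[K]}(K\cap\desc_T(w))$, so $x$ and $y$ each have a neighbor in $K\cap\desc_T(w)$ inside $G[K]$. Thus there is a path in $G[K]$ from $x$ to $y$ with all internal vertices in $K\cap\desc_T(w)$, and by the choice of $w$ together with $z\neq x,y$ this path avoids $z$. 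Replacing every excursion of $P$ in this way produces a walk from $u$ to $v$ in $G[K]-z$, and hence a path. The main technical point, and the reason a $2$-core rather than a $1$-core is needed here, is precisely that having two candidate siblings gives the freedom to dodge the forbidden vertex $z$.

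For the biconnectedness conclusion, if $uv\in E(G)$ then $uv\in E(G[K])$, so $u$ and $v$ are biconnected in $G[K]$ trivially. Otherwise, biconnectedness of $u,v$ in $G$ together with Menger's theorem means that $u$ and $v$ lie in a common component of $G-z$ for every $z\in V(G)\setminus\{u,v\}$. Applying the first part of the lemma to each $z\in K\setminus\{u,v\}$ gives the same conclusion in $G[K]-z$, and since $G[K]$ itself is connected by Lemma~\ref{lem:core-connected}, another application of Menger's theorem yields biconnectedness of $u$ and $v$ in $G[K]$.
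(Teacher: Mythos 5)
Your proof is correct and follows essentially the same approach as the paper's: decompose the $u$-to-$v$ path $P$ in $G-z$ into maximal excursions outside $K$, use Lemma~\ref{lem:core-connected} together with the $2$-core condition to find a sibling subtree avoiding $z$ through which each excursion can be rerouted within $G[K]$, and concatenate. You are in fact slightly more careful than the paper in two minor places: you note that the concatenation is a priori only a walk (the paper asserts a path, glossing over possible self-intersections), and you spell out the Menger-type derivation of the ``in particular'' biconnectedness claim, which the paper leaves implicit.
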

\begin{proof}
 Obviously if $u$ and $v$ are in the same connected component of $G[K]-z$ then they are in the same connected component of $G-z$ as well, so we only need to prove the converse implication.
 So assume there exists a path $P$ in $G$ whose endpoints are $u$ and $v$ and such that $P$ avoids $z$. Let $Q$ be any inclusion-wise maximal subpath of $P$ consisting of vertices not belonging to $K$. Let $x$ and $y$ be the vertices directly preceding and directly succeeding $Q$ on $P$, respectively. Note that $x$ and $y$ exist due to the assumption that $u,v\in K$. 
 
 Since $Q$ is connected in $G$ and $V(Q)\cap K=\emptyset$, there exists $a\in \App_T(K)$ such that $V(Q)\subseteq \desc_T(a)$ and $x,y\in \SReach_{T,G}(a)$. Since $K$ is a $2$-core of $(G,T)$, there exist at least two siblings of $a$ in $T$ that belong to $K$ and such that both $x$ and $y$ belong to the strong reachability sets of these siblings. Vertex $z$ can be a descendant of at most one of these siblings, which leaves us with a sibling $a'$ of $a$ satisfying the following properties:
 \begin{itemize}[nosep]
  \item $a'\in K$;
  \item $\{x,y\}\subseteq \SReach_{T,G}(a')$; and
  \item $z\notin \desc_T(a')$.
 \end{itemize}
 By Lemma~\ref{lem:core-connected}, the graph $G[K\cap \desc_T(a')]$ is connected and $\{x,y\}\subseteq N_{G[K]}(K\cap \desc_F(a'))$. This implies that in $G[K\cap \desc_F(a')]$ there exists a path $Q'$ such that the first endpoint of $Q'$ is adjacent to $x$ and the second endpoint of $Q$ is adjacent to $y$. Note that in particular $V(Q')\subseteq K$ and  $Q'$ avoids $z$.
 
 Let $P'$ be the path obtained from $P$ by replacing every maximal subpath $Q$ consisting of vertices not belonging to $K$ with the path $Q'$ constructed as above. Then $P'$ is a path in $G[K]$ that connects $u$ with $v$ and avoids $z$, which means that $u$ and $v$ are in the same connected component of $G[K]-z$.
\end{proof}

\paragraph*{Separations.}
A {\em{separation}} in a graph $G$ is a pair of vertex subsets $(A,B)$ such that $A\cup B=V(G)$ and there is no edge with one endpoint in $A\setminus B$ and the second endpoint in $B\setminus A$. Then $A\cap B$ is the {\em{separator}} of $(A,B)$ and the quantity $|A\cap B|$ is the {\em{order}} of $(A,B)$. We will only deal with separations of order one, that is, where the separator consists of only one vertex.

We will use the following simple observation several times.

\begin{lemma}\label{lem:non-mixed}
 Let $T$ be a recursively connected elimination tree of a connected graph $H$ and $(A,B)$ be a separation in $H$.
 Then for every $u\in V(H)$ such that $\desc_T(u)\cap (A\cap B)=\emptyset$, we either have $\desc_T(u)\subseteq A\setminus B$ or $\desc_T(u)\subseteq B\setminus A$.
\end{lemma}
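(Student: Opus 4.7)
The plan is to exploit the two assumptions directly: recursive connectedness of $T$ forces $H[\desc_T(u)]$ to be a connected subgraph of $H$, while the separation property forbids any edge crossing between $A\setminus B$ and $B\setminus A$. Combining these, a connected subgraph that avoids $A\cap B$ must lie entirely on one side of the separation.

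Concretely, I would argue as follows. Since $T$ is a recursively connected elimination tree of $H$, by definition the graph $H[\desc_T(u)]$ is connected. The hypothesis $\desc_T(u)\cap (A\cap B)=\emptyset$ together with $A\cup B=V(H)$ implies that
\[
\desc_T(u) \;\subseteq\; (A\setminus B)\,\cup\,(B\setminus A),
\]
and this union is disjoint. Now, by the defining property of a separation, there is no edge of $H$ with one endpoint in $A\setminus B$ and the other in $B\setminus A$. Hence, restricting to $\desc_T(u)$, the graph $H[\desc_T(u)]$ decomposes as the disjoint union of $H[\desc_T(u)\cap(A\setminus B)]$ and $H[\desc_T(u)\cap (B\setminus A)]$ with no edges in between. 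Since $H[\desc_T(u)]$ is connected, one of these two parts must be empty, yielding either $\desc_T(u)\subseteq A\setminus B$ or $\desc_T(u)\subseteq B\setminus A$, as desired.

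I do not expect any substantive obstacle here: the statement is essentially a one-line consequence of the fact that recursive connectedness is built precisely so that each subtree induces a connected subgraph, combined with the elementary observation that a connected graph cannot be split between the two sides of a separation without using the separator. The only care needed is to note that $\desc_T(u)$ might be a singleton (e.g.\ $u$ is a leaf), but even then the argument goes through trivially, since $\{u\}$ lies in exactly one of $A\setminus B$ or $B\setminus A$.
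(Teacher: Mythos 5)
Your proof is correct and matches the paper's own argument essentially verbatim: partition $\desc_T(u)$ into $\desc_T(u)\cap(A\setminus B)$ and $\desc_T(u)\cap(B\setminus A)$, observe that the separation forbids edges between them, and conclude from the connectedness of $H[\desc_T(u)]$ (guaranteed by recursive connectedness) that one part is empty. Nothing further is needed.
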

\begin{proof}
 Since $\desc_T(u)\cap (A\cap B)=\emptyset$, the sets $\desc_T(u)\cap (A\setminus B)$ and $\desc_T(u)\cap (B\setminus A)$ form a partition of $\desc_T(u)$.
  Clearly, there is no edge in $H$ with one endpoint in $\desc_T(u)\cap (A\setminus B)$ and second in $\desc_T(u)\cap (B\setminus A)$.
 Since $T$ is recursively connected, the graph $H[\desc_T(u)]$ is connected, hence one of these two sets must be empty. 
\end{proof}

\paragraph*{Inheriting tree structure.}
Suppose $F$ is an elimination forest of a graph $G$.
For a subset of vertices $A\subseteq V(G)$, we define $F|_A$ to be the forest on the vertex set $A$ where the ancestor relation is inherited from $F$: for $u,v\in A$, $u$ is an ancestor of $v$ in $F|_A$ iff $u$ is an ancestor of $v$ in $F$.
In other words, for $u\in A$, the parent of $u$ in $F|_A$ is the vertex of $(\anc_F(u)\setminus \{u\})\cap A$ that has the largest depth in $F$ and $u$ is a root of $F|_A$ if $(\anc_F(u)\setminus \{u\})\cap A = \emptyset$.
Note that $F|_A$ is an elimination forest of $G[A]$.

\paragraph*{Merging.} The following lemma shows how a recursively optimal elimination forest of a graph can be computed when the graph in question is obtained by gluing two subgraphs along a cut-vertex. Here, we assume that we are given suitable cores of elimination forests of these two subgraphs. See Figure~\ref{fig:merging} for an illustration.

\begin{figure}[t]
  \centering
  \includegraphics{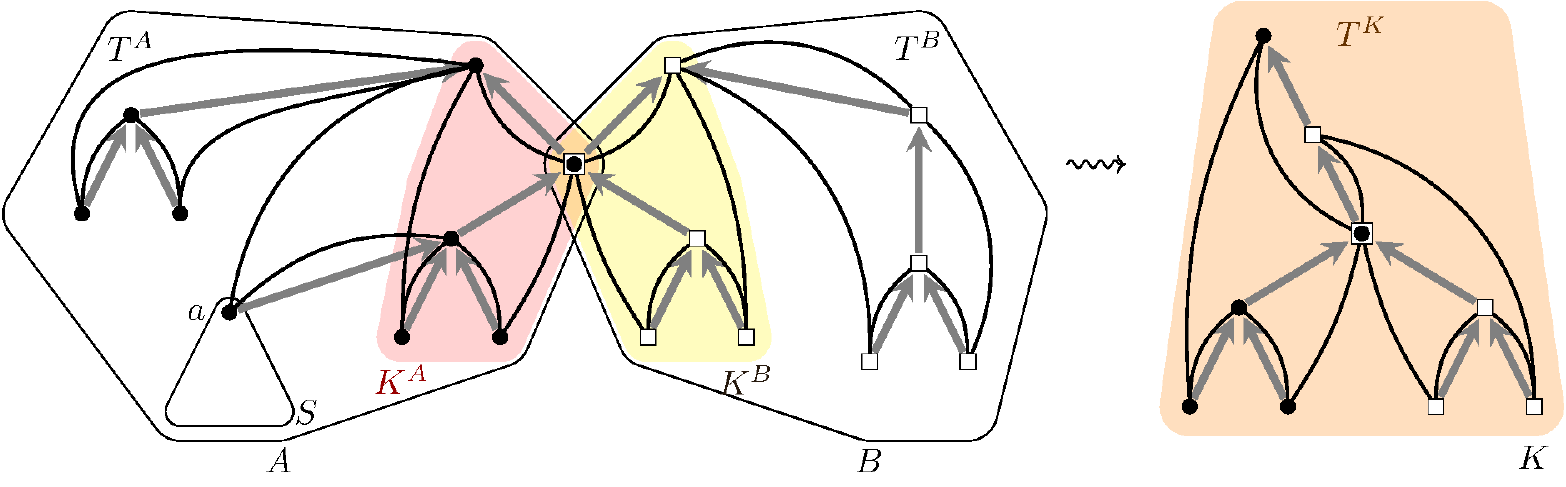}%
  \caption{The situation in Lemma~\ref{lem:merging}. Left: The graph $H$, drawn with black edges.
    Vertices drawn by a filled circle are in~$A$ and the vertices drawn by an empty rectangle are in~$B$.
    The vertices on red background are in~$K^A$ and the vertices on yellow background in~$K^B$.
    The trees~$T^A$ and $T^B$ are drawn with grey edges and encircled with lines labeled~$A$ and~$B$, respectively.
    The forest~$R$ is not labeled and consists of the vertices on white background.
    Right: A recursively optimal elimination tree~$T^K$ for $K = K^A \cup K^B$. Lemma~\ref{lem:merging} shows that $R$ is attachable to~$T^K$. In particular, for the subtree~$S$ of rooted at~$a$ on the left-hand side the set $N_H(S)$ is straight in~$T^K$.}
  \label{fig:merging}
\end{figure}

\begin{lemma}\label{lem:merging}
 Let $H$ be a connected graph of treedepth at most $d$ and $(A,B)$ be a separation of order $1$ in $H$. 
 Suppose that $T^A$ and $T^B$ are recursively optimal elimination forests of $H[A]$ and $H[B]$, respectively. Further, let $K^A$ be a $(d+1)$-core of $(H[A],T^A)$ and $K^B$ be a $(d+1)$-core of $(H[B],T^B)$ such that $A\cap B=K^A\cap K^B$. Let $K\coloneqq K^A\cup K^B$ and suppose $T^K$ is a recursively optimal elimination forest of $H[K]$. Let
 \begin{eqnarray*}
 R & \coloneqq & (T^A-K^A)\uplus (T^B-K^B),
 \end{eqnarray*}
 where $\uplus$ denotes the disjoint union of rooted forests.
 Then 
 \begin{enumerate}[label=(M\arabic*),ref=(M\arabic*),nosep]
  \item\label{m:attachable} $R$ is a recursively optimal elimination forest of $H-K$ that is attachable to $(H,T^K)$;
  \item\label{m:optimal} the extension of $T^K$ via $R$ is a recursively optimal elimination tree of $H$.
 \end{enumerate}
\end{lemma}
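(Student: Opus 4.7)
The plan is to derive both conclusions from Lemma~\ref{lem:ssp-optimal} applied to the triple $(H, T^K, R)$. This requires verifying three ingredients: (a) $R$ is a recursively optimal elimination forest of $H - K$; (b) $R$ is attachable to $(H, T^K)$; and (c) $R$ has the sibling substitution property in the extension $\wh{T}$ of $T^K$ via $R$. Items~(a) and~(b) directly yield~\ref{m:attachable}, and Lemma~\ref{lem:ssp-optimal} then gives~\ref{m:optimal}; that $\wh{T}$ is a tree (rather than only a forest) follows from connectedness of~$H$.

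First I would dispense with the easier ingredients. The assumption $A \cap B = K^A \cap K^B$ implies $A \cap K = K^A$ and $B \cap K = K^B$, so that $V(H) \setminus K$ splits as $(A \setminus K^A) \uplus (B \setminus K^B)$, with no edges between the two sides in~$H$; hence $H - K = (H[A] - K^A) \uplus (H[B] - K^B)$ as graphs, and (a) follows because removing a prefix from a recursively optimal elimination forest yields a recursively optimal elimination forest of the remainder. For (b), consider a tree $S$ of $R$, say rooted at $a \in \App_{T^A}(K^A)$; since $V(S) \subseteq A \setminus B$, separation gives $N_H(V(S)) = \SReach_{T^A, H[A]}(a) \subseteq K^A$. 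Observing that $H$ connected together with $A \cap B = \{s\}$ forces $H[A]$ to be connected as well, so the root of $T^A$ lies in the prefix $K^A$. Applying Lemma~\ref{lem:straight} with $\ell = 0$ to $(H[A], T^A, K^A)$ and target elimination forest $T^K|_{K^A}$ (an elimination forest of $H[K^A]$ of height at most $\height(T^K) \leq d$), we conclude that $\SReach_{T^A, H[A]}(a)$ is straight in $T^K|_{K^A}$, and hence in $T^K$ as well, since the ancestor relation in the former is inherited from the latter.

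The main obstacle is establishing the sibling substitution property, and I would do so by adapting the argument from the proof of Lemma~\ref{lem:ssp}. Fix $u \in \roots_R$; by symmetry, assume $u \in \App_{T^A}(K^A)$. Since $T^A$ is a tree and $K^A$ contains its root, $u$ cannot be a root of $T^A$, so $M \coloneqq \SReach_{T^A, H[A]}(u)$ contains $\pnt_{T^A}(u)$ and is nonempty; let $m$ be the deepest vertex of $M$ in $T^K$ (well-defined by~(b)), so that $u$ becomes a child of $m$ in $\wh{T}$. The $(d+1)$-core property of~$K^A$ yields $|W_{K^A}(u,\{m\})| \geq d+1$, while $|\anc_{T^K}(m)| \leq d$, so we may choose $w \in W_{K^A}(u,\{m\})$ with $\desc_{T^A}(w) \cap \anc_{T^K}(m) = \emptyset$. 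By Lemma~\ref{lem:core-connected}, the graph $H[K^A \cap \desc_{T^A}(w)]$ is connected and $m$ has a neighbor there; Lemma~\ref{lem:top} then supplies $x \in K^A \cap \desc_{T^A}(w)$ with $K^A \cap \desc_{T^A}(w) \subseteq \desc_{T^K}(x)$. The choice of $w$ forces $x \notin \anc_{T^K}(m)$, and since $m$ has a neighbor in $\desc_{T^K}(x)$, it follows that $x$ is a strict descendant of $m$ in $T^K$; taking $z$ to be the child of $m$ in $T^K$ that is an ancestor of $x$ gives a sibling of $u$ in $\wh{T}$ belonging to $K$. The desired height estimate
\[
  \height(T^K_z) \;\geq\; \td\left(H[K^A \cap \desc_{T^A}(w)]\right) \;=\; \height(T^A_w) \;\geq\; \height(T^A_u) \;=\; \height(R_u)
\]
then follows from Lemma~\ref{lem:core-opt-td} (preservation of treedepth within cores) and the recursive optimality of~$T^A$, completing the verification of the sibling substitution property and hence of the lemma.
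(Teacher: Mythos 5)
Your proposal is correct, and its overall skeleton is the same as the paper's: establish that $R$ is a recursively optimal elimination forest of $H-K$, that it is attachable to $(H,T^K)$ via Lemma~\ref{lem:straight} applied on each side with the restricted forest $T^K|_{K^A}$ (resp.\ $T^K|_{K^B}$), and then invoke Lemma~\ref{lem:ssp-optimal}. The one place where you genuinely diverge is the sibling substitution property. The paper obtains it by applying Lemma~\ref{lem:ssp} as a black box twice, with $F^K$ set to $T^K|_{K^A}$ and $T^K|_{K^B}$, and then asserting that the property transfers to the extension of $T^K$ via $R$; this transfer is not entirely automatic (a witness $z$ that is a sibling of $u$ in the extension of $T^K|_{K^A}$ need not be a child of $m$ in $T^K$, since vertices of $K^B\setminus K^A$ may sit between $m$ and $z$, so one has to pass to the child of $m$ in $T^K$ above $z$ and use monotonicity of subtree heights). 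You instead re-run the Lemma~\ref{lem:ssp}-style argument directly against $T^K$ — choosing $w\in W_{K^A}(u,\{m\})$ avoiding $\anc_{T^K}(m)$, using Lemmas~\ref{lem:core-connected}, \ref{lem:top} and \ref{lem:core-opt-td}, exactly as the paper itself does in Claim~\ref{cl:split-ssp} of Lemma~\ref{lem:splitting} — which avoids the transfer step at the cost of repeating the core argument rather than reusing Lemma~\ref{lem:ssp}. Your verification of the hypotheses (connectedness of $H[A]$ and $H[B]$ from the order-$1$ separation, hence $T^A$, $T^B$ are trees with roots in the cores, so every root of $R$ is an appendix with nonempty $\SReach$) is sound, so both routes are valid; yours is slightly longer but more self-contained at that step, the paper's is shorter but leaves a small gluing argument implicit.
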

\begin{proof}
 Note that, since $T^A$ is recursively optimal, we have 
 $\height(T^A)= \td(H[A])\leq \td(H)\leq d$.
 Similarly, $\height(T^B)\leq d$.

 Observe that $T^A-K^A$ is a recursively optimal elimination forest of $H[A]-K^A$, and $T^B-K^B$ is a recursively optimal elimination forest of $H[B]-K^B$. Since $A\cap B=K^A\cap K^B$ and in $H$ there are no edges with one endpoint in $A\setminus B$ and second in $B\setminus A$, it follows that $R$ is a recursively optimal elimination forest of $H-K$.
 
 To prove~\ref{m:attachable}, it remains to show that for each tree $S$ in the forest $R$, the set $N_H(V(S))$ is straight in~$T^K$. Since $T^A$ and $T^B$ are recursively connected, it follows that $H[V(S)]$ is connected, hence either $V(S)\subseteq A\setminus K^A$ or $V(S)\subseteq B\setminus K^B$. By symmetry assume the former case. 
 
 Observe that $S=T^A_a$ for some $a\in \App_{T^A}(K^A)$ and $N_H(V(S))=N_{H[A]}(V(S))=\SReach_{T^A,H[A]}(a)$. 
 Recall that $T^A$ is recursively optimal for $H[A]$ and has height at most $d$, and that $K^A$ is a $(d+1)$\nobreakdash-core of $(H[A],T^A)$. In addition to that, $T^K|_{K^A}$ is an elimination forest of $H[K^A]$ of height at most $\height(T^K)=\td(H[K])\leq \td(H)\leq d$.
 Thus, from Lemma~\ref{lem:straight} (in the application of which we set $F$ to $T^A$; both $G$ and $H$ to $H[A]$; $K$ to $K^A$; and $F^K$ to $T^K|_{K^A}$) we infer that $N_{H[A]}(V(S))=N_H(V(S))$ is straight in $T^K|_{K^A}$. Therefore, this set is also straight in $T^K$. This proves~\ref{m:attachable}.
 
 Thus, $\wh{T}$ --- the extension of $T^K$ via $R$ --- is well defined and, as asserted by Lemma~\ref{lem:attach-elim}, is an elimination forest of $H$. Note here that as we assumed that $H$ is connected, in fact $\wh{T}$ must be an elimination tree of $H$. 
 
 Recall that $T^K|_{K^A}$ is an elimination forest of $H[K^A]$ of height at most $d$.
 We may now apply Lemma~\ref{lem:ssp}, wherein we set $F$ to $T^A$; $G$ and $H$ both to $H[A]$; $K$ to $K^A$; $R$ to $T^A - K^A$; and $F^K$ to $T^K|_{K^A}$.
 We thus obtain that $T^A-K^A$ is attachable to $(H[A],T^K|_{K^A})$ and has the sibling substitution property in the extension of $T^K|_{K^A}$ via $T^A-K^A$. Similarly, $T^B-K^B$ is attachable to $(H[B],T^K|_B)$ and has the sibling substitution property in the extension of $T^K|_{K^B}$ via $T^B-K^B$. From these two assertions it follows that in the extension of $T^K$ via $R$, $R$ has the sibling substitution property.
 Combining this with assertion~\ref{m:attachable}, we may now apply Lemma~\ref{lem:ssp-optimal} to infer assertion~\ref{m:optimal}.
\end{proof}

\paragraph*{Splitting.} We now prove the analogue of Lemma~\ref{lem:merging} for splitting instead of merging.
See Figure~\ref{fig:splitting} for an illustration.

\begin{figure}[t]
  \centering
  \includegraphics{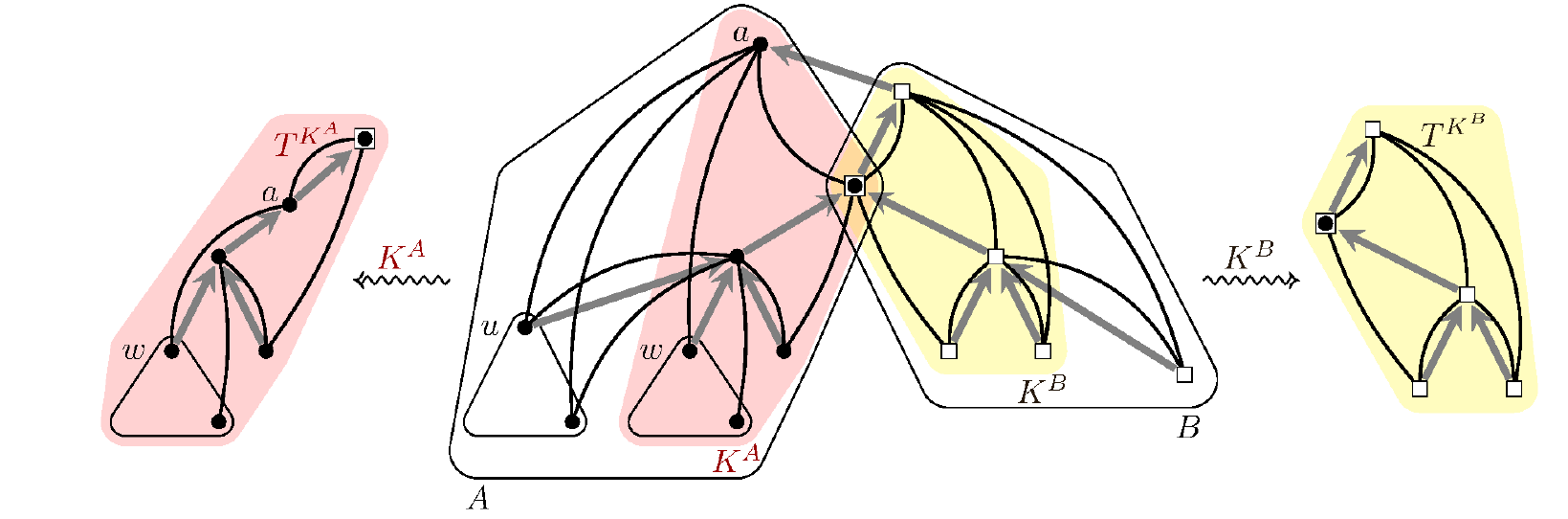}%
  \caption{The situation in Lemma~\ref{lem:splitting}. Center: The
    graph $H$, drawn with black edges, and a recursively optimal elimination tree~$T$, drawn with grey edges.
    For illustrative purposes, a $1$-core is given by the vertices on red and yellow background, though Lemma~\ref{lem:splitting} requires a $(d + 2)$-core.
    Vertices drawn by a filled circle are in~$A$ and the vertices drawn by an empty rectangle are in~$B$.
    The vertices on red background are in~$K^A$ and the vertices on yellow background in~$K^B$.
    The trees~$T^A$ and $T^B$ are drawn with grey edges and encircled with lines labeled~$A$ and~$B$, respectively.
    The forest~$R$ is not labeled and consists of the vertices on white background.
    Left and right: Recursively optimal elimination trees~$T^{K^A}$ and $T^{K^B}$ for $K^A$ and $K^B$, respectively.
    Lemma~\ref{lem:splitting} shows in particular that $H[K^A]$ and $H[K^B]$ are biconnected, and the subtrees of $R$ in~$A$ (resp.\ in $B$) are attachable to~$T^{K^A}$ (resp.\ to $T^{K^B}$).}
  \label{fig:splitting}
\end{figure}

\begin{lemma}\label{lem:splitting}
  Let $H$ be a connected graph of treedepth at most $d$ and $(A,B)$ be a separation of order $1$ in $H$. Suppose $H[A]$ and $H[B]$ are biconnected.
  Let:
  \begin{itemize}[nosep]
  \item $T$ be a recursively optimal elimination tree of $H$,
  \item $K$ be a $(d+2)$-core of $(H,T)$ that contains $A\cap B$, at least
    one vertex of $A\setminus B$, and at least one vertex of
    $B\setminus A$,
  \item $K^A\coloneqq K\cap A$ and $K^B\coloneqq K\cap B$,
  \item $T^{K^A}$ and $T^{K^B}$ be 
    recursively optimal elimination forests of $H[K^A]$ and $H[K^B]$,
    respectively,
  \item $R\coloneqq T-K$,
  \item $R^A$ be the subforest of $R$ consisting of those trees of $R$ whose vertex
    sets are entirely contained in $A$, and
  \item $R^B$ be the subforest of $R$ consisting of those trees of $R$ whose vertex sets are entirely contained in $B$.
  \end{itemize}
  \noindent Then for each $C\in \{A,B\}$, we have: 
 \begin{enumerate}[label=(S\arabic*),ref=(S\arabic*),nosep]
  \item\label{s:biconn} $H[K^C]$ is biconnected;
  \item\label{s:attachable} $R^C$ is a recursively optimal elimination forest of $H[C]-K^C$ that is attachable to  $(H[C],T^{K^C})$;
  \item\label{s:optimal} the extension of $T^{K^C}$ via $R^C$ is a recursively optimal elimination tree of $H[C]$.
 \end{enumerate} 
\end{lemma}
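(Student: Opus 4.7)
The plan is to handle $C = A$ (the case $C = B$ being symmetric); write $s$ for the unique vertex in $A \cap B$. For (S1), fix distinct $u, v \in K^A$ and $z \in K^A \setminus \{u, v\}$: biconnectedness of $H[A]$ gives a $(u, v)$-path in $H[A] - z \subseteq H - z$, and Lemma~\ref{lem:biconn} (using that $K$ is a $2$-core) upgrades this to a simple $(u, v)$-path in $H[K] - z$. Any such simple path stays inside $K^A$, since entering and leaving $K^B \setminus \{s\}$ from vertices of $K^A$ would require traversing the cut-vertex $s$ at least twice. Hence $H[K^A]$ is biconnected.

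For the first part of (S2), each tree $T_a$ of $R = T - K$ has $s \notin \desc_T(a)$ (because $K$ is ancestor-closed and $a \notin K \ni s$), so by Lemma~\ref{lem:non-mixed} its vertex set lies entirely in $A \setminus \{s\}$ or $B \setminus \{s\}$; thus $R^A$ covers $A \setminus K^A$ exactly and inherits recursive optimality from $R$. For attachability, $\SReach_{T, H}(a) \subseteq K \cap A = K^A$ for each such $a$, since vertices of $\desc_T(a) \subseteq A \setminus \{s\}$ have no $H$-neighbors in $B \setminus A$. Given distinct $x, y \in \SReach_{T, H}(a)$, the $(d+2)$-core property applied with $X = \{x, y\}$ yields $\geq d+2$ siblings $w$ of $a$ in $T$ satisfying $w \in K$, $\{x, y\} \subseteq \SReach_{T, H}(w)$, and $\height(T_w) \geq \height(T_a)$. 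At most one such $w$ has $s \in \desc_T(w)$, and for the rest, the case $\desc_T(w) \subseteq B \setminus \{s\}$ would force $\{x, y\} \subseteq B$; since $x \neq y$ and $A \cap B = \{s\}$, at least one of $x, y$ lies in $A \setminus \{s\}$, contradicting this. Hence at least $d+1$ siblings lie in $K^A$ with $\desc_T(w) \subseteq A$, and Lemma~\ref{lem:core-connected} applied to each supplies an $(x, y)$-path in $H[K^A]$ internally contained in $\desc_T(w)$. Disjointness of these subtrees yields $d+1$ internally vertex-disjoint paths, so $x, y$ are $(d+1)$-vertex-connected in $H[K^A]$, and Lemma~\ref{lem:connectivity} (using $\height(T^{K^A}) \leq d$) gives straightness of $\{x, y\}$ in $T^{K^A}$.

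For (S3), I invoke Lemma~\ref{lem:ssp-optimal} with $F^K \coloneqq T^{K^A}$ and $R \coloneqq R^A$; attachability and recursive optimality of the pieces are already in hand, so it remains to verify the sibling substitution property. For a root $u$ of $R^A$, note $\SReach_{T, H}(u) \neq \emptyset$ (otherwise $u$ would be a root of $T$, forcing $\desc_T(u) = V(H) \subseteq A$ and contradicting the hypothesized vertex of $B \setminus A$ in $K$). Let $m$ be the $T^{K^A}$-deepest element of $\SReach_{T, H}(u)$ and $\wh{M} \coloneqq \anc_{T^{K^A}}(m)$, so $|\wh{M}| \leq d$. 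Applying the $(d+2)$-core property at $u$ with $X = \{m\}$ and, in the typical case (when $m \in A \setminus \{s\}$ or $\pnt_T(u) \in A \setminus \{s\}$, so that siblings in $B$-subtrees are ruled out as above), isolates at least $d+1$ siblings $w$ with $w \in K^A$ and $\desc_T(w) \subseteq A$; since at most $d$ pairwise-disjoint $\desc_T(w)$'s can intersect $\wh{M}$, some sibling $w^\star$ satisfies $\desc_T(w^\star) \cap \wh{M} = \emptyset$. Following the template of Lemma~\ref{lem:ssp} adapted to the $H[A]$-setting, Lemma~\ref{lem:core-connected}, Lemma~\ref{lem:core-opt-td}, and Lemma~\ref{lem:top} supply $y \in K^A \cap \desc_T(w^\star)$ dominating this set in $T^{K^A}$; the $H[K^A]$-edge from $m$ into $\desc_{T^{K^A}}(y)$, combined with $y \notin \wh{M}$, forces $m$ to be a strict $T^{K^A}$-ancestor of $y$. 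Taking $z$ as the child of $m$ on the $m$-to-$y$ path gives a sibling $z \in K^A$ of $u$ in the extension with $\height(T^{K^A}_z) \geq \height(T^{K^A}_y) \geq \td(H[K^A \cap \desc_T(w^\star)]) = \height(T_{w^\star}) \geq \height(T_u)$.

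The main obstacle is the residual subcase $m = s$ with $\pnt_T(u) = s$, where both coordinates of $X = \{m\} = \{s\}$ lie in $A \cap B$, so the $B$-exclusion argument does not apply and siblings of $u$ may reside in $B$-subtrees of $s$. I expect that biconnectedness of $H[A]$ and $H[B]$ together with recursive optimality of $T$ constrains $\chld_T(s)$ sufficiently---each side contributes very few $T$-children to $s$ when $\desc_T(s)$ spans both sides---to either prevent the $(d+2)$-core condition from being satisfied in this configuration (making the case vacuous) or to directly furnish $z$ from the $A$-side structure available in $T^{K^A}$. Formalizing this interplay between recursive optimality, biconnectedness, and the core condition is the last technical piece to work out before the sibling substitution claim, and hence Lemma~\ref{lem:ssp-optimal}'s conclusion (S3), follows uniformly.
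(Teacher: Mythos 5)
Your handling of (S1) and (S2) is essentially identical to the paper's: (S1) goes through Lemma~\ref{lem:biconn} and the observation that $(K^A,K^B)$ is a separation of $H[K]$, and (S2) combines Lemma~\ref{lem:non-mixed} (or the Claim~\ref{cl:trees-unmix} that the paper states explicitly) with the $(d+2)$-core property and Lemma~\ref{lem:connectivity} to get straightness; no issues there.

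The gap is in (S3), and you have correctly located it — the subcase $m=s=\pnt_T(u)$ is not handled — but your proposed resolution (constraining $\chld_T(s)$ via biconnectedness plus recursive optimality) is not the right move and would be painful to formalize. The fix is simpler and uniform: you never need to query the core with the singleton $X=\{m\}$ in the first place. Observe that $M=\SReach_{T,H}(u)=N_H(\desc_T(u))$ always has size at least $2$. Indeed, $\desc_T(u)\subseteq A\setminus K$ while $|A\cap K|\geq 2$ by the hypotheses on $K$, so $A\setminus\bigl(\desc_T(u)\cup M\bigr)$ is nonempty; if $|M|\leq 1$ the sole element of $M$ would be a cut-vertex of $H[A]$ separating $\desc_T(u)$ from the rest of $A$, contradicting biconnectedness of $H[A]$. (Biconnectedness of $H[A]$ is thus used here, not merely in (S1) as you deploy it.) So pick $p\in M$ with $p\neq m$, and apply the $(d+2)$-core property with $X=\{m,p\}$. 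From the $d+2$ siblings $w\in W_K(u,\{m,p\})$, discard at most $d$ whose subtrees meet $\wh{M}=\anc_{T^{K^A}}(m)$ and at most one whose subtree meets $A\cap B$; for the surviving $w$, Lemma~\ref{lem:non-mixed} forces $\desc_T(w)\subseteq A\setminus B$ or $\desc_T(w)\subseteq B\setminus A$, and since $m,p$ are distinct and $|A\cap B|=1$, at least one of $m,p$ lies in $A\setminus B$, so $\desc_T(w)\subseteq B\setminus A$ would create a forbidden edge between $A\setminus B$ and $B\setminus A$. Hence $\desc_T(w)\subseteq A\setminus B$ and $w\in K^A$, and the rest of your Lemma~\ref{lem:ssp}-style chase (Lemma~\ref{lem:top} to find $x$, Lemma~\ref{lem:core-opt-td} for the height comparison) goes through without a case split. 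This single choice of $X$ dissolves your ``residual subcase'' entirely, because $\{m,p\}$ can never both equal $s$.

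Everything else, including the final appeal to Lemma~\ref{lem:ssp-optimal}, matches the paper.
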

\begin{proof}
 We give the proof for $C=A$, the proof for $C=B$ is symmetric.
 
 Let us first focus on assertion~\ref{s:biconn}. Take any two distinct vertices $u,v\in K^A$. Since $H[A]$ is biconnected and $K^A\subseteq A$, we have that $u$ and $v$ are biconnected in $H$. By Lemma~\ref{lem:biconn}, $u$ and $v$ are also biconnected in $H[K]$, hence there exist two internally vertex-disjoint paths $P_1,P_2$ that are entirely contained in $H[K]$ and connect $u$ and $v$. Since $(A\cap K,B\cap K)=(K^A,K^B)$ is a separation of order at most $1$ of $H[K]$, it follows that both $P_1$ and $P_2$ are entirely contained in $H[K^A]$. Hence $u$ and $v$ are biconnected in $H[K^A]$. As $u$ and $v$ were selected arbitrarily, this shows that $H[K^A]$ is indeed biconnected, thereby proving~\ref{s:biconn}.
 
 Before we proceed to the next assertions, let us note the following about the roots of $R$.
  
 \begin{claim}\label{cl:trees-unmix}
  For each $u\in \roots_R$, if $u\in A\setminus B$ then $u\in \roots_{R^A}$, $T_u=R^A_u$, and 
  $\desc_T(u)\subseteq A\setminus K\subseteq A\setminus B$.
  Otherwise $u\in \roots_{R^B}$, $T_u=R^B_u$, and $\desc_T(u)\subseteq B\setminus K\subseteq B\setminus A$.
 \end{claim}
 \begin{clproof}
 Since $V(R)\cap (A\cap B)=\emptyset$, either $u\in A\setminus B$ or $u\in B\setminus A$. Therefore, it suffices to consider the first case $u\in A\setminus B$, as the second is symmetric.
 
 Note that $u\in \roots_R$ entails $u\notin K$.
 Since $K$ is a prefix of $T$, we have $\desc_T(u)\cap K=\emptyset$, hence in particular $\desc_T(u)\cap (A\cap B)=\emptyset$. By Lemma~\ref{lem:non-mixed} we conclude that either $\desc_T(u)\subseteq A\setminus B$ or $\desc_T(u)\subseteq B\setminus A$.
 Since $u\in A$, the former case must hold: $\desc_T(u)\subseteq A\setminus B$.
 This implies that $\desc_T(u)$ is entirely contained in $R^A$, $u\in \roots_{R^A}$, and $T_u=R^A_u$. As $\desc_T(u)\cap K=\emptyset$, the inclusion $\desc_T(u)\subseteq A\setminus B$ implies also $\desc_T(u)\subseteq A\setminus K$.
 \end{clproof}

 We now move to assertion~\ref{s:attachable}.
 By Claim~\ref{cl:trees-unmix}, every tree of $R$ is either entirely contained in $R^A$ or entirely contained in $R^B$, which means that $R=R^A\uplus R^B$. Since $T$ is a recursively optimal elimination forest of~$H$ and $K$ is a prefix of~$T$, $R=T-K$ is a recursively optimal elimination forest of $H-K$. It follows that $R^A$ is a recursively optimal elimination forest of the graph $(H-K)[A-K]=H[A]-K^A$. 
 
 To complete the proof of assertion~\ref{s:attachable}, it remains to show the following claim, whose proof is based on the same idea as that of Lemma~\ref{lem:straight}.
 
 \begin{claim}\label{cl:split-attach}
  $R^A$ is attachable to $(H[K^A],T^{K^A})$.  
 \end{claim}
 \begin{clproof}
 Consider any $u\in \roots_{R^A}$. By Claim~\ref{cl:trees-unmix}, we have $T_u=R^A_u$ and $\desc_T(u)\subseteq A\setminus B$. This implies that $N_H(\desc_T(u))\subseteq A$.
 
 We need to prove that $N_H(\desc_T(u))$ is straight in $T^{K^A}$.
 For this, take any distinct $x,y\in N_H(\desc_T(u))$; note that $x,y\in K\cap A=K^A$.
 We now prove that $\{x, y\}$ is straight in~$T^{K^A}$.
 Observe that this implies that $N_H(\desc_T(u))$ is straight in~$T^{K^A}$.
 Since $K$ is a $(d+2)$-core for $(H,T)$ and $u\in \roots_{R^A}\subseteq \App_{K}(T)$, we can find a set $W\coloneqq W_K(u,\{x,y\})$ consisting of at least $(d+2)$ siblings of $u$ in $T$ and satisfying that $\{x,y\}\subseteq\SReach_{T,H}(w)$ for each $w\in W$. As $|A\cap B|=1$, for at most one vertex $w\in W$ we may have $\desc_T(w)\cap (A\cap B)\neq \emptyset$. This leaves us with a set $W'\subseteq W$ of size at least $d+1$ such that in addition, for each $w\in W'$ we have $\desc_T(w)\cap (A\cap B)=\emptyset$.
 
 By Lemma~\ref{lem:non-mixed}, for each $w\in W'$ we have either have $\desc_T(w)\subseteq A\setminus B$ or $\desc_T(w)\subseteq B\setminus A$. Since $x,y\in A$ and at most one of the vertices $x$ and $y$ may belong to $B$ (due to $|A\cap B|=1$), we have $x\in A\setminus B$ or $y\in A\setminus B$. As $x,y\in N_H(\desc_T(w))$, we conclude that the case $\desc_T(w)\subseteq B\setminus A$ is impossible, for we would have an edge with one endpoint in $B\setminus A$ and the other in $A\setminus B$. This means that $\desc_T(w)\subseteq A\setminus B$ for each $w\in W'$.
 
 Since $T$ is recursively connected and $K$ is a $(d+2)$-core for $(H,T)$, from Lemma~\ref{lem:core-connected} we conclude that for each $w\in W'$ we may find a path $P^{xy}_w$ that connects $x$ with $y$ and such that all the internal vertices of $P^{xy}_w$ belong to $\desc_T(w)\cap K$. Since $\desc_T(w)\subseteq A\setminus B$, each such path $P^{xy}_w$ is entirely contained in the graph $H[K\cap A]=H[K^A]$.
 Thus, as the vertices in $W'$ are pairwise siblings, the paths in $\{P^{xy}_w \mid w \in W'\}$ are pairwise internally vertex-disjoint.
 This means that $x$ and $y$ are $(d+1)$-vertex-connected in~$H[K^A]$.
 As $T^{K^A}$ is an elimination forest of $H[K^A]$ of height at most $d$ (due to $\height(T^{K^A})=\td(H[K^A])\leq \td(H)\leq d$), from Lemma~\ref{lem:connectivity} we infer that $\{x,y\}$ is straight in $T^A$. As $x$ and $y$ were chosen arbitrarily from $N_H(\desc_T(u))$, we conclude that $N_H(\desc_T(u))$ is straight in $T^A$.
 \end{clproof}
 
 Claim~\ref{cl:split-attach} completes the proof of assertion~\ref{s:attachable}. Thus, we may define $\wh{T}^A$ to be the extension of $T^{K^A}$ via~$R^A$. As asserted by Lemma~\ref{lem:attach-elim}, $\wh{T}^A$ is an elimination tree of $H[A]$.
 To show assertion~\ref{s:optimal} it remains to show that $\wh{T}^A$ is recursively optimal. For this we will use the following claim, whose proof follows the same idea as that of Lemma~\ref{lem:ssp}.
 
 \begin{claim}\label{cl:split-ssp}
  $R^A$ has the sibling substitution property in $\wh{T}^A$.
 \end{claim}
 \begin{clproof}
 Take any $u\in \roots_{R^A}$. 
 By Claim~\ref{cl:trees-unmix}, we have $T_u=R^A_u$ and $\desc_T(u)\subseteq A\setminus K\subseteq A\setminus B$.
 Note that $\SReach_{\wh{T}^A,H}(u)=N_H(\desc_{\wh{T}^A}(u))=N_H(\desc_T(u))$.
 Recall that $K$ contains $A\cap B$ and at least one vertex of $A\setminus B$, hence $|A\cap K|\geq 2$. Since $\desc_T(u)\subseteq A\setminus K$ it follows that the set $M\coloneqq N_H(\desc_T(u))$ is contained in $A$ and, because  $H[A]$ is biconnected by precondition of the lemma, $M$ has size at least $2$.
 
 By Claim~\ref{cl:split-attach}, $M$ is straight in $T^{K^A}$. Let $m$ be the vertex of $M$ that is the deepest in $T^{K^A}$, and let $\wh{M}\coloneqq \anc_{T^{K^A}}(m)$. Note that $M\subseteq \wh{M}$ and $|\wh{M}|\leq \height(T^{K^A})\leq d$. 
 Since $|M|\geq 2$, we may fix some $p\in M$ that is different from $m$.
 
 
 Recall that $K$ is a $(d+2)$-core of $(T,H)$. Therefore, there exists a set $W\coloneqq W_K(u,\{m,p\})$ of $d+2$ siblings of $u$ in $T$ such that for each $w\in W$, we have $w\in K$, $\{m,p\}\subseteq \SReach_{T,H}(w)$, and $\height(T_w)\geq \height(T_u)$. Observe that there are at most $|\wh{M}|\leq d$ vertices $w\in W$ that satisfy $\desc_T(w)\cap \wh{M}\neq \emptyset$, and at most one vertex $w\in W$ satisfying $\desc_T(w)\cap (A\cap B)\neq \emptyset$. Therefore, there exists at least one vertex $w\in W$ satisfying $\desc_T(w)\cap \wh{M}=\emptyset$ and $\desc_T(w)\cap (A\cap B)=\emptyset$.
 Fix such a vertex $w \in W$ from now~on.
 
 Since $\desc_T(w)\cap (A\cap B)=\emptyset$, by Lemma~\ref{lem:non-mixed} we have that either $\desc_T(w)\subseteq A\setminus B$ or $\desc_T(w)\subseteq B\setminus A$. On the other hand, since $m,p\in M\subseteq A$, $m\neq p$, and $|A\cap B|=1$, either $m\in A\setminus B$ or $p\in A\setminus B$. As $\{m,p\}\subseteq \SReach_{T,H}(w)=N_H(\desc_T(w))$, we conclude that the case $\desc_T(w)\subseteq B\setminus A$ is impossible, for it would imply the existence of an edge with one endpoint in $A\setminus B$ and the other in~$B\setminus A$.
 Therefore, $\desc_T(w)\subseteq A\setminus B$. Note that in particular $w\in K\cap A=K^A$.
 
 By Lemma~\ref{lem:core-connected}, the graph $H[K\cap \desc_T(w)]$ is connected and $\SReach_{T,H}(w)=N_{H[K]}(K\cap \desc_T(w))$. Since $m\in \SReach_{T,H}(w)$, this in particular means that $m$ has a neighbor in $K\cap \desc_T(w)$.
 
 Since $H[K\cap \desc_T(w)]$ is connected, $K\cap \desc_T(w)\subseteq K\cap A=K^A$, and $T^{K^A}$ is an elimination forest of $H[K^A]$, from Lemma~\ref{lem:top} we infer that there exists a vertex $x\in K\cap \desc_T(w)$ such that $K\cap \desc_T(w)\subseteq \desc_{T^{K^A}}(x)$. As $\desc_T(w)\cap \wh{M}=\emptyset$ by the choice of~$w$, we have $x\notin \wh{M}$.
 Since $m$ has a neighbor in $K \cap \desc_T(w)$ and $K \cap \desc_T(w) \subseteq \desc_{T^{K^A}}(x)$ we have that $\desc_{T^{K^A}}(x)$ contains a neighbor of~$m$.
 Since $x\notin \wh{M}=\anc_{T^{K^A}}(m)$, it follows that $x$ must be a strict descendant of $m$ in $T^{K^A}$. In particular, there exists a sibling $z$ of $u$ in $\wh{T}$ such that $x$ is a descendant of $z$ in $T^{K^A}$. Note that $K\cap \desc_T(w)\subseteq \desc_{T^{K^A}}(x)\subseteq \desc_{T^{K^A}}(z)$. 
 It now remains to observe that
  \begin{align}
    \height(T^{K^A}_z) & \geq  \td(H[\desc_{T^{K^A}}(z)]) \label{eq:s1} \\
                       & \geq  \td(H[K\cap \desc_T(w)]) \nonumber \\
                       & =     \td(H[\desc_T(w)]) \label{eq:s2} \\
                       & =     \height(T_w) \label{eq:s3} \\
                       & \geq  \height(T_u) \nonumber \\
                       & =     \height(R_u). \nonumber
  \end{align}
  Here, \eqref{eq:s1} follows from the fact that $T^{K^A}_z$ is an elimination tree of the subgraph of $H$ induced by its vertices, \eqref{eq:s2} follows from Lemma~\ref{lem:core-opt-td}, and \eqref{eq:s3} follows from the recursive optimality of $T$.
 \end{clproof}
 
 Now, the remaining assertion~\ref{s:optimal} follows directly from applying Lemma~\ref{lem:ssp-optimal}. Here, the recursive optimality of $R^A$ has been argued in assertion~\ref{s:attachable}, the recursive optimality of $T^A$ is assumed in the lemma statement, and the sibling substitution property is provided by Claim~\ref{cl:split-ssp}.
\end{proof}


\section{Data structure}\label{sec:data-structure}

In this section we present our data structure for maintaining a recursively optimal elimination forest of a dynamic graph of bounded treedepth.
Before we proceed to the details, let us clarify the computation model.
We assume the standard word RAM model of computation with words of bitlength $\Oh(\log n)$, where $n$ is the vertex count of the input graph.
Further, we assume that the vertices' identifiers fit into single machine words, hence they take unit space and can be operated on in constant time.
Edges are represented as pairs of vertices.

In all our data structures we assume that the initialization method takes the number $n$ as part of the input and constructs the data structure for an edgeless graph on $n$ vertices.
Of course, if one wishes to initialize the structure for a graph given on input, it suffices to initialize the edgeless graph of appropriate order and add all the edges by a repeated use of the insertion method.

\paragraph*{Description of the data structure.}
We now present a data structure $\Dt[F,G]$ that stores an elimination forest $F$ of a graph $G$. We will always assume that $F$ is recursively connected and its height is bounded by a given parameter $d$. Under this assumption, the data structure uses $\Oh(nd)$ space.

In $\Dt[F,G]$, each vertex $u$ is associated with a record consisting of:
\begin{itemize}[nosep]
 \item a pointer $\tp(u)$ which points to a memory cell that stores $\pnt_F(u)$;
 \item a set $\SReach(u)$ equal to $\SReach_{F,G}(u)$;
 \item a set $\Up(u)$ equal to $\Up_{F,G}(u)\coloneqq N_G(u)\cap \SReach_{F,G}(u)$;
 \item a number $\height(u)$ equal to $\height(F_u)$;
 \item for each $X\subseteq \SReach_{F,G}(u) \cup \{ u \}$ and $i\in \{1,\ldots,d\}$, the {\em{bucket}}
       $$\bucket[u,X,i]\coloneqq \{\ w\ \colon\ w\in \chld_F(u),\ \SReach_{F,G}(w)=X,\ \height(F_w)=i\ \}.$$
\end{itemize}
Note that the buckets $\bucket[u,\cdot,\cdot]$ form a partition of the children of $u$. 

Sets $\SReach_{F,G}(u)$, $\Up_{F,G}(u)$, as well as all the buckets $\bucket[u,X,i]$ are stored as doubly linked lists, where a doubly linked list is represented as a pair of pointers: to its first and last element.
This representation is not essential for $\SReach_{F,G}(u)$ and $\Up_{F,G}(u)$, as these sets have sizes at most~$d$ anyway, but is important for the buckets, as their sizes are unbounded.
We only store non-empty buckets, so the storage space for the buckets sums up to $\Oh(n)$. Sets $\SReach()$ and $\Up()$ stored with each vertex increase the total space to $\Oh(nd)$.

In addition to the above, we assume that for every bucket $\bucket[u,X,i]$ there is a single memory cell $\parent[u,X,i]$ that stores $u$, and that $\tp(w)$ points to $\parent[u,X,i]$ for each $w\in \bucket[u,X,i]$.
That is, all elements of $\bucket[u,X,i]$ point to the same memory cell $\parent[u,X,i]$ for storing the information on their parent.
Thus, changing the parent of the whole bucket can be done in constant time.

Furthermore, we store also the roots of $F$ in buckets as follows. It is simpler to think of the buckets (and to implement our algorithms) on a tree rather than a forest. We hence introduce an artificial symbol $\bot$, which represents an artificial root connecting all trees in $F$, i.e., all roots in $F$ are treated as its children. Now for each $i\in \{1,\ldots,d\}$, we create a bucket
$$\bucket[\bot,\emptyset,i]\coloneqq \{\ r\ \colon\ r\in \roots_F,\ \height(F_r)=i\ \}.$$
Thus, the buckets $\bucket[\bot,\emptyset,\cdot]$ form a partition of $\roots_F$.
These buckets are not associated with any vertex of $G$: they are stored at the global level in $\Dt[F,G]$, again as doubly linked lists.
Also, with each of these buckets we associate a memory cell $\parent[\bot,\emptyset,i]$ which stores $\bot$ and is pointed to by $\tp(w)$ for all $w\in \bucket[\bot,\emptyset,i]$.

Thus, each vertex $w\in V(G)$ is stored in exactly one bucket, namely 
$$w\in \bucket\left[\pnt_F(w),\SReach_{F,G}(w),\height(F_w)\right].$$
In addition to the record described in the beginning, for every vertex $w$ we store a pointer to the list element corresponding to $w$ in the doubly linked list representing the bucket in which $w$ resides.
Note that this allows removing $w$ from this bucket in constant time.

\medskip

This concludes the description of the data structure $\Dt[F,G]$. 
It is clear that the initialization for an edgeless graph $G$ can be done in 
$\Oh(n)$ time, as one only needs to initialize $\Oh(n)$ empty buckets.

We note that the edges of the graph $G$ are stored in $\Dt[F,G]$ implicitly: 
given $u,v\in V(G)$, to verify whether $u$ and $v$ are adjacent in $G$ it 
suffices to check whether $u\in \Up_{F,G}(v)$ or $v\in \Up_{F,G}(u)$, which can 
be done in $\Oh(d)$ time.
Thus, one can think of $\Dt[F,G]$ as an implicit representation of $G$ as well.
Also, in the following we will repeatedly use the fact that, given $u\in V(G)$ 
and access to $\Dt[F,G]$, the set $\anc_F(u)$ can be computed in $\Oh(d)$ time 
by iteratively following parent pointers from $u$.

\paragraph*{Extracting cores.} We now show that, given the data structure $\Dt[F,G]$, one can efficiently extract small cores from it. 
The argument essentially boils down to implementing the procedure outlined in the proof of Lemma~\ref{lem:find-core} using access to $\Dt[F,G]$.

\begin{lemma}\label{lem:extract-core}
 Suppose we have access to a data structure $\Dt[F,G]$ that stores a recursively connected elimination forest $F$ of a graph $G$ of height at most $d$.
 Then one can implement a method $\mathtt{core}(L,q)$ which, given $L\subseteq V(G)$ and $q\in \N$,
 in $(d+q+|L|)^{\Oh(d)}$ time computes a $q$-core $K$ of 
$(G,F)$ satisfying $L\subseteq K$ and $|K|\leq (d+q+|L|)^{\Oh(d)}$.
 Moreover, within the same running time one can also construct the graph $G[K]$. 
\end{lemma}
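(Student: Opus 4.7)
The plan is to adapt the recursive marking procedure from the proof of Lemma~\ref{lem:find-core} so as to exploit the bucket structure of $\Dt[F,G]$ and to force the set $L$ together with all its ancestors into the core. First, compute $A \coloneqq L \cup \bigcup_{v \in L}\anc_F(v)$ by iteratively following $\tp$ pointers from the vertices of $L$; this takes $\Oh(|L|\cdot d)$ time, yields $|A|\leq |L|(d+1)$, and permits us to group $A$ by parents so that, for every vertex $u$, the list of its children in $A$ becomes accessible in $\Oh(1)$ time (and in $\Oh(|L|)$ total). Flag the vertices of $A$, and then start a recursive procedure at the artificial root $\bot$. When processing a vertex $u$ already put into $K$, we perform two marking steps. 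In the \emph{greedy step}, for each $X \in \binom{\anc_F(u)\cup\{u\}}{\leq 2}$ we mark $q$ children $w\in\chld_F(u)$ with $X\subseteq \SReach_{F,G}(w)$ having the largest values of $\height(F_w)$, or all of them if there are fewer than $q$. In the \emph{forcing step} we mark every child of $u$ that lies in $A$. We then recurse on every newly marked child, and output $K$ as the set of all marked vertices.

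The greedy step for a given pair $(u,X)$ is implemented by iterating over the buckets $\bucket[u,X',i]$ with $X\subseteq X'\subseteq \SReach_{F,G}(u)\cup\{u\}$ in decreasing order of $i$, stopping as soon as $q$ children have been picked. There are at most $2^{d+1}\cdot d$ such buckets, and $\Oh(d^2)$ relevant subsets $X$, so processing one vertex costs $\Oh(2^d d^3 + qd^2)$ time. To bound $|K|$, let $k_i$ be the number of marked vertices at depth $i$: each processed vertex marks at most $q\cdot\binom{d+1}{\leq 2}=\Oh(qd^2)$ greedy children at depth $i+1$, and the forcing step contributes at most $|L|$ further vertices at depth $i+1$ (since $A$ has at most $|L|$ vertices at each fixed depth). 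Hence $k_{i+1} \leq \Oh(qd^2)\cdot k_i + |L|$, and unrolling up to depth $d$ yields $|K|\leq (d+q+|L|)^{\Oh(d)}$. Multiplying by the per-vertex cost bounds the total running time by the same expression.

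To verify that $K$ is a $q$-core, take any $a\in \App_F(K)$, let $p\coloneqq\pnt_F(a)$, and pick any $X\subseteq \SReach_{F,G}(a)$ with $|X|\leq 2$. Since $X\subseteq \anc_F(p)\cup\{p\}$, this very $X$ was considered when $p$ was processed. Either fewer than $q$ children of $p$ satisfied $X\subseteq \SReach_{F,G}(\cdot)$, in which case $a$ itself would have been among them and thereby marked, contradicting $a \notin K$, or else at least $q$ such children were marked, each satisfying $\height(F_w)\geq \height(F_a)$ because the greedy step selects the tallest candidates (and $a$ is also a candidate that was not chosen). Finally, to list $E(G[K])$ within the claimed time budget, flag $K$ and enumerate, for every $u\in K$, the pairs $\{u,v\}$ with $v\in \Up_{F,G}(u)\cap K$; since every edge of $G$ joins a vertex to one of its ancestors and is recorded in the $\Up$\nobreakdash-set of the descendant, this lists each edge of $G[K]$ exactly once in time $\Oh(|K|\cdot d)$.

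The main conceptual obstacle is checking that the forcing step does not disturb the witnesses required by the $q$-core definition. This is resolved by the observation that the $q$-core condition only constrains the behavior at appendices, and the greedy step is executed in full at every processed vertex regardless of whether that vertex entered $K$ through the greedy selection or through the forcing step; thus the $q$ required siblings are always present.
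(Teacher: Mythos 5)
Your proposal is correct and follows essentially the same route as the paper: compute the ancestor closure of $L$, run the recursive marking procedure of Lemma~\ref{lem:find-core} on top of the buckets (scanning buckets $\bucket[u,X',i]$ with $X'\supseteq X$ in non-increasing height order and forcing children in the ancestor closure), bound $|K|$ by the branching factor $\Oh(qd^2)+|L|$ over recursion depth $d$, and recover $G[K]$ from the $\Up(\cdot)$ sets. Your explicit verification of the $q$-core property at appendices (and the observation that the greedy step runs in full even at vertices marked only by the forcing step) is exactly the "follows directly from the construction" step the paper leaves implicit, so there is nothing further to fix.
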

\begin{proof}
 The algorithm is presented using pseudocode as Algorithm~\ref{alg:kernel}. 
 We implement it as method $\mathtt{core}(L,q)$ of $\Dt[F,G]$: this method, given $L$ and $q$, outputs a $q$-core $K$ with the desired properties.
 
 The first step of $\mathtt{core}(L,q)$ is to compute the ancestor closure 
$\wh{L}\coloneqq \anc_F(L)$; this can be done in $\Oh(d|L|)$ time.
 Next, we call a recursive method $\mathtt{recCore}(\wh{L},q,u)$, presented using pseudocode as Algorithm~\ref{alg:rkernel}.
 This method is a slight generalization of the procedure $\mathtt{recCore}(q,u)$ outlined in the proof of Lemma~\ref{lem:find-core}, where we are additionally given the set $\wh{L}$ that should be included in the computed core.
 Precisely, the method $\mathtt{recCore}(\wh{L},q,u)$ is given a vertex $u$ and should output a $q$-core of $(G[\desc_F(u)],F_u)$ that contains $\wh{L}\cap \desc_F(u)$; this output is represented as a doubly linked list.
 In the initial call, the vertex $u$ is substituted with the marked $\bot$, and we follow the convention that $\desc_F(\bot)=V(G)$ and $F_\bot=F$.

 The method $\mathtt{recCore}(\wh{L},q,u)$ is implemented as follows.
 The first step is to gather a list $R$ consisting of children of $u$ (or roots of $F$ in case $u=\bot$) into which the construction of the core should recurse; the implementation is in
 Lines~\ref{alg:mark:start}-\ref{alg:mark:end} of Algorithm~\ref{alg:rkernel}.
 Into this list we first include all vertices $w\in \wh{L}$ for which 
$\pnt_F(w)=u$; this can be easily done in $\Oh(|\wh{L}|)=\Oh(d|L|)$ time.
 Next, for each $X\subseteq \SReach_{F,G}(u) \cup \{ u \}$ such that $|X|\leq 2$, 
 we consider all vertices $w\in \chld_F(u)$ (or $w\in \roots_F$ if $u=\bot$) satisfying $\SReach_F(w)\supseteq X$.
 From those vertices we add to $R$ any $q$ with the highest value of $\height(F_w)$, or all of them if their total number is smaller than $q$.
 Note that, for each $X$, this can be done in time $\Oh(q(q+|L|)+d\cdot 2^{d})$ by inspecting all the 
 buckets $\bucket[u,Y,i]$ satisfying $Y\supseteq X$ in any order with non-increasing $i$, and iteratively including vertices from
 them until a total number of $q$ vertices has been included. In order to avoid repetitions on the list, whenever inserting a new vertex into $R$, we check whether it has not been included before; this takes time $\Oh(q+|L|)$.

 Once the list $R$ is constructed, method $\mathtt{recCore}(\wh{L}, q, w)$ is applied recursively to each $w\in R$.
 The return list is the concatenation of all the lists obtained from the recursion, with $u$ appended in addition.
 
 It is clear that the algorithm correctly constructs a $q$-core $K$ of $(G,F)$ which contains $L$.
 As for the bound on $|K|$, observe that in procedure $\mathtt{recCore}(\wh{L},q,u)$, the total number of vertices included in the list $R$ is bounded by $q\cdot \left|\binom{\anc_F(u)}{\leq 2}\right|+|L|\leq q(d^2+1)+|L|\eqqcolon i$.
 The recursion depth is bounded by the height of $F$, which is at most $d$, hence the total number of nodes in the recursion tree is bounded by
 $$i^0+i^1+\ldots+i^d\in (d+q+|L|)^{\Oh(d)}.$$
 Observe that for each node of the recursion tree exactly one vertex is added to~$K$.
 Thus we conclude that $|K|\leq (d+q+|L|)^{\Oh(d)}$, as claimed.
 Finally, the internal computation for each node takes time $\Oh(q(q+|L|)+d\cdot 2^d)$, which is asymptotically dominated by the bound on $|K|$. Hence, the total 
running time is $(d+q+|L|)^{\Oh(d)}$.
 
 In order to construct $G[K]$ from $K$ within the same asymptotic running time, it suffices to observe that the edge set of $G[K]$ is exactly $\bigcup_{u\in K} \{uv\colon v\in \Up_{F,G}(u)\}$, 
 so it can be constructed in $\Oh(d\cdot |K|)$ time given access to $\Dt[F,G]$.
\end{proof}

\begin{algorithm}\label{alg:kernel}
    
    \SetKwInOut{Input}{Input}
    \SetKwInOut{Output}{Output}

    
    \Input{A subset of vertices $L$ and a positive integer $q$}
    \Output{A $q$-core $K$ of $(G,F)$ such that $L\subseteq K$}
    
    \vskip 0.1cm
    
    $\wh{L}\gets \anc_F(L)$\\
    \Return{$\mathtt{recCore}(\wh{L},q,\bot)$}
    
    \caption{method $\mathtt{core}(L,q)$}
\end{algorithm}

\begin{algorithm}\label{alg:rkernel}
    
    \SetKwInOut{Input}{Input}
    \SetKwInOut{Output}{Output}
    \SetKw{Break}{break}

    
    \Input{A prefix $\wh{L}$ of $F$, a positive integer $q$, and a vertex $u\in V(G)\cup \{\bot\}$}
    \Output{A $q$-core $K_u$ of $(G[\desc_F(u)],F_u)$ such that $\wh{L}\cap \desc_F(u)\subseteq K_u$}
    
    \vskip 0.1cm
    
    $R\gets \mathtt{new\ List}()$\label{alg:mark:start}  \\   
    \ForEach{$w\in \wh{L}$}{
       \If{$^{\star}(\tp(w))=u$} { $R.\mathtt{append}(w)$ }
    }
    \ForEach{$X\subseteq \SReach_{F}(u) \cup \{ u \}$ \normalfont{\textbf{such that}} $|X|\leq 2$} {
      $c \gets q$\\
      \For{$i\gets d$ \normalfont{\textbf{downto}} $1$} {
         \ForEach{$Y\subseteq \SReach_{F}(u) \cup \{ u \}$ \normalfont{\textbf{such that}} $Y\supseteq X$} {
            \ForEach{$w\in \bucket[u,Y,i]$} {
               \If{$w\notin R$}
	         {$R.\mathtt{append}(w)$}
	       $c\gets c-1$\\
	       \If{$c=0$}{\textbf{goto} exit}
	    }
         }
      }
      {exit }\label{alg:mark:end}
    }
    $K_u\gets \mathtt{new\ List}()$\\
    \ForEach{$w\in R$} {
      $K_u.\mathtt{append}(\mathtt{recCore}(\wh{L},q,w))$
    }
    \If{$u\neq \bot$}{
    $K_u.\mathtt{append}(u)$}
    \Return{$K_u$}

    \caption{method $\mathtt{recCore}(\wh{L},q,u)$}
\end{algorithm}

\newcommand{\appendices}{\mathsf{apps}}

\paragraph*{Partial structures.} We will be often faced with a situation where we have a graph $G$, a recursively connected elimination forest $F$ of $G$, and a prefix $K$ of $F$ (usually, a suitable core). Given access to the data structure $\Dt[F,G]$, we would like to update the part of $\Dt[F,G]$ that corresponds to the induced subgraph $G[K]$. The key point in the forthcoming implementation is that the remainder of $\Dt[F,G]$, i.e. the part that does not concern $G[K]$, does not need to be updated at all. Therefore, it is convenient to define a {\em{partial data structure}}, which represents the part of $\Dt[F,G]$ which does not need to be updated.

Let $G$ be a graph, $K$ be a subset of its vertices, and $R$ be a recursively connected elimination forest of the graph $G-K$. We define the {\em{partial data structure}} $\Dt[R,G/K]$ as follows. First, for each vertex $u\in V(G)-K$ we define $\SReach_{R,G}(u)\coloneqq N_G(\desc_R(u))$; note that $\SReach_{R,G}(u)$ {\em{may}} contain vertices of $K$. Then for each vertex $u\in V(G)-K$, in $\Dt[R,G/K]$ we store:
\begin{itemize}[nosep]
 \item a pointer $\tp(u)$ which points to a memory cell that stores $\pnt_R(u)$;
 \item a set $\SReach(u)$ equal to $\SReach_{R,G}(u)$;
 \item a set $\Up(u)$ equal to $\Up_{R,G}(u)\coloneqq N_G(u)\cap \SReach_{R,G}(u)$;
 \item a number $\height(u)$ equal to $\height(R_u)$;
 \item for each $X\subseteq \SReach_{R,G}(u) \cup \{ u \}$ and $i\in \{1,\ldots,d\}$, the bucket
       $$\bucket[u,X,i]\coloneqq \{\ w\ \colon\ w\in \chld_R(u),\ \SReach_{R,G}(w)=X,\ \height(R_w)=i\ \}.$$
\end{itemize}
Also, with each bucket $\bucket[u,X,i]$ we keep a memory cell $\parent[u,X,i]$ that stores $u$ and is pointed to (via $\tp(\cdot)$) by all the elements of $\bucket[u,X,i]$. 
Note that this is exactly the same information as in the definition of the (non-partial) data structure $\Dt[\cdot,\cdot]$, except that we interpret strong reachability sets for $u\in V(G)\setminus K$ using the $u\mapsto \SReach_{R,G}(u)$ operator.

Furthermore, in $\Dt[R,G/K]$ we store a list $\appendices$ consisting of buckets of the form $\bucket[\bot,X,i]$. We assume that the following assertion holds: the buckets $\bucket[\bot,X,i]$ placed in $\appendices$ have pairwise different pairs $(X,i)$. The buckets in $\appendices$ contain a partition of the roots of $R$: each $r\in \roots_R$ is placed in the bucket $\bucket[\bot,\SReach_{R,G}(r),\height(R_r)]$.
As before, each bucket $\bucket[\bot,X,i]\in \appendices$ is supplied with a memory cell $\parent[\bot,X,i]$ that stores $\bot$ and is pointed to by all the elements of $\bucket[\bot,X,i]$. Note that this time, the number of buckets on the list $\appendices$ is potentially unbounded; we store only the non-empty ones. Thus, again we have the following property: in $\Dt[R,G/K]$, every vertex of $V(G)\setminus K$ is placed in exactly one bucket.

We now show two lemmas about partial data structures. The first one says that given a (non-partial) data structure $\Dt[F,G]$ and a prefix $K$ of $F$, we may efficiently modify $\Dt[F,G]$ to obtain $\Dt[F-K,G/K]$. The second one gives a modification procedure in the other direction: given $\Dt[R,H/K]$ and an elimination forest $F^K$ of $H[K]$ such that $R$ is attachable to $(H,F^K)$, we can modify $\Dt[R,H/K]$ to obtain $\Dt[\wh{F},H]$, where $\wh{F}$ is the extension of $F^K$ via $R$.

\begin{lemma}\label{lem:trim-implement}
 Suppose $G$ is a graph, $F$ is a recursively connected elimination forest of $G$ of height at most $d$, and $K$ is a prefix of $F$. Then given on input the data structure $\Dt[F,G]$, one can in time $\Oh(2^d\cdot d\cdot |K|)$ modify $\Dt[F,G]$ to obtain the partial data structure $\Dt[F-K,G/K]$. Moreover, in the obtained structure $\Dt[F-K,G/K]$, the list $\appendices$ has length at most $d\cdot(1+2^d\cdot |K|)$.
\end{lemma}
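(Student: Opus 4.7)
The strategy rests on a single observation: because $K$ is a prefix of $F$, removing it does not alter the descendant structure of any vertex outside $K$. Precisely, for every $u \in V(G) \setminus K$ we have $\desc_{F-K}(u) = \desc_F(u)$, so the quantities $\SReach_{F-K,G}(u) = \SReach_{F,G}(u)$, $\Up_{F-K,G}(u) = \Up_{F,G}(u)$, and $\height((F-K)_u) = \height(F_u)$ are already stored correctly in $\Dt[F,G]$, and the children of $u$ (hence every bucket $\bucket[u,X,i]$) are the same in $F-K$ as in $F$. Thus converting $\Dt[F,G]$ to $\Dt[F-K,G/K]$ reduces to two tasks: deleting the records associated with $K$, and re-registering the appendices $\App_F(K)$ as roots of $F-K$ by lifting their container buckets into the global list $\appendices$.

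The plan is to implement the conversion in two phases. In the first phase, for every $w \in K$ we unlink $w$ from its current bucket $\bucket[\pnt_F(w),\SReach_{F,G}(w),\height(F_w)]$; this is constant time per $w$ using the pointer to the list element representing $w$ that is stored together with $w$. This phase costs $\Oh(|K|)$ total, and leaves each bucket $\bucket[u,X,i]$ with $u \in K$ containing exactly the children of $u$ that lie in $\App_F(K)$. In the second phase, for each $u \in K$ we iterate through the at most $2^{|\SReach(u)|+1}\cdot d \leq 2^d \cdot d$ buckets $\bucket[u,X,i]$ maintained with $u$. A non-empty such bucket is transferred into $\appendices$ as an entry $\bucket[\bot,X,i]$ simply by overwriting the value stored in the shared parent cell $\parent[u,X,i]$ from $u$ to $\bot$ and appending it to $\appendices$: since every vertex in the bucket has its $\tp$-pointer already targeting this cell, no per-vertex work is needed. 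Empty buckets are discarded, and finally the record of $u$ itself is deallocated. The total cost is dominated by $\Oh(2^d \cdot d)$ work per $u \in K$, i.e.\ $\Oh(2^d \cdot d \cdot |K|)$.

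The bound on $|\appendices|$ follows by counting: the list starts with the $d$ original root buckets $\bucket[\bot,\emptyset,i]$, and phase two contributes at most $2^d \cdot d$ new entries per vertex of $K$, giving $|\appendices| \leq d(1 + 2^d \cdot |K|)$. The main subtlety I anticipate is the invariant that the entries of $\appendices$ carry pairwise distinct pairs $(X,i)$: two buckets $\bucket[u_1,X,i]$ and $\bucket[u_2,X,i]$ with $u_1, u_2 \in K$ and the same $(X,i)$ can arise after phase one and should be merged. Since naively relinking the $\tp$-pointers inside one of them would break the amortization, I expect the cleanest resolution to be concatenating the two doubly-linked lists in $\Oh(1)$ while keeping two parent cells per logical bucket (both storing $\bot$), so that $\tp$ still returns $\bot$ correctly; this preserves both the constant-per-bucket accounting and the claimed length bound on $\appendices$.
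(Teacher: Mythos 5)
Your two-phase implementation and the counting match the paper's proof up to the last step, but the ``main subtlety'' you flag at the end is exactly where your argument diverges from, and falls short of, what is needed. The paper's proof observes that the collision you worry about cannot occur at all: if a bucket $\bucket[u,X,i]$ with $u\in K$ is nonempty, it contains a child $w$ of $u$ with $\SReach_{F,G}(w)=X$, and since $F$ is recursively connected, Lemma~\ref{lem:parent-SReach} gives $u\in X$; as $X\subseteq\anc_F(w)\setminus\{w\}$ and $u=\pnt_F(w)$ is the deepest strict ancestor of $w$, the vertex $u$ is the deepest element of $X$ in $F$, hence determined by $X$ (and $X=\emptyset$ forces $u=\bot$). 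So two distinct $u_1,u_2\in K\cup\{\bot\}$ can never contribute nonempty buckets with the same pair $(X,i)$, and the plain renaming already yields a list $\appendices$ whose entries carry pairwise distinct pairs $(X,i)$, exactly as the definition of $\Dt[F-K,G/K]$ demands. It is telling that your write-up never uses the hypothesis that $F$ is recursively connected: that hypothesis is in the statement precisely to rule out this collision.

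Your proposed patch --- keeping two (in general, several) parent cells per logical bucket and concatenating the lists --- is not a valid substitute. The partial structure $\Dt[F-K,G/K]$ is defined to have, for each bucket on $\appendices$, a \emph{single} cell $\parent[\bot,X,i]$ storing $\bot$ and pointed to by all elements of the bucket; this single shared cell is what later lets the $\extend$ method of Lemma~\ref{lem:extend-implement} reattach a whole bucket below its new parent $m$ with one write $\parent[m,X,i]\gets m$. With several parent cells per bucket that step must locate and update all of them, and since in your scenario merges could keep accumulating over successive updates, the number of cells attached to one bucket is not bounded by a constant; so neither the definition of the target structure nor the constant-per-bucket accounting survives. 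Fortunately, as explained above, no merging is ever necessary, and the rest of your argument (deleting the $|K|$ vertices via their stored list pointers, transferring at most $2^d\cdot d$ buckets per vertex of $K$ plus the $d$ root buckets, giving the $\Oh(2^d\cdot d\cdot|K|)$ running time and the bound $d\cdot(1+2^d\cdot|K|)$ on $|\appendices|$) is exactly the paper's.
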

\begin{proof}
 We explain consecutive modifications that should be applied to $\Dt[F,G]$ in order to turn it into $\Dt[F-K,G/K]$.
 We summarize them using pseudocode as method $\mathtt{trim}(K)$, presented as Algorithm~\ref{alg:trim}.
 
 First, we remove all the vertices of $K$ from all the buckets. 
 This can be done in total time $\Oh(|K|)$ by removing each vertex $u\in K$ from the bucket that it belongs to; recall here that $u$ stores a pointer to the list element representing it in this bucket.
 Second, we need to rename each bucket $\bucket[u,X,i]$ for $u\in K\cup \{\bot\}$ to $\bucket[\bot,X,i]$, and put all these buckets on the list $\appendices$ while ignoring buckets that became empty. Also, we should simultaneously rename the cell $\parent[u,X,i]$ to $\parent[\bot,X,i]$, 
 and change its content to $\bot$.
 Observe that the total number of buckets $\bucket[u,X,i]$ for $u\in K\cup \{\bot\}$ in $\Dt[F,G]$ is bounded by $d\cdot (1+2^d\cdot |K|)$, hence the obtained list $\appendices$ has length at most $d\cdot (1+2^d\cdot |K|)$ and can be constructed in time $\Oh(2^d\cdot d\cdot |K|)$. Note also that renaming the buckets when constructing the list $\appendices$ involves suitable updates in the records stored for the vertices of $G$. 
 
 Here, let us point out one important detail: in the above operation, it will never be the case that two different buckets $\bucket[u_1,X,i]$ and $\bucket[u_2,X,i]$ will be renamed to the same new bucket $\bucket[\bot,X,i]$.
 This follows from Lemma~\ref{lem:parent-SReach} applied to $F$: if $\bucket[u_1,X,i]$ and $\bucket[u_2,X,i]$ were simultaneously nonempty for some subset of vertices $X$, then either $X=\emptyset$ and $u_1=u_2=\bot$, or both $u_1$ and $u_2$ would be equal to the deepest vertex of $X$ in~$F$, implying $u_1=u_2$.
 
 It is straightforward to see that these modifications turn $\Dt[F,G]$ into $\Dt[F-K,G/K]$.
\end{proof}

\begin{algorithm}\label{alg:trim}
    
    \SetKwInOut{Input}{Input}
    \SetKwInOut{Result}{Result}
    \SetKw{Break}{break}
    \SetKw{Raise}{raise exception:\ }

    \vskip 0.2cm
    
    \Input{A prefix $K$ of $F$}
    \Result{Structure $\Dt[F,G]$ is updated to $\Dt[F-K,G/K]$}
    
    \vskip 0.3cm
    
    \ForEach{$v\in K$} {      
        remove $v$ from the bucket that it belongs to
    }
    
    $\appendices\gets \mathtt{new\ List}()$
    
    \ForEach{$u\in K\cup \{\bot\}$ \normalfont{\textbf{and}} $X\subseteq \SReach_{F^K}(u) \cup \{ u \}$ \normalfont{\textbf{and}} $i\in \{1,\ldots,d\}$} {
        \If {$\bucket[u,X,i]$ \normalfont{is empty}} {\normalfont{\textbf{continue}}}
        rename $\bucket[u,X,i]$ to $\bucket[\bot,X,i]$\\
        rename $\parent[u,X,i]$ to $\parent[\bot,X,i]$\\
        $\parent[\bot,X,i]\gets \bot$\\
        $\appendices.\mathtt{append}(\bucket[u,X,i])$
    }
  
    \caption{method $\mathtt{trim}(K)$}
\end{algorithm}

\begin{lemma}\label{lem:extend-implement}
 Suppose $H$ is a graph, $K$ is a subset of vertices of $H$, and $F^K$ is an elimination forest of $H[K]$ of height at most~$d$. Suppose further that $R$ is a recursively connected elimination forest of $H-K$ of height at most $d$ that is attachable to $(H,F^K)$. Let $\wh{F}$ be the extension of $F^K$ via $R$. Then given on input $H[K]$, $F^K$, and the partial data structure $\Dt[R,H/K]$, one can in time $2^{\Oh(d)}\cdot |K|^{\Oh(1)}$ modify $\Dt[R,H/K]$ to obtain the data structure $\Dt[\wh{F},H]$.
\end{lemma}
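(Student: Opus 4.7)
The plan is to transform $\Dt[R, H/K]$ in two phases: first \emph{reattach} the existing trees of $R$ to their new parents in $F^K$ by relabeling the appendix buckets, and then \emph{install} the vertices of $K$ by a bottom-up sweep of $F^K$.

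For Phase~1, I precompute $\depth_{F^K}(\cdot)$ in $\Oh(|K|)$ time via a DFS of $F^K$, then scan the list $\appendices$ of $\Dt[R, H/K]$. A bucket $\bucket[\bot, X, i]$ with $X = \emptyset$ stays in place, since its members are roots of $\wh{F}$ with unchanged $\SReach$, $\height$, and parent. For $X \neq \emptyset$, the attachability of $R$ guarantees that $X$ is straight in $F^K$, and the root of every tree in the bucket is attached in $\wh{F}$ as a child of the deepest vertex $m \in X$. I therefore rename the bucket to $\bucket[m, X, i]$ and the cell $\parent[\bot, X, i]$ to $\parent[m, X, i]$, overwriting its content with $m$. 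Since all pointers $\tp(\cdot)$ of the bucket's members already reference this cell, a single memory update redirects an entire batch of subtrees to their new parent. Because each appendix bucket is indexed by a straight subset of $F^K$ of size at most $d$ together with an integer in $\{1, \ldots, d\}$, we have $|\appendices| \leq 2^{\Oh(d)} \cdot |K|$, so Phase~1 runs in $2^{\Oh(d)} \cdot |K|$ time. During the scan I simultaneously build an auxiliary list $P[u]$ for every $u \in K$, collecting the buckets reparented to $u$.

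In Phase~2, I process the vertices of $K$ in post-order on $F^K$. For each $u \in K$ I first compute $\Up(u) = N_H(u) \cap (\anc_{F^K}(u) \setminus \{u\})$ by scanning $u$'s incidences in $H[K]$ and testing ancestry against the precomputed depths. I then aggregate
\[
  \SReach(u) = \Up(u) \cup \bigcup_{v \in \chld_{F^K}(u)} \bigl(\SReach(v) \setminus \{u\}\bigr) \cup \bigcup_{\bucket[u, X, i] \in P[u]} \bigl(X \setminus \{u\}\bigr),
\]
and set $\height(u)$ to one plus the maximum of $\height(v)$ over $v \in \chld_{F^K}(u)$ and of $i$ over buckets $\bucket[u, X, i] \in P[u]$ (or $1$ if both collections are empty). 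These formulas follow from the decomposition of $\desc_{\wh{F}}(u)$ into $\{u\}$, the subtrees rooted at $F^K$-children of $u$, and the trees of $R$ attached directly below $u$. Finally, I insert each child $v \in \chld_{F^K}(u)$ into the bucket $\bucket[u, \SReach(v), \height(v)]$: if this bucket already exists in $P[u]$ I splice $v$ into its list and redirect $\tp(v)$ to the existing cell; otherwise I allocate a fresh bucket and $\parent$-cell storing $u$. Roots of $F^K$ are analogously placed into top-level buckets $\bucket[\bot, \emptyset, \height(\cdot)]$, merging with any surviving Phase~1 bucket of the same index.

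Correctness reduces to verifying the invariants of $\Dt[\wh{F}, H]$: $\wh{F}$ is an elimination forest of $H$ by Lemma~\ref{lem:attach-elim}; every vertex lies in the bucket of its $\wh{F}$-parent indexed by $\SReach_{\wh{F}, H}$ and $\height(\wh{F}_\cdot)$; and each parent pointer correctly references the $\parent$-cell whose content is the true parent. The main obstacle is the bookkeeping of $\parent$-cells: ensuring that Phase~1 renamings and Phase~2 creations produce exactly one cell per bucket index, so that the invariant ``all members of a bucket share a single cell storing their parent'' is preserved and pointer updates stay $\Oh(1)$; the lists $P[u]$ and the merge step when installing $F^K$-children and roots handle this uniformly. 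The running time is dominated by Phase~2, which spends $\Oh(d \cdot \deg_{H[K]}(u) + d \cdot |\chld_{F^K}(u)| + d \cdot |P[u]|)$ per vertex $u \in K$; summing with Phase~1 yields $\Oh(d \cdot |E(H[K])| + d \cdot |K| + d \cdot |\appendices|) = 2^{\Oh(d)} \cdot |K|^{\Oh(1)}$, as required.
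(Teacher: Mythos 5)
Your proposal is correct and follows essentially the same two-step approach as the paper's proof: first reattach each appendix bucket as a batch by renaming it to $\bucket[m,X,i]$ for the deepest $m\in X$ in $F^K$ and updating the shared $\parent$ cell once, then recompute $\Up$, $\SReach$, $\height$ and the bucket placement for the vertices of $K$ in a bottom-up pass over $F^K$, with the same $2^{\Oh(d)}\cdot|K|$ bound on the appendix list via straight sets. The only (cosmetic) difference is that you aggregate $\SReach(u)$ directly from the children's $\SReach$ sets and the lists $P[u]$, whereas the paper reads it off the already-filled buckets $\bucket[u,\cdot,\cdot]$; both computations are equivalent.
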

\begin{proof}
 We explain consecutive modifications that should be applied to $\Dt[R,H/K]$ in order to turn it into $\Dt[\wh{F},H]$. It is straightforward to see that they can be executed in total time $2^{\Oh(d)}\cdot |K|^{\Oh(1)}$; we leave the easy verification to the reader.
 We summarize the procedure using pseudocode as method $\mathtt{extend}(H[K],F^K)$, presented as Algorithm~\ref{alg:extend}.
 
 Consider any $r\in \roots_R$. By assumption that $R$ is attachable to $(H,F^K)$, we have that $N_H(\desc_R(r))=\SReach_{R,H}(r)$ is straight in $F^K$. Observe that the total number of straight sets in $F^K$ is bounded by $1+|K|\cdot 2^{d-1}$: each straight set is either empty, or can be described by providing its deepest (in $F^K$) element and a subset of its ancestors. Hence, the list $\appendices$ in $\Dt[R,H/K]$ has length at most $(1+|K|\cdot 2^{d-1})\cdot d$.
 
 Next, we need to appropriately attach the buckets from the list $\appendices$ as in the extension operation.
 For this, for each bucket $\bucket[\bot,X,i]$ on this list we rename it to $\bucket[m,X,i]$, where $m$ is the vertex of $X$ that is the deepest in $F^K$; in case $X=\emptyset$, no renaming is needed. Similarly, we rename the memory cell $\parent[\bot,X,i]$ to $\parent[m,X,i]$ and change its content to $m$. 

 The key observation is that, at this point, all the data stored for every vertex $w\notin K$, that is, $\SReach(w)$, $\Up(w)$, $\height(w)$, the buckets $\bucket[w,\cdot,\cdot]$, the cells $\parent[w,\cdot,\cdot]$, and the value stored in $\tp(w)$, is exactly as it should be in the data structure~$\Dt[\wh{F},H]$.
 Therefore, it remains to finish computing the relevant records for all $u\in K$, which we do as follows.
 
 We process the vertices of $K$ in a bottom-up manner over the forest $F^K$.
 When processing $u\in K$, we assume that the buckets $\bucket[u,\cdot,\cdot]$ have already been correctly computed (note that this holds at the beginning for the leaves of $F^K$). From these, we may compute $\Up(u)$, $\SReach(u)$, and $\height(u)$ using the following straightforward formulas:
 \begin{eqnarray*}
  \Up(u) & = & \anc_{F^K}(u)\cap N_{H[K]}(u);\\
  \SReach(u) & = & \Up(u)\cup \bigcup \left\{\,X\subseteq \anc_{F^K}(u)\setminus \{u\}\ \mid\ \bucket[u,X,i]\neq \emptyset \textrm{ for some } i\in \{1,\ldots,d\}\,\right\};\\
  \height(u) & = & 1 + \max \{\,i\ \mid\ \bucket[u,X,i]\neq \emptyset \textrm{ for some } X\subseteq \anc_{F^K}(u)\setminus \{u\}\,\};
 \end{eqnarray*}
 where in the last formula we assume that $\max$ evaluates to $0$ if it ranges over the empty set.
 
 Next, we add $u$ to the bucket $\bucket[\pnt_{F^K}(u),\SReach(u),\height(u)]$. Finally, we make $\tp(u)$ point to the cell $\bucket[\pnt_{F^K}(u),\SReach(u),\height(u)]$, which in turn should point to $\pnt_{F^K}(u)$.
 This completes the necessary modifications.
\end{proof}

\begin{algorithm}\label{alg:extend}
    
    \SetKwInOut{Input}{Input}
    \SetKwInOut{Result}{Result}
    \SetKw{Break}{break}
    \SetKw{Raise}{raise exception:\ }

    \vskip 0.2cm
    
    \Input{Graph $H[K]$ and an elimination forest $F^K$ of $H[K]$ of height at most $d$ such that $R$ is attachable to $(H,F^K)$}
    \Result{Structure $\Dt[R,H/K]$ is updated to $\Dt[\wh{F},H]$, where $\wh{F}$ is the extension of $F^K$ via $R$}
    
    \vskip 0.3cm
    
    \ForEach{\normalfont{bucket} $\bucket[\bot,X,i]\in \appendices$}{
        $m\gets$ deepest vertex of $X$ in $F^K$, or $\bot$ if $X=\emptyset$\\
        rename $\bucket[\bot,X,i]$ to $\bucket[m,X,i]$\\
        rename $\parent[\bot,X,i]$ to $\parent[m,X,i]$\\
        $\parent[m,X,i]\gets m$
    }
    
    free$(\appendices)$\\
   
    \ForEach{ $u\in K$ \normalfont{ in the bottom-up traversal of } $F^K$} {
       $\Up(u)\gets \anc_{F^K}(u)\cap N_{H[K]}(u)$\\
       $\SReach(u)\gets \Up(u)$\\
       $\height(u)\gets 1$\\
       \For{$i\gets 1$ \normalfont{\textbf{to}} $d$}{
         \ForEach{$X\subseteq \anc_{F^K}(u)\setminus \{u\}$}{
            \If{$\bucket[u,X,i]$\normalfont{ is not empty}}{
              $\SReach(u)\gets \SReach(u)\cup X$\\
              $\height(u)\gets 1 + i$
            }
         }
       }
       $\tp(u)\gets$ pointer to $\parent[\pnt_{F^K}(u),\SReach(u),\height(u)]$\\
       $\bucket[\pnt_{F^K}(u),\SReach(u),\height(u)].\mathtt{append}(u)$\\
    }
    
    \caption{method $\mathtt{extend}(H[K],F^K)$}
\end{algorithm}

\paragraph*{Updates.} We are ready now show that the data structure can be maintained under edge insertions and removals. The idea is to first use Lemma~\ref{lem:extract-core} to extract a small core $K$ 
that contains both endpoints of the updated edge,
then run the static algorithm of Lemma~\ref{lem:static} to compute a recursively optimal elimination forest $F^K$ of the updated $G[K]$, and finally apply the extension operation, implemented using the procedures provided by Lemmas~\ref{lem:trim-implement} and~\ref{lem:extend-implement}. The correctness of this approach is asserted by Lemma~\ref{lem:core-uberlemma}.

\begin{lemma}\label{lem:update}
 Suppose that we have access to a data structure $\Dt[F,G]$ that stores a recursively optimal elimination forest $F$ of a graph $G$ such that $F$ has height at most $d$.
 Let $H$ be a graph obtained from $G$ by inserting or removing a single edge, given as input. 
 Then one can in $2^{\Oh(d^2)}$ time either conclude that $\td(H)>d$, or modify 
$\Dt[F,G]$ to obtain a data structure $\Dt[\wh{F},H]$ where $\wh{F}$ is some recursively 
optimal elimination forest of $H$ of height at most $d$.
\end{lemma}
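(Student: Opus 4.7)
The plan is to combine the combinatorial machinery from Lemma~\ref{lem:core-uberlemma} with the data structure operations from Lemmas~\ref{lem:extract-core},~\ref{lem:trim-implement}, and~\ref{lem:extend-implement}, and to use the static algorithm of Lemma~\ref{lem:static} as a black box on a small core. Let $uv$ be the inserted or removed edge and let $H$ be the updated graph. First, I invoke the $\mathtt{core}(\{u,v\}, d+2)$ method provided by Lemma~\ref{lem:extract-core}, which in $d^{\Oh(d)}$ time produces a $(d+2)$-core $K$ of $(G,F)$ with $\{u,v\} \subseteq K$ and $|K| \leq d^{\Oh(d)}$, together with the graph $G[K]$. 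From $G[K]$ I obtain $H[K]$ in $\Oh(1)$ time by adding or deleting the edge $uv$.

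Next, I run the static algorithm of Lemma~\ref{lem:static} on $H[K]$ with target treedepth $d$. Since $|K| = d^{\Oh(d)}$, this takes time $2^{\Oh(d^2)} \cdot |K|^{\Oh(1)} = 2^{\Oh(d^2)}$. If this algorithm reports $\td(H[K]) > d$, then since $H[K]$ is an induced subgraph of $H$ we have $\td(H) > d$ and we return this answer. Otherwise, we obtain a recursively optimal elimination forest $F^K$ of $H[K]$ of height at most~$d$.

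Now I update the representation. Observe that $H$ is a $(K,1)$-restricted augmentation of $G$: the symmetric difference of $E(G)$ and $E(H)$ consists of the single edge $uv$ which lies in $\binom{K}{2}$, and at most one edge is deleted. Hence Lemma~\ref{lem:core-uberlemma}, applied with $\ell = 1$, guarantees that $R \coloneqq F - K$ is attachable to $(H, F^K)$, and that the extension $\wh{F}$ of $F^K$ via $R$ is a recursively optimal elimination forest of $H$ of height at most $d$. To actually perform this extension in the data structure I first call $\mathtt{trim}(K)$ from Lemma~\ref{lem:trim-implement}, which in time $\Oh(2^d \cdot d \cdot |K|) = d^{\Oh(d)}$ converts $\Dt[F,G]$ into the partial structure $\Dt[R, G/K]$. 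Since $G$ and $H$ agree on all edges not in $\binom{K}{2}$, and the records stored in the partial structure concern only vertices outside $K$ and their neighborhoods in $\desc_R(\cdot)$ (which are disjoint from $K$), the structures $\Dt[R, G/K]$ and $\Dt[R, H/K]$ coincide bit for bit; no work is needed to switch between them. Finally, I call $\mathtt{extend}(H[K], F^K)$ from Lemma~\ref{lem:extend-implement}, which in time $2^{\Oh(d)} \cdot |K|^{\Oh(1)} = 2^{\Oh(d \log d)}$ turns $\Dt[R, H/K]$ into $\Dt[\wh{F}, H]$.

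Summing the running times, the bottleneck is the invocation of the static algorithm in the second step, contributing $2^{\Oh(d^2)}$, while all other steps run in $d^{\Oh(d)}$ time. This yields the claimed bound of $2^{\Oh(d^2)}$. I expect the only delicate point in the write-up to be verifying that the equality $\Dt[R, G/K] = \Dt[R, H/K]$ really holds after trimming, i.e.\ that all the sets $\SReach_{R,G}(w)$, $\Up_{R,G}(w)$, $\height(R_w)$, and the buckets rooted at vertices $w \notin K$ are unaffected by the single edge modification within $K$. This follows routinely from the fact that every descendant of such a $w$ in $R$ lies outside $K$ and hence is not incident to the modified edge.
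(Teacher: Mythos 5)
Your proposal is correct and follows essentially the same route as the paper's proof: extract a small core containing $u$ and $v$, recompute a recursively optimal elimination forest of $H[K]$ with the static algorithm of Lemma~\ref{lem:static}, and then realize the extension via $\mathtt{trim}$ and $\mathtt{extend}$, with correctness supplied by Lemma~\ref{lem:core-uberlemma}. The only (harmless) deviation is that you use a $(d+2)$-core and $\ell=1$ uniformly for both insertion and deletion, whereas the paper uses a $(d+1)$-core with $\ell=0$ for insertion; this changes nothing asymptotically.
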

\begin{proof}
  We only present the implementation of edge insertion, and at the end we discuss the tiny modifications needed for the implementation of edge removal.
  The corresponding method $\ins(uv)$, where $uv$ is the new edge, is presented using pseudocode as Algorithm~\ref{alg:insert}.
  We now explain the consecutive steps.
  Let $H\coloneqq G+uv$, that is, $H$ is obtained from $G$ by adding the edge~$uv$.
 
 First, we apply the method $\mathtt{core}(d+1,\{u,v\})$ provided by Lemma~\ref{lem:extract-core} to construct a $(d+1)$-core $K$ of $(G,F)$ that contains both endpoints of $uv$.
 Also, we construct the graph $G[K]$. Note that $|K|\leq d^{\Oh(d)}$ and this 
step takes $d^{\Oh(d)}$ time.
 
 We add the edge $uv$ to $G[K]$, thus obtaining the graph $H[K]$.
 Note that, since $\td(G)\leq d$, we have $\td(H)\leq d+1$, hence also $\td(H[K])\leq d+1$.
 We apply the algorithm of Lemma~\ref{lem:static} to $H[K]$ to compute a recursively optimal elimination forest $F^K$ of $H[K]$.
 Note that this step takes $2^{\Oh(d^2)}\cdot |K|^{\Oh(1)}=2^{\Oh(d^2)}$ time.
 If $\height(F^K)>d$ then $\td(H)\geq \td(H[K])=\height(F^K)>d$, hence we may terminate the algorithm and conclude that $\td(H)>d$.
 So from now on we assume that $\height(F^K)\leq d$.
 
 As $H$ is a $(K,0)$-restricted augmentation of $G$, from Lemma~\ref{lem:core-uberlemma} we infer that $F-K$ is attachable to $(H,F^K)$ and if $\wh{F}$ is the extension of $F^K$ via $F-K$, then $\wh{F}$ is a recursively optimal elimination forest of $H$ of height at most $d$. Therefore, it remains to update the data structure $\Dt[F,G]$ to the structure $\Dt[\wh{F},H]$. This can be done in time $2^{\Oh(d)}\cdot |K|^{\Oh(1)}=d^{\Oh(d)}$ by first using Lemma~\ref{lem:trim-implement} to modify $\Dt[F,G]$ to $\Dt[F-K,G/K]$, which is equal to $\Dt[F-K,H/K]$, and then using Lemma~\ref{lem:extend-implement} to modify $\Dt[F-K,H/K]$ to $\Dt[\wh{F},H]$.
 
 This concludes the implementation of an edge insertion.
 For the implementation of an edge removal, we follow exactly the same strategy with the exception that we compute a $(d+2)$-core $K$ instead of a $(d+1)$ core, 
 because $G-uv$ is then a $(K,1)$-restricted augmentation of $G$ instead of a $(K,0)$-restricted augmentation. Also, there is no need of checking whether the height of $F^K$ is larger than $d$, 
 because it is bounded by $\td((G-uv)[K])\leq \td(G-uv)\leq \td(G)\leq d$.
\end{proof}

Lemma~\ref{lem:update} concludes the proof of Theorem~\ref{thm:td-data}.

\begin{algorithm}\label{alg:insert}
    
    \SetKwInOut{Input}{Input}
    \SetKwInOut{Result}{Result}
    \SetKw{Break}{break}
    \SetKw{Raise}{raise exception:\ }

    \vskip 0.2cm
    
    \Input{A new edge $uv\notin E(G)$}
    \Result{Structure $\Dt[F,G]$ is updated to $\Dt[\wh{F},G+uv]$, where $\wh{F}$ is a recursively optimal elimination forest of $G+uv$ of depth at most $d$}
    
    \vskip 0.3cm
    
    $K\gets \mathtt{core}(\{u,v\},d+1)$\\
    construct $G[K]+uv$\\
    $F^K \gets$ recursively optimal elimination forest of $G[K]+uv$ obtained using Lemma~\ref{lem:static}\\
    \If{$\height(F^K)>d$} {
      \Raise $\td(G+uv)>d$
    }

    $\mathtt{trim}(K)$\\
    $\mathtt{extend}(G[K]+uv,F^K)$
    
    \caption{method $\mathtt{insert}(uv)$}
\end{algorithm}



\section{Dynamic dynamic programming}\label{sec:dp}

In this section we show how the data structure presented in Section~\ref{sec:data-structure} can be enriched so that together with an elimination forest, it also maintains a run of an auxiliary dynamic program on this forest.
For this, we need to understand dynamic programming on elimination forests in an abstract way, which we do using a formalism of {\em{configuration schemes}}.
This formalism is based on the classic algebraic approach to graph decompositions, and in particular on the idea of recognizing finite-state properties of graphs of bounded treewidth through homomorphisms into finite algebras.
This direction has been intensively developed in the 90s, see the book of Courcelle and Engelfriet~\cite{CourcelleE12} for a broad introduction.
The definitional layer that we use is closest to the one used in the work of Bodlaender et al.~\cite{BodlaenderFLPST16}.

\paragraph*{Boundaried graphs.} 
We shall assume that all vertices in all the considered graphs come from some countable reservoir of vertices $\Omega$, say $\Omega=\N$.
A {\em{boundaried graph}} is a graph $G$ together with a subset of vertices $X\subseteq V(G)$, called the {\em{boundary}}.
On boundaried graphs we can introduce two basic operators:
\begin{itemize}[nosep]
 \item For a boundaried graph $(G,X)$ and $x\in X$, the operator $\forget$ yields 
       $$\forget((G,X),x)\coloneqq (G,X\setminus \{x\}).$$
 \item For two boundaried graphs $(G_1,X_1)$ and $(G_2,X_2)$ such that $V(G_1)\cap V(G_2)=X_1\cap X_2$, the {\em{union}} operator $\oplus$ yields
       $$(G_1,X_1)\oplus (G_2,X_2)\coloneqq ((V(G_1)\cup V(G_2),E(G_1)\cup E(G_2)),X_1\cup X_2).$$
       In other words, we take the disjoint union of $G_1$ and $G_2$ and, for each $x\in X_1\cap X_2$, we fuse the copy of $x$
       from $G_1$ with the copy of $x$ from $G_2$. The boundary of the output graph is $X_1\cup X_2$. 
\end{itemize}
Note that the operator $\forget((G,X),x)$ is defined only if $x\in X$, while the operator $(G_1,X_1)\oplus (G_2,X_2)$ is defined only if $V(G_1)\cap V(G_2)=X_1\cap X_2$. 
It is useful to think of this as of an abstract algebra on boundaried graphs, endowed with operations $\forget$ and $\oplus$.

\paragraph*{Configuration schemes.} 
We now introduce the language of {\em{configuration schemes}}, which intuitively is a way of assigning each boundaried graph $(G,X)$ a set of {\em{configurations}} $\conf(G,X)$
which contains all the essential information about the behaviour of $(G,X)$ with respect to some computational problem.

A {\em{configuration scheme}} is a pair of mappings $(\Conf,\conf)$ such that
with any set of vertices $X\subseteq\Omega$ we may associate a set of {\em{configurations}} $\Conf(X)$,
and with every boundaried graph $(G,X)$ we may associate a subset of configurations $\conf(G,X)\subseteq \Conf(X)$ {\em{realized}} by $(G,X)$. For an example what this might be, see the first paragraph of the proof of Lemma~\ref{lem:schem-kpath}.
We will need several properties of configuration schemes, the first of which is composability.

We will say that such a configuration scheme $(\Conf,\conf)$ is {\em{composable}} if the following conditions hold.
\begin{itemize}[nosep]
 \item For a boundaried graph $(G,X)$ and $x\in X$, $\conf(\forget((G,X),x))$ is uniquely determined by $\conf(G,X)$ and $x$.
 \item For boundaried graphs $(G_1,X_1)$ and $(G_2,X_2)$ whose union is defined, $\conf((G_1,X_1)\oplus (G_2,X_2))$ is uniquely determined by $\conf(G_1,X_1)$ and $\conf(G_2,X_2)$.
\end{itemize}
Letting $\Xi\coloneqq \biguplus_{X\subseteq \Omega} 2^{\Conf(X)}$, the above conditions are equivalent to saying that there exist operators 
$$\forget\colon \Xi \times \Omega \to \Xi \qquad\textrm{and}\qquad\oplus\colon \Xi \times \Xi \to \Xi$$
such that the following assertions hold:
\begin{itemize}[nosep]
 \item Operator $\forget(\cdot,\cdot)$ is defined only on pairs of the form $(C,x)$ such that $C\in 2^{\Conf(X)}$ and $x\in X$ for some $X\subseteq \Omega$; then $\forget(C,x) \in 2^{\Conf(X \setminus \{ x \} )}$.
 \item For every boundaried graph $(G,X)$ and $x\in X$, we have $$\conf(\forget((G,X),x))=\forget(\conf(G,X),x).$$
 \item If $C_1 \in 2^{\Conf(X_1)}$ and $C_2 \in 2^{\Conf(X_2)}$ for some $X_1,X_2\subseteq \Omega$, then $C_1 \oplus C_2 \in 2^{\Conf(X_1 \cup X_2)}$. 
 \item For all boundaried graphs $(G_1,X_1)$ and $(G_2,X_2)$ whose union is defined, we have $$\conf((G_1,X_1)\oplus (G_2,X_2))=\conf(G_1,X_1)\oplus \conf(G_2,X_2).$$
\end{itemize}
In other words, $\conf$ is a homomorphism from the algebra of boundaried graphs endowed with operations $\forget$ and $\oplus$ to the algebra on $\Xi$ with the same operations.
Note that $\oplus$ is commutative and associative on boundaried graphs, hence it should also have these properties on sets of configurations. 

We will say that $(\Conf,\conf)$ is {\em{efficiently composable}} if there is a non-decreasing computable function $\zeta\colon \N\to \N$ such that
\begin{itemize}[nosep]
 \item For every $X\subseteq \Omega$ we have $|\Conf(X)|\leq \zeta(|X|)$ and given $X$, one can compute $\Conf(X)$ in time $\zeta(|X|)^{\Oh(1)}$.
 \item Given a boundaried graph $(G,X)$ with $|V(G)| \leq 2$ and $X = V(G)$, one can compute $\conf(G,X)$ in constant time.
 \item Given $X\subseteq \Omega$, $x\in X$, and $C\subseteq \Conf(X)$, one can compute $\forget(C,x)$ in time~$\zeta(|X|)^{\Oh(1)}$.
 \item Given $X_1,X_2\subseteq \Omega$, $C_1\subseteq \Conf(X_1)$, and $C_2\subseteq \Conf(X_2)$, one can compute  $C_1\oplus C_2$ in time $\zeta(|X|)^{\Oh(1)}$.
\end{itemize}
Finally, we will say that $(\Conf,\conf)$ is {\em{idempotent}} if there is a non-decreasing computable function $\tau\colon \N\to \N$ such that the following condition holds:
\begin{itemize}[nosep]
 \item Consider any $X\subseteq \Omega$ and a multiset $\Mm$ whose elements are subsets of $\Conf(X)$. 
       Suppose in $\Mm$ there is $C\subseteq \Conf(X)$ such that for each $c\in C$, there are at least $\tau(|X|)$ elements $D\in \Mm\setminus \{C\}$ such that $c\in D$.
       Then $$\bigoplus_{D\in \Mm\setminus \{C\}} D\ =\ \bigoplus_{D\in \Mm} D.$$
\end{itemize}
A configuration scheme $(\Conf,\conf)$ that is both efficiently composable and idempotent shall be called {\em{tractable}}. 
The corresponding functions $\zeta$ and $\tau$ shall be called the {\em{witnesses}} of tractability of $(\Conf,\conf)$.

\paragraph*{Recognizing properties.}
A {\em{graph property}} is simply a set $\Pi$ consisting of graphs, interpreted as graphs that admit $\Pi$.
A graph property $\Pi$ is {\em{recognized}} by a configuration scheme $(\Conf,\conf)$ if there exists a subset of {\em{final}} configurations $F\subseteq \Conf(\emptyset)$ such that for every graph $G$, we have 
$$G\in \Pi\qquad \textrm{if and only if}\qquad \conf(G,\emptyset)\cap F\neq \emptyset.$$

As mentioned before, the formalism introduced above should be standard for a reader familiar with the work on graph algebras and recognition of $\MSO$-definable languages of graphs.
Recall here that $\MSO_2$ is a logic on graphs that extends the standard first order logic $\FO$ by allowing quantification over subsets of vertices and over subsets of edges.
A graph property $\Pi$ is {\em{$\MSO_2$-definable}} if there exists a sentence $\varphi$ of $\MSO_2$ such that for every graph $G$, $\varphi$ holds in $G$ if and only if $G\in \Pi$.
Then the classic connection between graph algebras and $\MSO_2$ yields the following statement.
As this is not the main focus of our work, we only give a sketch of the proof, but both the statement and the tools are standard.

\begin{lemma}\label{lem:mso-tractable}
 Every $\MSO_2$-definable graph property is tractable.
\end{lemma}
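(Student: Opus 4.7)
The plan is to realize $\Pi$ via the standard finite-state interpretation of $\MSO_2$ on boundaried graphs. Fix an $\MSO_2$-sentence $\varphi$ defining~$\Pi$ and let $q$ be its quantifier rank. For a boundary $X\subseteq\Omega$, let $\Conf(X)$ be the (finite) set of $\MSO_2$-types of quantifier rank~$q$ of boundaried graphs whose $|X|$ distinguished vertices are labeled by~$X$; that is, the equivalence classes under the relation of satisfying the same $\MSO_2$-formulas of quantifier rank at most~$q$ having $|X|$ free first-order variables. For a boundaried graph $(G,X)$, set $\conf(G,X)$ to be the singleton containing the $\MSO_2$-type of $(G,X)$, and let the set of final configurations consist of those singletons whose type entails~$\varphi$. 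By construction, this scheme recognizes~$\Pi$, so it remains to verify tractability.

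Composability and efficient composability follow from the classical Feferman--Vaught-style composition lemma for $\MSO_2$ on graph algebras (see, e.g.,~\cite[Chapter~5]{CourcelleE12}): the type of $(G_1,X_1)\oplus(G_2,X_2)$ is determined by, and effectively computable from, the types of the two components together with the overlap pattern $X_1\cap X_2$, and likewise the type of $\forget((G,X),x)$ is determined by the type of $(G,X)$ and~$x$. The number of $\MSO_2$-types of quantifier rank~$q$ over $|X|$ free variables is bounded by an elementary recursive function $\zeta(|X|)$, and the transition tables for $\oplus$ and $\forget$ can be precomputed by enumerating representative formulas and evaluating $\MSO_2$-formulas syntactically; the base case of $|V(G)|\leq 2$ is handled by direct model checking. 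All required operations therefore run in time $\zeta(|X|)^{\Oh(1)}$.

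The main technical step is idempotence, which I would obtain from an Ehrenfeucht--Fra\"iss\'e argument for $\MSO_2$. The goal is to exhibit a threshold $\tau=\tau(|X|)$, computable from $q$ and $|X|$, such that for every multiset $\Mm$ of singleton configurations over~$X$ and every $C=\{c\}\in\Mm$, if $c$ occurs in at least $\tau$ elements of $\Mm\setminus\{C\}$, then $\bigoplus_{D\in\Mm\setminus\{C\}}D=\bigoplus_{D\in\Mm}D$. I would prove this by describing a Duplicator strategy in the $q$-round $\MSO_2$-Ehrenfeucht--Fra\"iss\'e game played between the $\oplus$-sum of $\Mm$ and the $\oplus$-sum of $\Mm\setminus\{C\}$: whenever Spoiler picks an element, edge, or element of a quantified set inside a twin copy with type~$c$, Duplicator answers in a fresh twin copy with the same type, transferring the pebble via the component-isomorphism while keeping the labeled boundary fixed. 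Choosing $\tau(|X|)$ to exceed $q$ by a polynomial factor (to accommodate the set-quantifiers of $\MSO_2$, which may range across twin copies) suffices. The principal obstacle will be handling $\MSO_2$ set-quantifiers whose interpretations intersect both the removed summand~$C$ and its twins; this is dealt with by the standard symmetrisation trick in which Duplicator replicates the restriction of Spoiler's set on all as-yet-unused twin copies simultaneously, preserving the game invariant since $\oplus$ treats twin copies symmetrically. Once the threshold is established, idempotence follows directly, and combined with the previous paragraph this gives tractability.
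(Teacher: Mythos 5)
Your proposal follows essentially the same route as the paper's sketch: configurations are rank-$q$ $\MSO_2$-types over the boundary (the paper stores the type as the set of satisfied representative sentences from $\Types^q(X)$, while you store it as a singleton equivalence class --- an immaterial representational difference), with composability coming from the standard composition theorem and idempotence from an Ehrenfeucht--Fra\"iss\'e game argument, which is exactly what the paper invokes as a standard fact. One small quibble: your claim that a threshold $\tau$ exceeding $q$ by only a polynomial factor handles the set quantifiers is optimistic --- for $\MSO_2$ the correct threshold generally depends on the number of lower-rank types and can be much larger --- but since the lemma only needs \emph{some} computable non-decreasing witness $\tau$, this does not affect correctness.
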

\begin{proof}[Sketch]
 Let us first recall some basic definitions and facts from logic.
 The {\em{quantifier rank}} of a formula is the maximum number of nested quantifiers in it.
 For $X\subseteq \Omega$, let $\MSO_2[X]$ be the standard $\MSO_2$ logic on graphs where in the signature, apart from the standard relations for encoding graphs, we also have each $x\in X$ introduced as a constant.
 It is known that for every $q\in \N$ and $X$, there is only a finite number of pairwise non-equivalent $\MSO_2[X]$-statements of quantifier rank at most $q$.
 Moreover, given $q$ and $X$, one can compute a set $\Types^q(X)$ consisting of one $\MSO_2[X]$-statement of quantifier rank at most $q$ per each class of equivalence. Thus, $|\Types^q(X)|$ is bounded by a computable function of $q$ and $X$.
 Then, with every boundaried graph $(G,X)$ we can associate its {\em{$q$-type}} $\type^q(G,X)\subseteq \Types^q(X)$, which just comprises all those statements $\psi\in \Types^q(X)$ that are satisfied in $(G,X)$.
 
 We move to the sketch of the proof.
 Let $\Pi$ be the property in question, and let $\varphi$ be the $\MSO_2$-statement defining $\Pi$. Let $q$ be the quantifier rank of $\varphi$.
 We define the following configuration scheme $(\Conf,\conf)$:
 \begin{itemize}[nosep]
  \item for $X\subseteq \Omega$, we let $\Conf(X)\coloneqq \Types^q(X)$; and
  \item for a boundaried graph $(G,X)$, we let $\conf(G,X)\coloneqq \type^q(G,X)$.
 \end{itemize}
 The fact that this configuration scheme is efficiently composable and idempotent is a standard fact about $\MSO_2$ logic, which can be proved e.g. using Ehrenfeucht-Fra\"isse games.
 By taking the set of final configurations to be $F\coloneqq \{\varphi\}$, we see that this configuration scheme recognizes $\Pi$.
\end{proof}

Lemma~\ref{lem:mso-tractable} gives tractability of any $\MSO_2$-definable graph property, however the witnesses $\zeta,\tau$ of this tractability are non-elementary functions.
We now give an explicit configuration scheme for the property of our interest: containing a $k$-path.

\begin{lemma}\label{lem:schem-kpath}
 For $k\in \N$, let $\Pi_k$ be the graph property comprising all graphs that contain a simple path on $k$ vertices.
 Then $\Pi_k$ is recognized by a tractable configuration scheme with witnesses $\zeta(x)=(k+1)\cdot 2^{x+1}\cdot x!$ and $\tau(x)=x+2$.
\end{lemma}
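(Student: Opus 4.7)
The configuration scheme follows the sketch from Section~\ref{secov:data}. For a boundary $X \subseteq \Omega$, define $\Conf(X)$ to be the set of all pairs $(F, j)$ where $F$ is a linear forest on the vertex set $X \cup \{s, t\}$ (with $s, t$ two fresh symbols outside $X$) and $j \in \{0, 1, \ldots, k-1\}$. A configuration $(F, j)$ lies in $\conf(G, X)$ when there is a family $\{P_e : e \in E(F)\}$ of pairwise internally vertex-disjoint paths in $G$ such that $P_e$'s endpoints match those of $e$ (with the convention that $s$ and $t$ may be substituted by any two distinct vertices of $V(G) \setminus X$), and such that $\sum_e |E(P_e)| = j$. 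For recognition of $\Pi_k$, the sole final configuration is $(\{st\}, k-1) \in \Conf(\emptyset)$: it is realizable in $(G, \emptyset)$ exactly when $G$ contains a simple path on $k$ vertices.

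The bound $|\Conf(X)| \leq (k+1) \cdot 2^{|X|+1} \cdot |X|!$ follows by a direct count of linear forests on $|X|+2$ labeled vertices, using a permutation-plus-cut-points encoding and taking into account the degree constraints at the endpoint markers $s, t$ (which, being free endpoints, must have degree at most $1$). The set $\Conf(X)$ can be enumerated in time polynomial in its size, and the base case $|V(G)| \leq 2$ with $X = V(G)$ is handled in constant time by inspection.

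Efficient composability is established by defining the operators $\forget$ and $\oplus$ directly on configuration sets. Given $(F, j) \in C$, $\forget(C, x)$ emits $(F - x, j)$ when $\deg_F(x) = 0$; $(F - x + \{yz\}, j)$ when $\deg_F(x) = 2$ with $F$-neighbors $y, z$; and $(F - x + \{ys'\}, j)$ when $\deg_F(x) = 1$ with $F$-neighbor $y$, for any still-available marker $s' \in \{s, t\}$. The operator $C_1 \oplus C_2$ iterates over pairs $(F_1, j_1) \in C_1$ and $(F_2, j_2) \in C_2$, identifies the shared boundary vertices in $X_1 \cap X_2$, tries the valid disambiguations of the $s, t$ markers across the two sides, and retains the resulting pair $(F_1 \cup F_2, j_1 + j_2)$ whenever $F_1 \cup F_2$ remains a linear forest and $j_1 + j_2 \leq k - 1$. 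Both operators run in time polynomial in $\zeta(|X|)$, and the correspondence with the intended semantics follows since any realization of a composite configuration restricts along shared boundary vertices to realizations of the individual configurations, and conversely individual realizations concatenate into a composite realization.

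The main obstacle is to verify idempotence with the tight threshold $\tau(x) = x + 2$. Fix a multiset $\Mm$ of subsets of $\Conf(X)$ and a distinguished member $C \in \Mm$ such that every $c \in C$ appears in at least $x + 2$ other members of $\Mm$. The inclusion $\bigoplus_{D \in \Mm \setminus \{C\}} D \subseteq \bigoplus_{D \in \Mm} D$ is immediate, because the trivial configuration $(F_\emptyset, 0)$, with $F_\emptyset$ the edgeless linear forest, lies in every configuration set and is an identity for $\oplus$. For the reverse inclusion, fix any $(F, j) \in \bigoplus_{D \in \Mm} D$ witnessed by a choice function $f \colon \Mm \to \Conf(X)$ with $f(D) \in D$ for all $D$. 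The key observation is that the number of $D \in \Mm$ for which $f(D)$ is non-trivial (i.e., has at least one edge in its linear forest) is bounded by $|E(F)| \leq |X| + 1 < x + 2$, since each such non-trivial contribution supplies at least one edge to the combined linear forest $F$ on $|X| + 2$ vertices. Hence, among the at least $x + 2$ members of $\Mm \setminus \{C\}$ that contain $f(C)$, at least one member $D^*$ is currently playing the trivial role under $f$. Swapping the values $f(C)$ and $f(D^*)$ preserves the multiset $\{f(D)\}_{D \in \Mm}$ and hence the composite $(F, j)$, but now $f(C)$ is trivial. Removing $C$ from $\Mm$ then leaves the composite unchanged, establishing $(F, j) \in \bigoplus_{D \in \Mm \setminus \{C\}} D$.
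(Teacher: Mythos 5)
Your configuration scheme follows the paper's approach closely in spirit — the same linear-forest-plus-counter configurations, the same counting bound on linear forests, and the same swapping trick in the idempotence argument. However, there is a genuine gap. You omit the constraint, present in the paper's definition of $\Conf(X)$, that an edgeless configuration must have counter $0$ (i.e., $E(F)=\emptyset$ forces $j=0$). This constraint is load-bearing for idempotence: the definition of idempotence quantifies over \emph{arbitrary} multisets $\Mm$ of subsets of $\Conf(X)$, not merely over sets of the form $\conf(G,X)$. Without the constraint, a member $D^*\in\Mm$ may contain $(F_\emptyset, j)$ for some $j>0$, and the choice function $f$ may select it. After your swap, $f'(C)=f(D^*)=(F_\emptyset,j)$ with $j>0$, which is \emph{not} an identity for $\oplus$; removing $C$ then strictly decreases the counter in the composite, so $(F,j)$ need not survive in $\bigoplus_{D\in\Mm\setminus\{C\}}D$. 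For the same reason your justification for the easy inclusion $\bigoplus_{\Mm\setminus\{C\}}\subseteq\bigoplus_{\Mm}$ is not valid as stated: $(F_\emptyset,0)$ need not lie in an arbitrary $C\subseteq\Conf(X)$ (the paper sidesteps this by explicitly including $C_1\cup C_2$ as terms of $C_1\oplus C_2$).

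There is a second, milder issue in your description of $\oplus$. You say you "retain $(F_1\cup F_2, j_1+j_2)$ whenever $F_1\cup F_2$ remains a linear forest," but you do not require $E(F_1)\cap E(F_2)=\emptyset$. If $F_1$ and $F_2$ share an edge $e$, the union is still a linear forest, but the sum $j_1+j_2$ double-counts the corresponding path, producing a configuration that need not be realizable in $(G_1,X_1)\oplus(G_2,X_2)$ — concretely, if $X_1=X_2=\emptyset$, $G_1$ is a path of length $3$ and $G_2$ a disjoint path of length $5$, then $(\{st\},3)\oplus(\{st\},5)$ would spuriously yield $(\{st\},8)$, which is not in $\conf$ of the disjoint union. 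Your phrase "tries the valid disambiguations of the $s,t$ markers" may be intended to exclude such cases, but it is too vague to carry the argument; the paper makes the disjointness requirement an explicit part of the definition of mergeability. Finally, your extra $\deg_F(x)=1$ clause in $\forget$ is redundant (the paper handles this via the degree-$2$ case by routing through an unused marker) but not incorrect. Fixing the first issue is essential; the second requires a precise statement of the merge rule.
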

\begin{proof}
 We first define the configuration scheme.
 Let $s,t$ be two new special vertices, not belonging to $\Omega$; intuitively, in our scheme these will be markers representing the beginning and the end of the constructed path.
 Recall that a {\em{linear forest}} is an acyclic undirected graph whose every component is a path.
 For $X\subseteq \Omega$, we define the set of configurations $\Conf(X)$ as follows: $\Conf(X)$ consists of all pairs $(H,i)$ such that:
 \begin{itemize}[nosep]
  \item $H$ is a linear forest with vertex set $X\cup \{s,t\}$, where the degrees of $s$ and $t$ are at most $1$; and
  \item $i\in \{0,1,\ldots,k-1,\infty\}$.
 \end{itemize}
 Moreover, we require that if $E(H)=\emptyset$, then $i=0$.
 
It is straightforward to see that the number of different linear forests with vertex set of size $p$ is at most $2^{p-1}\cdot p!$: a linear forest can be chosen by 
selecting a permutation $\pi=(u_1,\ldots,u_p)$ of the vertex set ($p!$~choices), and then deciding, for each $i\in \{1,\ldots,p-1\}$, whether $u_i$ and $u_{i+1}$ are adjacent ($2^{p-1}$ choices).
Therefore, we have $|\Conf(X)|\leq (k+1)\cdot 2^{|X|+1}\cdot |X|!=\zeta(|X|)$, as required.

Now, for a boundaried graph $(G,X)$ and a configuration $(H,i)\in \Conf(X)$, we shall say that $(H,i)$ is {\em{realized}} in $(G,X)$ if there exists a family of paths $\{P_e\colon e\in E(H)\}$ in $G$
satisfying the following conditions:
\begin{itemize}[nosep]
 \item For each $xy\in E(H)$, $P_e$ is a path whose one endpoint is $x$, provided $x\in X$, and the second endpoint is $y$, provided $y\in X$. If $x$ or $y$ (or both) belongs to $\{s,t\}$, then the corresponding endpoint of $P_e$ can be any vertex of $V(G)$. Moreover, $V(P_e)\cap X=\{x,y\}\cap X$.
 \item For all $e,e'\in E(H)$, the paths $P_e$ and $P_{e'}$ are vertex-disjoint, apart from possibly sharing an endpoint in case $e$ and $e'$ share an endpoint.
 \item The total number of edges on paths $P_e$ is equal to $i$ if $i<\infty$, or is at least $k$ if $i=\infty$.
\end{itemize}
For a boundaried graph $(G,X)$, let $\conf(G,X)\subseteq \Conf(X)$ be the set of configurations realized in $(G,X)$.

Observe that a graph $G$ contains a simple path on $k$ vertices if and only if $\conf(G,\emptyset)$ contains configuration $(H_0,k-1)$, where $H_0$ is the graph on vertex set $\{s,t\}$ with the edge $st$ present.
Thus, by setting $F\coloneqq \{(H_0,k-1)\}$ we see that the scheme $(\Conf,\conf)$ recognizes $\Pi_k$. It remains to show that $(\Conf,\conf)$ is suitably efficiently composable and idempotent.

\begin{claim}\label{cl:composable}
 The configuration scheme $(\Conf,\conf)$ is efficiently composable with witness $\zeta(x)=(k+1)\cdot 2^{x+1}\cdot x!$.
\end{claim}
\begin{clproof}
 The bound $|\Conf(X)|\leq\zeta(|X|)$ has already been shown. 
 Also, for every boundaried graph $(G, X)$ such that $|V(G)|\leq 2$ and $X = V(G)$, the value of $\conf(G, X)$ can be computed in constant time trivially by the definition.
 Thus, to prove the claim, it remains to define suitable operators $\forget$ and $\oplus$, and to show that they are computable in time $\zeta(|X|)^{\Oh(1)}$.
 
 We start with defining the $\forget(\cdot,\cdot)$ operator.
 For any boundaried graph $(G, X)$ and $x \in X$ consider a configuration $(H, i) \in \conf(\forget((G,X),x))$ and a corresponding family of paths $\{P_e\colon e\in E(H)\}$ in $G$, as described in the definition of $\conf(\cdot)$. 
 There are three possibilities of how $x$ can interact with this family of paths:
 \begin{itemize}[nosep]
  \item $x$ is an internal vertex of exactly one path $P_e$. This means that $(H', i) \in \conf(G, X)$, where $e = yz$ and $H'=(V(H) \cup \{x\}, E(H) \cup \{xy, xz\} \setminus \{yz\})$.
  \item $x$ is an endpoint of some path $P_e$. From the definition this is possible if and only if the corresponding endpoint of $e$ belongs to $\{s, t\}$; say $e=ys$, the other case being analogous. 
        Again, this means that $(H', i) \in \conf(G, X)$, where $H'=(V(H) \cup \{x\}, E(H)\cup \{xy, xs\} \setminus \{ys\})$.
  \item $x$ does not belong to any of these paths. This means that $((V(H) \cup \{x\}, E(H)), i) \in \conf(G, X)$. 
 \end{itemize}
 
 These observations show a way of defining $\forget(\cdot,\cdot)$. For a linear forest $H$ and $x \in V(H)$ such that $x$ has degree $2$ in $H$, 
 let $\join(H, x)$ be the graph obtained from $H$ by adding an edge connecting the neighbors of $x$ and removing $x$ itself. Then for $C\subseteq \Conf(X)$ we can write the definition of $\forget$ as follows:
 \begin{eqnarray*}
   \forget(C, x) & \coloneqq & \forget_0(C, x) \cup \forget_2(C, x);\\
   \forget_0(C, x) & \coloneqq & \{\, (H-x, i)\ \colon\ (H, i) \in C, x \in V(H), \degree_H(x) = 0\,\};\\
   \forget_2(C, x) & \coloneqq & \{\, (\join(H, x), i)\ \colon\ (H, i) \in C, x \in V(H), \degree_H(x) = 2\,\}.
 \end{eqnarray*}
 Based on the previous observations, it is easy to see that this operator satisfies the desired property: $\conf(\forget((G,X),x))=\forget(\conf(G,X),x)$ for every boundaried graph $(G, X)$ and $x \in X$. 
 Moreover, $\forget(C, x)$ can be computed in time $\zeta(|X|)^{\Oh(1)}$ by inspecting every element of $C$ and applying the formula above; recall here that $|C|\leq \zeta(|X|)$.
 
 We now move to the discussion of the $\oplus$ operator.
 Consider any pair of boundaried graphs $(G_1, X_1)$ and $(G_2, X_2)$ whose union is defined, and let $(G, X) \coloneqq (G_1,X_1)\oplus (G_2,X_2)$. 
 Consider a configuration $(H, i) \in \conf(G, X)$ and a corresponding family of paths $\{P_e\colon e\in E(H)\}$ in $G$. 
 Note that each path $P_e$ is either a single edge, in which case it can be present both in $G_1$ and $G_2$, or it has some internal vertices. 
 In the latter case all internal vertices and edges of $P_e$ must belong either to $G_1$ or to $G_2$. They cannot belong to both, as $P_e$ is internally disjoint with $X$, and in $G$ there is no edge between $V(G_1)\setminus X_1$
 and $V(G_2)\setminus X_2$.
 In particular, this means that each path from the family $\{P_e\colon e\in E(H)\}$ is either entirely contained in $G_1$ or entirely contained in $G_2$. Based on this observation, we can define $\oplus$ as follows.
 
 Two configurations $(H_1,i_1)\in \Conf(X_1)$ and $(H_2,i_2)\in \Conf(X_2)$ are {\em{mergeable}} if $E(H_1)\cap E(H_2)=\emptyset$ and 
 the graph $H_1\oplus H_2\coloneqq (V(H_1)\cup V(H_2),E(H_1)\cup E(H_2))$ is a linear forest with $s$ and $t$ having degrees at most $1$. 
 Then for $C_1\subseteq \Conf(X_1)$ and $C_2\subseteq \Conf(X_2)$, we define operator $\oplus$ as follows
 $$C_1\oplus C_2\coloneqq C_1 \cup C_2 \cup \{\,(H_1\oplus H_2,i_1+i_2)\ \colon\ (H_1,i_1)\in C_1, (H_2,i_2)\in C_2,\textrm{ and they are mergeable}\},$$
 where the value $i_1+i_2$ is replaced with $\infty$ in case it exceeds $k-1$.
 
 Based on the previous observation, it is easy to see that this operator satisfies the desired property:
 $\conf((G_1,X_1)\oplus (G_2,X_2))=\conf(G_1,X_1)\oplus \conf(G_2,X_2)$ for all boundaried graphs $(G_1,X_1)$ and $(G_2,X_2)$ 
 whose union is defined. That the operator $\oplus$ defined above is commutative and associative is obvious.
 Finally, the value of $C_1\oplus C_2$ can be computed in time $\zeta(|X|)^{\Oh(1)}$ by inspecting all pairs $((H_1,i_1),(H_2,i_2))\in C_1 \times C_2$, 
 whose number is bounded by $\zeta(|X_1|)\zeta(|X_2|) \leq \zeta(|X|)^2$, and applying the formula above.
\end{clproof}

\begin{claim}\label{cl:idempotent}
 The configuration scheme $(\Conf,\conf)$ is idempotent with witness $\tau(x)=x+2$.
\end{claim}
\begin{clproof}
 Consider any $X\subseteq \Omega$ and a multiset $\Mm$ whose elements are subsets of $\Conf(X)$. Suppose that there is $C \in \Mm$ such that for each $c\in C$, 
 there are at least $|X|+2$ elements $D\in \Mm\setminus\{C\}$ such that $c\in D$. Let $S\coloneqq \bigoplus (\Mm\setminus \{C\})$. 
 We need to prove that $S = S \oplus C$. 
 
 That $S \subseteq S \oplus C$ is directly implied by the definition of operator $\oplus$. To prove that $S \oplus C \subseteq S$, let us consider any configuration $(H, i) \in S \oplus C$; we need to argue that in fact $(H,i)\in S$.
 
 From the definition of $\oplus$ there exists a multiset of configurations $M = \{(H_1, i_1), (H_2, i_2), \ldots, (H_m, i_m)\}$ for some  $m \leq |\Mm|$, 
 with each configuration in $M$ chosen from some distinct element of multiset $\Mm$, such that $H = \bigoplus_{j=1}^m H_j$ and either $i = \sum_{j=1}^m i_j$ in case $i<\infty$, or $\sum_{j=1}^m i_j\geq k$ in case $i=\infty$. 
 In particular, graphs $H_j$ for $j\in \{1,\ldots,m\}$ are pairwise mergeable.
 Note that $H=\bigoplus_{j=1}^m H_j$ is a linear forest on a vertex set of size $|X|+2$, hence it has at most $|X|+1$ edges. As graphs $H_j$ have pairwise disjoint edge sets due to being mergeable, we conclude that all but at most $|X|+1$ graphs $H_j$ have empty edge sets. Since in the definition of a configuration we required that emptiness of the edge set entails that the second coordinate is $0$, for those graphs $H_j$ for which $E(H_j)=\emptyset$ we also have $i_j=0$.
 Now observe that configurations $(H_j,i_j)$ with $E(H_j)=\emptyset$ and $i_j=0$ can be safely removed from $M$ without changing the union of configurations in $M$. Thus, by restricting $M$ if necessary, we may assume that $|M|\leq |X|+1$.
 
 Now, if none of the configurations from $M$ were selected from $C$, then $(H, i) \in S$ and we are done. 
 On the other hand, if some configuration $(H_j, i_j)$ was selected from $C$, then by the assumptions on $C$ and the fact that $|M|\leq |X|+1$, there is at least one other element $D \in \Mm\setminus \{C\}$, 
 such that no other configuration in $M$ was selected from $D$ and $(H_j, i_j) \in D$. Therefore, we can select $(H_j, i_j)$ from $D$ instead of $C$, which again implies that $(H, i) \in S$.  
\end{clproof}

Claims~\ref{cl:composable} and~\ref{cl:idempotent} verify that the scheme $(\Conf,\conf)$ has all the required properties. This concludes the proof of Lemma~\ref{lem:schem-kpath}.
\end{proof}

\newcommand{\Gb}{\mathbf{G}}
\newcommand{\Hb}{\mathbf{H}}

\paragraph*{Maintaining a scheme.}
We next show that whenever a graph property admits a tractable configuration scheme, it can be efficiently maintained in a dynamic graph of bounded treedepth. Let $\Pi$ be a graph property that is recognized by a tractable configuration scheme $(\Conf,\conf)$ with polynomial-time computable witnesses~$\zeta(\cdot)$ and $\tau(\cdot)$. Let $G$ be a graph and $F$ be a recursively connected elimination forest of $G$. We enrich the data structure $\Dt[F,G]$ presented in Section~\ref{sec:data-structure} to the data structure $\Dt^\Pi[F,G]$ as follows.
 
 For each $w\in V(G)$, define the graph 
 $$G_w\coloneqq \left(\,\desc_F(w)\cup \SReach_{F,G}(w),\ \{\,e\in E(G)\ \colon\ e\cap \desc_F(w)\neq \emptyset\,\}\,\right).$$
 In other words, the vertex set of $G_w$ consists of the descendants of $w$ plus $\SReach_{F,G}(w)$, while in the edge set we include only those edges of $G$ that are incident on vertices of $\desc_F(w)$.
 Thus, $\SReach_{F,G}(w)$ is an independent set in $G_w$. Further, we define the boundaried graph
 $$\Gb_w\coloneqq \left(\,G_w,\ \SReach_{F,G}(w)\,\right).$$
 
 In addition to all the data stored in $\Dt[F,G]$, in $\Dt^\Pi[F,G]$ we include the following. For each vertex~$u$, subset $X\subseteq \SReach_{F,G}(u)\cup \{u\}$, $i\in \{1,\ldots,d\}$, and configuration $c\in \Conf(X)$, 
 we store the {\em{mug}} $\bucket[u,X,i,c]$ defined as follows:
 $$\bucket[u,X,i,c]\coloneqq \{\,w\in \bucket[u,X,i]\ \colon\ c\in \conf(\Gb_w)\,\}.$$
 In other words, the mug $\bucket[u,X,i,c]$ comprises all vertices $w$ from the bucket $\bucket[u,X,i]$ for which $c$ is realized in $\Gb_w$.
 Note that the mugs are not necessarily disjoint, in particular one vertex may be stored in up to $|\Conf(X)|\leq \zeta(d)$ mugs. Some vertices may even not belong to any mug. Similarly as for $\Dt[F,G]$ structure, we only store nonempty mugs, so the storage space used for the mugs is bounded by $\Oh(n \cdot \zeta(d))$.

 Similarly as for the buckets, the mugs are represented as doubly linked lists. In the lists corresponding to mugs we only store indices of the required vertices, representing their copies, while the actual objects corresponding to vertices are stored in the buckets as before. Each vertex object $w$ in bucket $\bucket[u,X,i]$, besides storing pointers to the next and the previous element in $\bucket[u,X,i]$, also stores $ \zeta(d)$ additional pointers: for each $c\in \Conf(X)$, we store the pointer to the position of an element in $\bucket[u,X,i,c]$ containing index of $w$, or null pointer if the $w$ does not belong to $\bucket[u,X,i,c]$.
 Thus, whenever we remove a vertex $w$ from its bucket, we may also remove it from all the mugs to which it belongs in time $\Oh(\zeta)$. These pointers occupy space $\Oh(n \cdot \zeta(d))$. So the total space for $\Dt^\Pi[F,G]$ is $\Oh(n \cdot (d+ \zeta(d)))$.
 
 \newcommand{\mmb}{\mathsf{member}}
 
 Finally, $\Dt^\Pi[F,G]$ also stores a boolean flag $\mmb$, which is set to true if and only if $G\in \Pi$. Note that this flag can be computed by checking whether $\conf(G,\emptyset)=\oplus_{r\in \roots_F} \conf(\Gb_r)$ contains a final configuration.
 It is again clear that $\Dt^{\Pi}[F,G]$ can be initialized for an edgeless graph $G$ in $\Oh(n)$ time, because there are additionally $\Oh(n)$ empty mugs to initialize.

 We now explain how the mugs are maintained upon updates of $\Dt^\Pi[F,G]$. For this, it will be again convenient to speak about partial data structures $\Dt^{\Pi}[R,G/K]$, where $K$ is a subset of vertices of $G$ and $R$ is an elimination forest of $G-K$. We define them in the same way as in Section~\ref{sec:data-structure}, with the exception that each bucket $\bucket[u,X,i]$ is enriched with mugs $\bucket[u,X,i,c]$ for $c\in \Conf(X)$ as in the definition of $\Dt^{\Pi}[\cdot,\cdot]$. We now argue that suitable analogs of Lemmas~\ref{lem:trim-implement},~\ref{lem:extend-implement}, and~\ref{lem:update-conf} hold. Note that in the following statements, the configuration scheme $(\Conf,\conf)$ is considered fixed.
 
 \begin{lemma}\label{lem:trim-implement-scheme}
 Suppose $G$ is a graph, $F$ is a recursively connected elimination forest of $G$ of height at most $d$, and $K$ is a prefix of $F$. Then given on input the data structure $\Dt^\Pi[F,G]$, one can in time $\Oh(2^d\cdot d\cdot |K|\cdot \zeta(d))$ modify $\Dt^\Pi[F,G]$ to obtain the partial data structure $\Dt^\Pi[F-K,G/K]$. Moreover, in the obtained structure $\Dt^\Pi[F-K,G/K]$, the list $\appendices$ has length at most $1+2^d\cdot |K|$.
\end{lemma}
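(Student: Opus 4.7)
The plan is to mimic the proof of Lemma~\ref{lem:trim-implement} step by step, and merely augment each operation on buckets with the analogous operation on the associated mugs. Concretely, we first invoke the method $\mathtt{trim}(K)$ from Lemma~\ref{lem:trim-implement} on the underlying data structure $\Dt[F,G]$, which carries out two kinds of work: it removes every vertex $v\in K$ from the bucket containing it, and then it renames every non-empty bucket $\bucket[u,X,i]$ with $u\in K\cup\{\bot\}$ to $\bucket[\bot,X,i]$ (together with $\parent[u,X,i]$) and appends it to a freshly created list $\appendices$. We process the mugs along with each of these two kinds of work.

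When a vertex $v\in K$ is deleted from its bucket, we additionally scan through its $\zeta(d)$ pointers, each of which tells us in $\Oh(1)$ time whether $v$ belongs to the corresponding mug $\bucket[u,X,i,c]$, and if so gives us the position of the list element to unlink. Thus the removal step takes $\Oh(|K|\cdot\zeta(d))$ time overall. When a bucket $\bucket[u,X,i]$ is renamed to $\bucket[\bot,X,i]$, we loop through all $c\in \Conf(X)$ and rename the (possibly empty) mug $\bucket[u,X,i,c]$ to $\bucket[\bot,X,i,c]$; the internal pointers from each vertex object to its mug elements are unaffected because they point to list cells rather than to mug names. Since the total number of buckets renamed is at most $d\cdot(1+2^d\cdot|K|)$ and each carries at most $\zeta(d)$ mugs, the renaming step runs in $\Oh(2^d\cdot d\cdot|K|\cdot\zeta(d))$ time, which dominates the cost of the removals.

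Correctness is immediate from the definitions: after $\mathtt{trim}(K)$ finishes, the partial structure $\Dt[F-K,G/K]$ is correctly obtained by Lemma~\ref{lem:trim-implement}; moreover, for every vertex $w\notin K$ neither the subtree $F_w=(F-K)_w$ nor the graph $G_w$ is altered by the removal of $K$, so $\conf(\Gb_w)$ remains the same, and the mug membership of $w$ must remain unchanged. The mugs we have to adjust are exactly those attached to buckets that are either emptied (and thus simply disappear when we delete the vertices of $K$) or renamed, and the bookkeeping above handles both cases in lockstep with the bucket operations. The length bound on $\appendices$ follows by the same counting as in Lemma~\ref{lem:trim-implement}: each renamed bucket $\bucket[\bot,X,i]$ is uniquely determined by the straight set $X\subseteq V(G)$ and $i\in\{1,\dots,d\}$, and Lemma~\ref{lem:parent-SReach} ensures no two buckets collide under renaming.

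The main (and only) obstacle I anticipate is verifying the time bound carefully, namely checking that the auxiliary pointers stored with each vertex object indeed allow $\Oh(\zeta(d))$-time removal from all mugs in which it participates, so that the overhead over $\mathtt{trim}(K)$ is only a multiplicative $\zeta(d)$. Everything else is a bookkeeping translation of Lemma~\ref{lem:trim-implement} to the mug-augmented setting.
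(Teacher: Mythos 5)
Your proposal is correct and matches the paper's proof essentially step for step: perform the bucket operations of $\mathtt{trim}(K)$ as in Lemma~\ref{lem:trim-implement}, and in lockstep remove each $u\in K$ from its at most $\zeta(d)$ mugs (using the per-vertex pointers) and rename the at most $\zeta(d)$ mugs accompanying each renamed bucket, with the $\appendices$ length bound carrying over unchanged. The extra detail you give about why the vertex-to-mug pointers survive bucket renaming is a welcome elaboration of bookkeeping the paper leaves implicit, but the underlying argument is the same.
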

\begin{proof}
 Perform exactly the same modifications as described in the proof of Lemma~\ref{lem:trim-implement}, and additionally do the following:
 \begin{itemize}[nosep]
  \item For each $u\in K$, remove $u$ also from all the mugs in which it is contained. Note that this boils down to removing the single list element corresponding to $u$ not just from one list representing the bucket containing $u$, 
 but also from at most $\zeta(d)$ lists representing the mugs containing $u$.
  \item When renaming bucket $\bucket[u,X,i]$ to $\bucket[\bot,X,i]$, rename also all the mugs of the form $\bucket[u,X,i,c]$ to $\bucket[\bot,X,i,c]$. Note that there are at most $\zeta(d)$ such mugs.
 \end{itemize}
 It is straightforward to see that these modifications turn $\Dt^\Pi[F,G]$ into $\Dt^\Pi[F-K,G/K]$. The bound on the length of $\appendices$ can be argued as in the proof of Lemma~\ref{lem:trim-implement}.
\end{proof}

\begin{lemma}\label{lem:extend-implement-scheme}
 Suppose $H$ is a graph, $K$ is a subset of vertices of $H$, and $F^K$ is an elimination forest of $H[K]$ of height at most~$d$. Suppose further that $R$ is a recursively connected elimination forest of $H-K$ of height at most $d$ that is attachable to $(H,F^K)$. Let $\wh{F}$ be the extension of $F^K$ via $R$. Then given on input $H[K]$, $F^K$, and the partial data structure $\Dt^\Pi[R,H/K]$, one can in time $2^{\Oh(d)}\cdot |K|^{\Oh(1)}\cdot \tau(d)\zeta(d)^{\Oh(1)}$ modify $\Dt^\Pi[R,H/K]$ to obtain the data structure $\Dt^\Pi[\wh{F},H]$.
\end{lemma}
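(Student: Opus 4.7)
The plan is to augment the algorithm of Lemma~\ref{lem:extend-implement} with additional bookkeeping for the mugs, using composability and idempotence of $(\Conf,\conf)$ as black boxes. As in the base algorithm, I would operate in two phases: first the rename of inherited buckets, and then the bottom-up traversal of $F^K$. Only the extra conf-related work requires justification, since the non-conf fields are handled exactly as in Lemma~\ref{lem:extend-implement}.

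In the rename phase, whenever the base algorithm renames $\bucket[\bot,X,i]$ to $\bucket[m,X,i]$, I rename in parallel every non-empty mug $\bucket[\bot,X,i,c]$ to $\bucket[m,X,i,c]$; this adds an $\Oh(\zeta(d))$ multiplicative factor per renamed bucket. Correctness is immediate: for $w\in V(H)\setminus K$ the boundaried graph $\Gb_w$ depends only on $\desc_R(w)\cup\SReach_{R,H}(w)$ together with the edges of $H$ incident to $\desc_R(w)$, none of which are altered by the extension; in particular $\SReach_{\wh{F},H}(w)=\SReach_{R,H}(w)$, so $\conf(\Gb_w)$ and the set of mugs to which $w$ should belong are unchanged.

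In the bottom-up phase, when processing $u\in K$ (in post-order on $F^K$), I first perform the non-conf updates of Lemma~\ref{lem:extend-implement}, and then compute $\conf(\Gb_u)$ in order to insert $u$ into $\bucket[\pnt_{F^K}(u),\SReach(u),\height(u),c]$ for every $c\in\conf(\Gb_u)$. The key decomposition is
\[
\Gb_u\ =\ \forget_{\!*}\Bigl(\,\bigl(\bigoplus\nolimits_{w\in\chld_{\wh{F}}(u)}\Gb_w\bigr)\ \oplus\ L_u\,\Bigr),
\]
where $L_u$ is the small boundaried graph encoding the edges from $u$ to $\Up(u)$ (its vertex and boundary set being $\{u\}\cup\Up(u)$), and $\forget_{\!*}$ denotes the sequence of $\forget$ operations that remove from the boundary every vertex outside $\SReach_{\wh{F},H}(u)$. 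By composability, $\conf(\Gb_u)$ can then be computed from this expression using only the values $\conf(\Gb_w)$; each $\conf(\Gb_w)$ is recovered in $\Oh(\zeta(d))$ time from the pointer list that $w$ carries to its mugs.

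The remaining issue, which will be the main obstacle, is that $u$ may have many children, so a naive composition is too slow; idempotence has to be used to prune the composition down. For each bucket $\bucket[u,X,i]$ I form a representative sub-multiset $R_{u,X,i}\subseteq\bucket[u,X,i]$ by fixing, for every $c\in\Conf(X)$ with $\bucket[u,X,i,c]$ non-empty, an arbitrary subset of $\min\{\tau(|X|)+1,|\bucket[u,X,i,c]|\}$ elements of that mug, and taking the union of these selections over~$c$. Then
\[
\bigoplus_{w\in R_{u,X,i}}\conf(\Gb_w)\ =\ \bigoplus_{w\in \bucket[u,X,i]}\conf(\Gb_w),
\]
because any $w\in\bucket[u,X,i]\setminus R_{u,X,i}$ must satisfy $|\bucket[u,X,i,c]|\ge\tau(|X|)+2$ for every $c\in\conf(\Gb_w)$, so $\tau(|X|)+1$ retained children distinct from~$w$ contain~$c$; one application of idempotence removes~$w$, and iterating removes all of $\bucket[u,X,i]\setminus R_{u,X,i}$ while preserving the retained-element counts throughout. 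Since $|R_{u,X,i}|\le\tau(d)\cdot\zeta(d)$ and there are $2^{\Oh(d)}\cdot d$ pairs $(X,i)$ per vertex, we perform $2^{\Oh(d)}\cdot\tau(d)\cdot\zeta(d)$ $\oplus$- and $\forget$-operations per~$u$, each costing $\zeta(d)^{\Oh(1)}$. Summing over $u\in K$, adding the rename-phase cost, and finally recomputing the flag $\mmb$ by testing whether the combined root-level conf contains an accepting configuration, gives the claimed total bound $2^{\Oh(d)}\cdot |K|^{\Oh(1)}\cdot\tau(d)\cdot\zeta(d)^{\Oh(1)}$.
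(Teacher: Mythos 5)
Your proposal is correct and follows essentially the same route as the paper's proof: rename the mugs alongside the buckets, then process $K$ bottom-up over $F^K$, computing $\conf(\Gb_u)$ by composability from a representative subset of children selected mug-by-mug and pruned via idempotence, and finally reinsert $u$ into its mugs and recompute the membership flag at the root level. The only differences are cosmetic (a single gadget $L_u$ instead of per-edge gadgets $\mathbf{I}_{uu'}$, $\tau+1$ instead of $\tau$ representatives per mug, and a more explicitly spelled-out per-bucket iteration of idempotence), and the resulting time bound matches the paper's.
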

\begin{proof}
We augment the procedure that modifies $\Dt[R,H/K]$ to $\Dt[\wh{F},H]$ that was described in the proof of Lemma~\ref{lem:extend-implement}. We denote $\tau\coloneqq \tau(d)$ and $\zeta\coloneqq \zeta(d)$; recall that these can be computed in time $d^{\Oh(1)}$.

Recall that the first step of the procedure of Lemma~\ref{lem:extend-implement} was to rename the buckets from the list $\appendices$ so as to model attaching trees $R_r$ for $r\in \roots_R$ in the construction  of $\wh{F}$. We apply the same renaming scheme: whenever a bucket $\bucket[\bot,X,i]$ is renamed to $\bucket[m,X,i]$, this renaming applies also to all the mugs $\bucket[\bot,X,i,c]$ that are sub-lists of $\bucket[\bot,X,i]$. Note that there are at most $\zeta$ such mugs per bucket.
 Recall that the key observation in the proof of Lemma~\ref{lem:extend-implement} was that after the renaming, all the data stored for vertices $w\notin K$ did not require further updating, and even the buckets $\bucket[u,\cdot,\cdot]$ for 
 $u\in K$ were as they should be in $\Dt[\wh{F},H]$, except they were lacking vertices of $K$.
 Observe that now, the same also applies to the mugs: mugs $\bucket[w,\cdot,\cdot,\cdot]$ for $w\notin K$ do not require updating, while mugs $\bucket[u,\cdot,\cdot,\cdot]$ for $u\in K\cup \{\bot\}$ 
 are as they should be in $\Dt^\Pi[\wh{F},H]$,
 except they lack vertices of $K$. 
 
 
 Next, we may compute the values $\SReach(w)$, $\Up(w)$, $\height(w)$ for $w\in K$ and appropriately fill up the buckets $\bucket[w,\cdot,\cdot]$ for $w\in K\cup \{\bot\}$ exactly as in the proof of Lemma~\ref{lem:extend-implement}.  
 Hence, it remains to fill up the mugs $\bucket[w,\cdot,\cdot,\cdot]$ for $w\in K\cup \{\bot\}$ by inserting the lacking vertices of~$K$. 
 Similarly as in the proof of Lemma~\ref{lem:update}, for this it suffices to add each $u\in K$ to the mug
 \begin{equation}\label{eq:hamster}
 \bucket[\pnt_{F^K}(u),\SReach(u),\height(u),c]\qquad\textrm{for each }c\in \conf(\Hb_u).
 \end{equation}
 Here, $\Hb_u$ is a suitable boundaried graph defined as in the description of $\Dt^\Pi[\cdot,\cdot]$, but with respect to the graph $H$ and its elimination forest $\wh{F}$.
 Note that this can be easily done in time $\Oh(\zeta)$ provided we have computed the set $\conf(\Hb_u)$. 
 
 We now present a procedure that updates the mugs $\bucket[w,\cdot,\cdot,\cdot]$ for all $w\in K\cup \{\bot\}$ by processing vertices $u\in K$ in a bottom-up manner over the forest $F^K$.
 When processing $u$ we assume that all $v\in \chld_{F^K}(u)$ have already been inserted in appropriate mugs as prescribed in~\eqref{eq:hamster}, that is, all the mugs $\bucket[u,\cdot,\cdot,\cdot]$ are already
 correctly constructed. Based on this we compute $\conf(\Hb_u)$, so that $u$ itself can be inserted to appropriate mugs as prescribed in~\eqref{eq:hamster}.
 
 First, we construct a set~$W$ of vertices with $W\subseteq \chld_{F'}(u)$ as follows:
 for each $X\subseteq \SReach(u)\cup \{u\}$, $i\in \{1,\ldots,d\}$, and $c\in \Conf(X)$, include in $W$ the first $\tau$ elements of the mug $\bucket[u,X,i,c]$, or all of them in case their number is at most $\tau$.
 Note that this can be done in time $\Oh(\tau)$ per considered mug, so in total time $\Oh(2^d\cdot d \tau\zeta)$ per vertex $u$. Also, we have $|W|\leq 2^d\cdot d \tau\zeta$.
 Next, we construct the multiset
 $$\Mm\coloneqq \{\{\,\conf(\Hb_v)\ \colon\ v\in W\,\}\}.$$
 This can be done in time $\Oh(|W|\cdot \zeta)$ as follows: for each $v\in W$, we deduce $\conf(\Hb_v)$ by inspecting the list element of $v$ and checking in which mugs it is contained.
 
 Let $\mathbf{I}_{uu'}$ be the boundaried graph $((\{u,u'\},\{uu'\}),\{u,u'\})$; that is, it consists only of two boundary vertices $u,u'$ and the edge $uu'$.
 We finally compute $\conf(\Hb_u)$ using the following formula:
 \begin{equation}\label{eq:walrus}
   \conf(\Hb_u) = \forget\left(\quad \bigoplus_{u'\in \Up(u)} \conf(\mathbf{I}_{uu'})\ \oplus\ \bigoplus_{D\in \Mm} D,\quad u\quad \right).
 \end{equation}
 Note that formula~\eqref{eq:walrus} can be computed in time $$(d+|\Mm|)\cdot \zeta^{\Oh(1)}=2^d\cdot d \tau\zeta^{\Oh(1)} ,$$
 because it involves $\Oh(d+|\Mm|)$ operations $\oplus$ and $\forget$, plus $\Oh(d)$ operations $\conf(\cdot,\cdot)$ applied to a two-vertex graph: each of these operations can be carried out in time $\zeta^{\Oh(1)}$ 
 due to the efficient composability of $(\Conf,\conf)$. Once $\conf(\Hb_u)$ is computed, the vertex $u$ can be added to appropriate mugs as described in~\eqref{eq:hamster} in time $\Oh(\zeta)$.
 
 It remains to argue that formula~\eqref{eq:walrus} is correct. First, observe that
 $$
   \Hb_u = \forget\left(\quad \bigoplus_{u'\in \Up(u)} \mathbf{I}_{uu'}\ \oplus\ \bigoplus_{v\in \chld_{F'}(u)} \Hb_v,\quad u\quad \right).
 $$
 Hence, by the composability of $(\Conf,\conf)$ we have
 \begin{equation}\label{eq:seal}
   \conf(\Hb_u) = \forget\left(\quad \bigoplus_{u'\in \Up(u)} \conf(\mathbf{I}_{uu'})\ \oplus\  \bigoplus_{v\in \chld_{F'}(u)} \conf(\Hb_v),\quad u\quad \right).
 \end{equation}
 However, the construction of $W$ and the idempotence of $(\Conf,\conf)$ implies that
 \begin{equation}\label{eq:penguin}
   \bigoplus_{D\in \Mm} D = \bigoplus_{v\in W} \conf(\Hb_v) = \bigoplus_{v\in \chld_{F'}(u)} \conf(\Hb_v).
 \end{equation}
 Now~\eqref{eq:seal} and~\eqref{eq:penguin} imply the correctness of~\eqref{eq:walrus}.
 
 \medskip
 
 This concludes the description of updating the mugs.
 To see that the time complexity of the update is as promised, note that the time spent on processing a single $u\in K$ as above is bounded by $2^d\cdot d \tau\zeta^{\Oh(1)}$.
 Hence, the total time complexity of $2^{\Oh(d)}\cdot |K|^{\Oh(1)}\cdot \tau\zeta^{\Oh(1)}$ follows.
 
 We are left with recomputing the $\mmb$ flag.
 In the paragraphs above, we have described the computation of $\conf(\Hb_u)$ for any vertex $u$ based on the access to correctly updated mugs $\bucket[u,\cdot,\cdot,\cdot]$,
 in time $2^d\cdot d \tau\zeta^{\Oh(1)}$.
 We can apply the same reasoning for $u=\bot$, and thus compute within the same running time the value $\conf(\Hb_\bot)=\conf(H,\emptyset)$.
 To verify whether $H\in \Pi$, which is the new value of $\mmb$, it suffices to check whether whether $\conf(H,\emptyset)$ contains one of final configurations of $(\Conf,\conf)$. 
\end{proof}

\begin{lemma}\label{lem:update-conf}
  Suppose that we have access to the data structure $\Dt^\Pi[F,G]$, where $G$ is a graph and $F$ is a recursively optimal elimination forest $G$ of height at most $d$.
 Let $H$ be a graph obtained from $G$ by inserting or removing a single edge, given as input. 
 Then one can in $2^{\Oh(d^2)}\cdot \tau(d)\zeta(d)^{\Oh(1)}$ time either conclude that $\td(H)>d$, or modify $\Dt^\Pi[F,G]$ to obtain the data structure $\Dt^\Pi[\wh{F},H]$ where $\wh{F}$ is some recursively optimal elimination forest of $H$ of height at most $d$.
\end{lemma}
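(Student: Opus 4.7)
The plan is to mirror the proof of Lemma~\ref{lem:update} step by step, but now replacing the auxiliary routines with their scheme-aware variants provided by Lemmas~\ref{lem:trim-implement-scheme} and~\ref{lem:extend-implement-scheme}. Let $uv$ be the inserted or removed edge and let $H$ denote the updated graph. First I would invoke $\mathtt{core}(\{u,v\},d+1)$ (for insertion) or $\mathtt{core}(\{u,v\},d+2)$ (for deletion) from Lemma~\ref{lem:extract-core} to obtain a core $K$ of size $d^{\Oh(d)}$ containing both $u$ and $v$, and also construct the graph $H[K]$ within the same running time $d^{\Oh(d)}$. Then I would apply the static algorithm of Lemma~\ref{lem:static} to $H[K]$ to compute a recursively optimal elimination forest $F^K$ of $H[K]$; this takes $2^{\Oh(d^2)} \cdot |K|^{\Oh(1)} = 2^{\Oh(d^2)}$ time. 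In the insertion case, if $\height(F^K) > d$, then $\td(H) \geq \td(H[K]) = \height(F^K) > d$ and the update is rejected; in the deletion case no such check is needed since $\td(H[K]) \leq \td(G) \leq d$.

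Next, since $H$ is a $(K,0)$-restricted augmentation of $G$ (insertion) or a $(K,1)$-restricted augmentation of $G$ (deletion), Lemma~\ref{lem:core-uberlemma} guarantees that $R \coloneqq F-K$ is attachable to $(H,F^K)$ and that the extension $\wh{F}$ of $F^K$ via $R$ is a recursively optimal elimination forest of $H$ of height at most $d$. So it remains to update the stored structure accordingly. I would first invoke the method of Lemma~\ref{lem:trim-implement-scheme} with prefix $K$, which modifies $\Dt^\Pi[F,G]$ in place into the partial data structure $\Dt^\Pi[F-K,G/K]$ in time $\Oh(2^d\cdot d\cdot |K|\cdot \zeta(d)) = d^{\Oh(d)}\cdot \zeta(d)$; note that since the edges of $E(G) \triangle E(H)$ have both endpoints in $K$, we have $\Dt^\Pi[F-K,G/K] = \Dt^\Pi[F-K,H/K]$, so the partial structure is consistent with the new graph~$H$.

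Then I would invoke the method of Lemma~\ref{lem:extend-implement-scheme} on $H[K]$, $F^K$, and $\Dt^\Pi[F-K,H/K]$, which produces the desired data structure $\Dt^\Pi[\wh{F},H]$ in time $2^{\Oh(d)}\cdot |K|^{\Oh(1)}\cdot \tau(d)\zeta(d)^{\Oh(1)} = 2^{\Oh(d)}\cdot d^{\Oh(d)}\cdot \tau(d)\zeta(d)^{\Oh(1)}$. Summing all the individual contributions, the bottleneck is the call to the static algorithm of Lemma~\ref{lem:static}, and the total running time is bounded by $2^{\Oh(d^2)} \cdot \tau(d)\zeta(d)^{\Oh(1)}$, as required.

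There is essentially no genuinely hard step here: all of the combinatorial content (existence and smallness of the core, attachability of $R$, recursive optimality of the extension) has already been isolated in Lemma~\ref{lem:core-uberlemma}, while all of the algorithmic content (trimming a prefix, then extending via an elimination forest of a core, both with mugs maintained) is taken care of by Lemmas~\ref{lem:trim-implement-scheme} and~\ref{lem:extend-implement-scheme}. The only point requiring any real care is the bookkeeping observation already exploited in the proof of Lemma~\ref{lem:extend-implement-scheme}: since for every $w \notin K$ the boundaried graph $\Gb_w$ with respect to $(G,F)$ coincides with the boundaried graph $\Hb_w$ with respect to $(H,\wh{F})$, the mugs stored for vertices outside $K$ need not be touched, and only the mugs along $K$ need to be recomputed bottom-up along $F^K$ using compositionality and idempotence to keep the cost polynomial in $\zeta(d)$ and linear in $\tau(d)$.
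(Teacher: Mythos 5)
Your proposal is correct and takes essentially the same approach as the paper: the paper's proof of this lemma is literally a one-line appeal to repeating the argument of Lemma~\ref{lem:update} with Lemmas~\ref{lem:trim-implement-scheme} and~\ref{lem:extend-implement-scheme} used in place of Lemmas~\ref{lem:trim-implement} and~\ref{lem:extend-implement}, which is exactly what you have spelled out (including the $(d+1)$-core versus $(d+2)$-core distinction for insertion versus deletion and the appeal to Lemma~\ref{lem:core-uberlemma}).
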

\begin{proof}
 We apply the same reasoning as in Lemma~\ref{lem:update}, except that we use of Lemmas~\ref{lem:trim-implement-scheme} and~\ref{lem:extend-implement-scheme} instead of Lemmas~\ref{lem:trim-implement} and~\ref{lem:extend-implement}, respectively.
\end{proof}

From Lemma~\ref{lem:update-conf} we immediately obtain the following conclusion.

\begin{lemma}\label{lem:scheme-maintain}
 Let $\Pi$ be a graph property that is recognized by a tractable configuration scheme with polynomial-time computable witnesses~$\zeta(\cdot)$ and $\tau(\cdot)$.
 Then there exists a data structure that maintains a dynamic graph $G$ of treedepth at most $d$ under edge insertions and deletions with update time $2^{\Oh(d^2)}\cdot \tau(d)\zeta(d)^{\Oh(1)}$
 while offering a query on whether $G\in \Pi$ in time $\Oh(1)$.
 Upon inserting an edge that breaks the invariant $\td(G)\leq d$, the data structure detects this and refuses to carry out the operation. The data structure uses $\Oh(n (d+\zeta(d)))$ memory.
 The initialization of the data structure for an edgeless $n$-vertex graph takes time $\Oh(n)$.
\end{lemma}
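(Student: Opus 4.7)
The plan is to instantiate the data structure $\Dt^\Pi[F,G]$ defined in Section~\ref{sec:dp} for the tractable configuration scheme recognizing $\Pi$, and to use the methods already developed as black-boxes. First, I will initialize $\Dt^\Pi[F,G]$ for the empty elimination forest on $n$ isolated vertices. Since the edgeless graph has treedepth $1$, the forest $F$ consists of $n$ singleton roots; all buckets $\bucket[\bot,\emptyset,1]$ and their singleton mugs $\bucket[\bot,\emptyset,1,c]$ for $c\in \Conf(\emptyset)$ can be set up in time $\Oh(n)$, using the fact that $\conf(G_u) = \conf((\{u\},\emptyset),\emptyset)$ can be computed once and reused. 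The $\mmb$ flag is set according to whether $\bigoplus_{r\in \roots_F} \conf(\Gb_r)$ contains a final configuration, which is a constant-time check after initialization.

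For edge insertions and deletions, I will simply delegate to Lemma~\ref{lem:update-conf}. Given an update, this lemma either reports that the treedepth of the new graph $H$ exceeds $d$ --- in which case we refuse the insertion and leave $\Dt^\Pi[F,G]$ untouched --- or modifies $\Dt^\Pi[F,G]$ into $\Dt^\Pi[\wh{F},H]$ for a recursively optimal elimination forest $\wh{F}$ of $H$ of height at most $d$, in time $2^{\Oh(d^2)}\cdot \tau(d)\zeta(d)^{\Oh(1)}$. Observe that since Lemma~\ref{lem:update-conf} simultaneously updates the $\mmb$ flag, after the update the flag correctly reflects whether $H \in \Pi$. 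The query operation then returns the value of $\mmb$ in constant time.

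For the space complexity, I will invoke the bound $\Oh(n\cdot(d + \zeta(d)))$ already established in the description of $\Dt^\Pi[F,G]$: the underlying structure $\Dt[F,G]$ takes $\Oh(nd)$ space for the parent pointers, strong reachability and up-sets, heights, and buckets, while the mugs contribute an additional $\Oh(n\cdot \zeta(d))$ because each vertex lies in at most $\zeta(d)$ mugs (each occurrence storing one list element plus a pointer from the bucket list element). Refusing invariant-breaking insertions is handled for free since Lemma~\ref{lem:update-conf} detects such cases via the height check on $F^K$ and never performs a destructive modification before this check succeeds.

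There is essentially no obstacle left to overcome here; all the hard work has been done in Section~\ref{sec:cores} and in Lemmas~\ref{lem:trim-implement-scheme},~\ref{lem:extend-implement-scheme}, and~\ref{lem:update-conf}. The only point requiring a small amount of care is verifying that when Lemma~\ref{lem:update-conf} rejects the update (because $\td(H)>d$), the structure $\Dt^\Pi[F,G]$ is left in a consistent state representing the previous graph $G$; this is immediate from the implementation in Lemma~\ref{lem:update}, where rejection occurs before the call to $\mathtt{trim}$, so no destructive modification has taken place.
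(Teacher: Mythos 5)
Your proposal is correct and follows essentially the same route as the paper, which derives this lemma immediately from Lemma~\ref{lem:update-conf} together with the initialization, memory, and rejection properties already established in the description of $\Dt^\Pi[F,G]$ (and, for rejection, the fact that in Lemma~\ref{lem:update} the height check on $F^K$ precedes any destructive modification). Your additional remarks on initializing the singleton buckets and mugs and on the consistency of the structure after a refused insertion are exactly the details the paper leaves implicit.
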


Finally, by combining Lemmas~\ref{lem:schem-kpath} and~\ref{lem:scheme-maintain} we immediately obtain the following.

\begin{lemma}\label{lem:kpath-bnd-td}
 There exists a data structure that maintains a dynamic graph $G$ of treedepth smaller than $k$ under edge insertions and deletions with update time $2^{\Oh(k^2)}$
 while offering a query on whether $G$ contains a simple path on $k$ vertices in time $\Oh(1)$.
 Upon inserting an edge that breaks the invariant $\td(G)<k$, the data structure detects this and refuses to carry out the operation. The data structure uses $\Oh(n \cdot 2^{\Oh(k \log k)})$ memory.
 The initialization of the data structure for an edgeless $n$-vertex graph takes $\Oh(n)$ time.
\end{lemma}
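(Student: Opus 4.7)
The plan is to obtain Lemma~\ref{lem:kpath-bnd-td} as a direct corollary of the two preceding results: Lemma~\ref{lem:schem-kpath}, which exhibits a tractable configuration scheme recognizing the property $\Pi_k$ of containing a simple path on $k$ vertices, and Lemma~\ref{lem:scheme-maintain}, which turns any such scheme into a dynamic data structure that tracks the property on graphs of bounded treedepth. Since Lemma~\ref{lem:path-td} guarantees that the interesting regime for detecting a $k$-path is exactly when $\td(G) < k$, I will apply Lemma~\ref{lem:scheme-maintain} with the treedepth bound set to $d = k-1$, instantiated with the scheme produced by Lemma~\ref{lem:schem-kpath}.

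The remaining work is arithmetic bookkeeping on the parameters. Lemma~\ref{lem:schem-kpath} gives witnesses $\zeta(x) = (k+1)\cdot 2^{x+1}\cdot x!$ and $\tau(x) = x+2$, both polynomial-time computable. Substituting $x = d = k-1$ yields
\[
\zeta(k-1) = (k+1)\cdot 2^{k}\cdot (k-1)! = 2^{\Oh(k\log k)} \qquad\text{and}\qquad \tau(k-1) = k+1 = \Oh(k).
\]
Feeding these into the $2^{\Oh(d^2)}\cdot \tau(d)\cdot \zeta(d)^{\Oh(1)}$ update time from Lemma~\ref{lem:scheme-maintain} gives
\[
2^{\Oh((k-1)^2)} \cdot \Oh(k) \cdot \left(2^{\Oh(k\log k)}\right)^{\Oh(1)} = 2^{\Oh(k^2)},
\]
as required, because the $2^{\Oh(k^2)}$ factor dominates. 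The memory bound $\Oh(n(d+\zeta(d)))$ specializes to $\Oh(n \cdot 2^{\Oh(k\log k)})$, and the initialization time $\Oh(n)$ transfers verbatim. Constant-time queries simply read out the $\mmb$ flag maintained inside $\Dt^{\Pi_k}[\cdot,\cdot]$, which by construction equals $\true$ iff the current graph admits a $k$-path.

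Finally, the invariant-breaking behavior is inherited from Lemma~\ref{lem:scheme-maintain}: whenever an edge insertion would push $\td(G)$ above $d = k-1$, the underlying call to Lemma~\ref{lem:update-conf} detects this (as the static algorithm of Lemma~\ref{lem:static} applied to the recomputed core returns height $> d$), the update is rolled back, and the violation is reported. I do not anticipate any real obstacle here; the entire statement is a clean instantiation, and the only thing to watch is that the asymptotics of $\zeta$ and $\tau$ correctly collapse into $2^{\Oh(k^2)}$ once combined with the $2^{\Oh(d^2)}$ factor, which they do because $\zeta(k-1)^{\Oh(1)} = 2^{\Oh(k\log k)} \leq 2^{\Oh(k^2)}$.
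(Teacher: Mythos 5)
Your proposal is correct and follows exactly the paper's route: Lemma~\ref{lem:kpath-bnd-td} is obtained there by directly combining Lemma~\ref{lem:schem-kpath} with Lemma~\ref{lem:scheme-maintain} at $d=k-1$, which is precisely your instantiation. Your parameter bookkeeping ($\zeta(k-1)=2^{\Oh(k\log k)}$, $\tau(k-1)=\Oh(k)$, yielding update time $2^{\Oh(k^2)}$ and memory $\Oh(n\cdot 2^{\Oh(k\log k)})$) matches what the paper leaves implicit.
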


Let us remark that by combining Lemma~\ref{lem:scheme-maintain} with Lemma~\ref{lem:mso-tractable} in the same manner, we may recover the result of Dvo\v{r}\'ak et al.~\cite{DvorakKT14}:
for every fixed $\MSO_2$-definable property $\Pi$, there is a data structure that for a dynamic graph of treedepth at most $d$ maintains the information about $\Pi$-membership with update time $f(d)$, for some function $f$.

\section{Property Recognition under Merging and Splitting}\label{sec:scheme-ms}

\CycleNote

In this section we make a step towards the data structure for dynamic cycle detection.
More precisely, we provide a data structure for the dynamic maintenance of a suitable partition of a dynamic graph $G$ into subgraphs.
The intuition is that this partition will later be used to maintain the tree of biconnected components of $G$.


Before we proceed, we need to slightly extend the results from previous sections.  

\paragraph*{Path queries.} We will need the following result, which essentially says that, given access to the data structure for the {\sc{$k$-Path}} problem provided by the combination of Lemmas~\ref{lem:schem-kpath} and~\ref{lem:scheme-maintain}, one can efficiently answer queries about the existence of a path on at least $k$ vertices with a given pair of endpoints.

\begin{lemma}\label{lem:path_queries}
 Let $k\in \N$, let $\Pi_k$ be the graph property comprising all graphs that contain a simple path on $k$ vertices, and let $(\Conf,\conf)$ be the configuration scheme for the property $\Pi_k$ provided by Lemma~\ref{lem:schem-kpath}. Suppose we are given a graph $G$, a pair of its vertices $u$ and $v$, and access to the data structure $\Dt^{\Pi_k}[F,G]$ constructed using the scheme $(\Conf,\conf)$, where $F$ is a recursively optimal elimination forest of $G$ of height at most $d$. Then one can in time $2^{\Oh(k)}\cdot (dk)^{\Oh(d)}$ answer the following queries:
 \begin{itemize}[nosep]
  \item Given $u,v\in V(G)$ and $i\leq k$, find a simple path in $G$ on exactly $i$ vertices and with endpoints $u$ and $v$, or correctly conclude that no such path exists.
  \item Given $u,v\in V(G)$, decide whether in $G$ there exists a simple path on at least $k$ vertices with endpoints $u$ and $v$.
 \end{itemize}
\end{lemma}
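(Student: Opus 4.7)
The plan is to answer both queries through a common bottom-up computation on $F$, adapted so that $u$ and $v$ stay in the boundary along the spine connecting them to the roots; then query 2 and the decision part of query 1 reduce to membership tests in the resulting configuration set, while the constructive part of query 1 is handled by tracing back through the computation.

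Let $A \coloneqq \anc_F(u) \cup \anc_F(v)$. This set has size at most $2d$ and can be produced in $\Oh(d)$ time by following parent pointers in $F$. For each $w \in V(G)$ set $X'_w \coloneqq \SReach_{F,G}(w) \cup (\{u, v\} \cap \desc_F(w))$ and $\Gb'_w \coloneqq (G_w, X'_w)$. For $w \notin A$ we have $\Gb'_w = \Gb_w$, and thus $\conf(\Gb'_w)$ is already implicitly encoded in the mugs $\bucket[\pnt_F(w), \cdot, \cdot, \cdot]$ stored in the data structure. For $w \in A$ I would compute $\conf(\Gb'_w)$ by a bottom-up pass along $A$ in the style of Lemma~\ref{lem:extend-implement-scheme}: each $w \in A$ has at most two children in $A$ (one on the way to $u$ and one on the way to $v$), whose values are already available, while the remaining children appear in the mugs of $w$, from which we retain at most $\tau(|X'_w|)$ representatives per mug by idempotence. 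Composing all contributions via $\oplus$ together with $\conf(\mathbf{I}_{ww'})$ for every $w' \in \Up(w)$, and finally applying $\forget$ to $w$ whenever $w \notin \{u, v\}$, reproduces the recurrence of equation~\eqref{eq:walrus}. Since $|X'_w| \leq d + 2$, one step takes $\zeta(d+2)^{\Oh(1)} \cdot \tau(d+2)$ time, which using the explicit bounds $\zeta(x) = (k+1) 2^{x+1} x!$ and $\tau(x) = x + 2$ from Lemma~\ref{lem:schem-kpath} evaluates to $2^{\Oh(k)} \cdot (dk)^{\Oh(d)}$; combined with the $\Oh(d)$ vertices of $A$ and a final $\oplus$-union across $\roots_F$, we obtain $\conf((G, \{u, v\}))$ within this bound. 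Query 2 is answered by testing whether $(H_{uv}, \infty) \in \conf((G, \{u, v\}))$, where $H_{uv}$ denotes the linear forest on $\{u, v, s, t\}$ whose unique edge is $uv$; the decision part of query 1 tests $(H_{uv}, i-1)$ instead.

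For the constructive part of query 1, I would trace back through the computation. During the bottom-up sweep, for every configuration $c$ appearing in any intermediate $\conf$ a \emph{witness pointer} is recorded: one specific way in which $c$ was built from a child contribution and an earlier intermediate configuration (and, where applicable, from the pre-$\forget$ configuration at $w$). Starting from the target $(H_{uv}, i-1)$ at the root, the traceback follows these pointers downward: at each spine vertex we extract the contribution from one child of the current vertex; if that child lies in a mug, a concrete witness vertex $c$ is retrieved from it and the reconstruction continues recursively in the subtree $F_c$, performing inside it the same style of bottom-up combination with the selected configuration now playing the role of the target at the root of that subtree. Because the assembled path has at most $k$ vertices and each recursive call expends at least one of them in its witness, the traceback spawns only $\Oh(k)$ recursive calls in total, each of cost $2^{\Oh(k)} \cdot (dk)^{\Oh(d)}$, so the overall running time remains within $2^{\Oh(k)} \cdot (dk)^{\Oh(d)}$.

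The hard part will be correctly implementing the witness bookkeeping---in particular, inverting $\forget$ (which requires enumerating the at most $\zeta(d+2)^{\Oh(1)}$ preimages of a configuration) and inverting $\oplus$ (which requires matching the demanded configuration against a combination of mug representatives drawn under idempotence). Both tasks are handled by brute-force enumeration within $\zeta$ and contribute only polynomial overhead; combined with the fact that descending into a subtree $F_c$ preserves the validity of the mugs stored at its internal vertices (since they were computed with respect to the original $F$, which coincides with the induced elimination forest on $\desc_F(c)$), this completes the plan.
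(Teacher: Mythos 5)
Your approach is correct in spirit and takes a genuinely different route from the paper's proof. The paper proceeds indirectly: it runs a marking procedure tailored to the path configurations (the $c_X^{xy,j}$'s), descending from the root and keeping $d$ representatives per mug $\bucket[w,X,i,c_X^{xy,j}]$ plus the ancestors of $u$ and $v$, to extract a prefix $K$ of size $(dk)^{\Oh(d)}$. It then proves a replacement argument (Claim~\ref{cl:core-paths-correct}, in the style of the proof of Lemma~\ref{lem:straight}) showing that $G[K]$ preserves $u$-$v$ path existence for every length, and finally hands $G[K]$ to off-the-shelf static FPT algorithms (color-coding, and Bez\'akov\'a et al.\ for the long-path query). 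You instead compute $\conf((G,\{u,v\}))$ directly by enlarging boundaries along the spine $A=\anc_F(u)\cup\anc_F(v)$, running the recurrence~\eqref{eq:walrus} at each spine vertex with $u,v$ left unforgotten, reading the answer off the configuration set, and reconstructing via witness pointers. What each buys: the paper's route offloads the constructive part of Query~1 entirely to a known static algorithm and needs only a combinatorial lemma; yours avoids the external algorithms and the core-preservation lemma, but the traceback is a substantial piece of machinery that you acknowledge you have only sketched --- and, because the mugs in subtrees $F_c$ for $c\notin A$ carry no witness pointers, each descent must re-derive a decomposition there, so the witness bookkeeping has to be reinstantiated at every recursion node rather than being a one-time precomputation.

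A few concrete points to fix. First, when you pull $\tau$ representatives per mug of $w$, the children of $w$ that lie in $A$ must be explicitly excluded from those representatives (and handled only via their already-computed $\conf(\Gb'_\cdot)$). They do appear in the mugs, and since $\oplus$ is not idempotent on a single set for this scheme, including them twice --- once with $\conf(\Gb_a)$ from the mug and once with $\conf(\Gb'_a)$ --- produces spurious merged configurations; since at most two children of any $w$ lie in $A$, keeping $\tau+2$ and discarding any in $A$ suffices. Second, for Query~2 you should test membership of both $(H_{uv},k-1)$ and $(H_{uv},\infty)$ in $\conf((G,\{u,v\}))$: the value $\infty$ encodes paths with at least $k$ \emph{edges}, i.e.\ at least $k+1$ vertices, while the query asks for at least $k$ vertices. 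Third, you should first check that $u$ and $v$ lie under the same root of $F$; otherwise the spine $A$ is disconnected and the answer is trivially negative. None of these undermines the approach, but the second item would give a wrong answer as written and the first would silently corrupt the computed configuration set.
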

\begin{proof}
 We adopt the notation for the configuration scheme $(\Conf,\conf)$ introduced in the proof of Lemma~\ref{lem:schem-kpath}, as well as the notation for the scheme $\Dt^{\Pi_k}[\cdot,\cdot]$ introduced before Lemma~\ref{lem:extend-implement-scheme}. By a {\em{$u$-$v$ path}} we mean a simple path with endpoints $u$ and $v$. Moreover, the length of the path is its number of edges, unless specified otherwise.
 
 Let $r$ be the root of $F$ that is an ancestor of $u$ in $F$. 
 If $r$ is not an ancestor of $v$ as well, then $u$ and $v$ lie in different connected components of $G$, hence we immediately conclude that the sought path does not exist.
 Note that the roots of~$F$ that are ancestors of~$u$ and~$v$ can be computed in the required time bound using access to~$\Dt^{\Pi_k}[F,G]$.
 Let $T=F_r$; then $u,v\in V(T)$.
 
 For $X\subseteq \Omega$, a pair of different vertices $x,y\in X$, and for $j\in \{2, 3,\ldots,k-1,\infty\}$, we define the configuration $c_X^{xy,j}\in \Conf(X)$ as $(J_X^{xy},j)$, where $J_X^{xy}$ is the graph on vertex set $X\cup \{s,t\}$ where only the edge $xy$ is present and no other edge.
 For $w\in V(T)$, we write $c_w^{xy,j}\coloneqq c_{\SReach_{F,G}(w)}^{xy,j}$ for brevity.
 From the definition of the scheme $(\Conf,\conf)$ (see the proof of Lemma~\ref{lem:schem-kpath}) we immediately see the following.
 
 \begin{claim}\label{cl:conf-paths}
   For every $w\in V(T)$, a pair of different vertices $x,y\in \SReach_{F,G}(w)$, and $j\in \{2,3,\ldots,k-1\}$, we have $c_w^{xy,j}\in \conf(\Gb_w)$ if and only if in $G_w$ there is a $x$-$y$ path of length exactly $j$. Moreover, $c_w^{xy,\infty}\in \conf(\Gb_w)$ if and only if in $G_w$ there is a $x$-$y$ path of length at least $k$.
 \end{claim}

 We now define a prefix $K$ of $T$ through a recursive marking procedure as follows.
 Given a vertex $w\in V(T)$, the marking procedure first marks $w$ if it has not been marked before.
 Then, it iterates over every pair of distinct vertices $x,y\in \SReach_{F,G}(w)\cup \{w\}$ and each $j\in \{2,3,\ldots,k-1,\infty\}$, and inspects each mug $\bucket[w,X,i,c_X^{xy,j}]$ that is present in $\Dt^{\Pi_k}[F,G]$ such that $x,y\in X$.
 The procedure marks $d$ arbitrary vertices contained in any of these mugs, or all of the vertices in all mugs if the number of them is smaller then $d$.
 Additionally, the procedure marks any children of $w$ in $F$ that are ancestors of either $u$ or~$v$.
 Finally, the procedure is applied recursively to all the children of~$w$ in~$F$ that got marked.
 We let $K$ be the set of all the vertices of $T$ that get marked after applying the procedure to the root $r$.
 Since the marking procedure applied to a vertex~$w$ marks at most $\binom{d}{2}\cdot (k-1)\cdot d+2\leq d^3k$ children of~$w$ and the height of $T$ is at most $d$, we conclude that
 $$|K|\leq (d^3k)^0+(d^3k)^1+\ldots+(d^3k)^{d-1}\leq (dk)^{\Oh(d)}.$$
 Also, given access to $\Dt^{\Pi_k}[F,G]$ we may compute $K$ and $G[K]$ in time $(dk)^{\Oh(d)}$ using a similar implementation as in Lemma~\ref{lem:extract-core}.
 
 By construction we have $u,v\in K$.
 We now observe that $G[K]$ contains all the necessary information about $u$-$v$ paths with endpoints $u$ and $v$. The proof is similar to that of Claim~\ref{cl:idempotent} in the proof of Lemma~\ref{lem:schem-kpath}.
 
 \begin{claim}\label{cl:core-paths-correct}
  For each $i\in \{1,2\ldots,k-1\}$, there is a $u$-$v$ path of length $i$ in $G$ if and only if there is one in~$G[K]$. Moreover, there is a $u$-$v$ path of length at least $k$ in $G$ if and only if there is one in $G[K]$.
 \end{claim}
 \begin{clproof}
   The claim is trivial for $i=1$ and, for both assertions, the right-to-left implication holds vacuously. Hence, we need to prove the left-to-right implication for $i\in \{2,\ldots,k-1,\infty\}$, where $\infty$ corresponds to the case of a path of length at least $k$. More precisely, by a path of length $i\in \{2,\ldots,k-1,\infty\}$ we mean a path consisting of exactly $i$ edges in case $i<\infty$, or at least $k$ edges in case $i=\infty$.
   We need to prove that, if in $G$ there is a $u$-$v$ path of length $i\in \{2,\ldots,k-1,\infty\}$, then there is also one such that all of its vertices belong to~$K$. 
  
   For a vertex $w\in V(T)\setminus K$, let $f(w)$ denote the smallest depth in $T$ of a descendant of $w$ that is not in~$K$.
   In formulas,
   $$f(w)\coloneqq \min\ \{\,\depth_T(a)\ \colon\ a\in \desc_T(w)\setminus K\,\}.$$
  For a path $P$ in~$G$ and $\ell\in \{1,\ldots,d\}$, we let $f_\ell(P)=|\{\,w\in V(P)\setminus K\ \mid\ f(w)=\ell\,\}|$. Finally, we let
  $$f(P)=(f_1(P),f_2(P),\ldots,f_d(P)).$$
  Among $u$-$v$ paths of length~$i$ in $G$, let us choose $P$ to be an arbitrary path that lexicographically minimizes the vector~$f(P)$.
  We suppose for a contradiction that $P$ contains some vertices outside of $K$.
  Let $Q$ be a maximal subpath of $P$ that starts and ends in vertices of $K$ and such that all the internal vertices of $Q$ do not belong to $K$.
  Note that such a $Q$ exists due to $V(P)\setminus K\neq \emptyset$ and $u,v\in K$. Let $x$ and $y$ be the endpoints of $Q$; then $x,y\in K$.
  
  Since $Q$ is connected and $K$ is a prefix of $F$, there exists $a\in \App_F(K)$ such that $x,y\in \SReach_{F,G}(a)$ and $V(Q)\setminus \{x,y\}\subseteq \desc_F(a)$. Let $j\coloneqq |E(Q)|$ if $|E(Q)|<k$, or $j\coloneqq \infty$ otherwise. Then, according to Claim~\ref{cl:conf-paths}, $Q$ witnesses that $c^{xy,j}_a\in \conf(\Gb_a)$. Since $a\notin K$, by the construction of $K$ we infer that there is a set $B\subseteq K$ consisting of $d$ siblings of $a$ such that for each $b\in B$, we have $x,y\in \SReach_{F,G}(b)$ and $c_b^{xy,j}\in \conf(\Gb_b)$. By Claim~\ref{cl:conf-paths}, for each $b\in B$ there exists a path $Q_b$ such that $Q_b$ has endpoints $x$ and $y$, length $j$, and all its internal vertices belong to $\desc_T(b)$.
  
  We now claim that $P$ cannot intersect each of the sets $\desc_T(b)$ for $b\in B$.
  Suppose otherwise, that is, for each $b\in B$, there exists a vertex $p_b\in V(P)\cap \desc_T(b)$. Let also $p_a$ be any internal vertex of $Q$. Observe that, since $F$ is an elimination forest of $G$, on $P$ between every two vertices of $\{p_b\colon b\in B\}\cup \{p_a\}$ there must be a vertex of $\anc_T(a)\setminus \{a\}$. Since the former set has size $d+1$ and the latter has size at most $\height(T)-1\leq d-1$, this is a contradiction.
  
  Therefore, there exists $b\in B$ such that $V(P)\cap \desc_T(b)=\emptyset$.
  In particular, no internal vertex of~$Q_b$ lies on $P$. Let $P'$ be the path obtained from $P$ by replacing the subpath $Q$ with $Q_b$. Then $P'$ is a $u$-$v$ path. Moreover, as the edge count of $Q_b$ is either equal to that of $Q$, or at least $k$ anyway, we conclude that $P'$ also has length $i$. Finally, observe that, since $b\in K$, for each $z'\in V(Q_b)\setminus K$ we have $f(z')>\depth_T(b)=\depth_T(a)$. As $f(z)=\depth_T(a)$ for each $z\in V(Q)\setminus \{x,y\}$, we conclude that $f(P')$ is lexicographically strictly smaller than $f(P)$. This is a contradiction to the choice of~$P$.
 \end{clproof}

 By Claim~\ref{cl:core-paths-correct} we conclude that for both queries considered in the lemma statement, it suffices to answer them in graph $G[K]$ instead of $G$. As $|K|\leq (dk)^{\Oh(d)}$, we may do it using any suitably efficient static FPT algorithm for the considered problem, applied in $G[K]$. Precisely, for the first query (about a $u$-$v$ path on exactly $i\leq k$ vertices) we may apply a standard color-coding algorithm~\cite{AlonYZ95}, which then runs in time $2^{\Oh(i)}\cdot |K|^{\Oh(1)}\leq 2^{\Oh(k)}\cdot (dk)^{\Oh(d)}$. For the second query (about a $u$-$v$ path on more than $k$ vertices), we may use the static algorithm of Bez\'akov\'a et al.~\cite{BezakovaCDF19}, which runs in time $2^{\Oh(k)}\cdot |K|^{\Oh(1)}=2^{\Oh(k)}\cdot (dk)^{\Oh(d)}$. Let us note here that this algorithm works even under the parameterization where $k$ is the difference between the requested path length and the (shortest path) distance between $u$ and $v$, which is a smaller parameter than just the requested path length. 
\end{proof}


\paragraph*{Maintaining a partition.}
Equipped with the extended path queries provided by Lemma~\ref{lem:path_queries}, we are ready to explain the main goal of this section. We start by a definition of a nice partition. 

\begin{definition}
A \emph{nice partition} of a graph $G$ is a family $\Prt=\{H_1, \ldots, H_\ell\}$ of subgraphs of $G$ satisfying the following conditions:
\begin{itemize}[nosep]
 \item every edge of $G$ belongs to exactly one among the graphs $H_1,\ldots,H_\ell$;
 \item graphs $H_1,\ldots,H_\ell$ are connected and each is the union of some biconnected components of $G$.
\end{itemize}
Graphs $H_1,\ldots,H_\ell$ will be called the {\em{parts}} of the partition $\Prt$.
\end{definition}

Note that for a graph $G$, the partition of $G$ into biconnected components is the finest nice partition of $G$, while the partition into connected components is the coarsest one. Further, any two parts in a nice partition are either disjoint or intersect at exactly one vertex. Finally, if $\Prt$ is a nice partition of $G$ and $C$ is a cycle in~$G$, then $C$ is entirely contained in a single part of $\Prt$.

The next lemma states that there is data structure that can maintain a nice partition of a graph while offering access to it through multiple query and update operations.

\begin{lemma}\label{lem:kpath-bnd-td-ms}
 Let $d$ and $k$ be fixed parameters where $k \leq d$. Suppose we are given a dynamic graph $G$ on $n$ vertices and access to a dictionary on the edges of $G$, using $\mdict$ memory, where each dictionary operation runs in time bounded by $\tdict$. 
 Then there exists a data structure $\Tdep_{d,k}[G,\Prt]$ that maintains a dynamic nice partition $\Prt$ of $G$. The data structure works under the assumption that, at all times, $\td(H) \leq d$ for all $H \in \Prt$. The data structure uses $(n \cdot 2^{\Oh(d \log d)} + \mdict)$ space and offers three main queries: 
 
 \medskip
 
 \noindent\begin{tabular}{| p{\textwidth} |}
    \hline
   \underline{$\pathlb(u,v,ux,vy)$}\\[0.2cm]
   \begin{tabular}{ l p{0.87\textwidth} }
   takes: & vertices $u,v \in V(G)$ and edges $uv,vy \in E(G)$ such that there is $H \in \Prt$ satisfying $ux,vy \in E(H)$ \\
   returns: & $\true$ if $H$ contains a simple $u$ to $v$ path on at least $k$ vertices, otherwise $\false$ \\
   time: & $d^{\Oh(d)}+\Oh(\tdict)$
   \end{tabular}\\[0.3cm]
    \hline
    
   \underline{$\pathub(i,u,v,ux,vy)$}\\[0.2cm]
   \begin{tabular}{ l p{0.87\textwidth} }
   takes: & parameter $i \leq k$, two vertices $u,v \in V(G)$ and two edges $uv,vy \in E(G)$ such that there is $H \in \Prt$ satisfying $ux,vy \in E(H)$ \\
   returns: & a simple path from $u$ to $v$ in $H$ on precisely $i$ vertices, if it exists; otherwise $\nil$\\
   time: & $d^{\Oh(d)}+\Oh(\tdict)$
   \end{tabular}\\[0.3cm]
    \hline
    
  \underline{$\art(vx,vy)$}\\[0.2cm]
  \begin{tabular}{ l p{0.87\textwidth} }
  takes: & edges $vx,vy\in E(G)$ such that there is $H\in \Prt$ satisfying $vx,vy\in V(H)$\\
  returns: & $\true$ if $x$ and $y$ are not in the same connected component of $H-v$, otherwise $\false$\\
  time: & $d^{\Oh(d)}$
   \end{tabular}\\
    \hline
    
\end{tabular}

\medskip

\noindent Furthermore, structure $\Tdep_{d,k}[G,\Prt]$ offers the following update operations:

\medskip
 
 \noindent\begin{tabular}{| p{\textwidth} |}
    \hline
\underline{$\ins(u,v,ux,vy)$}\\[0.2cm]
\begin{tabular}{ l p{0.87\textwidth} }
 takes: & vertices $u,v \in V(G)$ and edges $ux,vy \in E(G)$ such that $uv\notin E(G)$ and there is $H \in \Prt$ satisfying $ux,vy \in E(H)$\\
 does: & inserts an edge $uv$ to $H$, thus effectively adding $uv$ to $G$; in case $\td(H)>d$ after the insertion, throws an exception\\
 time: & $2^{\Oh(d^2)}+d^{\Oh(d)}\tdict$
   \end{tabular}\\[0.3cm]
    \hline
    
  \underline{$\rem(uv)$}\\[0.2cm]
\begin{tabular}{ l p{0.87\textwidth} }
  takes: & edge $uv \in E(G)$ which is not a bridge in $G$\\
  does: & removes edge $uv$ from the part $H\in \Prt$ that contains it, thus effectively removing $uv$ from $G$\\
  time: & $2^{\Oh(d^2)}+d^{\Oh(d)}\tdict$
   \end{tabular}\\[0.3cm]
    \hline
    
    \underline{$\merge(vx,vy)$}\\[0.2cm]
\begin{tabular}{ l p{0.87\textwidth} }
  takes: & edges $vx,vy \in E(G)$ such that the parts $I,J\in \Prt$ that satisfy $vx\in I$ and $vy\in J$ are different and $J$ is biconnected\\ 
  does: & merges $I$ and $J$ into one graph in $\Prt$, that is, it substitutes $I$ and $J$ in $\Prt$ with the graph $H=G[V(I) \cup V(J)]$; in case $\td(H)>d$, it throws an exception\\
  time: & $2^{\Oh(d^2)}+d^{\Oh(d)}\tdict$
   \end{tabular}\\[0.3cm]
    \hline
    
  \underline{$\splitt(vx,vy)$}\\[0.2cm]
\begin{tabular}{ l p{0.87\textwidth} }
  takes: & edges $vx,vy \in E(G)$ such that there is $H \in \Prt$ satisfying the following: $vx,vy \in E(H)$, $v$ is a cut-vertex in $H$, and $H$ has exactly two biconnected components $I$ and $J$ where $vx\in E(I)$ and $vy\in E(J)$\\
  does: & splits $H$ in $\Prt$ into $I$ and $J$\\
  time: & $2^{\Oh(d^2)}+d^{\Oh(d)}\tdict$
   \end{tabular}\\
    \hline
\end{tabular}
\smallskip

\noindent In addition to that, the data structure offers the following simple queries: \begin{itemize}[nosep]
\item $\edge(uv)$, which checks whether $uv \in E(G)$;
\item $\bridge(uv)$, which checks whether the part of $\Prt$ containing $uv$ is a single edge;
\item $\same(uv,u'v')$, which checks whether $uv,u'v'\in E(G)$ belong to the same part in $\Prt$.
\end{itemize}
The running times of these queries are bounded by $\Oh(\tdict+d)$.
Finally, $\Tdep_{d,k}[G,\Prt]$ also offers the following simple update methods:
\begin{itemize}[nosep]
\item $\new(u,v)$: for two vertices $u,v\in V(G)$ promised to be not in the same connected component of $G$, add the edge $uv$ to $G$ and create a new part $H \in \Prt$ consisting only of $uv$ and its endpoints.
\item $\destroy(uv)$: for an edge $uv \in E(G)$ such that the part $H\in \Prt$ that contains $uv$ consists only of $uv$, remove $uv$ from $G$ and remove the part $H$ in $\Prt$.
\end{itemize}
These methods take $\Oh(\tdict)$ time.
\end{lemma}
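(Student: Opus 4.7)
The plan is to maintain, for each part $H \in \Prt$, a data structure $\Dt^{\Pi_k}[F_H, H]$ of the kind provided by Lemma~\ref{lem:kpath-bnd-td} (combined with Section~\ref{sec:dp}), where $F_H$ is a recursively optimal elimination forest of $H$ of height at most $d$. On top of these, we maintain the caller's edge dictionary $\Lt$ as well as a secondary dictionary mapping each edge $e \in E(G)$ to a pointer identifying the part $H \in \Prt$ containing $e$; the latter dictionary piggy-backs on $\Lt$ and costs only $\Oh(\tdict)$ per query/update. With these two layers of information, the simple queries ($\edge$, $\bridge$, $\same$) and the simple updates ($\new$, $\destroy$) reduce to $\Oh(1)$ dictionary operations plus bookkeeping, so their stated runtimes are immediate.

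For the four nontrivial queries I proceed as follows. Given edges $ux, vy$, the dictionary on edges identifies the part $H$ that both live in, and we access $\Dt^{\Pi_k}[F_H, H]$. The query $\pathlb(u,v,ux,vy)$ is answered by invoking the second path query of Lemma~\ref{lem:path_queries} on $H$ with endpoints $u,v$; similarly $\pathub(i,u,v,ux,vy)$ is answered by the first path query of Lemma~\ref{lem:path_queries}. Both take time $2^{\Oh(k)} \cdot (dk)^{\Oh(d)} = d^{\Oh(d)}$ plus the cost of the dictionary lookup. The query $\art(vx,vy)$ asks whether $v$ separates $x$ from $y$ in $H$. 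The plan is to extract a small $q$-core $K$ of $(H, F_H)$ containing $\{v,x,y\}$ via Lemma~\ref{lem:extract-core} for a constant $q$ (say $q=2$); by Lemma~\ref{lem:biconn}, $x$ and $y$ are in the same component of $H-v$ if and only if they are in the same component of $H[K] - v$. Since $|K| \leq d^{\Oh(d)}$, this can be checked directly on $G[K]$ in time $d^{\Oh(d)}$.

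The update methods $\ins$ and $\rem$ simply delegate to the update method of $\Dt^{\Pi_k}[F_H, H]$ provided by Lemma~\ref{lem:update-conf} (with $\zeta, \tau$ given by Lemma~\ref{lem:schem-kpath}), plus the corresponding insert/remove in the edge-to-part dictionary; this gives update time $2^{\Oh(d^2)} + \Oh(\tdict)$. Note that $\rem$ is safe precisely because $uv$ is not a bridge, so the part $H$ containing it stays connected, and the exception in Lemma~\ref{lem:update-conf} can only fire on $\ins$.

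The hard part is $\merge$ and $\splitt$; this is where the full power of Sections~\ref{sec:cores} and~\ref{sec:splitting-merging} is used. For $\merge(vx, vy)$, let $I$ and $J$ be the two parts to be merged, with stored data structures $\Dt^{\Pi_k}[F_I, I]$ and $\Dt^{\Pi_k}[F_J, J]$. The plan is: (i) use Lemma~\ref{lem:extract-core} to compute $(d+1)$-cores $K^I$ of $(I, F_I)$ and $K^J$ of $(J, F_J)$ containing $\{v,x\}$ and $\{v,y\}$ respectively, with $K^I \cap K^J = \{v\}$; (ii) let $K := K^I \cup K^J$ and build the graph $(I \cup J)[K]$; (iii) run Lemma~\ref{lem:static} on $(I\cup J)[K]$ to compute a recursively optimal elimination forest $F^K$, throwing the exception if $\height(F^K) > d$; (iv) apply Lemma~\ref{lem:trim-implement-scheme} on both $\Dt^{\Pi_k}[F_I, I]$ and $\Dt^{\Pi_k}[F_J, J]$ with the prefixes $K^I$ and $K^J$ to produce partial structures $\Dt^{\Pi_k}[F_I - K^I, I/K^I]$ and $\Dt^{\Pi_k}[F_J - K^J, J/K^J]$; (v) merge the two lists $\appendices$ into one, yielding the partial structure $\Dt^{\Pi_k}[R, (I\cup J)/K]$ where $R = (F_I - K^I) \uplus (F_J - K^J)$; (vi) apply Lemma~\ref{lem:extend-implement-scheme} with $H[K] := (I \cup J)[K]$ and $F^K$ to produce $\Dt^{\Pi_k}[\wh{F}, I\cup J]$, where $\wh{F}$ is the extension of $F^K$ via $R$. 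Lemma~\ref{lem:merging} certifies both that $R$ is attachable to $(I\cup J, F^K)$ and that $\wh{F}$ is recursively optimal, so the call in step (vi) is justified. Finally, we update the edge-to-part dictionary by rerouting pointers from the (smaller of) the two removed parts $I, J$ to the merged part. The total time is dominated by step (iii), giving $2^{\Oh(d^2)} + d^{\Oh(d)} \tdict$.

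For $\splitt(vx, vy)$, the situation is symmetric via Lemma~\ref{lem:splitting}. The plan is: (i) extract a $(d+2)$-core $K$ of $(H, F_H)$ containing $\{v, x, y\}$ plus at least one additional vertex from each side (identified by running $\art$-like checks or by including an extra children of $v$ in $F_H$ from each of the two relevant subtrees); (ii) split $K$ into $K^A = K \cap A$ and $K^B = K \cap B$ where $(A,B)$ is the separation around $v$; for this we use the articulation structure of $H[K]$ computed statically; (iii) statically compute recursively optimal elimination forests $T^{K^A}, T^{K^B}$ of $H[K^A], H[K^B]$ via Lemma~\ref{lem:static}; (iv) apply Lemma~\ref{lem:trim-implement-scheme} to $\Dt^{\Pi_k}[F_H, H]$ with the prefix $K$, producing the partial structure $\Dt^{\Pi_k}[F_H - K, H/K]$; (v) split the list $\appendices$ into $\appendices^A$ and $\appendices^B$ by classifying each appendix based on whether its strong reachability set lies in $A$ or $B$ (which is well-defined by Claim~\ref{cl:trees-unmix} in the proof of Lemma~\ref{lem:splitting}); (vi) on each side, invoke Lemma~\ref{lem:extend-implement-scheme} to rebuild the individual data structures $\Dt^{\Pi_k}[\wh{T}^A, H[A]]$ and $\Dt^{\Pi_k}[\wh{T}^B, H[B]]$. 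Correctness follows from Lemma~\ref{lem:splitting}, and the complexity is again $2^{\Oh(d^2)} + d^{\Oh(d)} \tdict$. The main subtlety here, which I expect to require the most care, is bookkeeping when splitting $\appendices$: we need to classify each appendix bucket in time proportional to its single representative by inspecting its $\SReach$, and we need to ensure the pointer structure of each bucket and its associated mugs remains intact as it is moved from one partial structure to another.
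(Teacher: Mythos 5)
Your proposal follows the same skeleton as the paper's proof — per-part treedepth structures maintained via cores, trim/extend, and Lemmas~\ref{lem:merging} and~\ref{lem:splitting} — but two implementation-level components would not actually run within the stated time bounds, and both are solved in the paper by a device you omit.

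The more serious issue is your edge-to-part mapping. You propose a dictionary mapping each edge to ``a pointer identifying the part $H \in \Prt$'', and during $\merge$ you plan to ``reroute pointers from the (smaller of) the two removed parts''. Parts of bounded treedepth can still have $\Theta(n)$ vertices and $\Theta(dn)$ edges (e.g.\ a star), so rerouting one part's pointers is $\Theta(n \cdot \tdict)$ in the worst case, far above the claimed $2^{\Oh(d^2)} + d^{\Oh(d)}\tdict$; and for $\splitt$ there is no ``smaller side'' shortcut at all, since a bounded-treedepth part can split into two halves of arbitrary size. A union-find patch would rescue merges but not splits. The paper avoids this by storing with each edge $uv$ the \emph{local copy} of its deeper endpoint in the elimination tree of its part ($\edges[uv] = \lowe(uv)$), and resolving the part by following parent pointers to the root. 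Because trim/extend preserves all records and parent pointers for vertices outside the core~$K$, the entries for the $\Oh(d|K|)$ edges internal to $K$ are the only ones needing recomputation, which fits into $d^{\Oh(d)}\tdict$. Related to this, you do not address the global/local copy machinery that the paper introduces ($\Dt^{\Pi_k}_{\gl}$). A cut-vertex belongs to several parts and hence must carry several records (one per part it lies in); and for the merge step to just concatenate the two $\appendices$ lists (your step~(v)) and for the split step to just partition one $\appendices$ list, the vertex references stored \emph{inside} the records of vertices outside $K$ --- the sets $\SReach(\cdot)$, $\Up(\cdot)$, and the sets $X$ indexing buckets and mugs --- must refer to vertices in a representation that survives the merge/split unchanged. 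The paper achieves this by storing those references as global copies while keeping the per-part data as local copies. You flag the ``pointer structure of each bucket and its associated mugs'' as a subtlety in the last sentence, but without the global/local separation there is no reason the concatenated (resp.\ partitioned) $\appendices$ list yields a well-formed partial structure for the merged (resp.\ split) part.
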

\begin{proof}
In $\Tdep_{d,k}[G,\Prt]$, each vertex $v\in V(G)$ can be represented through several aliases, which we call {\em{copies}}: one {\em{global copy}} $v$, and one {\em{local copy}} $v_H$ for each $H\in \Prt$ such that $v\in V(H)$. Each local copy $v_H$ maintains a pointer $\glob(v_H)=v$ pointing to the corresponding global copy.

The data structure $\Tdep_{d,k}[G,\Prt]$ maintains the following internal data structures:
\begin{itemize}
 \item A global array $\verts$ which stores all global copies of the vertices of~$G$. This array is never modified. The pointers $\glob(\cdot)$ point to the corresponding cells in this array.
 
 \item For each $H\in \Prt$, a data structure $\Dt_{\gl}^{\Pi_k}[T,H]$ that stores a recursively optimal elimination tree $T$ of $H$ of depth at most $d$ (note here that we assume that $H$ is connected, hence $T$ will be always a tree). This data structure is a slight modification of the structure $\Dt^{\Pi_k}[T,H]$ given by Lemma~\ref{lem:update-conf} for the configuration scheme provided by Lemma~\ref{lem:schem-kpath}. Let us now explain this modification. In $\Dt_{\gl}^{\Pi_k}[T,H]$, all the data about vertices of $H$, such as the parent pointer and the bucket elements, are stored using their local copies associated with $H$, with three exceptions. Namely, for each $v\in V(H)$, the sets $\SReach(v_H)$ and $\Up(v_H)$, and the sets $X$ indexing buckets $\bucket[v_H,X,i]$ and mugs $\bucket[v_H,X,i,c]$, are stored in $\Dt_{\gl}^{\Pi_k}[T,H]$ using the global copies of vertices, rather than the local copies (hence, also the configuration $c$ is encoded using global copies, as it depends on $X$). The intuition behind this modification is that it allows us to keep these sets up to date when local copies get destroyed or created during merges and splits.
 
 Finally, the root of $T$ also stores a reference to the whole data structure $\Dt_{\gl}^{\Pi_k}[T,H]$. Note that, thus, given a local copy $u_H$ of a vertex of $H$, we can obtain a reference to $\Dt_{\gl}^{\Pi_k}[T,H]$ in time $\Oh(d)$ by following the parent pointers up to the root of $T$, and finding the reference there.

\item A global dictionary $\edges$ wherein the keys are the edges of $G$ (i.e., pairs of global copies of vertices). Each edge $uv \in E(G)$ is associated with the following value. Let $H \in \Prt$ be such that $uv \in E(H)$. $\Dict[uv]$ stores the local copy of the endpoint of $uv$ that is deeper in the elimination tree $T$, stored in $\Dt_{\gl}^{\Pi_k}[T,H]$. Let us denote this value by $\lowe(uv)$, i.e. $\edges[uv]=\lowe(uv)$. 
\end{itemize}

Let us first bound the memory usage. Array $\verts$ uses $\Oh(n)$ memory, while dictionary $\edges$ uses $\mdict$ memory. Since each $\Dt_{\gl}^{\Pi_k}[T,H]$ structure does not store any additional data compared to $\Dt_{\gl}^{\Pi_k}[T,H]$, only encodes it differently, by Lemma~\ref{lem:scheme-maintain} and Lemma~\ref{lem:schem-kpath} it occupies $|V(H)| \cdot 2^{\Oh(d \log d)}$ memory. In total this takes $2^{\Oh(d \log d)} \sum_{H \in \Prt} |V(H)| =2^{\Oh(d \log d)} |E(G)| =n \cdot 2^{\Oh(d \log d)}$, where the last equality follows from $|E(G)| \leq d n$. 

\paragraph*{Global and local indexing.}
Before we move on to implementing the interface of $\Tdep_{d,k}[G,\Prt]$ specified in the lemma statement, let us elaborate a bit more on the structure $\Dt_{\gl}^{\Pi_k}[T,H]$ that we maintain for each $H \in \Prt$. We would like to use methods that we have already implemented for $\Dt^{\Pi_k}[T,H]$: $\core()$ (Lemma~\ref{lem:extract-core}), $\trim()$ (Lemmas~\ref{lem:trim-implement-scheme} and Lemma~\ref{lem:trim-implement}), $\extend()$ (Lemma~\ref{lem:extend-implement-scheme} and Lemma~\ref{lem:extend-implement}), and edge updates (Lemmas~\ref{lem:update-conf} and Lemma~\ref{lem:update}). The caveat is that in $\Dt_{\gl}^{\Pi_k}[T,H]$ we use global references for $\SReach(v_H)$, $\Up(v_H)$ and the sets $X$ indexing buckets and mugs.
We argue next that this is in fact not an issue, if we are willing to accept polynomial overheads in the running time.

The crucial observation is that the partition into buckets and mugs in $\Dt_{\gl}^{\Pi_k}[T,H]$ is exactly the same as in $\Dt^{\Pi_k}[T,H]$, just with the local indexing translated into a global one. Moreover, given a local copy~$v_H$ of a vertex $v\in V(H)$, and any set $Y_\gl$ consisting of global copies of vertices lying on the path in $T$ from $v$ to the root of $T$, we can reconstruct its local counterpart $Y_\loc$ --- consisting of the $H$-associated local copies of vertices of $Y_\gl$ --- as follows. Starting from $v_H$ and following the parent pointers in $T$, obtain the set of local copies of ancestors of $v$ in $T$, and then check for which of those ancestors their global copies --- accessed using $\glob()$ --- are contained in $Y_\gl$. This takes $\Oh(d \cdot |Y|) \leq \Oh(d^2)$ time. Note that the sets $\SReach(v_H)$, $\Up(v_H)$ and the sets $X$ indexing buckets and mugs are such that we can compute their local counterparts in this way. Moreover, having $X$ translated into local copies $X_\loc$, allows translating any configuration $c \in \conf(X)$. Translating $Y_\loc$ back to $Y_\gl$ is automatic using the $\glob()$ pointers. So to sum up, the structure $\Dt_{\gl}^{\Pi_k}[T,H]$ can simulate the methods we implemented for $\Dt^{\Pi_k}[T,H]$ --- $\core()$ and $\trim()$ --- with an additional $\Oh(d^2)$ overhead in the running time.

Further explanation is, however, needed for the method $\extend()$, which normally operates on a partial data structure $\Dt^{\Pi_{k}}[R,H/K]$, a graph $H[K]$, and a forest $F^K$ on vertices of $K$ such that $R$ is attachable to $F^K$. We similarly assume that after the modification, the method should work on a modified partial data structure $\Dt^{\Pi_{k}}_\gl[R,H/K]$, where sets $\SReach(v_H)$, $\Up(v_H)$, and sets indexing buckets and mugs are stored using global copies instead of local ones. Note that we also assume that the forest $F^K$ is provided to the method using a parent function on the local copies of vertices of $K$. Then given a set $Y_\gl$ represented using global copies, for instance $Y_\gl=\SReach(v_H)$ for some local copy $v_H\in V(H)\setminus K$ provided on input, we can compute the corresponding local counterpart $Y_\loc$ by translating the global references for all $y\in Y\setminus K$ as in the previous paragraph, while for each $y\in Y\cap K$ we can iterate through the whole $K=V(F^K)$ to find a local copy that corresponds to the given global copy. This takes time $\Oh(|Y|(d+|K|))\leq \Oh(d^2+d|K|)$. The translation from global copies to local ones is again straightforward using the $\glob()$ pointers. We conclude that due to the translation between global and local copies, we get an $\Oh(d^2+d|K|)$ overhead in the implementation of method $\extend()$.



Summing up, in $\Dt_{\gl}^{\Pi_k}[T,H]$ we can implement $\core()$ and $\trim()$ with an additional overhead of $\Oh(d^2)$ in the running time, while in the implementation of $\extend()$ we get an $\Oh(d^2+d|K|)$ additional overhead. As can be seen in Lemma~\ref{lem:update} and Lemma~\ref{lem:update-conf}, these three operations are sufficient to implement edge insertion and removal, provided that we are given the local copies of the endpoints. Moreover, we have $|K|\leq d^{\Oh(d)}$ in these applications. Thus, we conclude that edge addition and removal in $\Dt_{\gl}^{\Pi_k}[T,H]$ take time $2^{\Oh(d^2)}$. 

\paragraph*{Auxiliary methods.}
Next, we introduce a few auxiliary methods of the data structure $\Tdep_{d,k}[G,\Prt]$, which will be helpful in the forthcoming implementations of the public methods.

First, we will use a method $\cpy(K)$. This method is given a subset of vertices $K\subseteq V(G)$, provided using global copies. It constructs a copy $G'$ of the induced subgraph $G[K]$ using fresh copies. That is, for each $v\in K$ a new copy $v_{G'}$ is created, and the pointer $\glob(v_{G'})\coloneqq v$ is appropriately set. For each pair of vertices $u,v\in K$ we check whether $uv\in E(G)$ by a look-up in $\edges$. If so, we add $u_{G'}v_{G'}$ to the edge set of $G'$. Thus, $\cpy(K)$ takes time $\Oh(|K|^2\cdot \tdict)$.

Next, we will use a method $\retrieve(u,uv)$. This method takes a global vertex $u \in V(G)$ and a global edge $uv \in E(G)$, and returns a local copy $u_H$ of $u$, where $H\in \Prt$ is the part such that $uv \in E(H)$. This method is presented using pseudocode as Algorithm~\ref{alg:retrieve} and works as follows. First, we look up the value $x=\edges[uv]$, which is equal to either to $u_H$ or to $v_H$, whichever is lower in the stored elimination tree of $H$. If $\glob(x)=u$, then $x=u_H$ and we can return $x$. Otherwise $x=v_H$ and $u_H$ is among the ancestors of $v_H$, so we may look for $u_H$ by following the parent pointers. Thus, this method takes $\Oh(\tdict+d)$ time. 

Finally, we will also use method $\find(uv)$. This method  finds (and returns via a reference) the structure $\Dt_{\gl}^{\Pi_k}[T,H]$ where $H \in \Prt$ is such that $uv \in E(H)$. The implementation is as follows. First, we look up $x=\edges[uv]$, which is equal to either $u_H$ or $v_H$. Then we follow the parent pointers from $x$ to the root of $T$, where $T$ is the elimination tree stored in the structure $\Dt_{\gl}^{\Pi_k}[T,H]$ associated with $H$. The reference to $\Dt_{\gl}^{\Pi_k}[T,H]$ can be found at this root. Thus, this method also takes $\Oh(\tdict + d)$ time.

 \begin{algorithm}\label{alg:retrieve}
     
     \SetKwInOut{Input}{Input}
     \SetKwInOut{Output}{Output}
 
     \vskip 0.2cm
     
     \Input{A vertex $u \in V(G)$ and an edge $uv \in E(G)$}
     \Output{A local copy $u_H$ of $u$, where $H \in \Prt$ is such that $uv \in E(H)$}
     
     \vskip 0.1cm

      $x \gets \edges[uv]$

      \While{ $\glob(x) \neq u$}
      {
      $x \gets \mathtt{parent}(x)$\\
      }
     { \Return{ $x$ } } 
     \caption{Method $\retrieve(u,uv)$. }
 \end{algorithm}

\paragraph*{Implementation of the interface.}
We are now ready to implement the public methods of $\Tdep_{d,k}[G,\Prt]$, specified in the lemma statement. Let us emphasize that all public methods operate on global vertices, so there is no access to local copies from the outside. Some methods are very easy to implement, so we describe them briefly first. 
\begin{itemize}[nosep]
 \item Method $\edge(uv)$ can be implemented by a single look-up in $\edges$, so it takes $\Oh(\tdict)$ time.
 \item Method $\bridge(uv)$ is equally simple. Using $\find(uv)$ we get access to the structure $\Dt_{\gl}^{\Pi_k}[T,H]$, and then we verify whether the elimination tree $T$ consists only of local copies of $u$ and $v$. This takes $\Oh(\tdict+d)$ time.
 \item Method $\same(uv,u'v')$ looks up $uv$ and $u'v'$ in $\edges$, thus obtaining $\lowe(uv)$ and $\lowe(u'v')$, which are local copies of either $u$ or $v$, and of either $u'$ or $v'$, respectively. From these local copies we follow parent pointers to the respective roots of elimination trees. If these roots are equal, then the method returns $\true$, otherwise it results $\false$. This takes $\Oh(\tdict + d)$ time.
 \item Method $\new(uv)$ creates a new data structure $\Dt_{\gl}^{\Pi_k}[T,H]$ for $H$ consisting of $u$ and $v$ and a single edge $uv$; this structure uses new copies of $u$ and $v$. Then it adds $uv$ to $\edges$ with the $H$-local copy of $\lowe(uv)$ as its value. This takes $\Oh(\tdict)$ time.
 \item Method $\destroy(uv)$ first uses $\find(uv)$ to get a reference to the structure $\Dt_{\gl}^{\Pi_k}[T,H]$, where $H$ is the part that contains $uv$. Then it checks the assertion that $uv$ is the sole edge of $H$. Finally, the method destroys $\Dt_{\gl}^{\Pi_k}[T,H]$ and removes $uv$ from $\edges$. This takes $\Oh(\tdict+d)$ time.
\end{itemize}
We now move on to describing the more complicated update and query operations.

\paragraph*{Insert and Remove.} The insert procedure $\ins(u,v,ux,vy)$ gets as arguments two vertices $u,v$ and two edges $ux,vy$ such that $ux,vy\in E(H)$ for some $H \in \Prt$. Moreover, we assume that $uv \notin E(H)$, so that the edge $uv$ may be added to $H$. First, we may obtain a reference to the structure $\Dt_{\gl}^{\Pi_k}[T,H]$ using $\find(ux)$, and we can also find the local copies $u_H$ and $v_H$ using $\retrieve(u,ux)$ and $\retrieve(v,vy)$. Now we can call $\Dt_{\gl}^{\Pi_k}[T,H].\ins(u_Hv_H)$ to insert the edge $u_Hv_H$ to $H$. As we argued, this method takes time $2^{\Oh(d^2)}$. 


There is one more clean-up operation that needs to be performed, namely the update of information stored in $\edges$. Recall that the implementation of method $\ins(u_Hv_H)$ in $\Dt_{\gl}^{\Pi_k}[T,H]$ first extracts a $(d+1)$-core $K$, then trims the current state of a data structure to a partial data structure, and then extends the partial data structure to the data structure for updated $H$ using a new elimination forest computed for the subgraph induced by $K$. Consequently, the values $\edges[uv]$ for $u,v\in K$ require updating; it is easy to see that the values for other edges remain valid. As the graph $H[K]$ and its elimination forest $F^K$ are computed explicitly in method $\ins(u_Hv_H)$, finding the new value $\edges[uv]$ for each edge $uv$ with $u,v\in K$ requires $\Oh(|K|^2)$ time. Since there are $\Oh(d|K|)$ edges for which the values need to be updated in this way, this takes time $\Oh(d|K|^3\cdot \tdict)$, which is bounded by $d^{\Oh(d)}\cdot \tdict$ due to $|K|\leq d^{\Oh(d)}$. Hence the total time needed for an edge insertion is $2^{\Oh(d^2)}+d^{\Oh(d)}\tdict$. 

Method $\rem(uv)$ is completely analogous (cf. Lemma~\ref{lem:update}) and requires same asymptotic running~time.

\paragraph*{Merge.} The $\merge(vx,vy)$ procedure (see Algorithm~\ref{alg:merge} for a pseudocode) gets on input two edges $vx,vy \in E(G)$ such that there are different parts $I,J \in \Prt$ where $vx \in E(I)$ and $vy \in E(J)$. Moreover, we assume that $J$ is biconnected. The goal is to merge $I$ and $J$ into one new part of $\Prt$, i.e., substitute $I$ and $J$ in $\Prt$ with $H=G[V(I) \cup V(J)]$. 

We starting getting references to structures $\Dt_{\gl}^{\Pi_k}[T_I,I]$ and $\Dt_{\gl}^{\Pi_k}[T_J,J]$, and retrieving the local copies $v_I$, $x_I$, $v_J$, $y_J$ of the input vertices. This can be done using the $\find()$ and $\retrieve()$ method. Then, we extract $(d+1)$-cores $K_I$ and $K_J$ of $(I,T_I)$ and $(J_,T_J)$, respectively, so that $v_I,x_I\in K_I$ and $v_J,y_J\in K_J$. For this, we apply methods $\Dt_{\gl}^{\Pi_k}[T_I,I].\core(\{ v_I,x_I \},d+1)$ and $\Dt_{\gl}^{\Pi_k}[T_J,J].\core(\{ v_J,y_J \},d+1)$ respectively. We let $K\coloneqq \glob(K_I \cup K_J)$.

We now use methods $\Dt_{\gl}^{\Pi_k}[T_I,I].\trim(K_I)$ and $\Dt_{\gl}^{\Pi_k}[T_J,J].\trim(K_J)$ to modify these data structures into partial data structures $\Dt_{\gl}^{\Pi_k}[T_I-K_I,I/K_I]$ and $\Dt_{\gl}^{\Pi_k}[T_J-K_J,J/K_J]$, respectively (see Lemma~\ref{lem:trim-implement-scheme}). This takes time $(d|K|)^{\Oh(1)}\leq d^{\Oh(d)}$. Now comes the crucial point of the proof: these partial data structures can be easily merged into one partial data structure $\Dt_{\gl}^{\Pi_k}[R,H/K]$, where $R$ is the disjoint union of forests $T_I-K_I$ and $T_J-K_J$. For this, we may just concatenate lists $\Dt_{\gl}^{\Pi_k}[T_I-K_I,I/K_I].\appendices$ and $\Dt_{\gl}^{\Pi_k}[T_J-K_J,J/K_J].\appendices$, and declare the concatenated list to be $\appendices$ in $\Dt_{\gl}^{\Pi_k}[R,H/K]$. Note that thus, the data structure $\Dt_{\gl}^{\Pi_k}[R,H/K]$ uses the local copies inherited from $\Dt_{\gl}^{\Pi_k}[T_I-K_I,I/K_I]$ for the vertices of $V(I)\setminus K_I$, and the local copies inherited from $\Dt_{\gl}^{\Pi_k}[T_J-K_J,J/K_J]$ for vertices of $V(J)\setminus K_J$. 
Since sets $\SReach(\cdot)$, $\Up(\cdot)$, and sets indexing buckets and mugs are stored using global copies of vertices, it is straightforward to verify that the records stored for the vertices of $V(I)\setminus K_I$ and of $V(J)\setminus K_J$ in $\Dt_{\gl}^{\Pi_k}[T_I-K_I,I/K_I]$ and $\Dt_{\gl}^{\Pi_k}[T_J-K_J,J/K_J]$, respectively, are exactly as they should be in $\Dt_{\gl}^{\Pi_k}[R,H/K]$. Thus, merging the data structures indeed only amounts to concatenating the list $\appendices$. Let us note here that the concatenated list $\appendices$ does not contain any duplicate buckets. This is because for each bucket $\bucket[\bot,X,i]\in \Dt_{\gl}^{\Pi_k}[T_J-K_J,J/K_J].\appendices$, the set $X$ contains some vertex of $K_J\setminus \{v_J\}$, due to $J$ being biconnected and $\{v_J,y_J\}\subseteq K_J$.



Next, we construct a copy $H_K$ of $G[K]$ using $H_K\coloneqq \cpy(K)$. We may also compute a recursively optimal elimination tree $T^K$ for $H_K$ using the algorithm of Lemma~\ref{lem:static}. We refer to the corresponding procedure as $\elimination(H_K)$, and we note that it takes time $2^{\Oh(d^2)}\cdot |K|^{\Oh(1)}=2^{\Oh(d^2)}$. In case it turns out that $\td(H_K)>d$, then we roll back all the changes and terminate the method due to the conclusion that $\td(H)>d$; otherwise, we may assume that $\height(T^K)\leq d$.
Now, from Lemma~\ref{lem:merging} we conclude that $R$ is attachable to $(H_K,T^K)$ and if $\wh{T}$ is the extension of $T^K$ via $R$, then $\wh{T}$ is a recursively optimal elimination forest of $H$ of height at most $d$.

We may finally apply the method $\extend(H_K,T^K)$, provided by Lemma~\ref{lem:extend-implement-scheme}, to modify the data structure $\Dt_{\gl}^{\Pi_k}[R,H/K]$ into $\Dt_{\gl}^{\Pi_k}[\wh{T},H]$. By Lemma~\ref{lem:extend-implement-scheme} and the discussion about the treatment of global indexing, this takes time polynomial in $2^d$ and $|K|$, which is $d^{\Oh(d)}$.

At the end we need to update the values stored in $\edges$, similarly as in the $\ins()$ method. Again, this takes time $(d|K|)^{\Oh(1)}\cdot \tdict\leq d^{\Oh(d)}\cdot \tdict$. So all in all, the whole implementation takes time $2^{\Oh(d^2)}+d^{\Oh(d)}\cdot \tdict$.
%

\paragraph*{Split.} The method $\splitt(vx,vy)$ (see Algorithm~\ref{alg:split} for pseudocode) takes as arguments edges $vx,vy \in E(G)$ such that there is a part $H \in \Prt$ such that $vx,vy \in E(H)$, $v$ is a cut-vertex in $H$, and $H$ has exactly two biconnected components $I$ and $J$, where $vx\in E(I)$ and $vy\in E(J)$. The goal is to split $H$ in $\Prt$ into $I$ and $J$. 

We first get access to the data structure $\Dt_{\gl}^{\Pi_k}[T,H]$ and retrieves the local copies $v_H,x_H,y_H$ of $v,x,y$, respectively. For this, we may use methods $\retrieve()$ and $\find()$. We then extract a $(d+1)$-core $K$ satisfying $v_H,x_H,y_H\in K$ using the method $\Dt_{\gl}^{\Pi_k}[T_H,H].\core(\{ v_H, x_H, y_H \},d+2)$ method. Next, we extract the graph $H_K\coloneqq \cpy(K)$. By Lemma~\ref{lem:biconn}, $H_K$ has exactly two biconnected components $H_I\coloneqq H[K^I]$ and $H_J\coloneqq H[K^J]$, where $K^I\coloneqq K\cap V(I)$ and $K^J\coloneqq K\cap V(J)$. We can construct these components in time $\Oh(|E(H_K)|)=d^{\Oh(d)}$ by applying on $H_K$ any static linear-time algorithm for decomposing a graph into biconnected components. Note that $V(H_I)\cap V(H_J)=\{v\}$.

We apply the method $\Dt_{\gl}^{\Pi_k}[T,H].\trim(K)$, provided by Lemma~\ref{lem:trim-implement-scheme}, to modify the data structure $\Dt_{\gl}^{\Pi_k}[T,H]$ into the partial data structure $\Dt_{\gl}^{\Pi_k}[R,H/K]$, where $R=T-K$. Note that by Lemma~\ref{lem:trim-implement-scheme}, the list $\appendices$ in the data structure $\Dt_{\gl}^{\Pi_k}[R,H/K]$ has length at most $\Oh(2^{d}\cdot |K|)\leq d^{\Oh(d)}$. 
By Lemma~\ref{lem:splitting}, $R$ can be decomposed into two subforests $R^I$ and $R^J$ by placing each tree of $R$ either in $R^I$ or in $R^J$, where trees whose vertex sets are contained in $V(I)$ go to $R^I$ and trees whose vertex sets are contained in $V(J)$ go to $R^J$. Note that since $K$ contains at least one vertex of $V(I)\setminus \{v\}$ and at least one vertex of $V(J)\setminus \{v\}$, while both $I$ and $J$ are biconnected, for each tree $S$ in $R$ we have that $|N_H(V(S))|>1$. This means that $N_H(V(S))$ either contains vertices of $V(I)\setminus \{v\}$ and then $V(S)\subseteq V(I)$, or $N_H(V(S))$ contains vertices of $V(J)\setminus \{v\}$ and then $V(S)\subseteq V(J)$. Observing this, we may split the partial data structure $\Dt_{\gl}^{\Pi_k}[R,H/K]$ into $\Dt_{\gl}^{\Pi_k}[R^I,I/K^I]$ and $\Dt_{\gl}^{\Pi_k}[R^I,I/K^I]$ as follows: for each bucket $\bucket[\bot,X,i]\in \Dt_{\gl}^{\Pi_k}[R,H/K].\appendices$, place this bucket on $\Dt_{\gl}^{\Pi_k}[R^I,I/K_I].\appendices$ if $X$ contains a vertex of $V(I)\setminus \{v\}$, and otherwise place this bucket on $\Dt_{\gl}^{\Pi_k}[R^J,J/K_J].\appendices$. Note here that since $X\subseteq K$ each such check can be done in time $\Oh(|X|\cdot |K|)=d^{\Oh(d)}$, so the total time required for splitting is bounded by $d^{\Oh(d)}$. Once the lists $\Dt_{\gl}^{\Pi_k}[R^I,I/K_I].\appendices$ and $\Dt_{\gl}^{\Pi_k}[R^J,J/K_J].\appendices$ are constructed, we observe that the data structures $\Dt_{\gl}^{\Pi_k}[R^I,I/K_I].\appendices$ and $\Dt_{\gl}^{\Pi_k}[R^J,J/K_J].\appendices$ are effectively constructed as well: the records stored at vertices of $V(H)\setminus K$, inherited from $\Dt_{\gl}^{\Pi_k}[R,H/K]$, are exactly as they should be in $\Dt_{\gl}^{\Pi_k}[R^I,I/K_I]$ and $\Dt_{\gl}^{\Pi_k}[R^J,J/K_J]$. Note that both these data structures inherit local copies of vertices from $\Dt_{\gl}^{\Pi_k}[R,H/K]$.

Next, we create copies $H'_I\coloneqq \cpy(K^I)$ and $H'_J\coloneqq \cpy(K^J)$ of $H_I$ and $H_J$, respectively. We compute recursively optimal elimination trees $T^{K^I}$ and $T^{K^J}$ of $H'_I$ and $H'_J$, respectively, using the $\elimination()$ method that applies the algorithm of Lemma~\ref{lem:static}. Note that this step takes time $2^{\Oh(d^2)}\cdot |K|^{\Oh(1)}=2^{\Oh(d^2)}$ and, due to the assumption that $\td(H)\leq d$, the obtained trees $T^{K^I}$ and $T^{K^J}$ have height at most $d$.
By Lemma~\ref{lem:splitting}, $R^I$ is attachable to $(H'_I,T^{K^I})$ and $\wh{T}^I$ --- the extension of $T^{K^I}$ via $R^I$ --- is a recursively optimal elimination forest of $H'_I$. Similarly for $R^J$, $H'_J$, and $T^{K^J}$. 

As we have already constructed data structures $\Dt_{\gl}^{\Pi_k}[R^I,I/K_I]$ and $\Dt_{\gl}^{\Pi_k}[R^J,J/K_J]$, we may now use their $\extend()$ methods, provided by Lemma~\ref{lem:extend-implement-scheme}, to turn them into data structures $\Dt_{\gl}^{\Pi_k}[\wh{T}^I,I]$ and $\Dt_{\gl}^{\Pi_k}[\wh{T}^J,J]$. Again, this takes time $d^{\Oh(d)}$. Finally, we again need to update the values stored in $\edges$ for edges with both endpoints in $K$, which again takes time $d^{\Oh(d)}\cdot \tdict$. We conclude that the total running time of the method is $2^{\Oh(d^2)}+d^{\Oh(d)}\cdot \tdict$.

 \begin{algorithm}\label{alg:merge}
     
     \SetKwInOut{Input}{Input}
     \SetKwInOut{Output}{Output}
    \SetKw{Raise}{raise exception:\ }
 
     \vskip 0.2cm
     
     \Input{Edges $vx,vy \in V(G)$ such that there are different $I,J \in \Prt$ where $vx \in E(I)$ and $vy \in E(J)$, and $J$ is biconnected}
     \Output{$I$ and $J$ are substituted with $H=G[V(I)\cup V(J)]$ in $\Prt$}
     
     \vskip 0.1cm

      $\Dt_{\gl}^{\Pi_k}[T_I,I], \Dt_{\gl}^{\Pi_k}[T_J,J] \gets \find(vx), \find(vy)$\\
      $v_I,v_J, x_I, y_J \gets \retrieve(v,vx), \retrieve(v,vy),\retrieve(x,vx),\retrieve(y,vy)$\\
      $K_I,K_J \gets \Dt_{\gl}^{\Pi_k}[T_I,I].\core(\{ v_I,x_I \},d+1),\Dt_{\gl}^{\Pi_k}[T_J,J].\core(\{ v_J,y_J \},d+1)$\\
      $K \gets \glob(K_I \cup K_J)$
      
      turn $\Dt_{\gl}^{\Pi_k}[T_I,I]$ into $\Dt_{\gl}^{\Pi_k}[T_I-K_I,I/K_I]$ by applying $\Dt_{\gl}^{\Pi_k}[T_I,I].\trim(K_I)$
      
      turn $\Dt_{\gl}^{\Pi_k}[T_J,J]$ into $\Dt_{\gl}^{\Pi_k}[T_J-K_J,J/K_J]$ by applying $\Dt_{\gl}^{\Pi_k}[T_J,J].\trim(K_J)$
      
      merge $\Dt_{\gl}^{\Pi_k}[T_I-K_I,I/K_I]$ and $\Dt_{\gl}^{\Pi_k}[T_J-K_J,J/K_J]$ into $\Dt_{\gl}^{\Pi_k}[R,H/K]$ by concatenating their $\appendices$ lists
      
      $H_K\gets \cpy(K)$\\
      $T^K\gets \elimination(H_K)$\\
      
      \If{$\elimination(H_K)$\normalfont{reported that}$\td(H_K)>d$}{
         roll back all the changes\\
         \Raise $\td(H)>d$
      }
      
      turn $\Dt_{\gl}^{\Pi_k}[R,H/K]$ into $\Dt_{\gl}^{\Pi_k}[\wh{T},H]$ using $\extend(H_K,T^K)$
      
      $\update(\edges)$\\
     \caption{Method $\merge(vx,vy)$ }
 \end{algorithm}

  \begin{algorithm}\label{alg:split}
     
     \SetKwInOut{Input}{Input}
     \SetKwInOut{Output}{Output}
 
     \vskip 0.2cm
     
     \Input{edges $vx,vy \in V(G)$ such that there is $H \in \Prt$ such that $vx,vy \in E(H)$, $v$ is a cut-vertex in $H$, and $H$ has exactly two biconnected components $I$ and $J$ such that $vx\in E(I)$ and $vy\in E(J)$}
     \Output{Substitutes $H$ with $I$ and $J$ in $\Prt$}
     
     \vskip 0.1cm

     $v_H,x_H, y_H \gets \retrieve(v,vx),\retrieve(x,vx),\retrieve(y,vy)$\\
     
     $\Dt_{\gl}^{\Pi_k}[T,H] \gets \find(vx)$ \\
     
     $K \gets \Dt_{\gl}^{\Pi_k}[T_H,H].\core(\{ v_H, x_H, y_H \},d+2)$\\
     $K^I,K^J\gets V(I)\cap K,V(J)\cap K$\\
     
     $H_K \gets \cpy(K)$\\
     $H_I,H_J \gets \bicom(H_K)$ \\

     turn $\Dt_{\gl}^{\Pi_k}[T,H]$ into $\Dt_{\gl}^{\Pi_k}[T-K,H/K]$ using $\Dt_{\gl}^{\Pi_k}[T,H].\trim(K)$\\
     split $\Dt_{\gl}^{\Pi_k}[T-K,H/K]$ into $\Dt^{\Pi_k}_{\gl}[R^I,I/K^I]$ and $\Dt^{\Pi_k}_{\gl}[R^J,J/K^J]$ by splitting the $\appendices$ list\\
     
     $H'_I,H'_J \gets \cpy(K^I),\cpy(K^J)$\\
     $T^{K^I},T^{K^J} \gets \elimination(H'_I), \elimination(H'_J)$\\
     turn $\Dt^{\Pi_k}_{\gl}[R^I,I/K^I]$ into $\Dt^{\Pi_k}_{\gl}[\wh{T}^I,I]$ using $\extend(H'_I,T^{K^I})$\\
     turn $\Dt^{\Pi_k}_{\gl}[R^J,J/K^J]$ into $\Dt^{\Pi_k}_{\gl}[\wh{T}^J,J]$ using $\extend(H'_J,T^{K^J})$
     
     \caption{Method $\splitt(vx,vy)$}
 \end{algorithm}

\paragraph{Path queries.} Recall that the path queries to $\Tdep_{d,k}[G,\Prt]$ get on input two vertices $u,v$ and two edges $ux,vy$ such that there exists a part $H\in \Prt$ satisfying $ux,vy\in E(H)$. The query $\pathlb(u,v,ux,vy)$ should answer whether in $H$ there is a simple path on at least $k$ vertices with endpoints $u$ and $v$. In the query $\pathlb(i,u,v,ux,vy)$ we similarly ask for a $u$-to-$v$ path on exactly $i$ vertices in $H$, where $i\leq k$ is a given parameter.

These queries are implemented as follows.
Using method $\find(ux)$ we may get access to the structure $\Dt^{\Pi_k}[T,H]$ stored within $\Tdep_{d,k}[G,\Prt]$, where $ux,vy\in E(H)$. Note that $u,v$ are provided on input using global copies, but we may compute the corresponding local copies $u_H,v_H$ using $\retrieve(u,ux)$ and $\retrieve(v,vy)$. Now, to answer the queries we may use the procedures provided by Lemma~\ref{lem:path_queries} for the data structure $\Dt_{\gl}^{\Pi_k}[T,H]$. Again, the need of translation of global copies to local ones incurs an additional $d^{\Oh(1)}$ overhead in the running times of these procedures, but anyway they work in time $2^{\Oh(k)}(dk)^{\Oh(d)}\leq d^{\Oh(d)}$. Hence, the total running time of the queries is bounded by $d^{\Oh(d)}+\Oh(\tdict)$, as required.

%

\paragraph{Cut-vertices.} Finally, method $\art(vx,vy)$ (see Algorithm~\ref{alg:articul} for pseudocode) takes two edges $vx,vy \in E(G)$ such that there exists a part $H\in \Prt$ satisfying $vx,vy\in E(H)$, and should answer whether $x$ and $y$ are in the same connected component of $H-v$.

First, we get access to the structure $\Dt_{\gl}^{\Pi_k}[T,H]$ stored within $\Tdep_{d,k}[G,\Prt]$ using $\find(vx)$, and we find the relevant local copies $v_H,x_H,y_H$ using $\retrieve()$. Next, we extract a $2$-core $K$ of $(H,T)$ satisfying $v_H,x_H,y_H\in K$. For this we use method $\Dt_{\gl}^{\Pi_k}[T,H].\core(\{v_H,x_H,y_H\},2)$, which runs in time $d^{\Oh(d)}$ and also constructs the graph $H_K\coloneqq H[K]$. Finally, by Lemma~\ref{lem:biconn} we infer that $x$ and $y$ are in the same connected component of $H-v$ if and only if they are in the same connected component of $H_K-v$. Since $|K|\leq d^{\Oh(d)}$, this can be easily checked in time $d^{\Oh(d)}$.

  \begin{algorithm}\label{alg:articul}
     
     \SetKwInOut{Input}{Input}
     \SetKwInOut{Output}{Output}
 
     \vskip 0.2cm
     
     \Input{edges $vx,vy \in E(G)$ such that there is $H \in \Prt$ satisfying $vx,vy\in E(H)$}
     \Output{$\true$ if $x$ and $y$ are in the same connected component of $H-v$, and $\false$ otherwise}
     
     \vskip 0.1cm

     $v_H,x_H, y_H \gets \retrieve(v,vx),\retrieve(x,vx),\retrieve(y,vy)$\\
     
     $\Dt_{\gl}^{\Pi_k}[T,H] \gets \find(vx)$ \\
     
     $H_K,K \gets \Dt_{\gl}^{\Pi_k}[T,H].\core(\{ v_H, x_H, y_H \},2)$\\
     
     $a \gets $ are $x_H$ and $y_H$ in the same connected component of $H_K-v_H$?\\
     
     { \Return{ $a$ } } 
     \caption{Method $\art(vx,vy)$}
 \end{algorithm}
\end{proof}


\section{The \LCycle-cycle detection data structure}\label{sec:cyc_det_ds}

\CycleNote

Our goal in this section is to derive an analogue of Lemma~\ref{lem:kpath-bnd-td} for the $\geq k${\sc{-Cycle}} problem. That is, we are going to design a data structure that for a dynamic graph $G$, modified over time by edge insertions and removals, recognizes whether $G$ contains a cycle on at least $k$ vertices. More precisely, the data structure works under the invariant that this is never the case, and refuses any edge insertion that breaks this~invariant.

The main idea is to combine the data structure presented in the previous section with the top-tree data structure~\cite{AlstrupIP,AlstrupJ}, which we recall next.

\paragraph{Top trees.} The top trees data structure~\cite{AlstrupIP,AlstrupJ} maintains a dynamic forest $\Upsilon$ subject to edge insertions and deletions. Contrary to forests we maintained so far, $\Upsilon$ is not rooted, that is, it is a union of unrooted trees. 

The top tree data structure is designed to aggregate information along paths in the maintained forest $\Upsilon$. The idea is that each tree $S$ in forest $\Upsilon$ is assigned a pair $\partial S$ of (not necessarily distinct) vertices in $S$, called the {\em{boundary}} of $S$. The purpose of distinguishing the boundary vertices is to be able to easily access the information associated with the unique path between them within $S$. This information has to be aggregated when the forest $\Upsilon$ changes, and to do this efficiently, the path information is distributed in a binary tree called the \emph{top tree}. We outline next the most important properties of top trees, but for details we refer to~\cite{AlstrupIP,AlstrupJ}. 

So as mentioned before, for each pair $(S,\partial S)$ a (rooted and binary) top tree $\Lambda(S,\partial S)$ is constructed and maintained. The top tree construction is based on a recursive partition of $S$ into clusters, each cluster $C$ having its own boundary pair of vertices $\partial C$. %
For a connected subtree $C$ of $S$, a boundary vertex is a vertex of $C$ that is either a vertex of $\partial S$, or a neighbor of a vertex not in $C$. We say that $C$ is a {\em{cluster}} of $S$ if it has at most two boundary vertices $x,y$; then $\partial C = \{x,y\}$. Moreover, if $x \neq y$, we call $C$ a path cluster. The recursive partition into clusters starts with the root cluster, which is $S$ itself. A cluster $C$ is partitioned into clusters $A$ and $B$, where $A$ and $B$ share one boundary vertex, and $(\partial A \cup \partial B) \setminus (\partial A \cap \partial B) \subseteq \partial C \subseteq (\partial A \cup \partial B)$. 
In the top tree $\Lambda(S,\partial S)$ cluster $C$ becomes a parent of $A$ and $B$. The leafs of $\Lambda(S,\partial S)$ are  clusters containing single edge, one per each edge of $S$. 

The partition procedure (which we do not describe here, see~\cite{AlstrupJ}) guarantees that the height of each top tree $\Lambda(S,\partial S)$ is $\Oh(\log |S|)$. 
The top trees data structure provides the following methods for updating the forest $\Upsilon$:
\begin{itemize}[nosep]
 \item $\link(u,v)$: Given vertices $u,v$ lying in different trees of $\Upsilon$, links these trees by adding the edge $uv$ to $\Upsilon$.
 \item $\cut(e)$: Remove the edge $e$ from $\Upsilon$.
 \item $\expose(u,v)$: Return $\nil$ if $u$ and $v$ are not in the same tree of $\Upsilon$. Otherwise, if $S$ is the tree that contains $u$ and $v$, then the top tree $\Lambda(S,\partial S)$ is rebuilt so that $u$ and $v$ become the boundary vertices of its root cluster. This new root cluster is returned by the method.
\end{itemize} 

As presented in~\cite{AlstrupIP,AlstrupJ}, the top tree data structure can be enriched by additionally storing some piece of information about each cluster. Here, one assumes that these pieces of information are compositional, in the sense that the information for child clusters can be aggregated to the information about the parent cluster. Actually, the example provided by Alstrup et al. in~\cite{AlstrupJ} in Lemma~$5$ is exactly the one we need here: for a path cluster $C$ one can store the length of the path within $C$ between the boundary vertices of $C$. This leads to an implementation of the following query with worst-case running time $\Oh(\log n)$:
\begin{itemize}[nosep]
 \item $\pathlf(u,v)$: Given vertices $u$ and $v$, return the length of the $u$-to-$v$ path in $\Upsilon$, or $\infty$ if $u$ and $v$ are in different trees of $\Upsilon$.
\end{itemize}
In our application, we will also need to efficiently report $u$-to-$v$ paths assuming they are short. The following lemma summarizes all that we need from the top tree data structure.

\begin{lemma}\label{lem:TT}
  There is a top trees data structure $\Topt[\Upsilon]$ which maintains a dynamic forest $\Upsilon$ on $n$ vertices subject to edge additions and removals. The data structure provides the following methods:
\begin{itemize}[nosep]
 \item $\link(u,v)$: Given vertices $u$ and $v$, if $u$ and $v$ lie in the same tree of $\Upsilon$, do nothing and return $\false$. Otherwise, add the edge $uv$ to $\Upsilon$ and return $\true$.
 \item $\cut(e)$: Given an edge $e$ of $\Upsilon$, remove $e$ from $\Upsilon$.
 \item $\pathlf(u,v)$: Given vertices $u$ and $v$, return the length of the $u$-to-$v$ path in $\Upsilon$, or $\infty$ if $u$ and $v$ are in different trees of $\Upsilon$.
 \item $\pathf(u,v)$: Given vertices $u$ and $v$, return the $u$-to-$v$ path in $\Upsilon$, or $\nil$ if $u$ and $v$ are in different trees in $\Upsilon$.
\end{itemize}
All the methods run in worst case $\Oh(\log n)$ time with the exception of $\pathf(u,v)$, which runs in worst case time $\Oh(|\pi| \cdot \log n)$, where $\pi$ is the returned path. The data structure uses $\Oh(|\Upsilon|)$ space.
\end{lemma}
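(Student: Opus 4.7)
The plan is to build on the standard top trees data structure of Alstrup et al.~\cite{AlstrupIP,AlstrupJ}, which already supports $\link$, $\cut$, and $\expose$ in worst-case $\Oh(\log n)$ time (where $\expose(u,v)$ rebuilds the top tree of the tree containing $u,v$ so that $\{u,v\}$ becomes the boundary of the root cluster), together with the framework described in~\cite{AlstrupJ} for augmenting each cluster with a small amount of compositional information that can be recomputed during the internal $\mathsf{merge}/\mathsf{split}$ operations with no asymptotic overhead. Specifically, I would store with each cluster $C$ an integer $\mathsf{len}(C)$ equal to the length of the unique path in $C$ between its two boundary vertices, or $0$ if $|\partial C|\leq 1$. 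When $C$ has children $A,B$ that share a boundary vertex, $\mathsf{len}(C)$ equals $0$, $\mathsf{len}(A)$, $\mathsf{len}(B)$, or $\mathsf{len}(A)+\mathsf{len}(B)$, depending on how the boundaries of $A$ and $B$ combine to form $\partial C$; this is constant-time to update.

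Given this augmentation, $\cut(e)$ is just the standard cut primitive. For $\link(u,v)$, first call $\expose(u,v)$: if the returned root cluster is non-nil and has boundary $\{u,v\}$, then $u$ and $v$ already lie in the same tree, so return $\false$ without modifying $\Upsilon$; otherwise invoke the standard link primitive on $uv$ and return $\true$. For $\pathlf(u,v)$, call $\expose(u,v)$: if the result is $\nil$, return $\infty$, otherwise return $\mathsf{len}(C)$ where $C$ is the returned root cluster. Each of these operations runs in $\Oh(\log n)$ time.

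The main obstacle is $\pathf(u,v)$: we must report an explicit path, not just its length, in time $\Oh(|\pi|\log n)$. After calling $\expose(u,v)$ to obtain the root cluster $C$ with $\partial C=\{u,v\}$ (returning $\nil$ if $u,v$ are in different trees), I would recursively descend the top tree from $C$. For any internal cluster $C'$ with children $A,B$ sharing a common boundary vertex $s$, the $\partial C'$-path inside $C'$ decomposes as the $\partial A$-path within $A$ concatenated with the $\partial B$-path within $B$ (oriented using $s$ as the pivot); we recurse only into those children whose stored $\mathsf{len}$ is nonzero, i.e., those that actually contribute edges to~$\pi$, and we output the edge stored at each leaf we reach. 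Since every edge of $\pi$ corresponds to exactly one leaf of the top tree, and every visited cluster is an ancestor of such a leaf in a top tree of height $\Oh(\log n)$, the total number of clusters visited is $\Oh(|\pi|\log n)$, and constant work is performed per cluster.

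The space bound $\Oh(|\Upsilon|)$ is inherited directly from the standard top trees construction, since the augmentation adds only one integer per cluster and the total number of clusters is linear in $|\Upsilon|$. Correctness of the augmented $\mathsf{len}$ field and of the path-reconstruction recursion reduces to the standard inductive invariants of top trees, where the boundary-to-boundary path of a parent cluster is uniquely determined by the boundary-to-boundary paths of its children together with how their boundaries are identified.
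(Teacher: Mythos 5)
Your proposal is correct and follows essentially the same route as the paper: augment each cluster with the boundary-to-boundary path length (maintained compositionally, as in Lemma~5 of~\cite{AlstrupJ}), implement $\link$, $\cut$, and $\pathlf$ via the standard $\Oh(\log n)$ primitives and $\expose$, and realize $\pathf$ by a recursive descent from the exposed root cluster that recurses only into path clusters contributing edges of $\pi$. The only difference is cosmetic: you bound the work by counting that every visited cluster is an ancestor of one of the $|\pi|$ leaf clusters of $\pi$ in a top tree of height $\Oh(\log n)$, whereas the paper classifies recursive calls into branching and non-branching ones; both counts give the same $\Oh(|\pi|\log n)$ bound.
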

\begin{proof}
 The data structure is the variant of the top tree data structure described in~\cite{AlstrupIP,AlstrupJ} that in addition to the basic data, also stores, for each cluster $C$, the length of the path within $C$ between the vertices of the boundary of $C$. As shown in~\cite{AlstrupIP,AlstrupJ}, the first three operations can be implemented in this data structure with worst-case running time $\Oh(\log n)$. Hence, it remains to implement the query $\pathf(u,v)$.
 
 
 For $\pathf(u,v)$, we first apply $\expose(u,v)$.  This method either reports that $u$ and $v$ are in different trees of $\Upsilon$ --- in which case we may return $\nil$ as the answer to the query --- or returns the cluster associated with the tree $S$ containing $u$ and $v$, after setting $\partial S$ to consist of $u$ and $v$. Then, we construct the $u$-to-$v$ path in $\Upsilon$ using a recursive procedure $\pathf(C)$ that for a path cluster $C$, given as a node in $\Topt[\Upsilon]$ and where $\partial C$ consists of two different vertices, 
 returns the path $\pi(C)$ connecting the vertices of $\partial C$ within $C$. This procedure works as follows. If $C$ is a leaf cluster, then return the single edge of $C$. Otherwise, if $A$ and $B$ are the child clusters of $C$, then return the concatenation of $\pi(A)$, computed using a recursive call $\pathf(A)$, and $\pi(B)$, computed using a recursive call $\pathf(B)$. Here, if either $\partial A$ or $\partial B$ consist of only one vertex, then ignore the corresponding recursive call.
 
 Observe, that we only call the recursion on path clusters. Now, for the analysis, we distinguish two types of recursive calls: the first type recurses on both children or is called on a path cluster consisting of a single edge, and the second type recurses on one child only. Observe that the total number of calls of the first type is bounded by $\Oh(|\pi(S)|)$, as each such call can be attributed either to a distinct edge of $\pi(S)$, or to splitting $\pi(S)$ into two nonempty paths. Furthermore, after a sequence of $\Oh(\log |S|)$ calls of the second type there must be a call of the first type.
 Thus, the procedure $\pathf(S)$ constructs the sought path $\pi(S)$ in time $\Oh(|\pi(S)| \cdot \log n)$.
\end{proof}

\paragraph*{The data structure.}
Equipped with the top trees data structure we are ready to prove the main result of this section, stated in the following lemma.

\begin{lemma}\label{lem:kcycle-restricted}
Let $k$ be a fixed parameter.
Suppose we are given a dynamic graph $G$, updated by edge insertions and removals, and access to a dictionary on the edges of $G$, which uses $\mdict$ memory and each dictionary operation runs in time bounded by $\tdict$. 
Then there is a dynamic data structure that maintains $G$ under such updates as long as the following invariant holds: $G$ does not contain a simple cycle on at least $k$ vertices. If an edge insertion violates this invariant, the data structure refuses to carry out the insertion and reports this outcome. The data structure uses $(n \cdot 2^{\Oh(k^2 \log k)} + \mdict)$ memory and guarantees worst-case running time of $\Oh(k \log n)+2^{\Oh(k^2 \log k)} \tdict + 2^{\Oh(k^4)}$ per operation. The initialization for an edgeless graph on $n$ vertices takes time $\Oh(n)$.
\end{lemma}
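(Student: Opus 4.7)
The plan is to combine the top-tree data structure $\Topt[\Upsilon]$ of Lemma~\ref{lem:TT}, which maintains a spanning forest~$\Upsilon$ of~$G$, with the structure $\Tdep_{d,k}[G,\Prt]$ of Lemma~\ref{lem:kpath-bnd-td-ms} instantiated with $d=k^2$, which maintains the partition~$\Prt$ of~$G$ into biconnected components. The choice $d=k^2$ is dictated by Lemma~\ref{lem:cycle-td}: as long as the invariant ($G$ has no simple cycle on $\geq k$ vertices) holds, every biconnected component of~$G$ has treedepth strictly less than~$k^2$, so the precondition of~$\Tdep$ is automatically satisfied. Initialization on the empty graph takes~$\Oh(n)$ time, and the total memory consumption is $\Oh(n)+n\cdot 2^{\Oh(k^2\log k)}+\mdict$.

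\textbf{Edge insertion of $uv$.} The procedure first queries $\ell\coloneqq\Topt.\pathlf(u,v)$. If $\ell=\infty$, then $u,v$ lie in different components of~$G$, no cycle is created, and we call $\Topt.\link(u,v)$ and $\Tdep.\new(u,v)$. If $\ell\geq k-1$, the $\Upsilon$-path together with $uv$ already closes a cycle on $\geq k$ vertices, so we reject. Otherwise we retrieve $P\coloneqq\Topt.\pathf(u,v)$ of length $<k-1$ in time $\Oh(k\log n)$ and traverse it with repeated $\Tdep.\same$ calls to identify the sequence $H_1,\ldots,H_m$ of biconnected components crossed by~$P$ together with the articulation points $u=a_0,a_1,\ldots,a_m=v$ where $P$ transits between them; note $m\leq\ell<k-1$. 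Since any simple $u$-$v$ path in~$G$ must follow the unique block-cut path between $u$ and $v$, decomposing as a concatenation of simple $a_{i-1}$-$a_i$ paths through the~$H_i$, its maximum edge length equals $\sum_{i=1}^m q_i$, where $q_i$ is the maximum length of a simple $a_{i-1}$-$a_i$ path in~$H_i$. We compute each $q_i$ by iterating $\Tdep.\pathub(j,a_{i-1},a_i,\cdot,\cdot)$ for $j=k,k-1,\ldots$ until a path is returned, at most~$k$ calls per component. If $\sum_i q_i\geq k-1$, we reject. Otherwise the insertion preserves the invariant, so we perform the (at most $m-1$) merges along~$P$ via $\Tdep.\merge$ and then $\Tdep.\ins(u,v,\cdot,\cdot)$. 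Because the resulting biconnected component $H^\star$ satisfies the invariant, Lemma~\ref{lem:cycle-td} gives $\td(H^\star)<k^2=d$; monotonicity of treedepth under edge removal then bounds every intermediate subgraph produced during the merges by~$d$, so no $\Tdep$ method raises a treedepth exception.

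\textbf{Edge deletion of $uv$.} We start with $\Tdep.\bridge(uv)$. If $uv$ is a bridge, its part is the singleton edge, and $\Topt.\cut(uv)$ together with $\Tdep.\destroy(uv)$ suffice. Otherwise $uv$ lies in a biconnected component $H\in\Prt$ with $|V(H)|\geq 3$. If moreover $uv\in\Upsilon$, we first locate a replacement edge: since $H$ contains $uv$ and has no cycle of $\geq k$ vertices, there is a $u$-$v$ path~$Q$ in~$H$ using at least one different edge and having at most $k-2$ edges, which we retrieve by calling $\Tdep.\pathub(i,u,v,\cdot,\cdot)$ for $i=3,4,\ldots,k$ and taking the first success. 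We then apply $\Topt.\cut(uv)$ and walk along~$Q$, using $\Topt.\pathlf$ to find the first edge of~$Q$ crossing between the two trees produced by the cut, and insert it via $\Topt.\link$. We next invoke $\Tdep.\rem(uv)$. Removing $uv$ may create new articulation points in $H-uv$, but all of them lie on~$Q$, so we iterate over the internal vertices of~$Q$ and use $\Tdep.\art$ together with $\Tdep.\splitt$ to perform the required splits; at most $|V(Q)|-2\leq k-2$ splits occur.

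\textbf{Complexity and main obstacle.} Each operation makes $\Oh(1)$ $\Topt$ calls of $\Oh(\log n)$ time each, one $\Topt.\pathf$ call costing $\Oh(k\log n)$, $\Oh(k^2)$ $\Tdep$ queries costing $k^{\Oh(k^2)}+\Oh(\tdict)$ each, and $\Oh(k)$ $\Tdep$ updates costing $2^{\Oh(k^4)}+k^{\Oh(k^2)}\tdict$ each. Summing fits within the promised $\Oh(k\log n)+2^{\Oh(k^4)}+2^{\Oh(k^2\log k)}\tdict$. The main obstacle is that a naive $\Upsilon$-distance test alone does not detect all cycle-violating insertions---in a theta-like graph, the $\Upsilon$-path from~$u$ to~$v$ can be arbitrarily shorter than the longest $u$-$v$ path in~$G$---so the plan crucially leans on the block-cut decomposition to reduce the global cycle-detection question to component-local $\pathub$ queries inside each biconnected component. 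Correctness then follows from this decomposition observation combined with the treedepth-monotonicity estimates that ensure the assembled $\Tdep$ operations never raise exceptions when the invariant is preserved.
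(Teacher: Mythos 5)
Your overall architecture is the same as the paper's: a top-tree spanning forest (Lemma~\ref{lem:TT}) combined with $\Tdep_{k^2,k}[G,\Prt]$ (Lemma~\ref{lem:kpath-bnd-td-ms}), with the threshold $d=k^2$ justified by Lemma~\ref{lem:cycle-td}, merges driven by the $\Upsilon$-path on insertion, and splits plus a replacement tree edge driven by a short $u$--$v$ path inside the block on deletion. However, your insertion routine deviates in how the invariant is verified, and this is where there is a genuine gap. You estimate, for each block $H_i$ on the block path, the maximum length $q_i$ of a simple $a_{i-1}$--$a_i$ path by calling $\pathub(j,\cdot)$ for $j=k,k-1,\ldots$ and taking the first success. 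This is only valid if the true maximum has at most $k$ vertices: simple-path lengths between two fixed vertices do not form a contiguous set (in a bipartite block all $a$--$b$ paths have the same parity), so a block may contain an $a_{i-1}$--$a_i$ path on more than $k$ vertices while containing no path on exactly $k$ vertices; your loop then returns some $q_i<k-1$, the sum test passes, and you accept an insertion that closes a cycle on at least $k$ vertices. Nothing in your argument rules this out, and such blocks are compatible with the invariant (no cycle on $\geq k$ vertices only bounds paths by roughly $k^2$, not by $k$). You would need to additionally call the threshold query $\pathlb(a_{i-1},a_i,\cdot,\cdot)$ in each block and reject on a positive answer; the paper sidesteps the issue entirely by merging first (catching and rolling back a treedepth exception, which by Lemma~\ref{lem:cycle-td} already certifies a long cycle) and then asking a single $\pathlb(u,v,\cdot,\cdot)$ on the merged part, which is exactly the right criterion.

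There is a second, more local gap in your deletion routine. The operation $\splitt(vx,vy)$ of Lemma~\ref{lem:kpath-bnd-td-ms} has the precondition that the part has \emph{exactly two} biconnected components, each of them biconnected. After $\rem(uv)$, the old block may decompose into many blocks strung along $Q$, so when you reach the first articulation point the ``other side'' is a chain of several blocks and is not biconnected; as stated, you cannot legally invoke $\splitt$ there. The paper resolves this by temporarily inserting the edge $yv$ (from the current articulation point $y$ to $v$) before each split, which re-biconnects the remainder (and cannot raise a treedepth exception, being a minor of the original block), and removing it afterwards. Your proposal omits this step, so the sequence of splits you describe is not supported by the interface you are using. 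Both gaps are fixable, but as written the proof does not go through.
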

\begin{proof}
We may assume that $\tdict\leq \Oh(\log n)$.
The data structure maintains a top trees data structure $\Topt[\Upsilon]$, provided by Lemma~\ref{lem:TT}, in which we store a maximal spanning forest $\Upsilon$ of $G$. We note that the $\Topt[\Upsilon]$ structure is completely oblivious of the edges of $E(G)\setminus E(\Upsilon)$, so we will need to ensure that $\Upsilon$ remains maximal while implementing edge insertions and deletions in $G$. 

Moreover, we maintain the structure $\Tdep_{k^2,k}[G,\Prt]$ provided by Lemma~\ref{lem:kpath-bnd-td-ms}, i.e., the structure $\Tdep_{d,k}[G,\Prt]$ for $d=k^2$. This structure maintains a nice partition $\Prt$ of $G$ that will at all times coincide with the partition of $G$ into biconnected components, except that methods that we use for edge insertion and removal may temporarily change $\Prt$ into a coarser nice partition, but by the time a method terminates, $\Prt$ becomes the partition into biconnected components again. As we assume the invariant that $G$ never contains a cycle of length at least $k$, by Lemma~\ref{lem:cycle-td} all the biconnected components of $G$ always have treedepth bounded by $k^2$. The memory usage is dominated by memory used by $\Tdep_{k^2,k}[G,\Prt]$ structure, which is $(n \cdot 2^{\Oh(k^2 \log k)} + \mdict)$ due to Lemma~\ref{lem:kpath-bnd-td-ms}.


We now move on to describing the procedures for inserting and removing edges.

\paragraph{Insert.} We start with procedure $\ins(uv)$ for inserting an edge (see Algorithm~\ref{alg:addcd} for a pseudocode). We start by verifying that $uv$ is not yet present in $G$, as otherwise there is nothing to do. Then we apply the query $\Topt[\Upsilon].\pathlf(u,v)$ to check the distance between $u$ and $v$ in $\Upsilon$ (see Lemma~\ref{lem:TT} for the interface of $\Topt[\Upsilon]$). We distinguish three cases.

The first case is when $\Topt[\Upsilon].\pathlf(u,v)$ returns $\infty$, meaning that $u$ and $v$ are not in the same connected component of $G$. Then, the newly added edge $uv$ becomes a bridge and hence it forms a new biconnected component of $G$. We use the method $\Tdep_{k^2,d}[G,\Prt].\new(uv)$ method to add up a new biconnected component $H=(\{ u,v \}, \{ uv \})$ to $\Prt$. We also update $\Upsilon$ by calling $\Topt[\Upsilon].\link(u,v)$. This case is handled in $\Oh(\log n)$ time, as the dominating operations are $\Topt[\Upsilon].\pathlf(u,v)$ and $\Topt[\Upsilon].\link(u,v)$. 

The second case is when $\Topt[\Upsilon].\pathlf(u,v)$ returns a value that is larger or equal to $k$. Clearly, adding $uv$ in this case violates the invariant, as $uv$ closes the cycle of length at least $k$. Hence, if this is the case, the method terminates by throwing a suitable exception, and in particular the edge insertion is not carried out. Similarly as before, handling this case takes $\Oh(\log n)$ time.

We are left with the case when $\Topt[\Upsilon].\pathlf(u,v)$ returns a value smaller than $k$. In such case, we retrieve the entire path $\pi$ in $\Upsilon$ between $u$ and $v$ using $\Topt[\Upsilon].\pathf(u,v)$; this takes time $\Oh(k\log n)$. We use this path to merge the biconnected components along $\pi$. More precisely, observe that $\pi$ can be partitioned into subpaths $\pi_1,\ldots,\pi_\ell$ so that $\pi_1$ is contained in a biconnected component $H_1$, $\pi_2$ is contained in another biconnected component $H_2$, and so on. What should happen after inserting the edge $uv$ to $G$, is that all these biconnected component $H_1,\ldots,H_\ell$ should be merged into one part of $\Prt$, to which $uv$ should be inserted --- this new, merged part $H$ is a new biconnected component after the insertion. To simulate this, we iterate through the edges on $\pi$ and if for two consecutive edges $xy$ and $yz$ we find out that they belong to different parts in $\Prt$ --- which can be checked using method $\same(xy,yz)$ --- then we merge those parts using method $\merge(xy,yz)$. Note that at each point, method $\merge$ may report that the treedepth of the obtained graph is larger than $d=k^2$, in which case we roll back the changes and report  violation of the invariant. This is correct by Lemma~\ref{lem:cycle-td}, as we conclude that after adding $uv$, the biconnected component of $G$ containing $uv$ would have treedepth larger than $k^2$. 

Finally, all the merges have been executed, and hence all the edges of $\pi$ are now in the same part of $\Prt$. We now verify whether adding $uv$ to $G$ does not break the invariant using query $\Tdep_{k^2,k}[G,\Prt].\pathlb(u,v,e_u,e_v)$, where $e_u$ and $e_v$ are respectively the first and the last edge of $\pi$. Recall that this query checks whether the part of $\Prt$ containing $e_u$ and $e_v$ contains a simple $u$-to-$v$ path on at least $k$ vertices, which at this point is equivalent to checking whether adding $uv$ to $G$ would create a cycle on at least $k$ vertices. If no violation of the invariant is discovered, we may safely add the edge $uv$ using method $\Tdep_{k^2,k}[G,\Prt].\ins(u,v,e_u,e_v)$.

As for the running time, observe that $\ell<k$, hence we use methods $\same$ and $\merge$ of $\Tdep_{k^2,k}[G,\Prt]$ at most $k$ times, which is followed by a single $\pathlb$ query and a single $\ins$ update. By Lemma~\ref{lem:kpath-bnd-td-ms}, this takes time $2^{\Oh(k^4)}+k^{\Oh(k^2)}\tdict$, which results in the total running time of $2^{\Oh(k^4)}+k^{\Oh(k^2)}\tdict + \Oh(k \log n)$ for the insertion procedure.

 \begin{algorithm}\label{alg:addcd}
     
     \SetKwInOut{Input}{Input}
     \SetKwInOut{Output}{Output}
    \SetKw{Raise}{raise exception:\ }
 
     \vskip 0.2cm
     
     \Input{An edge $uv$}
     \Output{The data structure is updated by inserting $uv$ to $G$}
     
     \vskip 0.1cm
     \If{$\Tdep_{k^2,k}[G,\Prt].\edge(uv)$} { \Return{}}
     $p \gets \Topt[\Upsilon].\pathlf(u,v)$ \\
     
     \If{ $p=\infty$ } 
     {
        $\Tdep_{k^2,k}[G,\Prt].\new(uv)$ \\
        $\Topt[\Upsilon].\link(u,v)$ \\
        \Return
     }
     \If{ $k\leq p<\infty$ } 
     {
          \Raise {\normalfont{long cycle detected}}   
     }
     $\pi \gets \Topt[\Upsilon].\pathf(u,v)$ 
     \tcp*{$\pi[i]$ denotes the $i$th edge of $\pi$}
     \For{ $i \gets 1$ \normalfont{\textbf{to}} $p-1$ }
     { 
        \If{ $\Tdep_{k^2,k}[G,\Prt].\same(\pi[i],\pi[i+1])$ }
        {
           $\Tdep_{k^2,k}[G,\Prt].\merge(\pi[i],\pi[i+1])$\\
           \If{\normalfont{exception is caught}}
           {
           roll back all the changes \\
           \Raise {\normalfont{long cycle detected}}
           }
        }
     
     }
     \If{$\Tdep_{k^2,k}[G,\Prt].\pathlb(u,v,\pi[1],\pi[p-1])$}
     {
       roll back all the changes \\
       \Raise {\normalfont{long cycle detected}}
     }
     $\Tdep_{k^2,k}[G,\Prt].\ins(u,v,\pi[1],\pi[p-1])$\\
     
     \caption{Edge insertion algorithm}
 \end{algorithm}
 
\paragraph{Remove.} We now move on to describing the method $\rem(uv)$ for removing an edge $uv$ from $G$ (see Algorithm~\ref{alg:remcd} for the pseudocode). Again, we start by checking whether $uv$ is present in $G$, as otherwise there is nothing to do. Then we check whether $uv$ is a bridge using method $\bridge(uv)$. If so, we remove it from $G$ and update the forest $\Upsilon$ appropriately. 

Hence, from now on we may assume that $uv$ is not a bridge. Therefore, in the biconnected component $H$ of $G$ that contains $uv$ there is a $u$-to-$v$ path that does not use the edge $uv$. Observe that any such path cannot have more than $k-1$ vertices, for otherwise there would a cycle of length at least $k$ in $G$, violating the invariant assumptions. We find such a path $\pi$ by applying the $\pathub(i,u,v,uv,uv)$ method for all $i$ from $3$ to $k-1$. 
Observe that if we now remove $uv$ from $H$, then all the cut-vertices in the obtained graph $H-uv$ lie on $\pi$. Hence, we may carry out the removal of $uv$ as follows. 

First, we need to update $\Topt[\Upsilon]$ in case $uv \in E(\Upsilon)$. For this, we first remove $uv$ from $E(\Upsilon)$ by cutting $\Upsilon$ using $\Topt[\Upsilon].\cut(uv)$ method. We then find a substitution edge on $\pi$, that is, an edge of $\pi$ that connects two different trees of $\Upsilon$, and we insert it into $\Upsilon$ in place of $uv$.

Second, we remove $uv$ from $H$ by applying $\Tdep_{k^2,k}[G,\Prt].\rem(uv)$. This way $uv$ is removed from the part $H$ of $\Prt$, hence $H$ may break into several biconnected component $H_1,\ldots,H_\ell$. What remains to do is to update $\Prt$ by actually splitting $H$ into $H_1,\ldots,H_\ell$, so that we maintain the invariant that after the method is completed, $\Prt$ is the partition of the graph into biconnected components. Note here that each of the biconnected components $H_1,\ldots,H_\ell$ intersects $\pi$ on a non-empty subpath, so we may suppose that $\pi$ is split into subpaths $\pi_1,\ldots,\pi_\ell$ so that $\pi_i$ is the intersection of $H_i$ with $\pi$. 

We now execute the split as follows. We iterate through the consecutive edges on $\pi$ and whenever we find two consecutive edges $xy$ and $yz$ such that $\art(xy,yz)$ yields $\true$, we should apply the $\splitt(xy,yz)$ method to appropriately split two parts of $\Prt$. There is a technical caveat here in that we cannot do it immediately, because the $\splitt$ method assumes that both parts resulting from the split should be biconnected. This, however, can be easily fixed by temporarily adding the edge $yv$ before the split, and removing it after the split. It is straightforward to see that thus, the prerequisites of $\splitt$ are met. Moreover, the temporary insertion of the edge $yv$ could not increase the treedepth above $k^2$, because the obtained graph is a minor of the original biconnected component $H$.

This concludes the description of the method,
As for the running time, we use $\Oh(k)$ operations in the data structure $\Topt[\Upsilon]$, each taking $\Oh(\log n)$ time, and $\Oh(k)$ operations in the data structure $\Tdep_{k^2,k}[G,\Prt]$, each taking at most $2^{\Oh(k^4)}+k^{\Oh(k^2)}\cdot \tau$ time. Hence, the total running time of the method is $2^{\Oh(k^4)}+k^{\Oh(k^2)} \tdict + \Oh(k \log n)$.

 \begin{algorithm}\label{alg:remcd}
     
     \SetKwInOut{Input}{Input}
     \SetKwInOut{Output}{Output}
    \SetKw{Break}{break}
 
     \vskip 0.2cm
     
     \Input{An edge $uv$}
     \Output{The data structure is updated by removing $uv$ from $G$}
     
     \vskip 0.1cm

     \If{{\normalfont{\textbf{not}}} $\Tdep_{k^2,k}[G,\Prt].\edge(uv)$}
     {
     \Return
     }
     
     \If{$\Tdep_{k^2,k}[G,\Prt].\bridge(uv)$} 
     {
        $\Tdep_{k^2,k}[G,\Prt].\destroy(uv)$ \\
        $\Topt[\Upsilon].\cut(u,v)$ \\
        \Return
     }
     
     \For{$p \gets 3$ \normalfont{\textbf{to}} $k-1$}
     {
       $\pi \gets \Tdep_{k^2,k}[G,\Prt].\pathub(p,u,v,uv,uv)$\\
       \If{$\pi \neq \nil$}
       {\Break}
     }
     $\Tdep_{k^2,k}[G,\Prt].\rem(uv)$ \\
     
     \If{$uv \in E(\Upsilon)$ } 
        {
        $\Topt[\Upsilon].\cut(uv)$ \;
        \For{$i\gets 1$ \normalfont{\textbf{to}} $p-1$}
        {\If{$\Topt[\Upsilon].\link(\pi[i])$}{\Break}}
        }
     
     \For{$i \gets 1$ \normalfont{\textbf{to}} $p-1$}
     {
        \If{$\Tdep_{k^2,k}[G,\Prt].\art(\pi[i],\pi[i+1])$}
        {   
            $y\gets$ the common endpoint of $\pi[i]$ and $\pi[i+1]$\\
            \If{\normalfont{\textbf{not}} $\Tdep_{k^2,k}[G,\Prt].\edge(y,v)$} {
               $\Tdep_{k^2,k}[G,\Prt].\ins(y,v,\pi[i],\pi[p-1])$\\
               $a \gets \true$\\
            }
            $\Tdep_{k^2,k}[G,\Prt].\splitt(\pi[i],\pi[i+1])$\\
            \If{ $a$ } {
            $\Tdep_{k^2,k}[G,\Prt].\rem(yv)$\\
            }
        }
     }
     
     \caption{Edge removal algorithm}
 \end{algorithm}
\end{proof}

\section{Postponing insertions and the main results}\label{sec:kpath}

In this section we combine the results from the previous section, in particular Lemmas~\ref{lem:kpath-bnd-td} and~\ref{lem:kcycle-restricted}, with the technique of postponing insertions in order to prove our main results: data structure for the
dynamic \LPath and \LCycle problems.

\paragraph*{Postponing insertions.}
We now present a generic technique for turning data structures working under structural restrictions into general ones, at the cost of introducing amortization and using a dictionary.
As we mentioned, the idea is not new: it was used by Eppstein et al.~\cite{EppsteinGIS96} for planarity testing.
We formulate the technique in an abstract way, because apart from being useful in our setting, it also applies to some other problems discussed by Alman et al.~\cite{AlmanMW17}.

Suppose $\Univ$ is some universe. A family of subsets $\Ff\subseteq 2^{\Univ}$ is {\em{downward closed}} if for all $E \subseteq F\subseteq \Univ$, $F\in \Ff$ entails $E\in \Ff$.
A  data structure $\Dt$  
{\em{strongly supports $\Ff$ membership}} if $\Dt$ maintains a subset $X$ of $\Univ$ under $\ins(x)$ and $\rem(x)$ operations (just as in the dictionary, but without the associated records), 
and in addition to this it offers a boolean query $\member()$ that checks whether $X\in \Ff$.
We also consider the following weaker notion: a  data structure $\Dt$ {\em{weakly supports $\Ff$ membership}} if, again, it implements $\ins(x)$ and $\rem(x)$ operations on a dynamic subset $X\subseteq \Univ$, 
but works under the restriction that we always have $X\in \Ff$. Whenever performing an $\ins(\cdot)$ operation would violate the invariant $X\in \Ff$,
the data structure should detect this and refuse to carry out the operation.

The following lemma shows that data structures supporting weak $\Ff$ membership can be turned into ones supporting strong $\Ff$ membership at the cost of introducing amortization and allowing access to a dictionary $\Lt$ on $\Univ$.
The proof essentially repeats the same argument as~\cite[Corollary~1]{EppsteinGIS96}.

\begin{lemma}\label{lem:queue}
Suppose $\Univ$ is a universe and we have access to a dictionary $\Lt$ on $\Univ$.
Let $\Ff\subseteq 2^{\Univ}$ be downward closed and
suppose there is a  data structure $\Dt$ that weakly supports $\Ff$ membership.
Then there is a data structure $\Dt'$ that strongly supports $\Ff$ membership, where each $\member(\cdot)$ query takes $\Oh(1)$ time 
and each update uses amortized $\Oh(1)$ time and amortized $\Oh(1)$ calls to operations on $\Lt$ and~$\Dt$.
\end{lemma}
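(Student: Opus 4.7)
The plan is to mimic the ``postponing insertions'' technique of Eppstein et al.~\cite{EppsteinGIS96}. We maintain inside $\Dt'$: (i) the data structure $\Dt$ storing some subset $X'\subseteq X$ that at all times satisfies $X'\in \Ff$, and (ii) a doubly linked list $Q$ of postponed elements (so that an element can be removed from $Q$ in $\Oh(1)$ time given a pointer to it). The dictionary $\Lt$ is used to store, for every $x\in X$, a tag telling whether $x\in X'$ or $x\in Q$; in the latter case the record also contains a pointer to the list cell representing $x$ in $Q$. The key invariants maintained are
\begin{itemize}[nosep]
 \item $X=X'\uplus Q$ and $X'\in \Ff$; and
 \item if $Q$ is non-empty and $q_1$ is the element at the front of $Q$, then $X'\cup \{q_1\}\notin \Ff$.
\end{itemize}
Since $\Ff$ is downward closed, the second invariant implies that every superset of $X'\cup\{q_1\}$ is outside $\Ff$; in particular $X\notin \Ff$. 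If instead $Q=\emptyset$, then $X=X'\in \Ff$. Thus $X\in \Ff$ if and only if $Q$ is empty, and the query $\member()$ can be answered in constant time by testing whether $Q$ is empty (a single boolean flag kept up to date).

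For $\ins(x)$, first query $\Lt$; if $x\in X$ already, there is nothing to do. Otherwise, if $Q$ is empty, invoke $\Dt.\ins(x)$: on success record in $\Lt$ that $x\in X'$; on refusal push $x$ at the back of $Q$ and store the corresponding pointer in $\Lt$. If $Q$ is already non-empty then we cannot invoke $\Dt.\ins(x)$ (it could succeed and break the second invariant by leaving a stale front element), so we simply append $x$ at the back of $Q$. For $\rem(x)$, look up $\Lt$; if $x\notin X$ do nothing, else depending on the tag either call $\Dt.\rem(x)$ or unlink the list cell pointed to from $\Lt$, and delete $x$ from $\Lt$. After a removal we execute a \emph{clean-up} loop: while $Q$ is non-empty, pop its front element $q$ and attempt $\Dt.\ins(q)$; on success update the tag in $\Lt$ and iterate; on refusal re-insert $q$ at the front of $Q$ and stop. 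The clean-up loop restores the second invariant: it terminates precisely when either $Q$ is empty or the front element still cannot be inserted into $\Dt$.

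The amortized analysis uses one unit of credit placed on every element at the moment it is pushed into $Q$; this credit pays for the one eventual successful $\Dt.\ins$ that removes it from $Q$ during a future clean-up. The \emph{failed} $\Dt.\ins$ call that ends a clean-up loop (if any) is charged to the $\rem$ operation that triggered the clean-up, so each update uses $\Oh(1)$ amortized operations on $\Dt$ and $\Oh(1)$ amortized operations on $\Lt$, plus $\Oh(1)$ additional work for manipulating $Q$ and the flag recording whether $Q$ is empty.

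The most delicate point will be to verify that the second invariant is preserved in every branch: after an $\ins$ that pushes onto a previously empty $Q$ the new front element is exactly the one that just failed $\Dt.\ins$, so the invariant holds; after an $\ins$ that pushes onto a non-empty $Q$ the front element does not change; after a $\rem$ the clean-up loop terminates either with $Q$ empty or with a front element whose re-insertion into $\Dt$ just failed. Once this case analysis is done, correctness of $\member()$ follows from the invariant together with downward closedness of $\Ff$, and the running-time bound follows from the credit scheme described above.
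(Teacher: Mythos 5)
Your proposal is correct and follows essentially the same route as the paper's proof: the same split of $X$ into the part stored in $\Dt$ and a postponement queue, the same invariant that the front element of the queue cannot be inserted, the same constant-time membership test via emptiness of the queue, and the same flush/clean-up after removals with the identical charging argument (each element pays once for its eventual successful insertion into $\Dt$). No gaps; the case analysis you flag as delicate is exactly the verification the paper performs.
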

\begin{proof}
  $\Dt'$ maintains (a copy of) $\Dt$ and an additional queue $Q$, in which $\Dt'$ stores elements whose insertion is {\em{postponed}}.
 We will denote the set stored in $\Dt'$ by $X$, while the sets stored in (the copy of) $\Dt$ and in $Q$ are $X_{\Dt}$ and $X_{Q}$, respectively.
 We maintain the invariant that $X$ is the disjoint union of $X_{\Dt}$ and $X_Q$.
 
 The queue $Q$ is implemented as a doubly linked list. 
 In addition to the above, we maintain a dictionary $\Lt$ that stores $X$.
 In $\Lt$, the record associated with each $x\in X$ is either a pointer that points to the list element representing $x$ in $Q$, in case $x\in X_Q$, or a marker $\bot$ in case $x\in X_{\Dt}$. 
 Thus, given $x\in \Univ$, we may use the $\lookup(x)$ operation in $\Lt$ to check whether $x$ belongs to $X_{\Dt}$ or $X_Q$ and, if the latter holds, access the corresponding list element on $Q$.
 In the following, whenever we insert an element to $\Dt$ or $Q$, we insert it to $\Lt$ as well (possibly together with a pointer to list element). Same for removals.
 
 We maintain the following invariant:
 \begin{quote}
  $(\star)$\quad If $Q$ is non-empty and $x$ is the front element of $Q$, then $X_{\Dt}\cup \{ x \}\notin \Ff$.
 \end{quote}
 Note that since $\Dt$ stores $X_{\Dt}$, we obviously have $X_{\Dt}\in \Ff$. Hence if $Q$ is empty, then $X_{\Dt}=X\in \Ff$.
 On the other hand, if $Q$ is non-empty, then invariant~$(\star)$ together with upward-closedness of $\Ff$ implies $X\notin \Ff$.
 Thus, the $\member()$ query amounts to checking whether $Q$ is empty, which can be done in constant time.
 
 We now explain how updating $\Dt'$ works, starting with the $\ins(x)$ operation on $\Dt'$.
 By applying $\lookup(x)$ in $\Lt$, we may further assume that $x\notin X$.
 If $Q$ is not empty, then we push $x$ to the back of $Q$.
 Otherwise, we try to add $x$ to $X_{\Dt}$ by applying $\ins(x)$ on $\Dt$. 
 If this operation succeeds (i.e. $X_{\Dt}\cup \{x\}\in \Ff$), then there is nothing more to do. 
 Otherwise, if $\Dt$ refuses to insert the element $x$, then we have a situation where $X_Q=\emptyset$, $X_{\Dt} \in \Ff$, but $X_{\Dt} \cup \{x\} \notin \Ff$.
 Hence, we push $x$ to the back of $Q$; note that invariant~$(\star)$ is thus satisfied, as $x$ becomes the only element in $Q$.
 This concludes the implementation of $\ins(x)$ in $\Dt'$.
  
 We now move to the $\rem(x)$ operation in $\Dt'$.
 First, we apply $\lookup(x)$ operation offered by $\Lt$. If $x \notin X$, then there is nothing to do.
 Otherwise, we have two cases: either $x\in X_{\Dt}$ or $x\in X_Q$.
 
 If $x\in X_Q$, then we remove $x$ from $Q$; recall here that $\lookup(x)$ provided us with the pointer to the corresponding list element, so this can be done in constant time.
 However, at this point invariant~$(\star)$ might have ceased to hold.
 Hence, we apply the $\flush()$ operation, defined as follows. 
 We iteratively take the front element $x$ from $Q$ and try to insert it to $\Dt$ by applying $\ins(x)$ on $\Dt$.
 If $x$ gets successfully inserted into $\Dt$, we remove $x$ from $Q$ and proceed with the iteration.
 Otherwise, if $\Dt$ refuses to insert $x$, we break the iteration.
 Thus, the iteration stops when either~$Q$ becomes empty, or the first element $x$ of $Q$ satisfies $X_{\Dt}\cup \{x\}\notin \Ff$. So invariant~$(\star)$ is restored.
 
 We are left with the case when $x\in X_{\Dt}$. In this case we remove $x$ from $X_{\Dt}$ by calling $\rem(x)$ on~$\Dt$.
 Again, as this might have broken invariant~$(\star)$; we restore it by calling the $\flush()$ operation.
 This concludes the implementation of $\rem(x)$ in $\Dt'$.
 
 We are left with discussing the complexity.
 Observe that each operation $\ins(\cdot)$ uses $\Oh(1)$ operations on $\Dt$ and $\Oh(1)$ operations on $\Lt$.
 Similarly for $\rem(\cdot)$, except that the application of $\flush()$ may perform an unbounded number of moves of elements from $X_Q$ to $X_{\Dt}$, each involving $\Oh(1)$ operations on $\Dt$ and $\Lt$.
 However, each element inserted to $\Dt'$ is moved from $X_Q$ to $X_{\Dt}$ by $\flush()$ at most once, so the operations used for successful moves from $X_Q$ to $X_{\Dt}$ can be charged to $\ins(\cdot)$ previously applied on $\Dt'$.
 It follows that in the amortized sense, each operation in $\Dt'$ uses $\Oh(1)$ time and $\Oh(1)$ operations on $\Dt$ and $\Lt$.
\end{proof}

We remark that Lemma~\ref{lem:queue} can be applied to two problems considered by Alman et al.~\cite{AlmanMW17}: {\sc{Edge Clique Cover}} and {\sc{Point Line Cover}}.
In the first problem, given a graph $G$ and parameter $k$, we ask whether the edges of the $G$ can be covered by at most $k$ cliques in $G$.
In the second problem, given a set $S$ of points in the plane and parameter $k$, we ask whether all these points can be covered using at most $k$ lines.
Alman et al.~\cite{AlmanMW17} gave data structures for the dynamic variants of these problems, however working under the promise that there is always a solution of size at most $g(k)$.
They achieved: update time $\Oh(4^{g(k)})$ and query time $2^{2^{\Oh(k)}}+\Oh(16^{g(k)})$ for {\sc{Edge Clique Cover}}; and
update time $\Oh(g(k)^3)$ and query time $\Oh(g(k)^{g(k)+2})$ for {\sc{Point Line Cover}}.
It was left open whether the assumption about the promise can be lifted.
By combining the data structures of Alman et al.~\cite{AlmanMW17} for $g(k)=k+1$ with Lemma~\ref{lem:queue}, we obtain data structures that achieve:
amortized update time $\Oh(4^k)$ and query time $2^{2^{\Oh(k)}}$ for {\sc{Edge Clique Cover}}; and amortized update time $\Oh(k^3)$ and query time $2^{\Oh(k\log k)}$ for {\sc{Point Line Cover}}.
This assumes access to a dictionary on the edges of the graph in the case of {\sc{Edge Clique Cover}}, and on the point set in the case of {\sc{Point Line Cover}}.

\newcommand{\Qq}{\mathcal{Q}}

\paragraph*{Deriving the results.} We now gather all our tools to prove the main results of this work.

\begin{theorem}[Main result for \LPath]\label{thm:main-formal}
 Let $k$ be a fixed parameter. Suppose we are given a dynamic graph $G$ on $n$ vertices, updated by edge insertions and removals, and we have access to a dictionary on the edges of $G$ where the dictionary operations have amortized running time bounded by $\tdict$. 
 Then there is a data structure that, under such updates,
 maintains whether $G$ contains a simple path on $k$ vertices with amortized update time $2^{\Oh(k^2)}+\Oh(\tdict)$. The data structure uses $(n \cdot 2^{\Oh(k \log k)} + \mdict)$ memory.
 The initialization of the data structure for an edgeless graph on $n$ vertices takes $\Oh(n)$ time.
\end{theorem}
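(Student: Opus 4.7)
The plan is to combine the data structure of Lemma~\ref{lem:kpath-bnd-td}, which works under the promise $\td(G)<k$, with the postponing-insertions technique of Lemma~\ref{lem:queue}, so as to lift the promise at the cost of allowing amortization.

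I will set $\Univ=\binom{V(G)}{2}$ and let $\Ff\subseteq 2^\Univ$ be the family of edge sets $E$ for which the graph $(V(G),E)$ has treedepth strictly smaller than $k$. This family is downward closed, since removing edges cannot increase treedepth: an elimination forest of a supergraph is also an elimination forest of any spanning subgraph. Lemma~\ref{lem:kpath-bnd-td} (applied with $d=k-1$) then supplies a data structure $\Dt$ that weakly supports $\Ff$-membership in the sense of Lemma~\ref{lem:queue}: it maintains a graph $G_\Dt$ with $\td(G_\Dt)<k$ under edge insertions (refusing any insertion that would raise the treedepth to $k$) and edge deletions, in worst-case time $2^{\Oh(k^2)}$ and memory $n\cdot 2^{\Oh(k\log k)}$. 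Feeding $\Dt$ and the given edge dictionary $\Lt$ into Lemma~\ref{lem:queue} will yield a data structure $\Dt'$ that strongly supports $\Ff$-membership and internally partitions the stored edge set as $X=X_\Dt\uplus X_Q$, where $X_Q$ are the edges held in the postponing queue $Q$. Each update on $\Dt'$ takes amortized $\Oh(1)$ time plus amortized $\Oh(1)$ calls to $\Dt$ and to $\Lt$.

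The remaining step is to provide the $k$-path query on top of $\Dt'$. My claim is that the answer is YES if and only if either $Q$ is non-empty, or $Q$ is empty and $\Dt$ reports that $G_\Dt$ contains a simple $k$-path. If $Q$ is empty, then $X=X_\Dt$, so the graph stored in $\Dt$ equals $G$, and Lemma~\ref{lem:kpath-bnd-td} answers the query in $\Oh(1)$ time. If $Q$ is non-empty, then invariant $(\star)$ in the proof of Lemma~\ref{lem:queue} ensures that adding the front edge of $Q$ to $X_\Dt$ would produce a graph of treedepth at least $k$; by Lemma~\ref{lem:path-td}, that graph already contains a simple $k$-path, and since it is a subgraph of $G$, so does $G$ itself. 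Hence the query reduces to checking whether $Q$ is empty and, if so, forwarding to $\Dt$; both checks run in $\Oh(1)$ time. I do not foresee any real obstacle in this step, since all the nontrivial combinatorial and algorithmic content is already concentrated in Lemmas~\ref{lem:kpath-bnd-td},~\ref{lem:queue}, and~\ref{lem:path-td}.

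For the final complexity accounting, each update in $\Dt'$ costs amortized $\Oh(1)$ calls to $\Dt$ (each taking $2^{\Oh(k^2)}$ worst-case time) and amortized $\Oh(1)$ calls to $\Lt$ (each taking amortized time $\tdict$), summing to an amortized update time of $2^{\Oh(k^2)}+\Oh(\tdict)$. Memory is dominated by $\Dt$ using $n\cdot 2^{\Oh(k\log k)}$ and $\Lt$ using $\mdict$; the queue $Q$ needs only $\Oh(1)$ extra cells per edge it holds, and every such edge already appears as a key in $\Lt$, so this overhead is absorbed in $\mdict$. Initialization on an edgeless graph amounts to initializing $\Dt$ on an edgeless graph (cost $\Oh(n)$ by Lemma~\ref{lem:kpath-bnd-td}), an empty dictionary, and an empty queue, for a total of $\Oh(n)$, matching the stated bound.
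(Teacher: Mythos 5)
Your proposal is correct and follows essentially the same route as the paper: maintain strong membership in the treedepth\nobreakdash-$<k$ family via Lemma~\ref{lem:queue} wrapped around Lemma~\ref{lem:kpath-bnd-td}, and answer the $k$-path query by noting that a non-empty postponing queue forces treedepth $\geq k$ (hence a $k$-path by Lemma~\ref{lem:path-td}), while an empty queue lets you forward the query to the inner structure. The paper phrases this slightly more abstractly via the inclusion $\Qq_k\subseteq \Tt_k$ of downward-closed families, whereas you unpack it through the queue invariant $(\star)$ and monotonicity of treedepth; the two formulations coincide.
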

\begin{proof}
 Let $V$ be the invariant vertex set of $G$ and let $\Univ\coloneqq \binom{V}{2}$.
 We note that a dynamic graph with vertex set $V$ can be equivalently treated as a dynamic subset of $\Univ$.
 Let then $\Qq_k\subseteq 2^{\Univ}$ comprise all sets $F\subseteq \Univ$ such that the graph $(V,F)$ does {\em{not}} contain a simple path on $k$ vertices.
 Similarly, let $\Tt_k\subseteq 2^{\Univ}$ comprise all sets $F\subseteq \Univ$ such that the graph $(V,F)$ has treedepth smaller than $k$.
 Note that both $\Qq_k$ and $\Tt_k$ are downward closed and, by Lemma~\ref{lem:path-td}, we have $\Qq_k\subseteq \Tt_k$.
  
 By Lemma~\ref{lem:kpath-bnd-td},
 there is a data structure $\Dt$ that weakly supports $\Tt_k$ membership with update time $2^{\Oh(k^2)}$, and moreover offers $\Qq_k$ membership queries in constant time.
 By Lemma~\ref{lem:queue}, we can now turn $\Dt$ into a data structure $\Dt'$ that strongly supports $\Tt_k$ membership, where each update takes amortized time $2^{\Oh(k^2)}$ and uses amortized $\Oh(1)$ operations on the dictionary over $\Univ$; the latter ones take amortized $\Oh(\tdict)$ time.
 Now, we can easily implement $\Qq_k$ membership queries in $\Dt'$ as follows:
 if the currently maintained set $X\subseteq \Univ$ does not belong to $\Tt_k$, then it also does not belong to $\Qq_k$, 
 and otherwise we may simply query the data structure $\Dt$ that is maintained within $\Dt'$ (see the proof of Lemma~\ref{lem:queue}).
 Finally, observe that strongly supporting $\Qq_k$ membership is equivalent to the requirement requested in the theorem statement.
\end{proof}

The same reasoning as presented in the proof of Theorem~\ref{thm:main-formal}, but with Lemma~\ref{lem:kpath-bnd-td} replaced with Lemma~\ref{lem:kcycle-restricted}, yields the following. We note here that as mentioned in Section~\ref{sec:prelims}, if in Lemma~\ref{lem:kcycle-restricted} we assume {\em{amortized}} time $\tdict$ per dictionary operation instead of worst-case, then we obtain the same running time guarantees, but for amortized complexity.

\begin{theorem}[Main result for \LCycle]\label{thm:main-formal-cycle}
 Let $k$ be a fixed parameter. Suppose we are given a dynamic graph $G$ on $n$ vertices, updated by edge insertions and removals, and we have access to a dictionary on the edges of $G$, which uses $\mdict$ memory and where dictionary operations have amortized bounded by $\tdict$. 
 Then there is a data structure that, under such updates,
 maintains whether $G$ contains a simple cycle on at least $k$ vertices with amortized update time $2^{\Oh(k^4)}+k^{\Oh(k^2)}\tdict+\Oh(k\log n)$. The data structure takes $(n \cdot 2^{\Oh(k^2 \log k)}+ \mdict)$ memory.
 The initialization of the data structure for an edgeless graph on $n$ vertices takes $\Oh(n)$ time.
\end{theorem}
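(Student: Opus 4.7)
The plan is to mimic the argument of Theorem~\ref{thm:main-formal} almost verbatim, with Lemma~\ref{lem:kcycle-restricted} substituted for Lemma~\ref{lem:kpath-bnd-td}, as the text preceding the statement already suggests. Here the situation is even a little cleaner than for \LPath: there is no need to interpose a second, stronger invariant (such as the treedepth bound $\Tt_k$ that was used for paths). Indeed, Lemma~\ref{lem:kcycle-restricted} is phrased exactly so that the invariant it preserves coincides with the property we want to query; the only missing piece is the passage from \emph{weak} to \emph{strong} support, which is precisely what Lemma~\ref{lem:queue} (postponing insertions) provides.

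In more detail, I would fix the invariant vertex set $V$, set $\Univ \coloneqq \binom{V}{2}$, and define the family
\[
\Qq_k \;\coloneqq\; \{\, F \subseteq \Univ \;\colon\; (V,F) \text{ contains no simple cycle on at least }k\text{ vertices}\,\}.
\]
Removing an edge cannot create a new cycle, so $\Qq_k$ is downward closed. The data structure from Lemma~\ref{lem:kcycle-restricted} is, by its very specification, a data structure $\Dt$ that weakly supports $\Qq_k$ membership: it processes $\ins$ and $\rem$ as long as the maintained edge set lies in $\Qq_k$, and rejects any insertion that would violate this. Applying Lemma~\ref{lem:queue} to $\Dt$ then produces a data structure $\Dt'$ that strongly supports $\Qq_k$ membership, which is precisely the object demanded by the theorem statement: the $\member()$ query of $\Dt'$ reports in $\Oh(1)$ time whether the current graph contains a simple cycle on at least $k$ vertices.

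For the running time bookkeeping I would combine the guarantees of the two lemmas: each update of $\Dt'$ triggers amortized $\Oh(1)$ calls to $\Dt$ and $\Oh(1)$ calls to a dictionary over $\Univ$. A single update in $\Dt$ costs worst-case $\Oh(k\log n) + 2^{\Oh(k^2\log k)}\tdict + 2^{\Oh(k^4)}$ by Lemma~\ref{lem:kcycle-restricted}, and each dictionary call costs amortized $\Oh(\tdict)$; summing yields amortized update time $2^{\Oh(k^4)} + k^{\Oh(k^2)}\tdict + \Oh(k\log n)$ for $\Dt'$, as claimed. For the memory, the internal dictionary of $\Dt$ together with the postponement queue and the auxiliary dictionary used by Lemma~\ref{lem:queue} all fit within $\mdict$ (by reusing the user-provided dictionary), so the dominant term is the $n \cdot 2^{\Oh(k^2\log k)} + \mdict$ memory consumed by the structure of Lemma~\ref{lem:kcycle-restricted} itself.

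The only subtlety worth flagging is the switch from worst-case to amortized dictionary operations, since Lemma~\ref{lem:kcycle-restricted} is stated with worst-case $\tdict$; I would invoke here the remark at the end of Section~\ref{sec:prelims}, noted explicitly in the paragraph preceding this theorem, which says that all the guarantees inside $\Dt$ automatically become amortized when the dictionary itself offers only amortized $\tdict$. This is the only place where any care is needed; there are no genuinely hard steps in the argument, and I do not foresee any serious obstacle.
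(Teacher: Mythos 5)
Your proposal is correct and follows essentially the same route as the paper, which proves this theorem by repeating the argument of Theorem~\ref{thm:main-formal} with Lemma~\ref{lem:kcycle-restricted} in place of Lemma~\ref{lem:kpath-bnd-td}; your observation that no intermediate invariant (like $\Tt_k$ for paths) is needed, since the invariant of Lemma~\ref{lem:kcycle-restricted} coincides with the queried property, is exactly the simplification the paper implicitly uses. You also correctly handle the worst-case versus amortized dictionary issue via the remark the paper itself invokes.
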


Theorems~\ref{thm:main-formal} and~\ref{thm:main-formal-cycle} add up to a proof of Theorem~\ref{thm:main-intro}.


\section{Lower bounds}\label{sec:lower-bounds}

In this section we prove lower bounds for running times of operations of dynamic data structures for detecting cycles and paths.
We observe that there is a constant $\epsilon > 0$ such that any dynamic data structure for detecting cycles containing exactly $k$~vertices or for detecting paths containing at most $k$~vertices between two pre-specified vertices requires $\Omega(m^\epsilon)$ update time or $\Omega(m^\epsilon)$ query time, where $m$ is the number of edges currently present.
These lower bounds apply to the word-RAM computation model and are conditional on the so-called triangle and 3SUM conjectures (see below) but they work even for incremental data structures, i.e., structures that are required to support edge insertions but not edge removals.

We also show an unconditional lower bound for detecting cycles of length at least~$k$ in the cell-probe model.
The lower bound shows that the edge insertion and removal operations or the cycle detection operation need to take $\Omega(\log n)$ time.
This lower bound is obtained by a reduction from the dynamic connectivity problem.

\paragraph{Conditional lower bounds.}
We first state the conjectures that underlie our lower bounds.
\begin{conjecture}[Triangle Conjecture]
  Consider the word-RAM model of computation with word-length $\Oh(\log b)$ for inputs of length~$b$.
  There exists a constant $\epsilon > 0$ such that each algorithm running in this model that takes as input an undirected graph $G$ and correctly reports whether $G$ contains a triangle, has running time $\Omega(m^{1 + \epsilon})$, where $m$ is the number of edges in~$G$.
\end{conjecture}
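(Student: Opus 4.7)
The ``final statement'' here is not a theorem to be proved but a conjecture, namely the Triangle Conjecture, used as a working hypothesis in fine-grained complexity. Thus my ``proof proposal'' cannot be a realistic sketch of a proof: no unconditional super-linear lower bound of the form $\Omega(m^{1+\epsilon})$ is known for any natural problem in the word-RAM model, and proving one would constitute a major breakthrough in complexity theory, as it would in particular separate the natural word-RAM class of linear-time algorithms from a modestly larger class. So what I can offer is a plan for how one would \emph{argue in support of} the conjecture, and how the conjecture would be used in the surrounding section.

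The plan to support the plausibility of the conjecture would go roughly as follows. First, I would survey the algorithmic landscape: the fastest known algorithms for triangle detection use matrix multiplication, giving running time $O(m^{2\omega/(\omega+1)})$, which is $\Omega(m^{4/3})$ even optimistically assuming $\omega=2$. No truly linear or near-linear-time algorithm is known, despite decades of work. Second, I would point to reductions that tie the triangle problem to other central problems in fine-grained complexity, such as Boolean matrix multiplication and all-pairs shortest paths: any $O(m^{1+o(1)})$ algorithm for triangle detection would imply significant breakthroughs there. Third, in the ``combinatorial'' regime (algorithms avoiding fast matrix multiplication) the best bounds are of the form $O(m^{3/2}/\mathrm{polylog}\,m)$, which is consistent with the conjecture for some small $\epsilon$.

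The main obstacle is, of course, that there is no known technique to establish any non-trivial unconditional lower bound of this shape in the word-RAM model against general algorithms. Any attempt at a direct proof would run into the same barriers that prevent us from separating $\mathsf{P}$ from $\mathsf{NP}$, or even from proving strong super-linear time bounds for problems in $\mathsf{P}$. A conditional ``proof'' via reduction from another hardness assumption (e.g., 3SUM, or OV, or APSP) is possible in the reverse direction: the Triangle Conjecture is known to be implied by, or reducible from, other standard fine-grained assumptions, and any such reduction is the type of supporting evidence one would marshal.

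For the purposes of the paper, my expectation is that the authors simply state the Triangle Conjecture (and, shortly after, the 3SUM Conjecture) verbatim as a hypothesis, with no attempt at proof, and then use it in combination with the $\ell$-layered reachability oracle framework of Alman et al.~to derive the $\Omega(n^\delta)$ lower bounds announced in Section~\ref{sec:introduction} for dynamic undirected \textsc{$5$-Path} with specified endpoints and dynamic undirected \textsc{Exact-$5$-Cycle}. So I would not ``prove'' this statement at all; I would treat it as a black-box axiom and focus the subsequent reductions on carefully mapping the layered reachability structure into the two dynamic problem variants while controlling the number of edges and the maintained invariants.
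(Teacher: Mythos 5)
You correctly recognize that this is a conjecture, not a theorem, and that the paper states it verbatim as an unproved working hypothesis, later combining it with the 3LRO Conjecture to derive the conditional lower bounds of Theorem~\ref{thm:cond-lb}. There is nothing to prove here, and your reading of the paper's treatment is accurate.
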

The Triangle Conjecture has been used to derive many previous lower bounds, see \cite{abboud_popular_2014,AlmanMW17} for examples.

The second conjecture was introduced by Alman et al.~\cite{AlmanMW17}.
It refers to the following type of reachability oracle.
Let $\ell$ be an integer.
We say that a directed acyclic graph $G$ is \emph{$\ell$-layered} if it admits a vertex partition into $\ell$ sets $L_1, L_2, \ldots, L_\ell$ such that each edge~$(u, v)$ in $E(G)$ has endpoints in two consecutive, increasing layers, that is, there is an integer $i \in [\ell - 1]$ such that $u \in L_i$ and $v \in L_{i + 1}$.
\newcommand{\reachable}{\ensuremath{\mathtt{reachable}}}
An \emph{$\ell$-layered reachability oracle ($\ell$LRO)} is an algorithm that receives an $\ell$-layered directed acyclic graph~$G$, and computes a data structure that supports an operation~$\reachable(\cdot, \cdot)$.
This operation receives two vertices~$u, v$ such that $u \in L_1$ and $v \in L_\ell$, and answers whether in $G$ there is a directed path from $u$ to $v$.
The \emph{preprocessing time} of the reachability oracle is the time needed to compute the data structure, while the \emph{query time} is the time needed to carry out the \reachable{} query.
\begin{conjecture}[$\ell$-layered reachability oracle ($\ell$LRO) Conjecture]
  Consider the word-RAM model of computation with word-length $O(\log b)$ for inputs of length~$b$.
  There exists a constant $\epsilon > 0$ such that each $\ell$-layered reachability oracle in this model has preprocessing time $\Omega(m^{1 + \epsilon})$ or query time $\Omega(m^\epsilon)$.
\end{conjecture}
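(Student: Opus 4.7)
The statement in question is framed as a \emph{Conjecture}, not a theorem, so there cannot be a proof in the usual sense; what the author presumably does immediately after is either simply assume it and proceed, or (more likely, given the remark in the introduction that the $\ell$LRO Conjecture is implied by the Triangle Conjecture and the 3SUM Conjecture) justify it by reducing a well-studied hard problem to it. My plan therefore is not to ``prove'' the conjecture in isolation but to sketch the natural fine-grained reduction from triangle detection, which is the usual source of evidence for hypotheses of this flavour.

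First I would reduce triangle detection in an undirected graph $H=(V,E)$ to a collection of queries on an $\ell$LRO with $\ell=3$. I construct a $3$-layered DAG $G$ whose layers $L_1,L_2,L_3$ are each a disjoint copy of $V$; for every edge $uv\in E$ I place both arcs $(u_1,v_2)$ and $(v_1,u_2)$ between $L_1$ and $L_2$, and the symmetric pair between $L_2$ and $L_3$. Then, for vertices $x,z\in V$, the oracle call $\reachable(x_1,z_3)$ returns true precisely when there exists some $y\in V$ with $xy\in E$ and $yz\in E$, i.e., a walk of length $2$ from $x$ to $z$ in $H$ through an intermediate neighbour.

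To detect a triangle I then iterate over the edges of $H$: for every $xz\in E$ I issue a single call $\reachable(x_1,z_3)$, and return YES iff at least one call succeeds with an intermediate vertex $y\notin\{x,z\}$. The correctness is immediate: a triangle $xyz$ in $H$ gives the edge $xz$ together with a length-$2$ path $x_1\to y_2\to z_3$, and conversely any successful query corresponds to an edge $xz$ plus a common neighbour $y$. The whole reduction uses $\Theta(m)$ queries on a DAG with $\Theta(m)$ arcs, so an $\ell$LRO with preprocessing $o(m^{1+\eps})$ and query time $o(m^{\eps})$ would yield a total running time of $o(m^{1+\eps})$ for triangle detection, contradicting the Triangle Conjecture.

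The main obstacle in making this sketch fully rigorous is the handling of the exceptional case where the intermediate vertex $y$ coincides with $x$ or $z$, which would give rise to spurious length-$2$ walks not corresponding to triangles. The standard fix is to split each vertex into two copies in $L_2$ (one ``outgoing'' and one ``incoming'') and forbid the identifying arcs, or equivalently to preprocess away queries involving degenerate walks; both add only linear overhead and preserve the layered structure. An analogous reduction from 3SUM, where the three summands are assigned to the three layers and the arcs encode the additive relation, supplies independent evidence and yields the same quantitative bound. No unconditional argument is known, and such an argument would in fact refute either of the two cited hypotheses.
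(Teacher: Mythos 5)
This statement is a \emph{conjecture}, not a theorem, and accordingly the paper contains no proof of it: the authors simply state it as a hypothesis and cite Alman et al.\ for the fact that the $3$LRO Conjecture follows from the Triangle Conjecture and from the 3SUM Conjecture. You correctly recognized that there is nothing to prove here, and your sketch of the reduction from triangle detection to $3$-layered reachability queries is the standard argument behind the cited implication, so your treatment is consistent with what the paper does (namely, nothing beyond attribution).

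One small technical remark on your reduction sketch: the concern you raise about the intermediate vertex $y$ coinciding with $x$ or $z$ does not actually arise in the simple-graph setting. Since the arc $x_1 \to y_2$ is present exactly when $xy \in E$, the case $y = x$ would require a self-loop $xx \in E$, which is excluded; similarly for $y = z$. Hence the vertex-splitting fix you propose is unnecessary under the usual assumption that $H$ is simple. This does not affect the conclusion, but it simplifies the argument.
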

As Alman et al.~\cite{AlmanMW17} showed, the 3LRO Conjecture is implied by both the Triangle Conjecture and the well-known 3SUM Conjecture~\cite{abboud_popular_2014,patrascu_polynomial_2010}\footnote{In the 3SUM problem we are given a set~$S$ of integers and we want to decide whether there are three integers $a, b, c \in S$ such that $a + b = c$.
  The 3SUM Conjecture states that each algorithm solving 3SUM takes $\Omega(n^{2 - o(1)})$ time, where $n = |S|$.} and thus is possibly a weaker assumption.
For a treatment of the extensive recent usage of the Triangle and 3SUM Conjectures, see  Abboud and Williams~\cite{abboud_popular_2014}.

An \emph{incremental graph structure} is a data structure that maintains an undirected graph~$G$ on a fixed set of vertices and supports edge insertions.
If it also supports edge removals, we call it \emph{fully dynamic graph structure}.
We show lower bounds for incremental and fully dynamic graph structures that are initialized with an additional parameter~$k$ and support the following queries:
\newcommand{\exactcycle}[1]{\ensuremath{(=#1)\text{-}\mathtt{cycle}()}}
\newcommand{\leqpathst}[1]{\ensuremath{(\leq #1)\text{-}\mathtt{path}()}}
\begin{itemize}[nosep]
\item \exactcycle{k} -- which returns whether $G$ contains a (simple) cycle on exactly $k$ vertices; and
\item \leqpathst{k} -- which returns whether $G$ contains a (simple) path containing at most $k$ vertices between two vertices~$s$ and~$t$, which have been specified when initializing the data structure.
\end{itemize}
The time needed for inserting one edge is called the \emph{update time} and the time needed for the above queries is called the \emph{query time}.
\begin{theorem}\label{thm:cond-lb}
  Consider the word-RAM model of computation with word-length $O(\log b)$ for inputs of length~$b$.
  In this model,%
  \begin{enumerate}[nosep,label=(\roman*)]
  \item unless the Triangle Conjecture fails, there is a constant $\epsilon > 0$ such that each incremental graph structure that supports \exactcycle{3} requires $\Omega(m^\epsilon)$ update time or $\Omega(m^{1 + \epsilon})$ query time, or requires $\Omega(n^{1+\epsilon})$ time for initialization on an edgeless graph; and
  \item unless the 3LRO Conjecture fails, there is a constant $\epsilon > 0$ such that each fully dynamic graph structure that supports \exactcycle{5} or \leqpathst{5} requires $\Omega(m^\epsilon)$ update time or $\Omega(m^{ \epsilon})$ query time, or requires $\Omega(n^{1+\epsilon})$ time for initialization on an edgeless graph.
  \end{enumerate}
  Here, $n$ and $m$ respectively denote the number of vertices and edges present in the graph.
\end{theorem}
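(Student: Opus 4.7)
The plan is to establish both claims through reductions. For part (i), given an arbitrary input graph $G$ on $n$ vertices and $m$ edges, I would initialize an incremental $\exactcycle{3}$ data structure on $n$ vertices, insert all $m$ edges of $G$ one by one, and then issue a single $\exactcycle{3}$ query. Since simple cycles on exactly three vertices are precisely triangles, the query correctly reports whether $G$ contains a triangle. The total running time is $\tau_{\mathrm{init}}(n) + m\cdot \tau_{\mathrm{upd}} + \tau_{\mathrm{qry}}$, which by the Triangle Conjecture must be $\Omega(m^{1+\epsilon})$ for some $\epsilon > 0$ on an infinite family of inputs; splitting across the three summands (and noting that WLOG $n\leq m$ on connected instances) yields the claimed three-way bound.

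For part (ii) with the query $\leqpathst{5}$, let $G$ be a 3-layered DAG with layers $L_1, L_2, L_3$ and $m$ edges. I would initialize the fully dynamic structure on $V(G)\cup\{s,t\}$ with $s,t$ designated as the path query endpoints, and insert all $m$ edges of~$G$ as undirected edges; this is the preprocessing phase of the simulated oracle. To answer $\reachable(u,v)$ with $u\in L_1$ and $v\in L_3$, I would insert $su$ and $tv$, call $\leqpathst{5}$, and finally remove $su$ and~$tv$. Correctness follows from observing that, during the query, $s$'s only neighbor is $u$ and $t$'s only neighbor is $v$, so every simple $s$-to-$t$ path must begin with $su$ and end with $tv$, leaving at most one internal vertex between $u$ and~$v$. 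Since $u\in L_1$ and $v\in L_3$ are non-adjacent, this internal vertex $w$ must lie in $L_2$, and the path $s-u-w-v-t$ exists iff $uw\in E_1$ and $wv\in E_2$, iff $u$ reaches~$v$ in~$G$.

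The $\exactcycle{5}$ reduction uses the same construction but also inserts the edge~$st$ once at the start. The structural insight is that the DAG's undirected skeleton is bipartite with parts $A=L_2$ and $B=L_1\cup L_3\cup\{s,t\}$, so all DAG edges cross this bipartition while the three auxiliary edges $st$, $su$, $tv$ lie entirely within~$B$. Any cycle crosses a fixed bipartition an even number of times, so a cycle of length~$5$ must use an odd number of edges from $\{st,su,tv\}$, namely $1$ or~$3$. The case of exactly one such edge is impossible: since $s$ and $t$ have no other edges, each would need a further incident edge in the cycle, forcing a second edge from the set. Using all three forces a cycle $s-t-v-w-u-s$ with $w\in L_2$ adjacent to both $u$ and~$v$, which exactly witnesses $u\to w\to v$. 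Hence a 5-cycle exists iff $u$ reaches $v$.

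In both (ii) reductions the simulated 3LRO has preprocessing time $\tau_{\mathrm{init}}(n)+(m+\Oh(1))\tau_{\mathrm{upd}}$ and per-query cost $\Oh(\tau_{\mathrm{upd}}+\tau_{\mathrm{qry}})$. By the 3LRO Conjecture, for some $\epsilon>0$ either preprocessing is $\Omega(m^{1+\epsilon})$ or per-query cost is $\Omega(m^\epsilon)$ on an infinite family of DAGs; restricting WLOG to instances with $m\geq n$ converts the preprocessing bound into $\tau_{\mathrm{init}}(n)=\Omega(n^{1+\epsilon})$ or $\tau_{\mathrm{upd}}=\Omega(m^\epsilon)$, which combined with the query case yields the three-way bound in the statement. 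The main obstacle in the whole argument is ensuring that no spurious 5-cycles are introduced in the $\exactcycle{5}$ reduction; the bipartite-parity trick makes this essentially free, whereas without it one would need a delicate edge-by-edge case analysis enumerating which paths in the DAG could combine with the auxiliary edges to close into a cycle of length exactly~$5$.
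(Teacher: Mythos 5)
Your proposal is correct and matches the paper's proof essentially step for step: the same insert-all-edges-then-query reduction from triangle detection for (i), and the same simulation of a 3-layered reachability oracle via the auxiliary vertices $s,t$ (with the extra edge $st$ in the $5$-cycle case) for (ii). Your cut-parity phrasing of why no spurious $5$-cycle can arise is just a repackaging of the paper's observation that every cycle contained in $V(G)$ has even length, followed by the same degree argument at $s$ and $t$.
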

\begin{proof}
  (i): Assume that there exists an incremental graph structure~$\Dt$ that supports \exactcycle{3} and has update time~$\Oh(m^{o(1)})$, query time~$\Oh(m^{1 + o(1)})$, and initialization time~$\Oh(n^{1+o(1)})$.
  To decide whether a given graph $G$ contains a triangle, initialize $\Dt$ in $\Oh(n^{1+o(1)})$ time, insert all edges of $G$ into $\Dt$ -- which takes $\Oh(m^{1 + o(1)})$ time -- and then call \exactcycle{3}, which also takes $\Oh(m^{1 + o(1)})$ time; this is clearly correct.
  Hence the Triangle Conjecture fails.

  (ii): Assume that there exists a fully dynamic graph structure~$\Dt$ that supports \leqpathst{5} and has update time~$\Oh(m^{o(1)})$ and query time~$\Oh(m^{o(1)})$, and initialization time~$\Oh(n^{1+o(1)})$.
  Construct a $3$LRO as follows.
  Let $G$ be the input (directed) graph and $L_1, L_2, L_3$ its layers.
  Initialize $\Dt$ with the vertices of $G$ and two additional new vertices $s$ and $t$.
  Insert all the edges of $G$ into $\Dt$ one by one, forgetting their orientation.
  Take the resulting $\Dt$ to be the data structure of the 3LRO.
  Thus, the preprocessing time is $\Oh(m^{1 + o(1)})$.
  A query of the 3LRO is handled as follows.
  On input of $u \in L_1$ and $v \in L_3$, (a) insert the edges $su$ and $vt$ into $\Dt$, (b) call \leqpathst{5}, and (c) remove the edges $su$ and $vt$ from $\Dt$.
  Since all the edges in $G$ connect vertices from consecutive layers, while $s$ and $t$ are made adjacent only to $u\in L_1$ and $v\in V_3$, respectively, the answer to \leqpathst{5} will be positive if and only if in $G$ there is a directed path from $u$ to~$v$.
  Clearly, a query takes $\Oh(m^{o(1)})$ time and thus the 3LRO Conjecture fails.

  If $\Dt$ instead supports \exactcycle{5}, then proceed in the same way for preprocessing, that is, initialize $\Dt$ with the vertices of $G$ and two additional new vertices $s$ and~$t$, and insert all edges of $G$ into $\Dt$ while forgetting their orientation.
  However, during the preprocessing insert furthermore the edge $st$ into $\Dt$.
  For the query, proceed analogously: On input of $u \in L_1$ and $v \in L_3$, (a) insert the edges $su$ and $vt$ into $\Dt$, (b) call \exactcycle{5}, and (c) remove the edges $su$ and $vt$ from $\Dt$. Let $\wh{G}$ be the undirected graph that is queried in (b).
  Clearly, if in $G$ there is a directed path from $u$ to $v$, then in $\wh{G}$ there is a cycle on five vertices, and so the query will answer positively.
  On the other hand, if the query answers positively, then  the corresponding five-vertex cycle $C$ in $\wh{G}$ needs to involve $s$ and $t$. Indeed, since $G$ consists of three layers and edges in $G$ are only between two adjacent layers, each cycle in $\wh{G}$ whose vertex set is fully contained in $V(G)$ has even length.
  Therefore, $C$ needs to contain $s$, $t$, and the edge $st$. Since besides each other, $s$ is only adjacent to $u$ and $t$ is only adjacent to $v$, the remaining three vertices of $C$ must form a three-vertex path from $u$ to $v$ in $G$, as required.
  Since the preprocessing takes $\Oh(m^{1 + o(1)})$ time and the query $\Oh(m^{o(1)})$ time, the 3LRO fails.
  \end{proof}

\paragraph{Unconditional lower bound.}
\newcommand{\connected}{\mathtt{connected}}
\newcommand{\geqcycle}[1]{\ensuremath{(\geq#1)\text{-}\mathtt{cycle}}}

We next show that a dynamic graph structure for a graph~$G$ (see above for the definition) that is initialized with an additional parameter~$k$ and that supports a query \geqcycle{k} --- which tests whether there is a (simple) cycle of length at least~$k$ in $G$ --- can be used to maintain connectivity information in~$G$.
Hence, lower bounds for dynamic connectivity apply to fully dynamic graph structures that support \geqcycle{k}.
Formally, a \emph{dynamic connectivity structure} is a fully dynamic graph structure for a graph $G$ that supports the query $\connected(\cdot, \cdot)$, which receives as input two vertices $u, v \in V(G)$ and answers positively if and only if $u$ and $v$ are in the same connected component.
Demaine and P\u{a}tra\c{s}cu~\cite{10.1145/1007352.1007435} proved the following.

\begin{theorem}[\cite{10.1145/1007352.1007435}, Theorem~6]
  Consider the cell-probe model of computation.
  Every dynamic connectivity structure for a graph~$G$ in this model has either $\Omega(\log n)$ amortized update time or $\Omega(\log n)$ amortized query time, where $n = |V(G)|$.
  This holds even if $G$ is promised to remain a disjoint union of paths at all times.
\end{theorem}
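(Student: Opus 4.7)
Since this theorem is quoted verbatim from Demaine and P\u{a}tra\c{s}cu~\cite{10.1145/1007352.1007435}, my role is not to prove it anew but to sketch the plan of their proof. The argument lives in the cell-probe model with word size $w = \Theta(\log n)$, and it uses the \emph{epoch technique} (introduced by Fredman and Saks and sharpened by P\u{a}tra\c{s}cu--Demaine) to show that any dynamic connectivity structure must encode sufficiently many random bits from past updates into its cells, while being able to retrieve them in few probes during a query.

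The plan I would follow has two parts. First, design a hard distribution on update/query sequences that keeps the maintained graph a disjoint union of paths at all times. The standard construction groups the $n$ vertices into $\sqrt{n}$ buckets of $\sqrt{n}$ endpoints each and uses $\link$/$\cut$ operations to stitch together random matchings inside each bucket; these random matchings encode a string $b$ of $\Theta(n)$ uniform random bits. A final batch of $\Theta(n)$ $\connected(\cdot,\cdot)$ queries then recovers $b$ bit by bit. Crucially, the whole schedule can be arranged so that the graph remains a disjoint union of paths at every intermediate step, which is what makes the lower bound useful to us (when invoking it through the reduction via $\geqcycle{k}$ sketched just above the theorem, we need the promise to survive the reduction).

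Second, apply the epoch argument to this hard distribution. I would partition the update sequence into $\Theta(\log n)$ epochs $\Ee_1,\Ee_2,\ldots,\Ee_L$ whose sizes form a geometric progression as one looks backwards from the final query batch. For each epoch $\Ee_i$, an information-theoretic encoding argument shows that the query batch, conditioned on the cells written in epochs $\Ee_{>i}$, must probe at least one cell that was written during $\Ee_i$; otherwise the $\Oh(w)$-bit contents of the probed cells together with the contents of $\Ee_{>i}$ would constitute too short a description of the randomness injected in $\Ee_i$. Summing this one-probe-per-epoch contribution across all $\Theta(\log n)$ epochs gives the $\Omega(\log n)$ amortized bound on either update or query time.

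The main obstacle --- and the part actually carried out in~\cite{10.1145/1007352.1007435} --- is making the encoding argument tight: one must set up the random distribution and the cell-reconstruction scheme so that the ``cells written in $\Ee_{>i}$'' can indeed be recovered cheaply enough from the query transcript, and so that the path-only promise is preserved. For these technical details I defer to~\cite{10.1145/1007352.1007435}. In our paper the theorem is used only as a black box, together with the simple reduction from connectivity on disjoint paths to $\geqcycle{k}$ detection: adding a single edge between two path endpoints closes a long cycle exactly when the endpoints lie on the same sufficiently long path, which yields Corollary~\ref{cor:3-cyc-lb}.
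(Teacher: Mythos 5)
The paper does not prove this statement; it is imported verbatim from Demaine and P\u{a}tra\c{s}cu~\cite{10.1145/1007352.1007435} and used strictly as a black box in Lemma~\ref{lem:conn2kcyc} and Corollary~\ref{cor:3-cyc-lb}. Your proposal correctly recognizes this and, rather than attempting a from-scratch proof, gives a faithful high-level account of the Fredman--Saks/P\u{a}tra\c{s}cu--Demaine epoch technique and the disjoint-union-of-paths hard distribution. That is exactly the right stance for a cited result, and your emphasis on the ``remains a disjoint union of paths'' promise is well placed, since that is precisely the feature the paper relies on for the reduction to cycle detection. There is nothing in the paper to compare your sketch against; the one cosmetic caveat is that you introduce notation (epoch symbols, a random string $b$) that is yours, not the paper's, so if this text were to be incorporated it would need those macros defined or replaced by plain $E_1,\ldots,E_L$.
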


The following lemma shows that this lower bound transfers to detecting cycles.

\begin{lemma}\label{lem:conn2kcyc}
  Suppose that there is a fully dynamic graph structure $\Dt$ for a graph~$G$ which supports the \geqcycle{$3$} query so that the amortized update and query time are upper bounded by $t(n)$, where $n = |V(G)|$.
  Then there is a dynamic connectivity data structure $\Dt'$ which maintains a graph which is promised to remain a forest; the data structure supports inserting and removing edges and the $\connected(u,v)$ query in amortized $\Oh(t(n))$ time per operation.
\end{lemma}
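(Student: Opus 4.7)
The plan is to use $\Dt$ as a black box, exploiting the fundamental fact that in a forest, adding any edge $uv$ creates a cycle if and only if $u$ and $v$ already lie in the same connected component. So the construction of $\Dt'$ is essentially to instrument $\Dt$ with one auxiliary dictionary on edges (e.g.\ a simple hash table, with amortized cost $\Oh(1)$ per operation, which is dominated by $t(n)$ since cell-probe lower bounds force $t(n) = \Omega(\log n)$ anyway) to remember which edges of the forest $F$ are currently present. All updates on $\Dt'$ are forwarded verbatim: $\ins(uv)$ on $\Dt'$ inserts $uv$ into both $\Dt$ and the edge dictionary, and $\rem(uv)$ does the corresponding removals. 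This is legal because the caller guarantees that $F$ remains a forest, hence at no intermediate point does the graph stored in $\Dt$ contain a cycle; in particular the $\geqcycle{3}$ query will consistently report \textsc{false}.

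The $\connected(u,v)$ query is handled as follows. If $u = v$ return $\true$; if $uv$ is already stored in the edge dictionary return $\true$; otherwise perform a tentative probe: insert $uv$ into $\Dt$, invoke $\Dt.\geqcycle{3}$, and then remove $uv$ from $\Dt$. Return the answer of the cycle query. Correctness is immediate from the forest property: before the tentative insertion the graph in $\Dt$ equals $F$, which is acyclic; after inserting $uv$ the graph has a cycle if and only if the tree path between $u$ and $v$ in $F$ existed, i.e.\ if and only if $u$ and $v$ were in the same tree of $F$. Note that no update to our own edge dictionary is needed during a query because the probe leaves $\Dt$ in its original state.

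For the running time analysis, each $\Dt'$ operation triggers $\Oh(1)$ operations on $\Dt$ (at most two updates plus one query per $\connected$ call, and at most one update per $\ins/\rem$) together with $\Oh(1)$ dictionary lookups. Summing over any sequence of $N$ operations on $\Dt'$ yields at most $\Oh(N)$ operations on $\Dt$ and $\Oh(N)$ dictionary operations, which by hypothesis is $\Oh(N \cdot t(n))$ total time. Hence the amortized cost per $\Dt'$ operation is $\Oh(t(n))$, as required. I do not anticipate a real obstacle here; the only subtle point is matching the amortization notion (amortized over an arbitrary sequence), which goes through because each $\Dt'$ operation corresponds to a constant number of $\Dt$ operations.
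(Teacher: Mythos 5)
Your proposal is correct and follows essentially the same route as the paper: forward updates to $\Dt$ verbatim, and answer $\connected(u,v)$ by tentatively inserting $uv$, calling the cycle query, and removing $uv$, which is exactly the paper's reduction. The auxiliary edge dictionary and the $u=v$ check are extra edge-case precautions the paper omits, but they do not change the argument or the $\Oh(t(n))$ amortized bound.
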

\begin{proof}
  We show how $\Dt'$ can be implemented using $\Dt$.
  We are promised that the dynamic graph given to $\Dt'$ is a forest at all times.
  Structure $\Dt'$ initializes structure $\Dt$, and uses it in the following way.
  When an edge $uv$ is added or removed to or from $G$, $\Dt'$ inserts or removes $uv$ to or from $\Dt$ accordingly.
  To carry out a query $\connected(u,v)$, structure $\Dt'$ calls on $\Dt$ the following sequence of instructions: inserting the edge $uv$, a call to $\geqcycle{3}$, and removing the edge $uv$.
  If the answer to $\geqcycle{3}$ in $\Dt$ was positive, $\Dt'$ answers positively to query $\connected(u,v)$, otherwise the answer is negative.
\end{proof}

Thus, we have the following.
\begin{corollary}\label{cor:3-cyc-lb}
  Consider the cell-probe model of computation.
  Each dynamic graph structure for a graph~$G$ in this model that supports the \geqcycle{$3$} query has either $\Omega(\log n)$ amortized update time or $\Omega(\log n)$ amortized query time, where $n = |V(G)|$.
\end{corollary}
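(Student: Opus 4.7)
The plan is to obtain Corollary~\ref{cor:3-cyc-lb} as a direct consequence of Lemma~\ref{lem:conn2kcyc} together with the Demaine--P\v{a}tra\c{s}cu lower bound for dynamic connectivity, both of which have already been stated in the excerpt. The argument is short: one applies the reduction in the contrapositive direction.

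First I would suppose, for contradiction, that there exists a dynamic graph structure $\Dt$ supporting $\geqcycle{3}$ whose amortized update and query times are both $o(\log n)$; call this bound $t(n)$. By Lemma~\ref{lem:conn2kcyc}, $\Dt$ can be turned into a dynamic connectivity structure $\Dt'$ on a graph promised to remain a forest, with amortized cost $\Oh(t(n)) = o(\log n)$ per edge insertion, edge removal, and $\connected(\cdot,\cdot)$ query. The reduction embedded in Lemma~\ref{lem:conn2kcyc} is in the cell-probe model (it only uses a constant number of calls to $\Dt$ per operation and no further nontrivial computation), so the resulting structure $\Dt'$ operates in the cell-probe model as well.

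Next I would invoke the cited theorem of Demaine and P\v{a}tra\c{s}cu~\cite{10.1145/1007352.1007435}: in the cell-probe model, any dynamic connectivity structure on a graph promised to remain a disjoint union of paths --- and hence, in particular, on a forest --- must have either $\Omega(\log n)$ amortized update time or $\Omega(\log n)$ amortized query time. Since $\Dt'$ achieves $o(\log n)$ for both, this is a contradiction, so one of the amortized times of $\Dt$ must be $\Omega(\log n)$, which is exactly the statement of Corollary~\ref{cor:3-cyc-lb}.

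The only step requiring any care is observing that the $\Omega(\log n)$ lower bound of Demaine and P\v{a}tra\c{s}cu indeed applies even to the restricted setting used by the reduction (graphs that remain forests under the updates generated by $\Dt'$): the quoted theorem is explicitly stated for the case when the maintained graph remains a disjoint union of paths, which is a subfamily of forests, so the lower bound transfers without loss. There is no real obstacle here; Corollary~\ref{cor:3-cyc-lb} is essentially just the combination of Lemma~\ref{lem:conn2kcyc} with the quoted theorem, and the proof can be written in two or three sentences.
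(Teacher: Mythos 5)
Your proof is correct and takes the same approach as the paper: Corollary~\ref{cor:3-cyc-lb} is obtained directly by chaining Lemma~\ref{lem:conn2kcyc} with the cited Demaine--P\v{a}tra\c{s}cu lower bound, which the paper presents with only the remark ``Thus, we have the following.'' Your contrapositive phrasing and the note that the reduction stays within the cell-probe model are exactly the details one would fill in.
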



\section{Conclusions}\label{sec:conclusions}

In this work we presented a fully dynamic data structure that for a dynamic graph $G$, promised to be of treedepth at most $d$ at all times, maintains an elimination forest of $G$ of optimum height. The data structure offers $2^{\Oh(d^2)}$ update time in the worst case. We used this result to give data structures for the dynamic variants of the \LPath and the \LCycle problems. For \LPath, the data structure offers amortized update time $2^{\Oh(k^2)}+\Oh(\tau)$, where $\tau$ is the amortized operation time in a dictionary on the edges of the maintained graph. For \LCycle, the amortized update time is $2^{\Oh(k^4)}+k^{\Oh(k^2)}\cdot \tau+\Oh(k\log n)$.

As argued in Section~\ref{sec:introduction}, the results for \LPath and \LCycle are tight as far as the dependence on $n$ is concerned. However, there is a lot of room for improvements of the parametric factor, that is, the dependency on $k$. In particular, the data structure for \LPath\ of Alman et al.~\cite{AlmanMW17} achieves a better parametric factor of~$k^{\Oh(k)}$, at the cost of allowing a factor that is polylogarithmic in $n$. Would it be possible to obtain amortized update time $k^{\Oh(k)}$, without any additional factors depending on $n$? Perhaps more interestingly, static fpt algorithms for \LPath, designed for instance using color-coding~\cite{AlonYZ95} or algebraic coding~\cite{BjorklundHKK17,Koutis08,Williams09}, achieve parametric factor $2^{\Oh(k)}$ in their running times. Can one design a data structure for dynamic \LPath with amortized update time $2^{\Oh(k)}\cdot \log^c n$ for some constant $c$, or even $2^{\Oh(k)}$? Similar questions can be also asked about \LCycle, where the $2^{\Oh(k^4)}$ factor seems even more amenable to improvements.

%

The $2^{\Oh(k^2)}$ update time offered by our dynamic data structure for \LPath is a direct consequence of the $2^{\Oh(d^2)}$ update time of the dynamic treedepth data structure. As we argued in Section~\ref{sec:introduction}, improving this factor is equivalent to improving the parametric factor in the complexity of the static fpt algorithm of Reidl et al.~\cite{ReidlRVS14} for computing the treedepth of a graph. However, note that for an application to the \LPath problem, we do not necessarily need to maintain an elimination forest of optimum height; a constant-factor approximation would perfectly suffice. Unfortunately, approximation algorithms for treedepth remain largely unexplored even in the static setting, which brings us to an old question (see e.g.~\cite{CzerwinskiNP19}): 
is there a constant-factor approximation algorithm for treedepth with running time $2^{o(d^2)}\cdot n^{\Oh(1)}$, where $d$ is the value of the treedepth?

Finally, we hope that our work might give some insight into the problem of maintaining an approximate tree decomposition of a dynamic graph of bounded treewidth. 
Here, even achieving polylogarithmic-time updates would be very interesting. This direction was also mentioned both by Alman et al.~\cite{AlmanMW17} and by Dvo\v{r}\'ak et al.~\cite{DvorakKT14}.


\bibliographystyle{abbrv}
\bibliography{references}


\appendix

\section{Additional material for the introduction}\label{app:intro}

\paragraph*{Other works on parameterized dynamic data structures.}
To give a broader background, we review some results on dynamic data structures for parameterized problems that are not directly relevant to the motivation of our work.

Dvo\v{r}\'ak and T\r{u}ma~\cite{DvorakT13} investigated the problem of counting occurrences (as induced subgraphs) of a fixed pattern graph $H$ in a dynamic graph $G$ that is assumed to be sparse 
(formally, always belongs to a fixed class of bounded expansion $\Cc$). They gave a data structure that maintains such a count with amortized update time $\Oh(\log^{c} n)$, where the constant $c$ 
depends both on $H$ and on the class $\Cc$. As classes of bounded treedepth have bounded expansion (see~\cite{NesetrilM12}), by taking $H$ to be a path on $k$ vertices we obtain
a data structure for the dynamic $k$-path problem in graphs of treedepth smaller than $k$ with amortized polylogarithmic update time, where the degree of the polylogarithm depends on $k$.
Note that this result is significantly weaker than the one provided by us and by Alman et al.~\cite{AlmanMW17}, though it is obtained using very different tools.
The result of Dvo\v{r}\'ak and T\r{u}ma~\cite{DvorakT13} is based on a data structure of Brodal and Fagerberg~\cite{BrodalF99} for maintaining a bounded-outdegree orientation of a graph of bounded degeneracy, which also can be considered a dynamic parameterized data structure.

The dynamic setting for parameterized vertex cover and other vertex-deletion problems was first considered by Iwata and Oka~\cite{IwataO14}. As explained in Section~\ref{sec:introduction}, this work was continued and  significantly extended by Alman et al.~\cite{AlmanMW17}.
More recent advances include dynamic kernels for hitting and packing problems in set systems with very low update times~\cite{BannachHRT19},
and work on monitoring timed automata in data streams~\cite{GrezMPPR20}. Also, Schmidt et al.~\cite{SchmidtSVZK20} investigated a combination of parameterization and the concept of $\mathsf{DynFO}$.
This setting is, however, somewhat different, as the main focus is on performing updates that can be described using simple logical formulas, and not necessarily executable efficiently in the classic sense.

\paragraph*{Comparison with Dvo\v{r}\'ak et al.~\cite{DvorakKT14}.}
We present a quick overview of the approach that Dvo\v{r}\'ak et al.~\cite{DvorakKT14} used in their dynamic treedepth data structure. While doing this, we explain how our techniques differ and improves upon this approach. In this overview, we assume familiarity with $\MSO_2$ and basic understanding of $\MSO_2$-types.

We assume the following setting. We have a dynamic graph $G$, a fixed $\MSO_2$ sentence $\varphi$, and a parameter $d$ so that the treedepth of $G$ is promised to always be upper bounded by $d$ at all times. Our goal is to maintain a fully dynamic data structure $\Dt_\varphi[F,G]$ that maintains a recursively optimal elimination forest $F$ of $G$ and an answer to the query whether $G\models \varphi$. Let $q$ be the maximum of $d$ and the quantifier rank (i.e. maximum number of nested quantifiers) in $\varphi$. Note that thus, $q$ is {\em{at least}} $d$.

With each vertex $u\in V(G)$, let us associate a graph $G_u$ defined as in Section~\ref{secov:data} or Section~\ref{sec:data-structure}: $G_u$ has vertex set $\SReach_{F,G}(u)\cup \desc_F(u)$ and edge set comprising of all the edges of $G$ with at least one endpoint in $\desc_F(u)$.
The idea is to associate with each vertex $u$ the {\em{type}} of $G_u$, which is a piece of information that concisely describes all the properties of $G_u$ needed both for the task of computing the treedepth, 
and for verifying satisfaction of $\varphi$.
In the work of Dvo\v{r}\'ak et al.~\cite{DvorakKT14}, this type is the $\MSO_2$ type of $G_u$ of quantifier rank $q$. The number of such types is bounded by a function of $q$ only, but this function is non-elementary: it is a tower of height $q$, which is not smaller than $d$.

We note that in~\cite{DvorakKT14}, 
this is presented somewhat differently. Namely, the type of $G_u$ is maintained implicitly by storing a bounded-size {\em{$S$-code}}: a representantive subgraph of $G_u$ having the same $\MSO_2$-type of quantifier rank $q$ as $G_u$, obtained by trimming superfluous subtrees in $F$.

Now, basic compositionality and idempotence properties of $\MSO_2$ imply that in order to compute the type of $G_u$, it suffices to know the multiset of types of graphs $G_v$ for all children $v$ of $u$ in $F$.
Moreover, there is a threshold $\tau$ depending on $d$ and $\varphi$ such that within this multiset, each type appearing more than $\tau$ times can be treated as if it appeared exactly $\tau$ times.
This can be related to the discussion of compositionality and idempotence in Sections~\ref{secov:data} and~\ref{sec:data-structure}.
Thus, for every vertex $u$ one can compute the type of $G_u$ from the types associated with the children in constant time, assuming we know the multiplicity of each type among the children up to the threshold $\tau$. This applies even though the number of children is unbounded. Intuitively, this allows efficient recomputation of the types upon modifications of the forest $F$, through a bucketing approach similar to that presented in Sections~\ref{secov:data} and~\ref{sec:data-structure}.

Thus, the design of the data structure itself in~\cite{DvorakKT14} is similar to ours: every vertex remembers all its children, but these children are partitioned into buckets (represented in~\cite{DvorakKT14} using {\em{merge nodes}}) according to their types. Note that thus, the number of buckets per vertex is bounded by a non-elementary function of $q\geq d$. The basic idea of achieving update time independent of $n$ is the same: during modifications of the elimination forest, we operate on whole buckets. Thus, re-attaching a whole bucket at a different place of the elimination forest can be done using a single operation in constant time.

The implementation of updates in~\cite{DvorakKT14} is, however, very different to ours and works roughly as follows. Suppose $F$ is a tree for simplicity. First, one finds a {\em{candidate new root}}: a vertex that may be the new root of an optimal elimination tree after the update. 
Now comes the main trick: being a candidate root can be expressed by a (quite complicated) $\MSO_2$ formula of quantifier rank $d$, 
hence we can use the types of rank $q\geq d$, stored in the data structure before the update, we can locate a candidate root.
Once the new root is located, we iteratively move the new root up the tree. During this process we need to fix a bounded number of subtrees, which can done by recursion, because each of the trees that needs to be fixed is of height at least one smaller. All in all, this update thus uses a fairly convoluted recursion scheme, where the number of recursive calls heavily depends on the number of types that the data structure keeps track of.

Thus, the aspect that contributes the most to the time complexity is the number of types on which the data structure relies, which directly corresponds to the number of buckets stored per vertex.
As we explained above, Dvo\v{r}\'ak et al.~\cite{DvorakKT14} use $\MSO_2$ types of a quantifier rank $q\geq d$, whose number is a tower of exponentials of height $q$. Note that the assumption that $q\geq d$ is necessary to be able to efficiently evaluate the $\MSO_2$ query locating a new root, which is needed in the update operation.

In our data structure, we are much more frugal when defining types of vertices for the purpose of maintaining a recursively optimal elimination forest. More precisely, we show that it is enough to classify each vertex $u$ according to (1) the treedepth of the subgraph induced by the descendants of $u$, including $u$; 
and (2) the set of ancestors of $u$ that are adjacent to $u$ or any of its descendants. This gives only $d\cdot 2^d$ different types. 
Moreover, we perform the update in a different way than Dvo\v{r}\'ak et al.: to construct an elimination forest $\wh{F}$ of the updated graph, 
we extract a {\em{core}} $K\subseteq V(G)$ of size $d^{\Oh(d)}$ that contains both endpoints of the updated edge,
compute an optimum elimination forest $F^K$ of $G[K]$ using the static algorithm of Reidl et al.~\cite{ReidlRVS14} in time $2^{\Oh(d^2)}$, and construct $\wh{F}$ by re-attaching parts of $F$ lying outside of $K$ to $F^K$.
While this method is conceptually simpler than the approach used in~\cite{DvorakKT14}, justifying the correctness requires a quite deep and technical dive into the combinatorics of treedepth and of elimination forests. This analysis is explained in Sections~\ref{secov:cores} and~\ref{sec:cores}.
We remark that the concept behind the construction of the cores is analogous to that behind the construction of the $S$-codes in~\cite{DvorakKT14}, but we execute it so that the size of the core is much smaller. We also use the cores in a quite different way.

As far as augmentation of the data structure with a dynamic programming procedure is concerned, this is automatic in the approach of Dvo\v{r}\'ak et al.~\cite{DvorakKT14} for $\MSO_2$-expressible problems. Namely, the data structure anyway stores all the information about $\MSO_2$-types up to quantifier rank $q$, so the answers to all boolean $\MSO_2$ queries up to this quantifier rank are explicitly maintained. This, of course, comes at the cost of a huge explosion of complexity due to maintaining a partition into types that is potentially much finer than needed for the problem we are interested in. The language of configuration schemes and its implementation in the dynamic treedepth data structure via {\em{mugs}}, which we present in Section~\ref{sec:dp}, is designed to remedy this complexity explosion. Namely, it allows one to design a dynamic programming procedure and automatically combine it with the dynamic treedepth data structure so that the overhead in the update time paid for augmentation is polynomial in the number of states.



\end{document}